\newcounter{sarrow}
\newcounter{sarrow1}
\newcommand\xnrsquigarrow[1]{%
\stepcounter{sarrow1}%
\mathrel{\begin{tikzpicture}[baseline= {( $ (current bounding box.south) + (0,-0.5ex) $ )}]
\node[inner sep=.5ex] (\thesarrow) {$\scriptstyle #1$};
\path[draw,<-,decorate,
  decoration={zigzag,amplitude=0.7pt,segment length=1.2mm,pre=lineto,pre length=4pt}]
    (\thesarrow1.south east) -- (\thesarrow1.south west);
    $\slashedarrowfill@\relbar\relbar/$
    \end{tikzpicture}}%
}
\def\slashedarrowfill@#1#2#3#4#5{%
  $\m@th\thickmuskip0mu\medmuskip\thickmuskip\thinmuskip\thickmuskip
   \relax#5#1\mkern-7mu%
   \cleaders\hbox{$#5\mkern-2mu#2\mkern-2mu$}\hfill
   \mathclap{#3}\mathclap{#2}%
   \cleaders\hbox{$#5\mkern-2mu#2\mkern-2mu$}\hfill
   \mkern-7mu#4$%
}
\def\rightslashedarrowfillb@{%
  \slashedarrowfill@\relbar\relbar/\rightarrow}
\newcommand\xnrightarrow[2][]{%
  \ext@arrow 0055{\rightslashedarrowfillb@}{#1}{#2}}
\def\rightslashedarrowfille@{%
  \slashedarrowfill@\relbar\relbar/\twoheadrightarrow}
\newcommand\xntworightarrow[2][]{%
  \ext@arrow 0055{\rightslashedarrowfille@}{#1}{#2}}
\def\rightslashedarrowfillg@{%
  \slashedarrowfill@\relbar\relbar{\raisebox{.12em}{}}\twoheadrightarrow}
\newcommand\xtworightarrow[2][]{%
  \ext@arrow 0055{\rightslashedarrowfillg@}{#1}{#2}}
\def\rightslashedarrowfillx@{%
  \slashedarrowfill@\Relbar\Relbar/\rightrightarrows}
\newcommand\xnTworightarrow[2][]{%
  \ext@arrow 0055{\rightslashedarrowfillx@}{#1}{#2}}
\def\rightslashedarrowfilly@{%
  \slashedarrowfill@\Relbar\Relbar{\raisebox{.12em}{}}\rightrightarrows}
\newcommand\xTworightarrow[2][]{%
  \ext@arrow 0055{\rightslashedarrowfilly@}{#1}{#2}}
\tikzset{nomorepostaction/.code=\let\tikz@postactions\pgfutil@empty}
\newcommand{\sembrack}[1]{[\![#1]\!]}
\newcommand{\fulbrack}[1]{\{\!|#1|\!\}}
\newtheorem{theorem}{Theorem}[section]
\newtheorem{definition}[theorem]{Definition}
\newtheorem{proposition}[theorem]{Proposition}
\begin{document}

\begin{titlepage}
\thispagestyle{empty}

\hrule
\begin{center}
{\bf\LARGE The Nature of Concurrency\\}
%
\vspace{0.5cm}
--- Yong Wang ---

\vspace{2cm}
\begin{figure}[!htbp]
 \centering
 \includegraphics[width=1.0\textwidth]{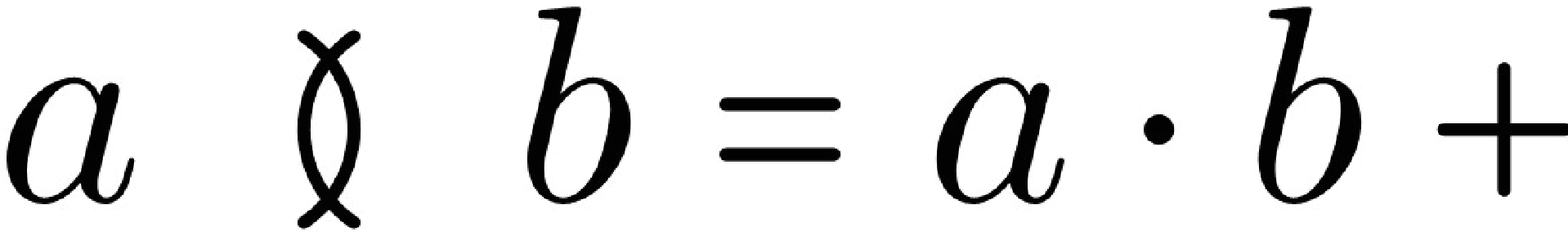}
\end{figure}

\end{center}
\end{titlepage}

\newpage 

\setcounter{page}{1}\pagenumbering{roman}

\tableofcontents

\newpage
\setcounter{page}{1}\pagenumbering{arabic}

        \section{Introduction}\label{intro}

There are mainly two kinds of models of concurrency \cite{MOC}: the models of interleaving concurrency and the models of true concurrency. Among the models of interleaving concurrency, the representatives are process algebras based on bisimilarity semantics, such as CCS \cite{CC} \cite{CCS} and ACP \cite{ACP}. And among the models of true concurrency, the representatives are event structure \cite{PED} \cite{ES} \cite{IES}, Petri net \cite{PN00} \cite{PN01} \cite{PN02} \cite{PN1} \cite{PN2} \cite{PN3}, and also automata and concurrent Kleene algebra \cite{CKA1} \cite{CKA2} \cite{CKA3} \cite{CKA4} \cite{CKA5} \cite{CKA6} \cite{CKA7}. The relationship between interleaving concurrency vs. true concurrency is not clarified, the main work is giving interleaving concurrency a semantics of true concurrency \cite{NISCCS} \cite{POSCCS} \cite{ESSCCS}.

As Chinese, we love "big" unification, i.e., the unification of interleaving concurrency vs. true concurrency. In concurrency theory, we refer to parallelism, denoted $a\parallel b$ for $a,b$ are atomic actions, which means that there are two parallel branches $a$ and $b$, they executed independently (without causality and confliction) and is captured exactly by the concurrency relation. But the whole thing, we prefer to use the word \emph{concurrency}, denoted $a\between b$, is that the actions in the two parallel branches may exist causalities or conflictions. The causalities between two parallel branches are usually not the sequence relation, but communications (the sending/receiving or writing/reading pairs). The conflictions can also be generalized to the ones between any actions in two parallel branches. Concurrency is made up of several parallel branches, in each branch which can be a model of concurrency, there exists communications or conflictions among these branches. This is well supported by computational systems in reality from the smaller ones to bigger ones: threads, cores, CPUs, processes, and communications and conflictions among them inner one computer system; distributed applications, communications via computer networks and distributed locks among them, constitute small or big scale distributed systems and the whole Internet.

Base on the above assumptions, we have done some work on the so-called truly concurrent process algebra CTC and APTC \cite{APTC} \cite{APTC2}, which are generalizations of CCS and ACP from interleaving concurrency to true concurrency. In this book, we deep the relationship between interleaving concurrency vs. true concurrency, especially, giving models of true concurrency, such as event structure, Petri net and concurrent Kleene algebra, (truly concurrent) process algebra foundations.

\subsection{Process Algebra vs. Event Structure}

Event structure \cite{PED} \cite{ES} \cite{IES} is a model of true concurrency. In an event structure, there are a set of atomic events and arbitrary causalities and conflictions among them and concurrency is implicitly defined. Based on the definition of an event structure, truly concurrent behaviours such as pomset bisimulation, step bisimulation, history-preserving (hp-) bisimulation and the finest hereditary history-preserving (hhp-) bisimulation \cite{HHPS1} \cite{HHPS2} can be introduced.

Since the relationship between process algebra vs. event structure (interleaving concurrency vs. true concurrency in nature) is not clarified before the introduction of truly concurrent process algebra \cite{APTC} \cite{APTC2}, the work on the relationship between process algebra vs. event structure usually gives traditional process algebra an event structure-based semantics, such as giving CCS a event structure-based semantics \cite{ESSCCS}. The work on giving event structure a process algebra-based foundation lacks.

In chapter \ref{pe}, we discuss the relationship between process algebra and event structure by establishing the relationship between prime event structures and processes based on the structurization of prime event structures and establishing structural operational semantics of prime event structure. We reproduce the truly concurrent process algebra APTC based on the structural operational semantics of prime event structure.

\subsection{Process Algebra vs. Petri Net}

Petri net \cite{PN00} \cite{PN01} \cite{PN02} \cite{PN1} \cite{PN2} \cite{PN3} is also a model of true concurrency. In a Petri net, there are two kinds of nodes: places (conditions) and transitions (actions), and causalities among them. On the relationship between process algebra and Petri net, one side is giving process algebra a Petri net semantics, the other side is giving Petri net a process algebra foundation \cite{BOXA1} \cite{BOXA2} \cite{PNA} \cite{PAPN}, among them, Petri net algebra \cite{PNA} gives Petri net a CCS-like foundation.

In chapter \ref{pp}, we discuss the relationship between process algebra and Petri net by establishing the relationship between Petri nets and processes based on the structurization of Petri nets. and establishing structural operational semantics of Petri net. We reproduce the guarded truly concurrent process algebra $APTC_G$ \cite{APTC} based on the structural operational semantics of Petri net.

\subsection{Process Algebra vs. Automata and Kleene Algebra}

Kleene algebra (KA) \cite{KA00} \cite{KA0} \cite{KA1} \cite{KA2} \cite{KA3} \cite{KA4} \cite{KA5} \cite{KA6} is an important algebraic structure with operators $+$, $\cdot$, $^*$, $0$ and $1$ to model computational properties of regular expressions. Kleene algebra can be used widely in computational areas, such as relational algebra, automata, algorithms, program logic and semantics, etc. A Kleene algebra of the family of regular set over a finite alphabet $\Sigma$ is called the algebra of regular events denoted $\mathbf{Reg}_{\Sigma}$, which was firstly studied as an open problem by Kleene \cite{KA00}. Then, Kleene algebra was widely studied and there existed several definitions on Kleene algebra \cite{KA0} \cite{KA1} \cite{KA2} \cite{KA3} \cite{KA4} \cite{KA5} for the almost same purpose of modelling regular expressions, and Kozen \cite{KA6} established the relationship among these definitions.

Then Kleene algebra has been extended in many ways to capture more computational properties, such as hypotheses \cite{EKA1} \cite{EKA2}, tests \cite{EKA3} \cite{EKA4} \cite{EKA5}, observations \cite{EKA6}, probabilistic KA \cite{EKA7}, etc. Among these extensions, a significant extension is concurrent KA (CKA) \cite{CKA1} \cite{CKA2} \cite{CKA3} \cite{CKA4} \cite{CKA5} \cite{CKA6} \cite{CKA7} and its extensions \cite{CKA70} \cite{CKA8} \cite{CKA9} \cite{CKA10} \cite{CKA11} \cite{CKA12} to capture the concurrent and parallel computations.

It is well-known that process algebras are theories to capture concurrent and parallel computations, for CCS \cite{CC} \cite{CCS} and ACP \cite{ACP} are with bisimilarity semantics. A natural question is that how automata theory is related to process algebra and how (concurrent) KA is related to process algebra? J. C. M. Baeten et al have done a lot of work on the relationship between automata theory and process algebra \cite{AP1} \cite{AP2} \cite{AP3} \cite{AP4} \cite{AP5} \cite{AP6}. It is essential of the work on introducing Kleene star into the process algebra based on bisimilarity semantics to answer this question, firstly initialized by Milner's proof system for regular expressions modulo bisimilarity (Mil) \cite{MF1}. Since the completeness of Milner's proof system remained open, some efforts were done, such as Redko's incompleteness proof for Klneene star modulo trace semantics \cite{MF10n}, completeness for BPA (basic process algebra) with Kleene star \cite{MF2} \cite{MF20}, work on ACP with iteration \cite{MF15} \cite{MF16} \cite{MF17}, completeness for prefix iteration \cite{MF10} \cite{MF11} \cite{MF12}, multi-exit iteration \cite{MF13}, flat iteration \cite{MF14}, 1-free regular expressions \cite{MF3} modulo bisimilarity. But these are not the full sense of regular expressions, most recently, Grabmayer \cite{MF4} \cite{MF5} \cite{MF6} \cite{MF7} has prepared to prove that Mil is complete with respect to a specific kind of process graphs called LLEE-1-charts which is equal to regular expressions, modulo the corresponding kind of bisimilarity called 1-bisimilarity.

But for concurrency and parallelism, the relationship between CKA and process algebra has remained open from Hoare \cite{CKA1} \cite{CKA2} to the recent work of CKA \cite{CKA7} \cite{CKA70}. Since most CKAs are based on the so-called true concurrency, we can draw the conclusion that the concurrency of CKA includes the interleaving one which the bisimilarity based process algebra captures, as the extended Milner's expansion law $a\parallel b=a\cdot b+b\cdot a+a\parallel b$ says, where $a,b$ are primitives (atomic actions), $\parallel$ is the parallel composition, $+$ is the alternative composition and $\cdot$ is the sequential composition with the background of computation. In contrast, Milner's expansion law is that $a\parallel b=a\cdot b+b\cdot a$ in bisimilarity based process algebras CCS and ACP.

Based on the work of truly concurrent process algebra APTC \cite{APTC} which is process algebra based on truly concurrent semantics, we can introduce Kleene star (and also parallel star) into APTC. Both for CKA with communications and APTC with Kleene star and parallel star, the extended Milner's expansion law $a\between b=a\cdot b+b\cdot a+a\parallel b +a\mid b$ with the concurrency operator $\between$ and communication merge $\mid$ holds. CKA and APTC are all the truly concurrent computation models, can have the same syntax (primitives and operators), the similar axiomatizations, and maybe have the same or different semantics. That's all.

The relationship between process algebra vs. automata and Kleene algebra is discussed in chapter \ref{pak}.

\newpage\section{Preliminaries}\label{pre} 

For self-satisfactory, in this chapter, we introduce the preliminaries on set, language, automata, equational logic, operational semantics, prime event structure and process algebras.

\subsection{Set, Language and Automata}

\begin{definition}[Set]
A set contains some objects, and let $\{-\}$ denote the contents of a set. For instance, $\mathbb{N}=\{1,2,3,\cdots\}$. Let $a\in A$ denote that $a$ is an element of the set $A$ and $a\notin A$ denote that $a$ is not an element of the set $A$. For all $a\in A$, if we can get $a\in B$, then we say that $A$ is a subset of $B$ denoted $A\subseteq B$. If $A\subseteq B$ and $B\subseteq A$, then $A=B$. We can define a new set by use of predicates on the existing sets, such that $\{n\in\mathbb{N}|\exists k\in\mathbb{N},n=2k\}$ for the set of even numbers. We can also specify a set to be the smallest set satisfy some inductive inference rules, for instance, we specify the set of even numbers $A$ satisfying the following rules:

$$\frac{}{0\in A}\quad\quad\frac{n\in A}{n+2\in A}$$
\end{definition}

\begin{definition}[Set composition]
The union of two sets $A$ and $B$, is denoted by $A\cup B=\{a|a\in A\textrm{ or }a\in B\}$, and the intersection of $A$ and $B$ by $A\cap B=\{a|a\in A\textrm{ and }a\in B\}$, the difference of $A$ and $B$ by $A\setminus B=\{a|a\in A\textrm{ and }a\notin B\}$. The empty set $\emptyset$ contains nothing. The set of all subsets of a set $A$ is called the powerset of $A$ denoted $2^A$.
\end{definition}

\begin{definition}[Tuple]
A tuple is a finite and ordered list of objects and denoted $\langle -\rangle$. For sets $A$ and $B$, the Cartesian product of $A$ and $B$ is denoted by $A\times B=\{\langle a,b\rangle|a\in A,b\in B\}$. $A^n$ is the $n$-fold Cartesian product of set $A$, for instance, $A^2=A\times A$. Tuples can be flattened as $A\times (B\times C)=(A\times B)\times C=A\times B\times C$ for sets $A$, $B$ and $C$.
\end{definition}

\begin{definition}[Relation]
A relation $R$ between sets $A$ and $B$ is a subset of $A\times B$, i.e., $R\subseteq A\times B$. We say that $R$ is a relation on set $A$ if $R$ is a relation between $A$ and itself, and,

\begin{itemize}
  \item $R$ is reflexive if for all $a\in A$, $aRa$ holds; it is irreflexive if for all $a\in A$, $aRa$ does not hold.
  \item $R$ is symmetric if for all $a,a'\in A$ with $aRa'$, then $a'Ra$ holds; it is antisymmetric if for all $a,a'\in A$ with $aRa'$ and $a'Ra$, then $a=a'$.
  \item $R$ is transitive if for all $a,a',a''\in A$ with $aRa'$ and $a'Ra''$, then $aRa''$ holds.
\end{itemize}
\end{definition}

\begin{definition}[Preorder, partial order, strict order]
If a relation $R$ is reflexive and transitive, we call that it is a preorder; When it is a preorder and antisymmetric, it is called a partial order, and a partially ordered set (poset) is a pair $\langle A,R\rangle$ with a set $A$ and a partial order $R$ on $A$; When it is irreflexive and transitive, it is called a strict order.
\end{definition}

\begin{definition}[Equivalence]
A relation $R$ is called an equivalence, if it is reflexive, symmetric and transitive. For an equivalent relation $R$ and a set $A$, $[a]_R=\{a'\in A|aRa'\}$ is called the equivalence class of $a\in A$.
\end{definition}

\begin{definition}[Relation composition]
For sets $A$, $B$ and $C$, and relations $R\subseteq A\times B$ and $R'\subseteq B\times C$, the relational composition denoted $R\circ R'$, is defined as the smallest relation $a(R\circ R')c$ satisfying $aRb$ and $bR'c$ with $a\in A$, $b\in B$ and $c\in C$. For a relation $R$ on set $A$, we denote $R^*$ for the reflexive and transitive closure of $R$, which is the least reflexive and transitive relation on $A$ that contains $R$.
\end{definition}

\begin{definition}[Function]
A function $f:A\rightarrow B$ from sets $A$ to $B$ is a relation between $A$ and $B$, i.e., for every $a\in A$, there exists one $b=f(a)\in B$, where $A$ is called the domain of $f$ and $B$ the codomain of $f$. $\sembrack{-}$ is also used as a function with $-$ a placeholder, i.e., $\sembrack{e}$ is the value of $\sembrack{-}$ for input $e$. A function $f$ is a bijection if for every $b\in B$, there exists exactly one $a\in A$ such that $b=f(a)$. For functions $f:A\rightarrow B$ and $g:B\rightarrow C$, the functional composition of $f$ and $g$ denoted $g\circ f$ such that $(g\circ f)(a)=g(f(a))$ for $a\in A$.
\end{definition}

\begin{definition}[Poset morphism]
For posets $\langle A,\leq\rangle$ and $\langle A',\leq'\rangle$ and function $f:A\rightarrow A'$, $f$ is called a poset morphism if for $a_0,a_1\in A$ with $a_0\leq a_1$, then $f(a_0)\leq' f(a_1)$ holds.
\end{definition}

\begin{definition}[Multiset]
A multiset is a kind of set of objects which may be repetitive denoted $\fulbrack{-}$, such that $\fulbrack{0,1,1}$ is significantly distinguishable from $\fulbrack{0,1}$.
\end{definition}

\begin{definition}[Alphabet, word, language]
An alphabet $\Sigma$ is a (maybe infinite) set of symbols. A word over some alphabet $\Sigma$ is a finite sequence of symbols from $\Sigma$. Words can be concatenated and the concatenation operator is denoted by $\cdot$, for instance $ab\cdot c=abc$. The empty word is denoted $1$ with $1\cdot w=w=w\cdot 1$ for word $w$. For $n\in\mathbb{N}$ and $a\in\Sigma$, $a^n$ is the $n$-fold concatenation of $a$ with $a^0=1$ and $a^{n+1}=a\cdot a^n$. A language if a set of words, and the language of all words over an alphabet $\Sigma$ is denoted $\Sigma^*$.
\end{definition}

\begin{definition}[Expressions]
Expressions are builded by function symbols and constants over a fixed alphabet inductively. For instance, the set of numerical expressions over some fixed set of variables $V$ are defined as the smallest set $E$ satisfying the following inference rules:

$$\frac{n\in\mathbb{N}}{n\in E}\quad \frac{v\in V}{v\in E}\quad \frac{e,f\in E}{e+f\in E}\quad \frac{e,f\in E}{e\times f\in E}\quad \frac{e\in E}{-e\in E}$$

The above inference rules are equal to the following Backus-Naur Form (BNF) grammer.

$$E\ni e,f::=n\in\mathbb{N}|v\in V|e+f|e\times f|-e$$
\end{definition}

\begin{definition}[Congruence, precongruence]
A relation $R$ on a set of expressions is a congruence if it is an equivalence compatible with the operators; and a relation $R$ on a set of expressions is a precongruence if it is a preorder compatible with the operators.
\end{definition}

\subsection{Equational Logic}\label{el}

We introduce some basic concepts about equational logic briefly, including signature, term, substitution, axiomatization, equality relation, model, term rewriting system, rewrite relation, normal form, termination, weak confluence and several conclusions. These concepts are coming from \cite{ACP}, and are introduced briefly as follows. About the details, please see \cite{ACP}.

\begin{definition}[Signature]
A signature $\Sigma$ consists of a finite set of function symbols (or operators) $f,g,\cdots$, where each function symbol $f$ has an arity $ar(f)$, being its number of arguments. A function symbol $a,b,c,\cdots$ of arity \emph{zero} is called a constant, a function symbol of arity one is called unary, and a function symbol of arity two is called binary.
\end{definition}

\begin{restatable}[Term]{definition}{term}
Let $\Sigma$ be a signature. The set $\mathbb{T}(\Sigma)$ of (open) terms $s,t,u,\cdots$ over $\Sigma$ is defined as the least set satisfying: (1)each variable is in $\mathbb{T}(\Sigma)$; (2) if $f\in \Sigma$ and $t_1,\cdots,t_{ar(f)}\in\mathbb{T}(\Sigma)$, then $f(t_1,\cdots,t_{ar(f)}\in\mathbb{T}(\Sigma))$. A term is closed if it does not contain free variables. The set of closed terms is denoted by $\mathcal{T}(\Sigma)$.
\end{restatable}

\begin{definition}[Substitution]
Let $\Sigma$ be a signature. A substitution is a mapping $\sigma$ from variables to the set $\mathbb{T}(\Sigma)$ of open terms. A substitution extends to a mapping from open terms to open terms: the term $\sigma(t)$ is obtained by replacing occurrences of variables $x$ in t by $\sigma(x)$. A substitution $\sigma$ is closed if $\sigma(x)\in\mathcal{T}(\Sigma)$ for all variables $x$.
\end{definition}

\begin{definition}[Axiomatization]
An axiomatization over a signature $\Sigma$ is a finite set of equations, called axioms, of the form $s=t$ with $s,t\in\mathbb{T}(\Sigma)$.
\end{definition}

\begin{definition}[Equality relation]
An axiomatization over a signature $\Sigma$ induces a binary equality relation $=$ on $\mathbb{T}(\Sigma)$ as follows. (1)(Substitution) If $s=t$ is an axiom and $\sigma$ a substitution, then $\sigma(s)=\sigma(t)$. (2)(Equivalence) The relation $=$ is closed under reflexivity, symmetry, and transitivity. (3)(Context) The relation $=$ is closed under contexts: if $t=u$ and $f$ is a function symbol with $ar(f)>0$, then $f(s_1,\cdots,s_{i-1},t,s_{i+1},\cdots,s_{ar(f)})=f(s_1,\cdots,s_{i-1},u,s_{i+1},\cdots,s_{ar(f)})$.
\end{definition}

\begin{definition}[Model]
Assume an axiomatization $\mathcal{E}$ over a signature $\Sigma$, which induces an equality relation $=$. A model for $\mathcal{E}$ consists of a set $\mathcal{M}$ together with a mapping $\phi: \mathcal{T}(\Sigma)\rightarrow\mathcal{M}$. (1) $(\mathcal{M},\phi)$ is sound for $\mathcal{E}$ if $s=t$ implies $\phi(s)\equiv\phi(t)$ for $s,t\in\mathcal{T}(\Sigma)$; (2) $(\mathcal{M},\phi)$ is complete for $\mathcal{E}$ if $\phi(s)\equiv\phi(t)$ implies $s=t$ for $s,t\in\mathcal{T}(\Sigma)$.
\end{definition}

\begin{definition}[Term rewriting system]
Assume a signature $\Sigma$. A rewrite rule is an expression $s\rightarrow t$ with $s,t\in\mathbb{T}(\Sigma)$, where: (1) the left-hand side $s$ is not a single variable; (2) all variables that occur at the right-hand side $t$ also occur in the left-hand side $s$. A term rewriting system (TRS) is a finite set of rewrite rules.
\end{definition}

\begin{definition}[Rewrite relation]
A TRS over a signature $\Sigma$ induces a one-step rewrite relation $\rightarrow$ on $\mathbb{T}(\Sigma)$ as follows. (1) (Substitution) If $s\rightarrow t$ is a rewrite rule and $\sigma$ a substitution, then $\sigma(s)\rightarrow\sigma(t)$. (2) (Context) The relation $\rightarrow$ is closed under contexts: if $t\rightarrow u$ and f is a function symbol with $ar(f)>0$, then $f(s_1,\cdots,s_{i-1},t,s_{i+1},\cdots,s_{ar(f)})\rightarrow f(s_1,\cdots,s_{i-1},u,s_{i+1},\cdots,s_{ar(f)})$. The rewrite relation $\rightarrow^*$ is the reflexive transitive closure of the one-step rewrite relation $\rightarrow$: (1) if $s\rightarrow t$, then $s\rightarrow^* t$; (2) $t\rightarrow^* t$; (3) if $s\rightarrow^* t$ and $t\rightarrow^* u$, then $s\rightarrow^* u$.
\end{definition}

\begin{definition}[Normal form]
A term is called a normal form for a TRS if it cannot be reduced by any of the rewrite rules.
\end{definition}

\begin{definition}[Termination]
A TRS is terminating if it does not induce infinite reductions $t_0\rightarrow t_1\rightarrow t_2\rightarrow \cdots$.
\end{definition}

\begin{definition}[Weak confluence]
A TRS is weakly confluent if for each pair of one-step reductions $s\rightarrow t_1$ and $s\rightarrow t_2$, there is a term $u$ such that $t_1\rightarrow^* u$ and $t_2\rightarrow^* u$.
\end{definition}

\begin{theorem}[Newman's lemma]
If a TRS is terminating and weakly confluent, then it reduces each term to a unique normal form.
\end{theorem}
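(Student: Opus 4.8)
The plan is to split the statement into two halves: every term has \emph{at least} one normal form, and it has \emph{at most} one. The first half is essentially immediate from termination. Starting from any term $t_0$, repeatedly reduce: as long as the current term is not a normal form, some rewrite rule applies and we may perform a step. A maximal such sequence $t_0\rightarrow t_1\rightarrow t_2\rightarrow\cdots$ cannot be infinite, since the TRS is terminating, so it stops at some $t_n$, which by maximality admits no rewrite step and is therefore a normal form with $t_0\rightarrow^* t_n$.

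For the second half I would first establish the stronger property of \emph{confluence}: whenever $s\rightarrow^* t_1$ and $s\rightarrow^* t_2$, there is a term $u$ with $t_1\rightarrow^* u$ and $t_2\rightarrow^* u$. Granting this, uniqueness of normal forms is immediate: if $t\rightarrow^* n_1$ and $t\rightarrow^* n_2$ with $n_1,n_2$ normal forms, confluence yields $u$ with $n_1\rightarrow^* u$ and $n_2\rightarrow^* u$; but a normal form admits no rewrite step, so both of these $\rightarrow^*$-sequences are empty and $n_1=u=n_2$. Combined with the first half, each term reduces to a unique normal form.

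The core of the argument is the proof of confluence, which I would carry out by well-founded induction on $s$ along the rewrite relation — termination is exactly the statement that there is no infinite chain $t_0\rightarrow t_1\rightarrow\cdots$, so this relation is well-founded and the induction principle is available. Fix $s$ and assume, as induction hypothesis, that confluence holds at every term reachable from $s$ by at least one rewrite step. Suppose $s\rightarrow^* t_1$ and $s\rightarrow^* t_2$. If either reduction is empty, the other endpoint is the desired common reduct, so assume $s\rightarrow s_1\rightarrow^* t_1$ and $s\rightarrow s_2\rightarrow^* t_2$. Weak confluence applied to the single steps $s\rightarrow s_1$ and $s\rightarrow s_2$ produces a term $v$ with $s_1\rightarrow^* v$ and $s_2\rightarrow^* v$. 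Applying the induction hypothesis at $s_1$ to the pair $s_1\rightarrow^* t_1$ and $s_1\rightarrow^* v$ gives $w$ with $t_1\rightarrow^* w$ and $v\rightarrow^* w$; then $s_2\rightarrow^* v\rightarrow^* w$, so applying the induction hypothesis at $s_2$ to $s_2\rightarrow^* t_2$ and $s_2\rightarrow^* w$ gives $u$ with $t_2\rightarrow^* u$ and $w\rightarrow^* u$. Chaining the reductions, $t_1\rightarrow^* w\rightarrow^* u$ and $t_2\rightarrow^* u$, which is the common reduct we sought.

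The step I expect to demand the most care is the "tiling" of this diagram: weak confluence only closes divergences of two \emph{single} steps, and the induction hypothesis may only be invoked at the strictly smaller terms $s_1$ and $s_2$ — never at $s$ itself. Keeping careful track of where each hypothesis is legitimately applied is the whole content of the proof, and it is precisely the point at which the naive attempt to induct on the combined lengths of the two $\rightarrow^*$-reductions, without appealing to the well-founded order supplied by termination, breaks down.
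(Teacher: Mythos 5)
Your proof is correct: it is the standard Huet-style argument for Newman's lemma, with existence of a normal form from termination, uniqueness reduced to confluence, and confluence established by Noetherian (well-founded) induction along the rewrite relation, invoking weak confluence only on the single-step divergence and the induction hypothesis only at the strictly smaller terms $s_1$ and $s_2$. The paper itself states this theorem as a background result imported from the ACP literature and gives no proof, so there is nothing to compare against; your argument is the canonical one and is sound as written.
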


\begin{definition}[Commutativity and associativity]
Assume an axiomatization $\mathcal{E}$. A binary function symbol $f$ is commutative if $\mathcal{E}$ contains an axiom $f(x,y)=f(y,x)$ and associative if $\mathcal{E}$ contains an axiom $f(f(x,y),z)=f(x,f(y,z))$.
\end{definition}

\begin{definition}[Convergence]
A pair of terms $s$ and $t$ is said to be convergent if there exists a term $u$ such that $s\rightarrow^* u$ and $t\rightarrow^* u$.
\end{definition}

Axiomatizations can give rise to TRSs that are not weakly confluent, which can be remedied by Knuth-Bendix completion \cite{KB}. It determines overlaps in left hand sides of rewrite rules, and introduces extra rewrite rules to join the resulting right hand sides, witch are called critical pairs.

\begin{theorem}[Weak confluence]
A TRS is weakly confluent if and only if all its critical pairs are convergent.
\end{theorem}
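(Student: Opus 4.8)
One direction is immediate from the definitions: a critical pair is, by its very construction, a particular pair of one-step reductions $t_1 \leftarrow s \rightarrow t_2$ obtained by overlapping the left-hand sides of two rewrite rules, so if the TRS is weakly confluent then by definition there is a term $u$ with $t_1 \rightarrow^* u$ and $t_2 \rightarrow^* u$, i.e.\ the critical pair is convergent. The content of the theorem is the converse --- the Critical Pair Lemma --- and that is the part I would work on. The plan is to take an arbitrary pair of one-step reductions $t_1 \leftarrow s \rightarrow t_2$, say $s \rightarrow t_1$ by a rule $\ell_1 \rightarrow r_1$ applied at a position $p_1$ under a substitution $\sigma_1$ and $s \rightarrow t_2$ by $\ell_2 \rightarrow r_2$ at $p_2$ under $\sigma_2$, and to exhibit a common reduct of $t_1$ and $t_2$ by a case analysis on how $p_1$ and $p_2$ sit relative to one another. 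First I would record the two closure facts used throughout: $\rightarrow$ is closed under contexts (immediate from the definition of the rewrite relation) and under substitution, and more precisely if $\theta(x) \rightarrow^* \theta'(x)$ for every variable $x$ then $\theta(v) \rightarrow^* \theta'(v)$ for every term $v$ (a routine induction on $v$). I would also fix the standard notation of positions, the subterm $s|_p$, the replacement $s[t]_p$, and most general unifiers.

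There are three cases. If $p_1$ and $p_2$ are incomparable, the two redexes are disjoint: each step leaves the other redex untouched, so applying $\ell_2 \rightarrow r_2$ to $t_1$ at $p_2$ and $\ell_1 \rightarrow r_1$ to $t_2$ at $p_1$ yields a common reduct in one step from each side. If one position lies strictly below the other inside a \emph{variable} part of the outer left-hand side --- say $p_2 = p_1 q$ with $q$ reaching through $\ell_1$ to an occurrence of a variable $x$ of $\ell_1$ --- then the inner redex sits inside $\sigma_1(x)$, which therefore has a one-step reduct $w$; letting $\sigma_1'$ agree with $\sigma_1$ except that $\sigma_1'(x) = w$, one joins the peak at $s[\sigma_1'(r_1)]_{p_1}$, reached from $t_1$ by rewriting the copies of $\sigma_1(x)$ inside $\sigma_1(r_1)$ and from $t_2$ by rewriting the copies of $\sigma_1(x)$ still present inside $\sigma_1(\ell_1)$ and then applying $\ell_1 \rightarrow r_1$ with $\sigma_1'$ --- the $t_1$-side using as many steps as $x$ has occurrences in $r_1$, possibly zero. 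Finally, if one position lies strictly below the other at a \emph{non-variable} position --- say $p_2 = p_1 q$ with $q$ a non-variable position of $\ell_1$ --- then, after renaming the two rules apart, the subterms $\ell_1|_q$ and $\ell_2$ have a common instance (the subterm of $s$ at $p_2$), so a most general unifier $\mu$ exists; the pair $\langle \mu(\ell_1)[\mu(r_2)]_q,\, \mu(r_1)\rangle$ is by definition a critical pair of the TRS, hence convergent by hypothesis, say both sides reducing to $u'$; and the original peak is obtained from this critical pair by applying some substitution $\tau$ and placing the result in the context of $s$ at $p_1$, so the two closure facts carry both $t_1$ and $t_2$ to $s[\tau(u')]_{p_1}$. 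The subcase $p_1 = p_2 q$ is symmetric with the roles of the rules exchanged, and $p_1 = p_2$ is the instance $q = \varepsilon$ of the last case. Since $t_1$ and $t_2$ are convergent in every case, the TRS is weakly confluent.

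The main obstacle is the variable-overlap case: one must check that $\sigma_1'$ is genuinely a substitution, that $t_1$ may contain zero, one, or several residuals of the reduced subterm --- so a single step need not suffice and one really needs $\rightarrow^*$ --- and that these residuals may be rewritten independently of one another, which is exactly what the substitution-closure statement for $\rightarrow^*$ is designed to supply. The non-variable case is where the hypothesis on critical pairs is actually consumed, but once the overlap is recognized as an instance of a most general unifier the joining is a direct application of the closure facts; its only real cost is the bookkeeping needed to set up positions, replacement and unification cleanly, together with the small point that a non-variable position of $\ell_1$ survives the substitution $\mu$.
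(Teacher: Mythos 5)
Your two-sentence argument for the easy direction and your three-case analysis for the converse (disjoint redexes; overlap inside a variable position of the outer left-hand side, handled with the modified substitution $\sigma_1'$ and possibly several parallel residual steps; overlap at a non-variable position, reduced via an mgu to a critical pair and then transported by closure under substitution and contexts) is exactly the classical Knuth--Bendix/Huet Critical Pair Lemma, and it is sound, including the care you take with non-left-linear rules on the $t_2$-side of the variable-overlap case. Note, however, that the paper itself offers no proof of this statement: it is quoted as background from the ACP literature in the preliminaries, so there is nothing paper-specific to compare against; your reconstruction is the standard argument and would serve as a correct proof if one were to be supplied.
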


\begin{definition}[Elimination property]
Let a process algebra with a defined set of basic terms as a subset of the set of closed terms over the process algebra. Then the process algebra has the elimination to basic terms property if for every closed term $s$ of the algebra, there exists a basic term $t$ of the algebra such that the algebra$\vdash s=t$.
\end{definition}

\begin{definition}[Strongly normalizing]
A term $s_0$ is called strongly normalizing if does not an infinite series of reductions beginning in $s_0$.
\end{definition}

\begin{definition}
We write $s>_{lpo} t$ if $s\rightarrow^+ t$ where $\rightarrow^+$ is the transitive closure of the reduction relation defined by the transition rules of an algebra.
\end{definition}

\begin{theorem}[Strong normalization]\label{SN}
Let a term rewriting system (TRS) with finitely many rewriting rules and let $>$ be a well-founded ordering on the signature of the corresponding algebra. If $s>_{lpo} t$ for each rewriting rule $s\rightarrow t$ in the TRS, then the term rewriting system is strongly normalizing.
\end{theorem}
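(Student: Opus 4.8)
The plan is to read $>_{lpo}$ as the \emph{lexicographic path order} (LPO) generated by the well-founded precedence $>$ on the signature $\Sigma$ (this is surely the intended meaning of the subscript), and to prove the theorem in two stages. Stage one: establish that $>_{lpo}$ is a \emph{well-founded reduction order} on $\mathbb{T}(\Sigma)$, i.e.\ a well-founded strict partial order that is closed under substitutions and under contexts, and that enjoys the proper-subterm property $f(t_1,\ldots,t_n)>_{lpo}t_i$. Stage two: show that the one-step rewrite relation $\rightarrow$ of the TRS is contained in $>_{lpo}$. Granting both, every reduction sequence $t_0\rightarrow t_1\rightarrow t_2\rightarrow\cdots$ is a strictly $>_{lpo}$-descending sequence of terms, hence finite; by the Definition of termination (equivalently, of strongly normalizing) this is precisely the claim that the TRS is strongly normalizing.

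Stage two is the short one. By construction, $\rightarrow$ is the closure under substitutions and contexts of the finitely many rule pairs $\{(s,t):s\rightarrow t\}$, and by hypothesis each such pair satisfies $s>_{lpo}t$. So once $>_{lpo}$ is known to be closed under substitutions (if $s>_{lpo}t$ then $\sigma(s)>_{lpo}\sigma(t)$ for every substitution $\sigma$) and under contexts (if $t>_{lpo}u$ then $f(\ldots,t,\ldots)>_{lpo}f(\ldots,u,\ldots)$), an induction on the derivation of a rewrite step gives $s\rightarrow t\implies s>_{lpo}t$, and transitivity lifts this to $\rightarrow^{+}\,\subseteq\,>_{lpo}$. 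These two closure lemmas, together with irreflexivity, transitivity, and the proper-subterm property, are all structural inductions read off directly from the recursive definition of the LPO; they are routine, but the subterm property must be handled first since it is what makes the context case go through.

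The main obstacle is the remaining part of stage one: \emph{well-foundedness} of $>_{lpo}$, given only that the precedence $>$ on the signature is well-founded (and that the signature is finite, which is implicit here or may be assumed without loss of generality, since only finitely many symbols occur in the finitely many rules). I would run the classical \emph{minimal bad sequence} argument, equivalent in strength to Kruskal's tree theorem: suppose for contradiction that some infinite $>_{lpo}$-descending sequence of terms exists; choose one that is minimal, position by position, with respect to term size; then, because $\Sigma$ is finite, infinitely many entries share a top symbol, and because $>$ is well-founded one may further restrict to a tail with a fixed, $>$-minimal top symbol among those that recur. The shape of the LPO definition then forces the comparisons to be decided ``inside'' the arguments, so one extracts from the immediate subterms a strictly smaller infinite bad sequence, contradicting minimality. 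This is where both finiteness (finite case analysis on top symbols) and well-foundedness of $>$ are genuinely consumed.

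I expect one further point to need care, though it is secondary: the lexicographic comparison of argument tuples when two function symbols have equal precedence. This is the one place where ``lexicographic'' path order behaves differently from a multiset path order, and it must be treated consistently both in the closure lemmas of stage two and in the subterm-extraction step of the well-foundedness argument — in the latter, one argues that an infinite bad sequence with a fixed equal-precedence top symbol yields, at the first argument position where descent occurs, an infinite bad sequence of subterms. Once these ingredients are in place, assembling the statement is immediate from the logic laid out in the first paragraph.
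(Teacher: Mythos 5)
The paper does not actually prove this statement: it is quoted in the preliminaries as a standard result from the term-rewriting literature (via the book on $ACP$), so there is no in-paper argument to compare yours against. Your proposal reconstructs the standard textbook proof, and its overall architecture is sound: read $>_{lpo}$ as the lexicographic path order induced by the well-founded precedence $>$, prove it is a reduction order (irreflexive, transitive, closed under substitutions and contexts, with the proper-subterm property), observe that rule orientation plus these closure properties give $\rightarrow^{+}\subseteq\;>_{lpo}$, and conclude strong normalization from well-foundedness of $>_{lpo}$. The only place where your sketch is looser than it should be is the well-foundedness step: the ``restrict to a tail with a fixed, $>$-minimal recurring top symbol'' move is not how the extraction of a smaller bad sequence actually proceeds, and the case analysis on the LPO definition (head-subterm case, strictly smaller head symbol, equal head with lexicographic descent) has to be carried out against the minimal bad sequence itself — either directly in Nash--Williams style, or, more cleanly, by showing $>_{lpo}$ is a simplification order over the finitely many symbols occurring in the rules and invoking Kruskal's tree theorem. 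That is a matter of filling in a classical argument rather than a gap in the approach, so your plan is essentially the accepted proof of the theorem the paper takes for granted.
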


\subsection{Operational Semantics}\label{os}

We assume a non-empty set $S$ of states, a finite, non-empty set of transition labels $A$ and a finite set of predicate symbols.

\begin{definition}[Labeled transition system] 
A transition is a triple $(s,a,s')$ with $a\in A$, or a pair (s, P) with $P$ a predicate, where $s,s'\in S$. A labeled transition system (LTS) is possibly infinite set of transitions. An LTS is finitely branching if each of its states has only finitely many outgoing transitions.
\end{definition}

\begin{definition}[Transition system specification] 
A transition rule $\rho$ is an expression of the form $\frac{H}{\pi}$, with $H$ a set of expressions $t\xrightarrow{a}t'$ and $tP$ with $t,t'\in \mathbb{T}(\Sigma)$, called the (positive) premises of $\rho$, and $\pi$ an expression $t\xrightarrow{a}t'$ or $tP$ with $t,t'\in \mathbb{T}(\Sigma)$, called the conclusion of $\rho$. The left-hand side of $\pi$ is called the source of $\rho$. A transition rule is closed if it does not contain any variables. A transition system specification (TSS) is a (possible infinite) set of transition rules.
\end{definition}

\begin{restatable}[Process (graph)]{definition}{processgraph}
A process (graph) $p$ is an LTS in which one state $s$ is elected to be the root. If the LTS contains a transition $s\xrightarrow{a} s'$, then $p\xrightarrow{a} p'$ where $p'$ has root state $s'$. Moreover, if the LTS contains a transition $sP$, then $pP$. (1) A process $p_0$ is finite if there are only finitely many sequences $p_0\xrightarrow{a_1}p_1\xrightarrow{a_2}\cdots\xrightarrow{a_k} p_k$. (2) A process $p_0$ is regular if there are only finitely many processes $p_k$ such that $p_0\xrightarrow{a_1}p_1\xrightarrow{a_2}\cdots\xrightarrow{a_k} p_k$.
\end{restatable}

\begin{definition}[Bisimulation]
A bisimulation relation $R$ is a binary relation on processes such that: (1) if $p R q$ and $p\xrightarrow{a}p'$ then $q\xrightarrow{a}q'$ with $p' R q'$; (2) if $p R q$ and $q\xrightarrow{a}q'$ then $p\xrightarrow{a}p'$ with $p' R q'$; (3) if $p R q$ and $pP$, then $qP$; (4) if $p R q$ and $qP$, then $pP$. Two processes $p$ and $q$ are bisimilar, denoted by $p\sim_{HM} q$, if there is a bisimulation relation $R$ such that $p R q$. Note that $p,p',q,q'$ are processes, $a$ is an atomic action, and $P$ is a predicate.
\end{definition}

\begin{definition}[Congruence]
Let $\Sigma$ be a signature. An equivalence relation $R$ on $\mathcal{T}(\Sigma)$ is a congruence if for each $f\in\Sigma$, if $s_i R t_i$ for $i\in\{1,\cdots,ar(f)\}$, then $f(s_1,\cdots,s_{ar(f)}) R f(t_1,\cdots,t_{ar(f)})$.
\end{definition}

\begin{definition}[Branching bisimulation]
A branching bisimulation relation $R$ is a binary relation on the collection of processes such that: (1) if $p R q$ and $p\xrightarrow{a}p'$ then either $a\equiv \tau$ and $p' R q$ or there is a sequence of (zero or more) $\tau$-transitions $q\xrightarrow{\tau}\cdots\xrightarrow{\tau}q_0$ such that $p R q_0$ and $q_0\xrightarrow{a}q'$ with $p' R q'$; (2) if $p R q$ and $q\xrightarrow{a}q'$ then either $a\equiv \tau$ and $p R q'$ or there is a sequence of (zero or more) $\tau$-transitions $p\xrightarrow{\tau}\cdots\xrightarrow{\tau}p_0$ such that $p_0 R q$ and $p_0\xrightarrow{a}p'$ with $p' R q'$; (3) if $p R q$ and $pP$, then there is a sequence of (zero or more) $\tau$-transitions $q\xrightarrow{\tau}\cdots\xrightarrow{\tau}q_0$ such that $p R q_0$ and $q_0P$; (4) if $p R q$ and $qP$, then there is a sequence of (zero or more) $\tau$-transitions $p\xrightarrow{\tau}\cdots\xrightarrow{\tau}p_0$ such that $p_0 R q$ and $p_0P$. Two processes $p$ and $q$ are branching bisimilar, denoted by $p\approx_{bHM} q$, if there is a branching bisimulation relation $R$ such that $p R q$.
\end{definition}

\begin{definition}[Rooted branching bisimulation]
A rooted branching bisimulation relation $R$ is a binary relation on processes such that: (1) if $p R q$ and $p\xrightarrow{a}p'$ then $q\xrightarrow{a}q'$ with $p'\approx_{bHM} q'$; (2) if $p R q$ and $q\xrightarrow{a}q'$ then $p\xrightarrow{a}p'$ with $p'\approx_{bHM} q'$; (3) if $p R q$ and $pP$, then $qP$; (4) if $p R q$ and $qP$, then $pP$. Two processes $p$ and $q$ are rooted branching bisimilar, denoted by $p\approx_{rbHM} q$, if there is a rooted branching bisimulation relation $R$ such that $p R q$.
\end{definition}

\begin{definition}[Conservative extension]
Let $T_0$ and $T_1$ be TSSs (transition system specifications) over signatures $\Sigma_0$ and $\Sigma_1$, respectively. The TSS $T_0\oplus T_1$ is a conservative extension of $T_0$ if the LTSs (labeled transition systems) generated by $T_0$ and $T_0\oplus T_1$ contain exactly the same transitions $t\xrightarrow{a}t'$ and $tP$ with $t\in \mathcal{T}(\Sigma_0)$.
\end{definition}

\begin{definition}[Source-dependency]
The source-dependent variables in a transition rule of $\rho$ are defined inductively as follows: (1) all variables in the source of $\rho$ are source-dependent; (2) if
$t\xrightarrow{a}t'$ is a premise of $\rho$ and all variables in $t$ are source-dependent, then all variables in $t'$ are source-dependent. A transition rule is source-dependent if
all its variables are. A TSS is source-dependent if all its rules are.
\end{definition}

\begin{definition}[Freshness]
Let $T_0$ and $T_1$ be TSSs over signatures $\Sigma_0$ and $\Sigma_1$, respectively. A term in $\mathbb{T}(T_0\oplus T_1)$ is said to be fresh if it contains a function symbol from
$\Sigma_1\setminus\Sigma_0$. Similarly, a transition label or predicate symbol in $T_1$ is fresh if it does not occur in $T_0$.
\end{definition}

\begin{theorem}[Conservative extension]
Let $T_0$ and $T_1$ be TSSs over signatures $\Sigma_0$ and $\Sigma_1$, respectively, where $T_0$ and $T_0\oplus T_1$ are positive after reduction. Under the following conditions,
$T_0\oplus T_1$ is a conservative extension of $T_0$. (1) $T_0$ is source-dependent. (2) For each $\rho\in T_1$, either the source of $\rho$ is fresh, or $\rho$ has a premise of the
form $t\xrightarrow{a}t'$ or $tP$, where $t\in \mathbb{T}(\Sigma_0)$, all variables in $t$ occur in the source of $\rho$ and $t'$, and $a$ or $P$ is fresh.
\end{theorem}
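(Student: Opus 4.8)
The statement has two halves, and one of them is immediate: since $T_0\subseteq T_0\oplus T_1$, every transition $t\xrightarrow{a}t'$ or predicate $tP$ derivable in $T_0$ is a fortiori derivable in $T_0\oplus T_1$, in particular those with $t\in\mathcal{T}(\Sigma_0)$. So the whole content is the converse: a transition or predicate with source in $\mathcal{T}(\Sigma_0)$ that is derivable in $T_0\oplus T_1$ is already derivable in $T_0$. The plan is to prove this converse by induction on the (well-founded) proof of the transition in $T_0\oplus T_1$, which is available because $T_0\oplus T_1$ is positive after reduction, and to state the induction in a strengthened form: \emph{if $u\in\mathcal{T}(\Sigma_0)$ and $u\xrightarrow{b}u'$ (resp.\ $uP'$) is derivable in $T_0\oplus T_1$, then $b$ (resp.\ $P'$) is not fresh, $u'\in\mathcal{T}(\Sigma_0)$, and $u\xrightarrow{b}u'$ (resp.\ $uP'$) is derivable in $T_0$}. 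Carrying ``$u'\in\mathcal{T}(\Sigma_0)$'' and ``$b$ not fresh'' along is essential: without them the inductive step on the premises of a rule would not go through.

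For the inductive step I would look at the last rule $\rho$ applied and the closed substitution $\sigma$ with $\sigma(\mathrm{source}(\rho))=u$, and first rule out $\rho\in T_1\setminus T_0$. The source of $\rho$ cannot be fresh: a fresh source contains a function symbol from $\Sigma_1\setminus\Sigma_0$, but $\sigma(\mathrm{source}(\rho))=u\in\mathcal{T}(\Sigma_0)$ contains no such symbol. Hence, by condition (2), $\rho$ has a premise $t_0\xrightarrow{a_0}t_0'$ or $t_0P_0$ with $t_0\in\mathbb{T}(\Sigma_0)$, with $a_0$ or $P_0$ fresh, and with every variable of $t_0$ occurring in $\mathrm{source}(\rho)$. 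Since $\mathrm{source}(\rho)$ has only $\Sigma_0$-symbols and $\sigma$ maps it to the closed $\Sigma_0$-term $u$, $\sigma$ sends all its variables — in particular all variables of $t_0$ — to closed $\Sigma_0$-terms, so $\sigma(t_0)\in\mathcal{T}(\Sigma_0)$. That premise has a strictly smaller proof in $T_0\oplus T_1$, so the induction hypothesis applies to it and gives that its label/predicate is not fresh; but it is $a_0$ or $P_0$, which is fresh — a contradiction. Hence $\rho\in T_0$ (the base case, where $\rho$ is an axiom, is the same: a $T_1$-axiom has no premises and hence, by condition (2), a fresh source, which is impossible).

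It remains to treat $\rho\in T_0$, and here I would use that $T_0$ is source-dependent to process the premises of $\rho$ in an order witnessing source-dependency. The variables in $\mathrm{source}(\rho)$ are $\sigma$-mapped into $\mathcal{T}(\Sigma_0)$ because $\sigma(\mathrm{source}(\rho))=u\in\mathcal{T}(\Sigma_0)$ and $\mathrm{source}(\rho)\in\mathbb{T}(\Sigma_0)$. Whenever a premise $t_1\xrightarrow{c}t_1'$ (or $t_1P_1$) of $\rho$ has all variables of $t_1$ already known to be $\sigma$-mapped into $\mathcal{T}(\Sigma_0)$, then $\sigma(t_1)\in\mathcal{T}(\Sigma_0)$ (as $t_1\in\mathbb{T}(\Sigma_0)$), the premise has a strictly smaller proof, and the induction hypothesis gives that it is derivable in $T_0$ and that $\sigma(t_1')\in\mathcal{T}(\Sigma_0)$; this enlarges the set of variables $\sigma$-mapped into $\mathcal{T}(\Sigma_0)$ by the new variables of $t_1'$. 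Since $T_0$ is source-dependent, iterating reaches all variables of $\rho$, so every premise of $\rho$ under $\sigma$ is derivable in $T_0$; applying $\rho$ then derives $u\xrightarrow{b}u'$ (resp.\ $uP'$) in $T_0$, with $u'=\sigma(\text{target of the conclusion})\in\mathcal{T}(\Sigma_0)$, and $b$ (resp.\ $P'$) occurs in $\rho\in T_0$ so is not fresh. This closes the induction, and instantiating the strengthened claim at $u=t$, $b=a$ gives the theorem. The routine parts are the freshness bookkeeping and the source-dependency propagation; I expect the main obstacle to be making precise the appeal to ``the well-founded proof'' under the hypothesis ``positive after reduction'': one must fix the reduced positive forms of $T_0$ and of $T_0\oplus T_1$, check that the reduction of $T_0\oplus T_1$ restricts on $\Sigma_0$-sources to the reduction of $T_0$ (so that resolved negative premises are consistent between the two systems), and only then run the induction above on the reduced, positive systems.
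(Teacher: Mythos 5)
The paper itself gives no proof of this statement: it is quoted in the preliminaries as a known meta-theorem (the standard conservative-extension theorem for TSSs, due to Fokkink and Verhoef), so there is no proof of record to compare your plan against. Judged on its own merits, your plan reconstructs the standard argument correctly in its essentials: the strengthened induction hypothesis (label or predicate not fresh, target stays in $\mathcal{T}(\Sigma_0)$, derivability already in $T_0$) is exactly the right invariant; the freshness argument excluding rules of $T_1$ is sound, since a non-fresh source forces, via condition (2), a premise whose instance has source in $\mathcal{T}(\Sigma_0)$ and whose label or predicate the induction hypothesis would declare non-fresh, contradicting its stipulated freshness; and the iteration along source-dependency correctly propagates ``$\sigma$ maps this variable to a closed $\Sigma_0$-term'' from the source variables through the premises of a $T_0$-rule, so that all premise instances are covered by the induction hypothesis and the conclusion, including its target, lands back in $T_0$ over $\mathcal{T}(\Sigma_0)$. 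One caveat: your opening claim that the inclusion ``derivable in $T_0$ implies derivable in $T_0\oplus T_1$'' is immediate holds only for genuinely positive TSSs; once negative premises are present and resolved by reduction (the ``positive after reduction'' hypothesis), both inclusions hinge on the compatibility of the two reductions on transitions with $\Sigma_0$-sources, which you correctly identify at the end as the remaining technical obstacle --- that check should be understood as underwriting both directions of the conservativity claim, not only the converse on which your induction is run.
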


\subsection{Process Algebras} 

In this subsection, we introduce the preliminaries on process algebra $CCS$, $ACP$, which are based on the interleaving bisimulation semantics.

A crucial initial observation that is at the heart of the notion of process algebra is due to Milner, who noticed that concurrent processes have an algebraic structure. $CCS$ \cite{CC} \cite{CCS} is a calculus of concurrent systems. It includes syntax and semantics:

\begin{enumerate}
  \item Its syntax includes actions, process constant, and operators acting between actions, like Prefix, Summation, Composition, Restriction, Relabelling.
  \item Its semantics is based on labeled transition systems, Prefix, Summation, Composition, Restriction, Relabelling have their transition rules. $CCS$ has good semantic properties based on the interleaving bisimulation. These properties include monoid laws, static laws, expansion law for strongly interleaving bisimulation, $\tau$ laws for weakly interleaving bisimulation, and full congruences for strongly and weakly interleaving bisimulations, and also unique solution for recursion.
\end{enumerate}

$CCS$ can be used widely in verification of computer systems with an interleaving concurrent flavor.

$ACP$ captures several computational properties in the form of algebraic laws, and proves the soundness and completeness modulo bisimulation/rooted branching bisimulation equivalences. These computational properties are organized in a modular way by use of the concept of conservational extension, which include the following modules, note that, every algebra are composed of constants and operators, the constants are the computational objects, while operators capture the computational properties.

\begin{enumerate}
  \item \textbf{$BPA$ (Basic Process Algebras)}. $BPA$ has sequential composition $\cdot$ and alternative composition $+$ to capture sequential computation and nondeterminacy. The constants are ranged over $A$, the set of atomic actions. The algebraic laws on $\cdot$ and $+$ are sound and complete modulo bisimulation equivalence.
  \item \textbf{$ACP$ (Algebra of Communicating Processes)}. $ACP$ uses the parallel operator $\parallel$, the auxiliary binary left merge $\leftmerge$ to model parallelism, and the communication merge $\mid$ to model communications among different parallel branches. Since a communication may be blocked, a new constant called deadlock $\delta$ is extended to $A$, and also a new unary encapsulation operator $\partial_H$ is introduced to eliminate $\delta$, which may exist in the processes. The algebraic laws on these operators are also sound and complete modulo bisimulation equivalence. Note that, these operators in a process can be eliminated by deductions on the process using axioms of $ACP$, and eventually be steadied by $\cdot$ and $+$, this is also why bisimulation is called an \emph{interleaving} semantics.
  \item \textbf{Recursion}. To model infinite computation, recursion is introduced into $ACP$. In order to obtain a sound and complete theory, guarded recursion and linear recursion are needed. The corresponding axioms are $RSP$ (Recursive Specification Principle) and $RDP$ (Recursive Definition Principle), $RDP$ says the solutions of a recursive specification can represent the behaviors of the specification, while $RSP$ says that a guarded recursive specification has only one solution, they are sound with respect to $ACP$ with guarded recursion modulo bisimulation equivalence, and they are complete with respect to $ACP$ with linear recursion modulo bisimulation equivalence.
  \item \textbf{Abstraction}. To abstract away internal implementations from the external behaviors, a new constant $\tau$ called silent step is added to $A$, and also a new unary abstraction operator $\tau_I$ is used to rename actions in $I$ into $\tau$ (the resulted $ACP$ with silent step and abstraction operator is called $ACP_{\tau}$). The recursive specification is adapted to guarded linear recursion to prevent infinite $\tau$-loops specifically. The axioms for $\tau$ and $\tau_I$ are sound modulo rooted branching bisimulation equivalence (a kind of weak bisimulation equivalence). To eliminate infinite $\tau$-loops caused by $\tau_I$ and obtain the completeness, $CFAR$ (Cluster Fair Abstraction Rule) is used to prevent infinite $\tau$-loops in a constructible way.
\end{enumerate}

$ACP$ can be used to verify the correctness of system behaviors, by deduction on the description of the system using the axioms of $ACP$. Base on the modularity of $ACP$, it can be extended easily and elegantly. For more details, please refer to the book of $ACP$ \cite{ACP}.

\subsection{Prime Event Structures}\label{pes2}

The definition of prime event structures are coming from \cite{PED} \cite{ES} \cite{IES}, and truly concurrent bisimulations are coming from \cite{HHPS1} \cite{HHPS2}.

\begin{definition}[Prime event structure with silent event]\label{pes}
Let $\Lambda$ be a fixed set of labels, ranged over $a,b,c,\cdots$ and $\tau$. A ($\Lambda$-labelled) prime event structure with silent event $\tau$ is a tuple $\mathcal{E}=\langle \mathbb{E}, \leq, \sharp, \lambda\rangle$, where $\mathbb{E}$ is a denumerable set of events, including the silent event $\tau$. Let $\hat{\mathbb{E}}=\mathbb{E}\backslash\{\tau\}$, exactly excluding $\tau$, it is obvious that $\hat{\tau^*}=\epsilon$, where $\epsilon$ is the empty event. Let $\lambda:\mathbb{E}\rightarrow\Lambda$ be a labelling function and let $\lambda(\tau)=\tau$. And $\leq$, $\sharp$ are binary relations on $\mathbb{E}$, called causality and conflict respectively, such that:

\begin{enumerate}
  \item $\leq$ is a partial order and $\lceil a \rceil = \{a'\in \mathbb{E}|a'\leq a\}$ is finite for all $a\in \mathbb{E}$. It is easy to see that $a\leq\tau^*\leq a'=a\leq\tau\leq\cdots\leq\tau\leq a'$, then $a\leq a'$.
  \item $\sharp$ is irreflexive, symmetric and hereditary with respect to $\leq$, that is, for all $a,a',a''\in \mathbb{E}$, if $a\sharp a'\leq a''$, then $a\sharp a''$.
\end{enumerate}

Then, the concepts of consistency and concurrency can be drawn from the above definition:

\begin{enumerate}
  \item $a,a'\in \mathbb{E}$ are consistent, denoted as $a\frown a'$, if $\neg(a\sharp a')$. A subset $X\subseteq \mathbb{E}$ is called consistent, if $a\frown a'$ for all $a,a'\in X$.
  \item $a,a'\in \mathbb{E}$ are concurrent, denoted as $a\parallel a'$, if $\neg(a\leq a')$, $\neg(a'\leq a)$, and $\neg(a\sharp a')$.
\end{enumerate}
\end{definition}

\begin{definition}[Configuration]
Let $\mathcal{E}$ be a PES. A (finite) configuration in $\mathcal{E}$ is a (finite) consistent subset of events $C\subseteq \mathcal{E}$, closed with respect to causality (i.e. $\lceil C\rceil=C$). The set of finite configurations of $\mathcal{E}$ is denoted by $\mathcal{C}(\mathcal{E})$. We let $\hat{C}=C\backslash\{\tau\}$.
\end{definition}

A consistent subset of $X\subseteq \mathbb{E}$ of events can be seen as a pomset. Given $X, Y\subseteq \mathbb{E}$, $\hat{X}\sim \hat{Y}$ if $\hat{X}$ and $\hat{Y}$ are isomorphic as pomsets. In the following of the paper, we say $C_1\sim C_2$, we mean $\hat{C_1}\sim\hat{C_2}$.

\begin{definition}[Pomset transitions and step]
Let $\mathcal{E}$ be a PES and let $C\in\mathcal{C}(\mathcal{E})$, and $\emptyset\neq X\subseteq \mathbb{E}$, if $C\cap X=\emptyset$ and $C'=C\cup X\in\mathcal{C}(\mathcal{E})$, then $C\xrightarrow{X} C'$ is called a pomset transition from $C$ to $C'$. When the events in $X$ are pairwise concurrent, we say that $C\xrightarrow{X}C'$ is a step.
\end{definition}

\begin{definition}[Weak pomset transitions and weak step]
Let $\mathcal{E}$ be a PES and let $C\in\mathcal{C}(\mathcal{E})$, and $\emptyset\neq X\subseteq \hat{\mathbb{E}}$, if $C\cap X=\emptyset$ and $\hat{C'}=\hat{C}\cup X\in\mathcal{C}(\mathcal{E})$, then $C\xRightarrow{X} C'$ is called a weak pomset transition from $C$ to $C'$, where we define $\xRightarrow{a}\triangleq\xrightarrow{\tau^*}\xrightarrow{a}\xrightarrow{\tau^*}$. And $\xRightarrow{X}\triangleq\xrightarrow{\tau^*}\xrightarrow{a}\xrightarrow{\tau^*}$, for every $a\in X$. When the events in $X$ are pairwise concurrent, we say that $C\xRightarrow{X}C'$ is a weak step.
\end{definition}

\begin{definition}[Pomset, step bisimulation]\label{psb}
Let $\mathcal{E}_1$, $\mathcal{E}_2$ be PESs. A pomset bisimulation is a relation $R\subseteq\mathcal{C}(\mathcal{E}_1)\times\mathcal{C}(\mathcal{E}_2)$, such that if $(C_1,C_2)\in R$, and $C_1\xrightarrow{X_1}C_1'$ then $C_2\xrightarrow{X_2}C_2'$, with $X_1\subseteq \mathbb{E}_1$, $X_2\subseteq \mathbb{E}_2$, $X_1\sim X_2$ and $(C_1',C_2')\in R$, and vice-versa. We say that $\mathcal{E}_1$, $\mathcal{E}_2$ are pomset bisimilar, written $\mathcal{E}_1\sim_p\mathcal{E}_2$, if there exists a pomset bisimulation $R$, such that $(\emptyset,\emptyset)\in R$. By replacing pomset transitions with steps, we can get the definition of step bisimulation. When PESs $\mathcal{E}_1$ and $\mathcal{E}_2$ are step bisimilar, we write $\mathcal{E}_1\sim_s\mathcal{E}_2$.
\end{definition}

\begin{definition}[Weak pomset, step bisimulation]\label{wpsb}
Let $\mathcal{E}_1$, $\mathcal{E}_2$ be PESs. A weak pomset bisimulation is a relation $R\subseteq\mathcal{C}(\mathcal{E}_1)\times\mathcal{C}(\mathcal{E}_2)$, such that if $(C_1,C_2)\in R$, and $C_1\xRightarrow{X_1}C_1'$ then $C_2\xRightarrow{X_2}C_2'$, with $X_1\subseteq \hat{\mathbb{E}_1}$, $X_2\subseteq \hat{\mathbb{E}_2}$, $X_1\sim X_2$ and $(C_1',C_2')\in R$, and vice-versa. We say that $\mathcal{E}_1$, $\mathcal{E}_2$ are weak pomset bisimilar, written $\mathcal{E}_1\approx_p\mathcal{E}_2$, if there exists a weak pomset bisimulation $R$, such that $(\emptyset,\emptyset)\in R$. By replacing weak pomset transitions with weak steps, we can get the definition of weak step bisimulation. When PESs $\mathcal{E}_1$ and $\mathcal{E}_2$ are weak step bisimilar, we write $\mathcal{E}_1\approx_s\mathcal{E}_2$.
\end{definition}

\begin{definition}[Posetal product]
Given two PESs $\mathcal{E}_1$, $\mathcal{E}_2$, the posetal product of their configurations, denoted $\mathcal{C}(\mathcal{E}_1)\overline{\times}\mathcal{C}(\mathcal{E}_2)$, is defined as

$$\{(C_1,f,C_2)|C_1\in\mathcal{C}(\mathcal{E}_1),C_2\in\mathcal{C}(\mathcal{E}_2),f:C_1\rightarrow C_2 \textrm{ isomorphism}\}.$$

A subset $R\subseteq\mathcal{C}(\mathcal{E}_1)\overline{\times}\mathcal{C}(\mathcal{E}_2)$ is called a posetal relation. We say that $R$ is downward closed when for any $(C_1,f,C_2),(C_1',f',C_2')\in \mathcal{C}(\mathcal{E}_1)\overline{\times}\mathcal{C}(\mathcal{E}_2)$, if $(C_1,f,C_2)\subseteq (C_1',f',C_2')$ pointwise and $(C_1',f',C_2')\in R$, then $(C_1,f,C_2)\in R$.

For $f:X_1\rightarrow X_2$, we define $f[x_1\mapsto x_2]:X_1\cup\{x_1\}\rightarrow X_2\cup\{x_2\}$, $z\in X_1\cup\{x_1\}$,(1)$f[x_1\mapsto x_2](z)=
x_2$,if $z=x_1$;(2)$f[x_1\mapsto x_2](z)=f(z)$, otherwise. Where $X_1\subseteq \mathbb{E}_1$, $X_2\subseteq \mathbb{E}_2$, $x_1\in \mathbb{E}_1$, $x_2\in \mathbb{E}_2$.
\end{definition}

\begin{definition}[Weakly posetal product]
Given two PESs $\mathcal{E}_1$, $\mathcal{E}_2$, the weakly posetal product of their configurations, denoted $\mathcal{C}(\mathcal{E}_1)\overline{\times}\mathcal{C}(\mathcal{E}_2)$, is defined as

$$\{(C_1,f,C_2)|C_1\in\mathcal{C}(\mathcal{E}_1),C_2\in\mathcal{C}(\mathcal{E}_2),f:\hat{C_1}\rightarrow \hat{C_2} \textrm{ isomorphism}\}.$$

A subset $R\subseteq\mathcal{C}(\mathcal{E}_1)\overline{\times}\mathcal{C}(\mathcal{E}_2)$ is called a weakly posetal relation. We say that $R$ is downward closed when for any $(C_1,f,C_2),(C_1',f,C_2')\in \mathcal{C}(\mathcal{E}_1)\overline{\times}\mathcal{C}(\mathcal{E}_2)$, if $(C_1,f,C_2)\subseteq (C_1',f',C_2')$ pointwise and $(C_1',f',C_2')\in R$, then $(C_1,f,C_2)\in R$.

For $f:X_1\rightarrow X_2$, we define $f[x_1\mapsto x_2]:X_1\cup\{x_1\}\rightarrow X_2\cup\{x_2\}$, $z\in X_1\cup\{x_1\}$,(1)$f[x_1\mapsto x_2](z)=
x_2$,if $z=x_1$;(2)$f[x_1\mapsto x_2](z)=f(z)$, otherwise. Where $X_1\subseteq \hat{\mathbb{E}_1}$, $X_2\subseteq \hat{\mathbb{E}_2}$, $x_1\in \hat{\mathbb{E}}_1$, $x_2\in \hat{\mathbb{E}}_2$. Also, we define $f(\tau^*)=f(\tau^*)$.
\end{definition}

\begin{definition}[(Hereditary) history-preserving bisimulation]
A history-preserving (hp-) bisimulation is a posetal relation $R\subseteq\mathcal{C}(\mathcal{E}_1)\overline{\times}\mathcal{C}(\mathcal{E}_2)$ such that if $(C_1,f,C_2)\in R$, and $C_1\xrightarrow{a_1} C_1'$, then $C_2\xrightarrow{a_2} C_2'$, with $(C_1',f[a_1\mapsto a_2],C_2')\in R$, and vice-versa. $\mathcal{E}_1,\mathcal{E}_2$ are history-preserving (hp-)bisimilar and are written $\mathcal{E}_1\sim_{hp}\mathcal{E}_2$ if there exists a hp-bisimulation $R$ such that $(\emptyset,\emptyset,\emptyset)\in R$.

A hereditary history-preserving (hhp-)bisimulation is a downward closed hp-bisimulation. $\mathcal{E}_1,\mathcal{E}_2$ are hereditary history-preserving (hhp-)bisimilar and are written $\mathcal{E}_1\sim_{hhp}\mathcal{E}_2$.
\end{definition}

\begin{definition}[Weak (hereditary) history-preserving bisimulation]
A weak history-preserving (hp-) bisimulation is a weakly posetal relation $R\subseteq\mathcal{C}(\mathcal{E}_1)\overline{\times}\mathcal{C}(\mathcal{E}_2)$ such that if $(C_1,f,C_2)\in R$, and $C_1\xRightarrow{a_1} C_1'$, then $C_2\xRightarrow{a_2} C_2'$, with $(C_1',f[a_1\mapsto a_2],C_2')\in R$, and vice-versa. $\mathcal{E}_1,\mathcal{E}_2$ are weak history-preserving (hp-)bisimilar and are written $\mathcal{E}_1\approx_{hp}\mathcal{E}_2$ if there exists a weak hp-bisimulation $R$ such that $(\emptyset,\emptyset,\emptyset)\in R$.

A weakly hereditary history-preserving (hhp-)bisimulation is a downward closed weak hp-bisimulation. $\mathcal{E}_1,\mathcal{E}_2$ are weakly hereditary history-preserving (hhp-)bisimilar and are written $\mathcal{E}_1\approx_{hhp}\mathcal{E}_2$.
\end{definition}

\begin{definition}[Branching pomset, step bisimulation]
Assume a special termination predicate $\downarrow$, and let $\surd$ represent a state with $\surd\downarrow$. Let $\mathcal{E}_1$, $\mathcal{E}_2$ be PESs. A branching pomset bisimulation is a relation $R\subseteq\mathcal{C}(\mathcal{E}_1)\times\mathcal{C}(\mathcal{E}_2)$, such that:
 \begin{enumerate}
   \item if $(C_1,C_2)\in R$, and $C_1\xrightarrow{X}C_1'$ then
   \begin{itemize}
     \item either $X\equiv \tau^*$, and $(C_1',C_2)\in R$;
     \item or there is a sequence of (zero or more) $\tau$-transitions $C_2\xrightarrow{\tau^*} C_2^0$, such that $(C_1,C_2^0)\in R$ and $C_2^0\xRightarrow{X}C_2'$ with $(C_1',C_2')\in R$;
   \end{itemize}
   \item if $(C_1,C_2)\in R$, and $C_2\xrightarrow{X}C_2'$ then
   \begin{itemize}
     \item either $X\equiv \tau^*$, and $(C_1,C_2')\in R$;
     \item or there is a sequence of (zero or more) $\tau$-transitions $C_1\xrightarrow{\tau^*} C_1^0$, such that $(C_1^0,C_2)\in R$ and $C_1^0\xRightarrow{X}C_1'$ with $(C_1',C_2')\in R$;
   \end{itemize}
   \item if $(C_1,C_2)\in R$ and $C_1\downarrow$, then there is a sequence of (zero or more) $\tau$-transitions $C_2\xrightarrow{\tau^*}C_2^0$ such that $(C_1,C_2^0)\in R$ and $C_2^0\downarrow$;
   \item if $(C_1,C_2)\in R$ and $C_2\downarrow$, then there is a sequence of (zero or more) $\tau$-transitions $C_1\xrightarrow{\tau^*}C_1^0$ such that $(C_1^0,C_2)\in R$ and $C_1^0\downarrow$.
 \end{enumerate}

We say that $\mathcal{E}_1$, $\mathcal{E}_2$ are branching pomset bisimilar, written $\mathcal{E}_1\approx_{bp}\mathcal{E}_2$, if there exists a branching pomset bisimulation $R$, such that $(\emptyset,\emptyset)\in R$.

By replacing pomset transitions with steps, we can get the definition of branching step bisimulation. When PESs $\mathcal{E}_1$ and $\mathcal{E}_2$ are branching step bisimilar, we write $\mathcal{E}_1\approx_{bs}\mathcal{E}_2$.
\end{definition}

\begin{definition}[Rooted branching pomset, step bisimulation]
Assume a special termination predicate $\downarrow$, and let $\surd$ represent a state with $\surd\downarrow$. Let $\mathcal{E}_1$, $\mathcal{E}_2$ be PESs. A rooted branching pomset bisimulation is a relation $R\subseteq\mathcal{C}(\mathcal{E}_1)\times\mathcal{C}(\mathcal{E}_2)$, such that:
 \begin{enumerate}
   \item if $(C_1,C_2)\in R$, and $C_1\xrightarrow{X}C_1'$ then $C_2\xrightarrow{X}C_2'$ with $C_1'\approx_{bp}C_2'$;
   \item if $(C_1,C_2)\in R$, and $C_2\xrightarrow{X}C_2'$ then $C_1\xrightarrow{X}C_1'$ with $C_1'\approx_{bp}C_2'$;
   \item if $(C_1,C_2)\in R$ and $C_1\downarrow$, then $C_2\downarrow$;
   \item if $(C_1,C_2)\in R$ and $C_2\downarrow$, then $C_1\downarrow$.
 \end{enumerate}

We say that $\mathcal{E}_1$, $\mathcal{E}_2$ are rooted branching pomset bisimilar, written $\mathcal{E}_1\approx_{rbp}\mathcal{E}_2$, if there exists a rooted branching pomset bisimulation $R$, such that $(\emptyset,\emptyset)\in R$.

By replacing pomset transitions with steps, we can get the definition of rooted branching step bisimulation. When PESs $\mathcal{E}_1$ and $\mathcal{E}_2$ are rooted branching step bisimilar, we write $\mathcal{E}_1\approx_{rbs}\mathcal{E}_2$.
\end{definition}

\begin{definition}[Branching (hereditary) history-preserving bisimulation]
Assume a special termination predicate $\downarrow$, and let $\surd$ represent a state with $\surd\downarrow$. A branching history-preserving (hp-) bisimulation is a weakly posetal relation $R\subseteq\mathcal{C}(\mathcal{E}_1)\overline{\times}\mathcal{C}(\mathcal{E}_2)$ such that:

 \begin{enumerate}
   \item if $(C_1,f,C_2)\in R$, and $C_1\xrightarrow{a_1}C_1'$ then
   \begin{itemize}
     \item either $a_1\equiv \tau$, and $(C_1',f[a_1\mapsto \tau^{a_1}],C_2)\in R$;
     \item or there is a sequence of (zero or more) $\tau$-transitions $C_2\xrightarrow{\tau^*} C_2^0$, such that $(C_1,f,C_2^0)\in R$ and $C_2^0\xrightarrow{a_2}C_2'$ with $(C_1',f[a_1\mapsto a_2],C_2')\in R$;
   \end{itemize}
   \item if $(C_1,f,C_2)\in R$, and $C_2\xrightarrow{a_2}C_2'$ then
   \begin{itemize}
     \item either $a_2\equiv \tau$, and $(C_1,f[a_2\mapsto \tau^{a_2}],C_2')\in R$;
     \item or there is a sequence of (zero or more) $\tau$-transitions $C_1\xrightarrow{\tau^*} C_1^0$, such that $(C_1^0,f,C_2)\in R$ and $C_1^0\xrightarrow{a_1}C_1'$ with $(C_1',f[a_2\mapsto a_1],C_2')\in R$;
   \end{itemize}
   \item if $(C_1,f,C_2)\in R$ and $C_1\downarrow$, then there is a sequence of (zero or more) $\tau$-transitions $C_2\xrightarrow{\tau^*}C_2^0$ such that $(C_1,f,C_2^0)\in R$ and $C_2^0\downarrow$;
   \item if $(C_1,f,C_2)\in R$ and $C_2\downarrow$, then there is a sequence of (zero or more) $\tau$-transitions $C_1\xrightarrow{\tau^*}C_1^0$ such that $(C_1^0,f,C_2)\in R$ and $C_1^0\downarrow$.
 \end{enumerate}

$\mathcal{E}_1,\mathcal{E}_2$ are branching history-preserving (hp-)bisimilar and are written $\mathcal{E}_1\approx_{bhp}\mathcal{E}_2$ if there exists a branching hp-bisimulation $R$ such that $(\emptyset,\emptyset,\emptyset)\in R$.

A branching hereditary history-preserving (hhp-)bisimulation is a downward closed branching hp-bisimulation. $\mathcal{E}_1,\mathcal{E}_2$ are branching hereditary history-preserving (hhp-)bisimilar and are written $\mathcal{E}_1\approx_{bhhp}\mathcal{E}_2$.
\end{definition}

\begin{definition}[Rooted branching (hereditary) history-preserving bisimulation]
Assume a special termination predicate $\downarrow$, and let $\surd$ represent a state with $\surd\downarrow$. A rooted branching history-preserving (hp-) bisimulation is a weakly posetal relation $R\subseteq\mathcal{C}(\mathcal{E}_1)\overline{\times}\mathcal{C}(\mathcal{E}_2)$ such that:

 \begin{enumerate}
   \item if $(C_1,f,C_2)\in R$, and $C_1\xrightarrow{a_1}C_1'$, then $C_2\xrightarrow{a_2}C_2'$ with $C_1'\approx_{bhp}C_2'$;
   \item if $(C_1,f,C_2)\in R$, and $C_2\xrightarrow{a_2}C_2'$, then $C_1\xrightarrow{a_1}C_1'$ with $C_1'\approx_{bhp}C_2'$;
   \item if $(C_1,f,C_2)\in R$ and $C_1\downarrow$, then $C_2\downarrow$;
   \item if $(C_1,f,C_2)\in R$ and $C_2\downarrow$, then $C_1\downarrow$.
 \end{enumerate}

$\mathcal{E}_1,\mathcal{E}_2$ are rooted branching history-preserving (hp-)bisimilar and are written $\mathcal{E}_1\approx_{rbhp}\mathcal{E}_2$ if there exists a rooted branching hp-bisimulation $R$ such that $(\emptyset,\emptyset,\emptyset)\in R$.

A rooted branching hereditary history-preserving (hhp-)bisimulation is a downward closed rooted branching hp-bisimulation. $\mathcal{E}_1,\mathcal{E}_2$ are rooted branching hereditary history-preserving (hhp-)bisimilar and are written $\mathcal{E}_1\approx_{rbhhp}\mathcal{E}_2$.
\end{definition}

\subsection{Truly Concurrent Process Algebras}\label{tcpa}

CTC \cite{APTC} is a calculus of truly concurrent systems. It includes syntax and semantics:

\begin{enumerate}
  \item Its syntax includes actions, process constant, and operators acting between actions, like Prefix, Summation, Composition, Restriction, Relabelling.
  \item Its semantics is based on labeled transition systems, Prefix, Summation, Composition, Restriction, Relabelling have their transition rules. CTC has good semantic properties based on the truly concurrent bisimulations. These properties include monoid laws, static laws, new expansion law for strongly truly concurrent bisimulations, $\tau$ laws for weakly truly concurrent bisimulations, and full congruences for strongly and weakly truly concurrent bisimulations, and also unique solution for recursion.
\end{enumerate}

$APTC$ \cite{APTC} captures several computational properties in the form of algebraic laws, and proves the soundness and completeness modulo truly concurrent bisimulation/rooted branching truly concurrent bisimulation equivalence. These computational properties are organized in a modular way by use of the concept of conservational extension, which include the following modules, note that, every algebra are composed of constants and operators, the constants are the computational objects, while operators capture the computational properties.

\begin{enumerate}
  \item \textbf{$BATC$ (Basic Algebras for True Concurrency)}. $BATC$ has sequential composition $\cdot$ and alternative composition $+$ to capture causality computation and conflict. The constants are ranged over $\mathbb{E}$, the set of atomic events. The algebraic laws on $\cdot$ and $+$ are sound and complete modulo truly concurrent bisimulation equivalences, such as pomset bisimulation $\sim_p$, step bisimulation $\sim_s$, history-preserving (hp-) bisimulation $\sim_{hp}$ and hereditary history-preserving (hhp-) bisimulation $\sim_{hhp}$.
  \item \textbf{$APTC$ (Algebra of Parallelism for True Concurrency)}. $APTC$ uses the whole parallel operator $\between$, the parallel operator $\parallel$ to model parallelism, and the communication merge $\mid$ to model causality (communication) among different parallel branches. Since a communication may be blocked, a new constant called deadlock $\delta$ is extended to $\mathbb{E}$, and also a new unary encapsulation operator $\partial_H$ is introduced to eliminate $\delta$, which may exist in the processes. And also a conflict elimination operator $\Theta$ to eliminate conflicts existing in different parallel branches. The algebraic laws on these operators are also sound and complete modulo truly concurrent bisimulation equivalences, such as pomset bisimulation $\sim_p$, step bisimulation $\sim_s$, history-preserving (hp-) bisimulation $\sim_{hp}$. Note that, these operators in a process except the parallel operator $\parallel$ can be eliminated by deductions on the process using axioms of $APTC$, and eventually be steadied by $\cdot$, $+$ and $\parallel$, this is also why bisimulations are called an \emph{truly concurrent} semantics.
  \item \textbf{Recursion}. To model infinite computation, recursion is introduced into $APTC$. In order to obtain a sound and complete theory, guarded recursion and linear recursion are needed. The corresponding axioms are $RSP$ (Recursive Specification Principle) and $RDP$ (Recursive Definition Principle), $RDP$ says the solutions of a recursive specification can represent the behaviors of the specification, while $RSP$ says that a guarded recursive specification has only one solution, they are sound with respect to $APTC$ with guarded recursion modulo truly concurrent bisimulation equivalences, such as pomset bisimulation $\sim_p$, step bisimulation $\sim_s$, history-preserving (hp-) bisimulation $\sim_{hp}$, and they are complete with respect to $APTC$ with linear recursion modulo truly concurrent bisimulation equivalence, such as pomset bisimulation $\sim_p$, step bisimulation $\sim_s$, history-preserving (hp-) bisimulation $\sim_{hp}$.
  \item \textbf{Abstraction}. To abstract away internal implementations from the external behaviors, a new constant $\tau$ called silent step is added to $\mathbb{E}$, and also a new unary abstraction operator $\tau_I$ is used to rename actions in $I$ into $\tau$ (the resulted $APTC$ with silent step and abstraction operator is called $APTC_{\tau}$). The recursive specification is adapted to guarded linear recursion to prevent infinite $\tau$-loops specifically. The axioms for $\tau$ and $\tau_I$ are sound modulo rooted branching truly concurrent bisimulation equivalences (a kind of weak truly concurrent bisimulation equivalence), such as rooted branching pomset bisimulation $\approx_p$, rooted branching step bisimulation $\approx_s$, rooted branching history-preserving (hp-) bisimulation $\approx_{hp}$. To eliminate infinite $\tau$-loops caused by $\tau_I$ and obtain the completeness, $CFAR$ (Cluster Fair Abstraction Rule) is used to prevent infinite $\tau$-loops in a constructible way.
\end{enumerate} 

CTC and APTC can be used widely in verification of computer systems with a truly concurrent flavor.

\newpage\section{Process Algebra vs. Event Structure}\label{pe}

In this chapter, we discuss the relationship between process algebra and event structure. Firstly, we establish the relationship between prime event structures and processes in section \ref{pesap}, based on the structurization of prime event structures. Then, we establish structural operational semantics of prime event structure in section \ref{sospes}. Finally, we reproduce the truly concurrent process algebra APTC based on the structural operational semantics of prime event structure, in section \ref{sospespa}.

\subsection{Prime Event Structures as Processes}\label{pesap}

Firstly, let we recall the basic definitions of term and process.

\term*

\processgraph*

Usually, a process graph describes the execution of a (closed) term. The closed term is elected as the initial root state of the process graph, then it executes some atomic actions and eventually terminates successfully. The definition of the initial term is following the structural way, but, in the definition of a PES (Definition \ref{pes}), there may exist unstructured causalities or conflicts. So, before treating a PES as a process, we must structurize the PES.

PESs in Definition \ref{pes} can be composed together into a bigger PES, and a PES can or cannot be decomposed into several smaller PESs. To compose PESs or decompose a PES, the first problem is to define the relations between PESs. We can see that there are only two kinds of relations called causality and confliction among the events (actions) in the definition of a PES, and the other two relations called consistency and concurrency are implicit. From a background of language and computation (and also Kleene algebra), especially the success of structured (sequential) programming \cite{GOTO} \cite{SP}, it is well-known that the three basic relations called sequence $\cdot$, choice $+$ and recursion are necessary and sufficient in sequential computation and a sequential program can be structured (we will make it rigorous in the following). The fresh thing is concurrency, and it must be added as a basic relation among atomic events (see the following analyses), so we adopt the four relations of causality, confliction, concurrency and recursion as the basic relations to compose PESs or decompose a PES. Since recursions are expressed by recursive equations, while recursive equations are mixtures of recursive variables and atomic actions by sequence $\cdot$, choice $+$ and parallelism $\parallel$ to form terms, so, causality, confliction and concurrency are three fundamental relations to compose or decompose PESs.

Let us analyze the three basic relations named causality, confliction, and concurrency. Firstly, sequence is a kind of causality, choice is a special kind of confliction (confliction between the beginning actions of different branches) in sequential computation, but they are not all the things in concurrent computation. We refer to parallelism, denoted $a\parallel b$ for $a,b\in\mathbb{E}$, which means that there are two parallel branches $a$ and $b$, they executed independently (without causality and confliction) and is captured exactly by the concurrency relation in the definition of PES. But the whole thing defined by PES, we prefer to use the word \emph{concurrency}, denoted $a\between b$, is that the actions in the two parallel branches may exist causalities or conflictions. The causalities between two parallel branches are usually not the sequence relation, but communications (the sending/receiving or writing/reading pairs). The conflictions can also be generalized to the ones between any actions in two parallel branches. Concurrency defined by PES is made up of several parallel branches, in each branch which can be a PES, there exists communications or conflictions among these branches. 

In concurrency theory, causality can be classified finely into sequence and communication, and choice can be generalized to confliction. 

Some PESs can be composed into a bigger PES in sequence, in choice, in parallel, and in concurrency.

\begin{definition}[PES composition in sequence]
Let PESs $\mathcal{E}_1=\langle \mathbb{E}_1, \leq_1, \sharp_1, \Lambda_1\rangle$, $\mathcal{E}_2=\langle \mathbb{E}_2, \leq_2, \sharp_2, \Lambda_2\rangle$ with $\mathbb{E}_i$, $\leq_i$, $\sharp_i$ and $\Lambda_i$ for $i\in\{1,2\}$ being the corresponding set of events, set of causality relations, set of confliction relations and set of labels respectively of PES $\mathcal{E}_i$ for $i\in\{1,2\}$ (with a little abuse of symbols), we write $\mathcal{E}_1\cdot\mathcal{E}_2=\langle \mathbb{E}_{\mathcal{E}_1\cdot\mathcal{E}_2}, \leq_{\mathcal{E}_1\cdot\mathcal{E}_2}, \sharp_{\mathcal{E}_1\cdot\mathcal{E}_2}, \Lambda_{\mathcal{E}_1\cdot\mathcal{E}_2}\rangle$ for the sequential composition of $\mathcal{E}_1$ and $\mathcal{E}_2$, where
$$\mathbb{E}_{\mathcal{E}_1\cdot\mathcal{E}_2}=\mathbb{E}_1\cup\mathbb{E}_2\quad \leq_{\mathcal{E}_1\cdot\mathcal{E}_2}=\leq_1\cup\leq_2\cup(\mathbb{E}_1\times\mathbb{E}_2) \quad\sharp_{\mathcal{E}_1\cdot\mathcal{E}_2}=\sharp_1\cup\sharp_2 \quad\Lambda_{\mathcal{E}_1\cdot\mathcal{E}_2}=\Lambda_1\cup\Lambda_2$$

where $\mathbb{E}_1\times\mathbb{E}_2$ is the Cartesian product of $\mathbb{E}_1$ and $\mathbb{E}_2$.
\end{definition}

\begin{definition}[PES composition in choice]
Let PESs $\mathcal{E}_1=\langle \mathbb{E}_1, \leq_1, \sharp_1, \Lambda_1\rangle$, $\mathcal{E}_2=\langle \mathbb{E}_2, \leq_2, \sharp_2, \Lambda_2\rangle$ with $\mathbb{E}_i$, $\leq_i$, $\sharp_i$ and $\Lambda_i$ for $i\in\{1,2\}$ being the corresponding set of events, set of causality relations, set of confliction relations and set of labels respectively of PES $\mathcal{E}_i$ for $i\in\{1,2\}$ (with a little abuse of symbols), we write $\mathcal{E}_1+\mathcal{E}_2=\langle \mathbb{E}_{\mathcal{E}_1+\mathcal{E}_2}, \leq_{\mathcal{E}_1+\mathcal{E}_2}, \sharp_{\mathcal{E}_1+\mathcal{E}_2}, \Lambda_{\mathcal{E}_1+\mathcal{E}_2}\rangle$ for the alternative composition of $\mathcal{E}_1$ and $\mathcal{E}_2$, where
$$\mathbb{E}_{\mathcal{E}_1+\mathcal{E}_2}=\mathbb{E}_1\cup\mathbb{E}_2\quad \leq_{\mathcal{E}_1+\mathcal{E}_2}=\leq_1\cup\leq_2 \quad\sharp_{\mathcal{E}_1+\mathcal{E}_2}=\sharp_1\cup\sharp_2\cup(\mathbb{E}_1\times\mathbb{E}_2) \quad\Lambda_{\mathcal{E}_1+\mathcal{E}_2}=\Lambda_1\cup\Lambda_2$$

with $\mathbb{E}_1\times\mathbb{E}_2$ is the Cartesian product of $\mathbb{E}_1$ and $\mathbb{E}_2$.
\end{definition}

\begin{definition}[PES composition in parallel]
Let PESs $\mathcal{E}_1=\langle \mathbb{E}_1, \leq_1, \sharp_1, \Lambda_1\rangle$, $\mathcal{E}_2=\langle \mathbb{E}_2, \leq_2, \sharp_2, \Lambda_2\rangle$ with $\mathbb{E}_i$, $\leq_i$, $\sharp_i$ and $\Lambda_i$ for $i\in\{1,2\}$ being the corresponding set of events, set of causality relations, set of confliction relations and set of labels respectively of PES $\mathcal{E}_i$ for $i\in\{1,2\}$ (with a little abuse of symbols), we write $\mathcal{E}_1\parallel\mathcal{E}_2=\langle \mathbb{E}_{\mathcal{E}_1\parallel\mathcal{E}_2}, \leq_{\mathcal{E}_1\parallel\mathcal{E}_2}, \sharp_{\mathcal{E}_1\parallel\mathcal{E}_2}, \Lambda_{\mathcal{E}_1\parallel\mathcal{E}_2}\rangle$ for the parallel composition of $\mathcal{E}_1$ and $\mathcal{E}_2$, where
$$\mathbb{E}_{\mathcal{E}_1\parallel\mathcal{E}_2}=\mathbb{E}_1\cup\mathbb{E}_2\quad \leq_{\mathcal{E}_1\parallel\mathcal{E}_2}=\leq_1\cup\leq_2 \quad\sharp_{\mathcal{E}_1\parallel\mathcal{E}_2}=\sharp_1\cup\sharp_2 \quad\Lambda_{\mathcal{E}_1\parallel\mathcal{E}_2}=\Lambda_1\cup\Lambda_2$$
\end{definition}

\begin{definition}[PES composition in concurrency]
Let PESs $\mathcal{E}_1=\langle \mathbb{E}_1, \leq_1, \sharp_1, \Lambda_1\rangle$, $\mathcal{E}_2=\langle \mathbb{E}_2, \leq_2, \sharp_2, \Lambda_2\rangle$ with $\mathbb{E}_i$, $\leq_i$, $\sharp_i$ and $\Lambda_i$ for $i\in\{1,2\}$ being the corresponding set of events, set of causality relations, set of confliction relations and set of labels respectively of PES $\mathcal{E}_i$ for $i\in\{1,2\}$ (with a little abuse of symbols), we write $\mathcal{E}_1\between\mathcal{E}_2=\langle \mathbb{E}_{\mathcal{E}_1\between\mathcal{E}_2}, \leq_{\mathcal{E}_1\between\mathcal{E}_2}, \sharp_{\mathcal{E}_1\between\mathcal{E}_2}, \Lambda_{\mathcal{E}_1\between\mathcal{E}_2}\rangle$ for the concurrent composition of $\mathcal{E}_1$ and $\mathcal{E}_2$, where
$$\mathbb{E}_{\mathcal{E}_1\between\mathcal{E}_2}=\mathbb{E}_1\cup\mathbb{E}_2\quad \leq_{\mathcal{E}_1\between\mathcal{E}_2}=\leq_1\cup\leq_2\cup\leq_{1,2} \quad\sharp_{\mathcal{E}_1\between\mathcal{E}_2}=\sharp_1\cup\sharp_2\cup\sharp_{1,2} \quad\Lambda_{\mathcal{E}_1\between\mathcal{E}_2}=\Lambda_1\cup\Lambda_2$$

where $\leq_{1,2}$ and $\sharp_{1,2}$ are the newly added causality and confliction relations among actions in $\mathbb{E}_1$ and $\mathbb{E}_2$, which are unstructured.
\end{definition}

Note that concurrent composition of PES is the common sense composition pattern, and other compositions are all special cases of concurrent composition, such that sequential composition is a concurrent composition with newly added causalities from the ending actions (actions without outgoing causalities) of the first PES to the beginning actions (actions without incoming causalities) of the second PES, alternative composition is a concurrent composition with newly added conflictions between the beginning actions of the two PESs, parallel composition is a concurrent composition without newly added causalities and conflictions.

Then we discuss the decomposition of a PES. We say that a PES is structured, we mean that a PES can be decomposed into several sub-PESs (the sub-PESs can composed into the original PES by the above composition patterns) without unstructured causalities and conflictions among them. Structured PES can capture the above meanings inductively.

\begin{definition}[Structured PES]
A Structured PES $\mathcal{SE}$ which is a PES $\mathcal{E}=\langle \mathbb{E}, \leq, \sharp, \lambda\rangle$, is inductively defined as follows:

\begin{enumerate}
  \item $\mathbb{E}\subset \mathcal{SE}$;
  \item If $\mathcal{E}_1$ is an $\mathcal{SE}$ and $\mathcal{E}_2$ is an $\mathcal{SE}$, then $\mathcal{E}_1\cdot\mathcal{E}_2$ is an $\mathcal{SE}$;
  \item If $\mathcal{E}_1$ is an $\mathcal{SE}$ and $\mathcal{E}_2$ is an $\mathcal{SE}$, then $\mathcal{E}_1+\mathcal{E}_2$ is an $\mathcal{SE}$;
  \item If $\mathcal{E}_1$ is an $\mathcal{SE}$ and $\mathcal{E}_2$ is an $\mathcal{SE}$, then $\mathcal{E}_1\parallel\mathcal{E}_2$ is an $\mathcal{SE}$.
\end{enumerate}
\end{definition}

Actually, a PES defines an unstructured graph with a truly concurrent flavor and can not be structured usually.

\begin{figure}[h]
  \centering
  \includegraphics{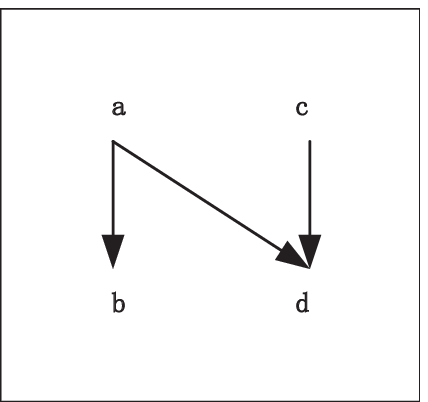}
  \caption{N-shape}
  \label{nshape}
\end{figure}

\begin{definition}[N-shape]
A PES is an N-shape, if it has at least four events in $\mathbb{E}$ with labels such as $a$, $b$, $c$ and $d$, and causality relations $a\leq b$, $c\leq d$ and $a\leq d$, as illustrated in Fig. \ref{nshape}.
\end{definition}

\begin{proposition}[Structurization of N-shape]
An N-shape can not be structurized.
\end{proposition}

\begin{proof}
$a$ and $c$ are in parallel, $b$ after $a$, so $b$ and $a$ are in the same parallel branch; $d$ after $c$, so $d$ and $c$ are in the same parallel branch; so $a$ and $d$ are in different parallel branches. But, $d$ after $a$ means that $d$ and $a$ are in the same parallel branch. This causes contradictions.
\end{proof}

\begin{figure}[h]
  \centering
  \includegraphics{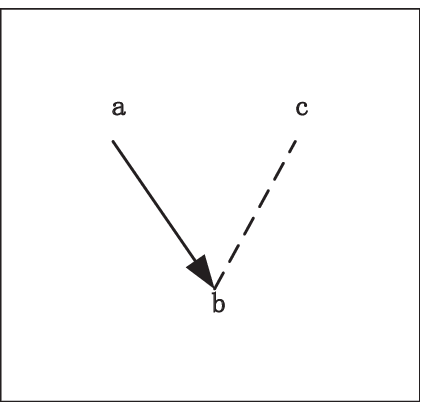}
  \caption{V-shape}
  \label{vshape}
\end{figure}

\begin{definition}[V-shape]
A PES is a V-shape, if it has at least three events in $\mathbb{E}$ with labels such as $a$, $b$, $c$, and causality relation $a\leq b$, confliction $b\sharp c$, as illustrated in Fig. \ref{vshape}.
\end{definition}

To be moore intuitive, in Fig. \ref{conflictions}, we show another example of unstructured conflictions between parallel branches called H-shape. There are two parallel branches, one is $a_1\cdot a_2\cdot a_3$ and the other is $a_4\cdot a_5\cdot a_6$, and there is an unstructured confliction between event $a_2$ and $a_5$ denoted $a_2\sharp a_5$, which is illustrated by the dashed line between $a_2$ and $a_5$.

\begin{figure}[h]
  \centering
  \includegraphics{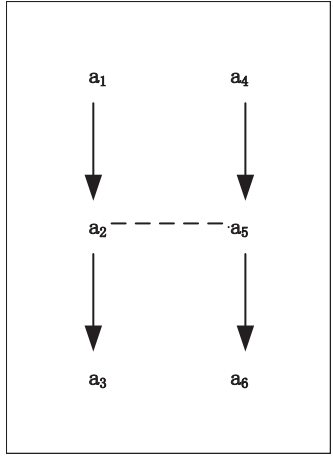}
  \caption{An example of unstructured conflictions called H-shape}
  \label{conflictions}
\end{figure}

Through the above analyses on the composition of PESs, it is reasonable to assume that a PES is composed by parallel branches and the unstructured causalities and conflictions always exist between actions in different parallel branches, usually the unstructured causalities are communications. Now, let us discuss the structurization of PES. Firstly, we only consider the synchronous communications. In a synchronous communication, two atomic event pair $a,b$ shakes hands denoted $a\mid b$ and merges as a communication action $\gamma(a,b)$ if the communication exists, otherwise, will cause deadlock (the special constant 0). 

$$a\mid b=\begin{cases}\gamma(a,b),& \textrm{if }\gamma(a,b)\textrm{ is defined;} \\ 0, & \textrm{otherwise.}\end{cases}$$

As Fig. \ref{snshape}-a) illustrated, the unstructured causalities are usually synchronous communications which is denoted by different arrows with respect to sequential composition. Fig. \ref{snshape}-b) shows that the communicating action pair merges and Fig. \ref{snshape} illustrates the structured PES denoted $c\cdot \gamma(a,d)\cdot b$ after elimination of unstructured communications.

\begin{figure}[h]
  \centering
  \includegraphics{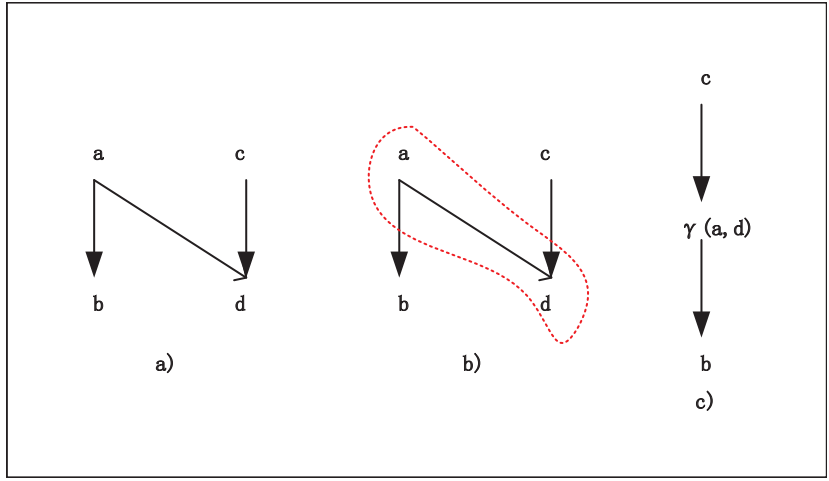}
  \caption{Structurization of N-shape}
  \label{snshape}
\end{figure}

It is the turn of structurization of unstructured conflictions between actions in different parallel branches called V-shape in Fig. \ref{svshape}. The PES $(a\cdot b)\parallel c$ with $b\sharp c$ is equal to the PES $a\cdot b+a\parallel c$ in Fig. \ref{svshape} modulo the truly concurrent bisimulation equivalence relations in section \ref{pes2}, we introduce a unitary operator $\Theta$ and an auxiliary binary operator $\triangleleft$ to eliminate the unstructured conflictions, that is, 

$$\Theta((a_1\cdot a_2\cdot a_3)\parallel (a_4\cdot a_5\cdot a_6),(a_2\sharp a_5))=(a_1\cdot a_2\cdot a_3)\parallel a_4+a_1\parallel(a_4\cdot a_5\cdot a_6)$$

In the first summand of PES in Fig. \ref{svshape} $a\cdot b$, the action $c$ is renamed to empty event (the special constant 1), and in the second one $a\parallel c$, the action $b$ is renamed to empty event.

\begin{figure}[h]
  \centering
  \includegraphics{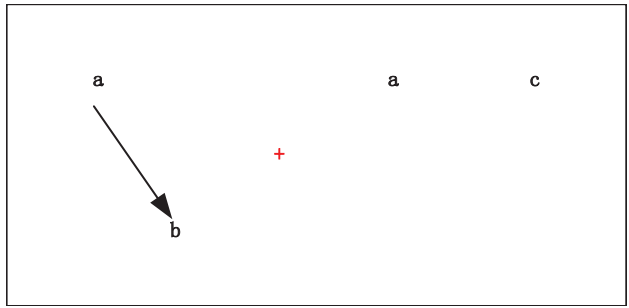}
  \caption{Structurization of V-shape}
  \label{svshape}
\end{figure}

To be more intuitive, let us structurize the unstructured conflictions between actions in different parallel branches called H-shape in Fig. \ref{conflictions}. The PES $(a_1\cdot a_2\cdot a_3)\parallel (a_4\cdot a_5\cdot a_6)$ with $a_2\sharp a_5$ is equal to the PES $(a_1\cdot a_2\cdot a_3)\parallel a_4+a_1\parallel(a_4\cdot a_5\cdot a_6)$ in Fig. \ref{sconflictions} modulo the truly concurrent bisimulation equivalence relations in section \ref{pes2}, that is, 

$$\Theta((a_1\cdot a_2\cdot a_3)\parallel (a_4\cdot a_5\cdot a_6),(a_2\sharp a_5))=(a_1\cdot a_2\cdot a_3)\parallel a_4+a_1\parallel(a_4\cdot a_5\cdot a_6)$$

In the first summand of PES in Fig. \ref{sconflictions} $(a_1\cdot a_2\cdot a_3)\parallel a_4$, the actions $a_5$ and $a_6$ are renamed to empty events (the special constant 1), and in the second one $a_1\parallel(a_4\cdot a_5\cdot a_6)$, the actions $a_2$ and $a_3$ are renamed to empty events.

\begin{figure}[h]
  \centering
  \includegraphics{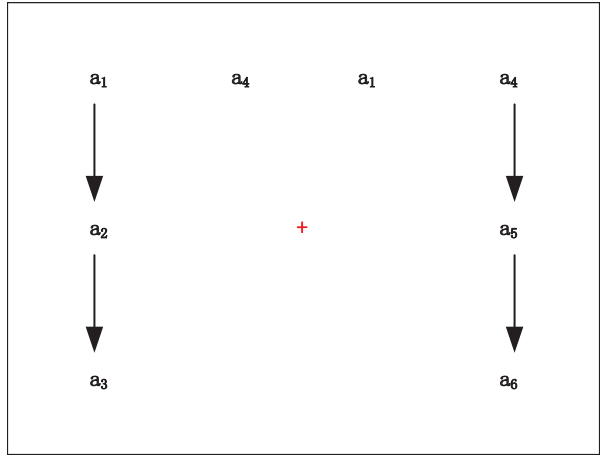}
  \caption{Structurization of H-shape}
  \label{sconflictions}
\end{figure}

Thus, with the assumption of parallel branches, and unstructured causalities and conflictions among them, by use of the elimination methods in Fig. \ref{snshape} and \ref{sconflictions}, a PES $\mathcal{E}$ can be structurizated to a structured one $\mathcal{SE}$. Actually, the unstructured PESs and their corresponding structured ones are equivalent modulo truly concurrent bisimulation equivalences, such as pomset bisimulation $\sim_p$, step bisimulation $\sim_s$, hp-bisimulation $\sim_{hp}$ and hhp-bisimulation $\sim_{hhp}$ in section \ref{pes2}. A PES can be expressed by a term of atomic actions (including 0 and 1), binary operators $\cdot$, $+$, $\between$, $\parallel$, $\mid$, the unary operator $\Theta$ and the auxiliary binary $\triangleleft$. This means that a PES can be treated as a process, so, in the following, we will not distinguish atomic actions and atomic events, PESs and processes.

\subsection{Structural Operational Semantics of Prime Event Structure}\label{sospes}

In this section, we use some concepts of structural operational semantics, and we do not refer to the original reference, please refer to \cite{SOS} for details.

By replacing the single atomic action $a$ to a pomset of atomic actions $a_1,\cdots, a_n$ for $n\in\mathbb{N}$, we adapt the concepts of Labelled Transition System (LTS) and Transition System Specification (TSS) in section \ref{os} to be able to capture pomset transitions. 

\begin{definition}[Stratification]
A stratification for a TSS is a weight function $\phi$ which maps transitions to ordinal numbers, such that for each transition rule $\rho$ with conclusion $\pi$ and for each closed substitution $\sigma$:

\begin{enumerate}
  \item for positive premises $t\xrightarrow{\{a_1,\cdots,a_n\}}t'$ and $tP$ of $\rho$, $\phi(\sigma(t)\xrightarrow{\{a_1,\cdots,a_n\}}\sigma(t'))\leq\phi(\sigma(\pi))$ and $\phi(\sigma(t)P)\leq\phi(\sigma(\pi))$, respectively;
  \item for negative premises $t\xnrightarrow{\{a_1,\cdots,a_n\}}$ and $t\neg P$ of $\rho$, $\phi(\sigma(t)\xrightarrow{\{a_1,\cdots,a_n\}}t')<\phi(\sigma(\pi))$ for all closed terms $t'$ and $\phi(\sigma(t)P)<\phi(\sigma(\pi))$, respectively.
\end{enumerate}
\end{definition}

\begin{theorem}[Positivity after reduction]
If a TSS allows a stratification, then it is positive after reduction.
\end{theorem}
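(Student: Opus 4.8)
The plan is to turn the ordinal-valued weight $\phi$ into the index of a transfinite induction over the strata of transitions. Recall (see \cite{SOS}) that the reduction of a TSS $T$ is the transfinite iteration of the operation which, using only the information available at the current stage, (i) deletes every rule that has a negative premise $t\xnrightarrow{a}$ which is already known to be violated (i.e.\ $t\xrightarrow{a}t'$ is provable from the positive fragment reached so far), and (ii) erases from the surviving rules every negative premise $t\xnrightarrow{a}$ which is already known to be unmeetable (i.e.\ no surviving rule concludes $t\xrightarrow{a}t'$ for any $t'$); $T$ is \emph{positive after reduction} precisely when this process reaches a fixpoint that is a positive TSS. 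So it suffices to show that, under a stratification $\phi$, every negative premise that ever occurs in a surviving rule is eventually classified, so that no negative premise survives into the fixpoint.

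The core of the argument will be a transfinite induction on ordinals $\alpha$, with a side induction on the heights of proof trees. For each $\alpha$ I would look at the closed positive literals $t\xrightarrow{a}t'$ with $\phi(t\xrightarrow{a}t')\le\alpha$, and show by the side induction that whether such a literal lies in the generated LTS is decided using only rule instances whose positive premises again have weight $\le\alpha$ and whose negative premises $s\xnrightarrow{b}$ satisfy $\phi(s\xrightarrow{b}s'')<\alpha$ for every $s''$ — which is exactly what the two clauses of the stratification definition supply (positive premises do not raise the weight, negative premises strictly lower it). Hence, by the outer induction hypothesis, every such negative premise has already been fully resolved at a stage $<\alpha$: either some $s\xrightarrow{b}s''$ belongs to the LTS, in which case the containing rule is deleted at stage $\alpha$, or none does, in which case the premise is erased. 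Feeding this back, the reduction at stage $\alpha$ settles exactly the rules and negative premises whose conclusion has weight $\alpha$, and once every ordinal has been exhausted the fixpoint contains no negative premise, i.e.\ $T$ is positive after reduction.

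The step I expect to be the main obstacle is making the "stratum by stratum" construction of the transition relation rigorous and synchronising it with the formal reduction of \cite{SOS}. The apparent circularity — a positive literal of weight $\alpha$ may be concluded by a rule whose positive premises also have weight $\alpha$ — is broken not by the ordinal induction but by the side induction on proof height, and one has to be careful that these two inductions interlock correctly. Limit ordinals also need attention: one must verify that nothing becomes provable "at the limit" beyond what is provable at the stages strictly below, so that erased premises stay erased and deleted rules stay deleted. Once this bookkeeping is discharged, the conclusion that the reduction terminates with a positive TSS follows immediately.
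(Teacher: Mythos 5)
The paper itself does not prove this theorem: it is stated as an imported meta-result, adapted to pomset-labelled transitions, with the reader referred to \cite{SOS} for the underlying theory. So there is no in-paper proof to compare against; your proposal has to stand on its own as a proof of the standard stratification theorem (that a stratifiable TSS is positive after reduction), and judged that way it is the right approach but not yet a proof. The outline — transfinite induction on the weight $\phi$, using that positive premises do not increase the weight and negative premises strictly decrease it, with a side induction on proof height to break ties inside a stratum — is exactly the classical argument. The adaptation to transitions labelled by sets $\{a_1,\cdots,a_n\}$ costs nothing, since the stratification definition in this paper is already phrased for such labels.

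The genuine gap is the one you flag yourself and then defer: the entire content of the theorem lives in "synchronising" your stratum-by-stratum construction with the formal notion of reduction. As written, you assert that "the reduction at stage $\alpha$ settles exactly the rules and negative premises whose conclusion has weight $\alpha$", but the reduction of a TSS is not defined stratum-wise; it is a single (possibly transfinite) closure on closed rule instances, and one must prove that the set of positive literals it validates agrees, on each weight level, with the relation you build by your double induction — in particular that a negative premise $t\xnrightarrow{\{a_1,\cdots,a_n\}}$ is resolved (rule deleted or premise erased) using only literals of strictly smaller weight, that nothing new becomes provable at limit stages, and that the interlocking of the ordinal induction with the proof-height induction is well-founded (a literal of weight $\alpha$ concluded from positive premises of weight $\alpha$ must be handled by the inner induction without re-entering the outer one). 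Until that bookkeeping is written out — essentially reproving the Bol–Groote/Fokkink stratification lemma rather than gesturing at it — the proposal is a correct plan with its decisive step missing, not a complete argument.
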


\subsubsection{Truly Concurrent Bisimulations as Congruences}

\begin{definition}[Panth format]
A transition rule $\rho$ is in panth format if it satisfies the following three restrictions:

\begin{enumerate}
  \item for each positive premise $t\xrightarrow{\{a_1,\cdots,a_n\}} t'$ of $\rho$, the right-hand side $t'$ is a single variable;
  \item the source of $\rho$ contains no more than one function symbols;
  \item there are no multiple occurrences of the same variable at the right-hand sides of positive premises and in the source of $\rho$.
\end{enumerate}

A TSS is in panth format if it consists of panth rules only.
\end{definition}

\begin{theorem}[Truly concurrent bisimulations as congruences]\label{tcbac}
If a TSS is positive after reduction and in panth format, then the truly concurrent bisimulation equivalences, including pomset bisimulation equivalence $\sim_p$, step bisimulation equivalence $\sim_s$, hp-bisimulation equivalence $\sim_{hp}$ and hhp-bisimulation equivalence $\sim_{hhp}$, that it induces are all congruences.
\end{theorem}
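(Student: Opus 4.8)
The plan is to adapt the standard congruence proof for the \emph{panth} format for ordinary bisimulation to the pomset-labelled transition systems used here, treating the four equivalences $\sim_p$, $\sim_s$, $\sim_{hp}$, $\sim_{hhp}$ by one common construction. First I would use the hypothesis that the TSS is positive after reduction to pass to a positive TSS proving exactly the same transitions $t\xrightarrow{X}t'$ and $tP$; since all four equivalences are defined purely in terms of provable transitions, this replacement is harmless, and from now on every transition rule has only positive premises.

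Next, fix a truly concurrent bisimulation $R$ of the kind in question --- a subset of $\mathcal{C}(\mathcal{E}_1)\times\mathcal{C}(\mathcal{E}_2)$ in the pomset/step case, of $\mathcal{C}(\mathcal{E}_1)\overline{\times}\mathcal{C}(\mathcal{E}_2)$ in the hp/hhp case --- and form its \emph{congruence closure} $\overline{R}$: the least relation containing $R$, closed under reflexivity and transitivity, and closed under all operators, so that $s_i\mathrel{\overline{R}}t_i$ for all $i$ forces $f(s_1,\dots,s_{ar(f)})\mathrel{\overline{R}}f(t_1,\dots,t_{ar(f)})$ for every $f\in\Sigma$. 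In the hp/hhp case the closure simultaneously produces the witnessing order-isomorphism: the isomorphism for $f(\vec s)\mathrel{\overline R}f(\vec t)$ is assembled as the ``sum'' of the isomorphisms for the $s_i\mathrel{\overline R}t_i$, using that a configuration reached by running $f(\vec s)$ splits into configurations reached in the arguments $s_i$. It then suffices to prove that $\overline R$ is again a truly concurrent bisimulation of the same kind relating $(\emptyset,\emptyset)$ (resp.\ $(\emptyset,\emptyset,\emptyset)$), since congruence is then immediate from the construction of $\overline R$.

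The core step is the transfer property of $\overline R$, proved by induction on the depth of a proof of a transition. Given $(p,q)\in\overline R$ with $p\xrightarrow{X}p'$, the non-trivial case is $p=f(\vec s)$, $q=f(\vec t)$ with $s_i\mathrel{\overline R}t_i$; let $\rho$ be the last rule applied. The panth condition that the source contains at most one function symbol forces the source to be a single variable, a constant, or $f(x_1,\dots,x_{ar(f)})$. The panth condition that right-hand sides of positive premises are single variables makes each positive premise of the form $x_i\xrightarrow{X_i}y_i$ or $x_iP$ with $y_i$ a variable; after the substitution sending $x_i\mapsto s_i$ these premises have strictly shorter proofs, so the induction hypothesis together with the bisimulation property at the pair $(s_i,t_i)$ yields matching transitions $t_i\xrightarrow{X_i'}y_i'$ with $X_i\sim X_i'$ and the reached terms again $\overline R$-related. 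The panth condition forbidding a variable from occurring twice among the premise right-hand sides and the source lets me merge these into one consistent substitution $x_i\mapsto t_i$, $y_i\mapsto y_i'$; firing $\rho$ under it gives $q=f(\vec t)\xrightarrow{Y}u'$, where $Y$ is built from the $X_i'$ by the same (isomorphism-respecting) operation by which $X$ is built from the $X_i$, so $X\sim Y$ --- hence $Y$ is a step whenever $X$ is --- and $p'\mathrel{\overline R}q'$ by closure under contexts. For $\sim_{hp}$ one restricts to single-event transitions and checks in addition that extending the assembled isomorphism by $[X\mapsto Y]$ stays an isomorphism; for $\sim_{hhp}$ one further verifies that the posetal congruence closure of a downward-closed relation is downward closed, by matching sub-configurations of an $f(\vec s)$-run with sub-configurations of the argument runs.

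The main obstacle I anticipate is not any single computation but the bookkeeping in the history-preserving cases: pinning down precisely how pomset labels and order-isomorphisms decompose across the argument positions of an operator and are reconstructed when a panth rule fires, and in particular showing that downward closure survives the closure construction for $\sim_{hhp}$. The three syntactic restrictions of the panth format are exactly what make these decompositions well-defined and single-valued, so the real content lies in formulating the right invariant on $\overline R$ rather than in the routine rule-by-rule verification that it is preserved.
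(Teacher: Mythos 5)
The paper itself gives no proof of Theorem \ref{tcbac}: it is invoked as the truly concurrent analogue of the standard congruence meta-theorem for the panth format (cf. \cite{SOS}), so your sketch has to be judged against that standard argument, and measured that way it contains a genuine gap. In the panth format as defined here only the \emph{right-hand} side of a positive premise is required to be a single variable; the left-hand side may be an arbitrary open term, and variables introduced in right-hand sides of premises may reappear in left-hand sides of other premises (lookahead is permitted --- this is exactly why the RBB cool format later has to exclude it explicitly). Your induction step, however, assumes every positive premise has the shape $x_i\xrightarrow{X_i}y_i$ or $x_iP$ with $x_i$ a variable of the source, which is what lets you close the transfer property by applying the induction hypothesis at the argument pairs $(s_i,t_i)$. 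With genuine term left-hand sides, the processes you must relate before you may invoke bisimilarity are $\sigma(t_k)$ for complex terms $t_k$, which are not argument pairs, and with lookahead the variable-dependency structure of a rule need not be well-founded; handling this is precisely the hard content of the known congruence proofs for ntyft/ntyxt-style formats (well-foundedness hypotheses, or the non-trivial reduction to ntree rules), and your sketch does not address it.

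Two further points. Your opening move --- replacing the TSS by a positive one proving the same transitions --- is in tension with the rest of the argument: the obvious replacement (premise-free closed rules for exactly the transitions of the stable model) destroys the panth restriction on sources, while your later induction needs both the format and the unchanged transition relation; the standard route instead keeps the original rules, argues by induction on well-supported proofs, and discharges negative premises against the two-valued stable model guaranteed by positivity after reduction. Finally, for $\sim_{hhp}$ you correctly identify preservation of downward closure under the congruence closure as the crux but leave it unresolved; since hereditary history-preserving bisimilarity is notoriously fragile under operators, this cannot be dismissed as bookkeeping and would need an explicit argument before the fourth clause of the theorem is established.
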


\subsubsection{Branching Truly Concurrent Bisimulations as Congruences}

\begin{definition}[Lookahead]
A transition rule contains lookahead if a variable occurs at the left-hand side of a premise and at the right-hand side of a premise of this rule.
\end{definition}

\begin{definition}[Patience rule]
A patience rule for the i-th argument of a function symbol $f$ is a panth rule of the form

$$\frac{x_i\xrightarrow{\tau}y}{f(x_1,\cdots,x_{ar(f)})\xrightarrow{\tau}f(x_1,\cdots,x_{i-1},y,x_{i+1},\cdots,x_{ar(f)})}$$
\end{definition}

\begin{definition}[RBB cool format]
A TSS $T$ is in RBB cool format if the following restrictions are satisfied:

\begin{enumerate}
  \item it consists of panth rules that do not contain lookahead;
  \item suppose a function symbol $f$ occurs at the right-hand side of the conclusion of some transition rule in $T$. Let $\rho\in T$ be a non-patience rule with source $f(x_1,\cdots,x_{ar(f)})$, then for $i\in\{1,\cdots,ar(f)\}$, $x_i$ occurs in no more than one premise of $\rho$, where this premise is of the form $x_i P$ or $x_i\xrightarrow{\{a_1,\cdots,a_n\}}y$ with $a_1\nequiv\tau,\cdots,a_n\nequiv\tau$. Moreover, if there is such a premise in $\rho$, then there is a patience rule for the i-th argument of $f$ in $T$.
\end{enumerate}
\end{definition}

\begin{theorem}[Rooted branching truly concurrent bisimulations as congruences]\label{rbtcbac}
If a TSS is positive after reduction and in RBB cool format, then the rooted branching truly concurrent bisimulation equivalences, including rooted branching pomset bisimulation equivalence $\approx_{rbp}$, rooted branching step bisimulation equivalence $\approx_{rbs}$, rooted branching hp-bisimulation equivalence $\approx_{rbhp}$ and rooted branching hhp-bisimulation equivalence $\approx_{rbhhp}$, that it induces are all congruences.
\end{theorem}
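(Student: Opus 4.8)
The plan is to mirror the proof of Theorem \ref{tcbac}, upgrading it from the panth format to the RBB cool format in the same way that the classical congruence theorem for strong bisimulation is upgraded to one for rooted branching bisimulation, and with everything lifted from single actions to pomset transitions. First I would invoke the hypothesis \emph{positive after reduction}: replace the TSS by its reduced, positive counterpart, which induces the same LTS, so that throughout the argument only positive premises $t\xrightarrow{\{a_1,\dots,a_n\}}t'$ and $tP$ occur and no stratification bookkeeping is needed. After that the proof splits into two parts: (i) showing that the \emph{non-rooted} branching truly concurrent bisimilarities $\approx_{bp},\approx_{bs},\approx_{bhp},\approx_{bhhp}$ are preserved by every operator, and (ii) bootstrapping this to the rooted versions $\approx_{rbp},\approx_{rbs},\approx_{rbhp},\approx_{rbhhp}$.

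For part (i) the key construction is the congruence closure. Given, for each argument $i$, a branching pomset bisimulation $R_i$ with $p_i\,R_i\,q_i$, let $\widehat R$ be the least relation on processes containing every $R_i$ and closed under all function symbols of the signature, further closed up under the pairs reachable by inserting $\tau$-steps produced by patience rules. I would check that $\widehat R$ is again a branching pomset bisimulation: to answer a transition $f(\vec p)\xrightarrow{X}r$ one inspects the closed instance of the rule deriving it; because the rules are in panth format without lookahead, the premises are matched one by one using the $R_i$ (right-hand sides of positive premises are single variables, so no circularity arises), and whenever an argument must perform extra $\tau$-transitions to line up with the other side these are supplied, one argument at a time, by the patience rule guaranteed by the second clause of the RBB cool format --- which is exactly why that clause restricts each $x_i$ to at most one non-$\tau$ premise and demands a patience rule for that argument. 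The cases $X\equiv\tau^*$ versus $X$ non-silent correspond to the two alternatives in the branching clause, and the termination predicate $\downarrow$ is treated symmetrically. For the step variant one restricts $X$ to sets of pairwise concurrent events; for the hp- and hhp-variants one carries the order-isomorphism between configurations along the matching transitions, verifying that $\widehat R$ is a (weakly) posetal relation, and for hhp one additionally checks that the congruence closure of a downward-closed relation can be taken downward-closed.

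For part (ii), suppose $\mathcal E_i\approx_{rbp}\mathcal F_i$ for all $i$; I would show $f(\mathcal E_1,\dots,\mathcal E_{ar(f)})\approx_{rbp}f(\mathcal F_1,\dots,\mathcal F_{ar(f)})$ by matching the \emph{first} transition exactly, with no $\tau$-absorption at the root. Again by the panth format, a first move $f(\vec{\mathcal E})\xrightarrow{X}\mathcal E'$ comes from a rule whose premises are first moves or predicates of the $\mathcal E_i$; the root condition of $\approx_{rbp}$ lets each such premise be answered exactly by the corresponding $\mathcal F_i$, the resulting components are $\approx_{bp}$-related, and by part (i) $\approx_{bp}$ is preserved by $f$, so the two targets are $\approx_{bp}$-related, which is precisely the rooted clause. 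The remaining three rooted notions follow by the same restriction (to steps) or enrichment (with the order-isomorphism, and with downward-closure for hhp).

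The step I expect to be the real obstacle is the branching matching inside part (i) in the genuinely concurrent setting: a composite term may make a single $\tau^*$-labelled pomset transition that the other term can only answer by a succession of single-$\tau$ patience-rule steps threaded through several arguments, and one must verify that every intermediate configuration stays inside $\widehat R$, so that the branching condition is not violated. The no-lookahead restriction and the ``at most one non-$\tau$ premise per argument, plus a patience rule for that argument'' clause of the RBB cool format are exactly the hypotheses that make this decomposition go through; checking that they suffice here, together with the stability of hhp downward-closure under the operators, is where the actual work lies, while everything else is a routine transcription of the proof of Theorem \ref{tcbac} and of the classical rooted-branching congruence theorem.
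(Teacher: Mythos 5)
The paper itself offers no proof of Theorem \ref{rbtcbac}: like Theorem \ref{tcbac}, it is stated as a format-based meta-theorem, implicitly lifted from the standard rooted-branching congruence result for the RBB cool format (Fokkink) by replacing single actions with pomsets. So what you are really doing is reconstructing that literature proof, and your overall two-stage architecture (first a preservation result for the non-rooted branching equivalences, then a bootstrap to the rooted ones using exact matching of initial transitions) is indeed the right skeleton.

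However, part (i) as you state it contains a genuine error: the non-rooted branching truly concurrent bisimilarities are \emph{not} preserved by every operator of an RBB cool TSS, and no amount of work with congruence closures will make them so. The standard counterexample survives intact in the truly concurrent setting: $\tau\cdot a\approx_{bp} a$ (and likewise for $\approx_{bs},\approx_{bhp},\approx_{bhhp}$), yet $\tau\cdot a+b\not\approx_{bp} a+b$, and the TSS of BATC/APTC with the usual rules for $+$ is in RBB cool format, because the second clause of the format only constrains function symbols that occur at the right-hand side of the conclusion of some rule --- which $+$ does not. This is precisely why the rooted variants exist. The correct decomposition, which is what Fokkink's proof actually does, is: prove preservation of $\approx_{bp}$ (etc.) only for those function symbols occurring in targets of rules, where the ``at most one non-$\tau$ premise per argument plus a patience rule'' clause applies and your patience-rule threading argument is the right tool; then prove the congruence property of $\approx_{rbp}$ (etc.) for \emph{all} operators directly, using the rooted condition on the arguments and the restricted part (i) only for the composite targets produced by the rules. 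Your part (ii) in fact only ever invokes part (i) on such target contexts, so the repair is local, but as written the proposal asserts and relies on a false statement. A secondary, smaller caveat: the hhp-case (downward closure of the closed-up relation, and of its weakly posetal structure) is asserted rather than argued, and this is exactly the case where format-based congruence results are known to be delicate; since the paper also merely asserts it, you should at least flag it as a separate obligation rather than fold it into ``the same restriction or enrichment''.
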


\subsection{Prime Event Structure as Operational Semantics of Truly Concurrent Process Algebra} \label{sospespa} 

Based on the structurization and operational semantics of PES, we can establish an axiomatization of truly concurrent process APTC, which is almost the same as that in \cite{APTC}, but we reorganize it.

\subsubsection{Basic Algebra for True Concurrency}

In this section, we will discuss the algebraic laws for prime event structure $\mathcal{E}$, exactly for causality $\leq$ and conflict $\sharp$. We will follow the conventions of process algebra, using $\cdot$ instead of $\leq$ and $+$ instead of $\sharp$, also including the set of atomic actions $\mathbb{E}$, the deadlock constant 0 and the empty action 1, and let $a\in\mathbb{E}\cup\{0\}\cup\{1\}$. The resulted algebra is called Basic Algebra for True Concurrency, abbreviated BATC.

In the following, let $a,b,a',b'\in \mathbb{E}$, and let variables $x,y,z$ range over the set of terms for true concurrency, $p,q,s$ range over the set of closed terms. The set of axioms of BATC consists of the laws given in Table \ref{AxiomsForBATC}.

\begin{center}
    \begin{table}
        \begin{tabular}{@{}ll@{}}
            \hline No. &Axiom\\
            $A1$ & $x+ y = y+ x$\\
            $A2$ & $(x+ y)+ z = x+ (y+ z)$\\
            $A3$ & $x+ x = x$\\
            $A4$ & $(x+ y)\cdot z = x\cdot z + y\cdot z$\\
            $A5$ & $(x\cdot y)\cdot z = x\cdot(y\cdot z)$\\
            $A6$ & $x+0=x$\\
            $A7$ & $0\cdot x=0$\\
            $A8$ & $1\cdot x=x$\\
            $A9$ & $x\cdot 1=x$\\
        \end{tabular}
        \caption{Axioms of BATC}
        \label{AxiomsForBATC}
    \end{table}
\end{center}

\begin{definition}[Basic terms of BATC]\label{BTBATC}
The set of basic terms of BATC, $\mathcal{B}(BATC)$, is inductively defined as follows:

\begin{enumerate}
  \item $0,1\in\mathcal{B}(BATC)$;
  \item $\mathbb{E}\subset\mathcal{B}(BATC)$;
  \item if $a\in \mathbb{E}, t\in\mathcal{B}(BATC)$ then $a\cdot t\in\mathcal{B}(BATC)$;
  \item if $t,s\in\mathcal{B}(BATC)$ then $t+ s\in\mathcal{B}(BATC)$.
\end{enumerate}
\end{definition}

\begin{theorem}[Elimination theorem of BATC]\label{ETBATC}
Let $p$ be a closed BATC term. Then there is a basic BATC term $q$ such that $BATC\vdash p=q$.
\end{theorem}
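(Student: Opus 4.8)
The plan is the standard term-rewriting argument: orient the relevant axioms into a terminating TRS, show that every closed normal form is already a basic $BATC$ term, and transfer this back to the equational theory via soundness of the rewrite relation.

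First I would orient the axioms $A4$--$A9$ from left to right, obtaining the rewrite rules
$$(x+y)\cdot z\rightarrow x\cdot z+y\cdot z,\quad (x\cdot y)\cdot z\rightarrow x\cdot(y\cdot z),\quad x+0\rightarrow x,\quad 0\cdot x\rightarrow 0,\quad 1\cdot x\rightarrow x,\quad x\cdot 1\rightarrow x,$$
and call the resulting TRS $R$ (I leave $A1$--$A3$ unoriented: they are not needed to reach basic terms, since a basic term may carry repeated summands and occurrences of $0$). Each rule has a non-variable left-hand side and right-hand-side variables among its left-hand-side variables, so it is a legitimate rewrite rule, and it is an instance of an axiom; hence by the substitution- and context-closure of the equality relation, $s\rightarrow_R t$ implies $BATC\vdash s=t$, so any reduction $p\rightarrow_R^{*}q$ yields $BATC\vdash p=q$. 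Also $R$ preserves closedness, since no rule introduces a fresh variable on its right-hand side.

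Second, I would prove $R$ strongly normalizing by Theorem \ref{SN}. I take the well-founded precedence on the signature placing $\cdot$ above $+$ and all constants and atomic actions at the bottom (mutually incomparable), and form the induced lexicographic path ordering, comparing the arguments of $\cdot$ left-to-right. I then check $l>_{lpo}r$ for each of the six rules: the four $0/1$-rules are immediate because the right-hand side is a subterm of the left; for $(x\cdot y)\cdot z\rightarrow x\cdot(y\cdot z)$ and $(x+y)\cdot z\rightarrow x\cdot z+y\cdot z$ one uses $\cdot>+$ together with $x+y>_{lpo}x,y$ and $x\cdot y>_{lpo}x,y$. I expect $A4$ to be the delicate point: it duplicates the subterm $z$, so no naive size measure or polynomial interpretation decreases, and termination genuinely relies on the path-ordering comparison (the left argument of the outer $\cdot$ strictly shrinks from $x+y$ to $x$, resp. $y$). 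This is the step on which I would spend the most care.

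Third, I would show that every closed $R$-normal form $q$ is a basic $BATC$ term, by induction on the structure of $q$. If $q$ is $0$, $1$, or an atomic action $a\in\mathbb{E}$, it is basic by clauses 1--2 of Definition \ref{BTBATC}. If $q=q_1+q_2$, then $q_1,q_2$ are closed normal forms (subterms of a normal form are normal forms), hence basic by the induction hypothesis, so $q$ is basic by clause 4. If $q=q_1\cdot q_2$, then $q_1$ is a closed normal form that cannot be $0$ (else $A7$ applies), $1$ (else $A8$), a sum (else $A4$), or a product (else $A5$); since $q_1$ is closed the only remaining case is $q_1=a\in\mathbb{E}$, and $q_2$ is a closed normal form, hence basic by the induction hypothesis, so $q=a\cdot q_2$ is basic by clause 3. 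Combining the three parts proves the theorem: given a closed $BATC$ term $p$, strong normalization yields a normal form $q$ with $p\rightarrow_R^{*}q$, soundness gives $BATC\vdash p=q$, and the characterization gives $q\in\mathcal{B}(BATC)$. Note that Newman's lemma / weak confluence is not required here, as we only need the existence of one normal form, not its uniqueness.
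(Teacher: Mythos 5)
Your proposal is correct and follows essentially the same route as the paper: orient the axioms into a TRS, establish strong normalization via the lexicographic path order induced by $\cdot > +$ (Theorem \ref{SN}), and show by case analysis that closed normal forms are basic terms. The only differences are cosmetic refinements — you drop the idempotence rule $x+x\rightarrow x$ (which the paper keeps as $RA3$ but which is indeed unnecessary for reaching basic terms) and you make explicit the soundness step that each rewrite is a provable equality, which the paper leaves implicit.
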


\begin{proof}
(1) Firstly, suppose that the following ordering on the signature of BATC is defined: $\cdot > +$ and the symbol $\cdot$ is given the lexicographical status for the first argument, then for each rewrite rule $p\rightarrow q$ in Table \ref{TRSForBATC} relation $p>_{lpo} q$ can easily be proved. We obtain that the term rewrite system shown in Table \ref{TRSForBATC} is strongly normalizing, for it has finitely many rewriting rules, and $>$ is a well-founded ordering on the signature of BATC, and if $s>_{lpo} t$, for each rewriting rule $s\rightarrow t$ is in Table \ref{TRSForBATC} (see Theorem \ref{SN}).

\begin{center}
    \begin{table}
        \begin{tabular}{@{}ll@{}}
            \hline No. &Rewriting Rule\\
            $RA3$ & $x+ x \rightarrow x$\\
            $RA4$ & $(x+ y)\cdot z \rightarrow x\cdot z + y\cdot z$\\
            $RA5$ & $(x\cdot y)\cdot z \rightarrow x\cdot(y\cdot z)$\\
            $RA6$ & $x+0\rightarrow x$\\
            $RA7$ & $0\cdot x\rightarrow 0$\\
            $RA8$ & $1\cdot x\rightarrow x$\\
            $RA9$ & $x\cdot 1\rightarrow x$\\
        \end{tabular}
        \caption{Term rewrite system of BATC}
        \label{TRSForBATC}
    \end{table}
\end{center}

(2) Then we prove that the normal forms of closed BATC terms are basic BATC terms.

Suppose that $p$ is a normal form of some closed BATC term and suppose that $p$ is not a basic term. Let $p'$ denote the smallest sub-term of $p$ which is not a basic term. It implies that each sub-term of $p'$ is a basic term. Then we prove that $p$ is not a term in normal form. It is sufficient to induct on the structure of $p'$:

\begin{itemize}
  \item Case $p'\equiv 0, 0\in \mathbb{B}(BATC)$. $p'$ is a basic term, which contradicts the assumption that $p'$ is not a basic term, so this case should not occur.
  \item Case $p'\equiv 1, 1\in \mathbb{B}(BATC)$. $p'$ is a basic term, which contradicts the assumption that $p'$ is not a basic term, so this case should not occur.
  \item Case $p'\equiv a, a\in \mathbb{E}$. $p'$ is a basic term, which contradicts the assumption that $p'$ is not a basic term, so this case should not occur.
  \item Case $p'\equiv p_1\cdot p_2$. By induction on the structure of the basic term $p_1$:
      \begin{itemize}
        \item Subcase $p_1\equiv 0$. $p'$ would be a basic term, which contradicts the assumption that $p'$ is not a basic term;
        \item Subcase $p_1\equiv 1$. $p'$ would be a basic term, which contradicts the assumption that $p'$ is not a basic term;
        \item Subcase $p_1\in \mathbb{E}$. $p'$ would be a basic term, which contradicts the assumption that $p'$ is not a basic term;
        \item Subcase $p_1\equiv 0\cdot p_1'$. $RA7$ rewriting rule can be applied. So $p$ is not a normal form;
        \item Subcase $p_1\equiv 1\cdot p_1'$. $RA8$ rewriting rule can be applied. So $p$ is not a normal form;
        \item Subcase $p_1\equiv a\cdot p_1'$. $RA5$ rewriting rule can be applied. So $p$ is not a normal form;
        \item Subcase $p_1\equiv p_1'+ p_1''$. $RA4$ rewriting rule can be applied. So $p$ is not a normal form.
      \end{itemize}
  \item Case $p'\equiv p_1+ p_2$. By induction on the structure of the basic terms both $p_1$ and $p_2$, all subcases will lead to that $p'$ would be a basic term, which contradicts the assumption that $p'$ is not a basic term.
\end{itemize}
\end{proof}

We give the operational transition rules for 1,atomic action $a\in\mathbb{E}$, and operators $\cdot$ and $+$ as Table \ref{TRForBATC} shows. And the predicate $\xrightarrow{\{a_1,\cdots,a_n\}}\surd$ ($n\in\mathbb{N}$) represents successful termination after execution of the events $a_1,\cdots,a_n$.

\begin{center}
    \begin{table}
        $$\frac{}{1\rightarrow\surd}$$
        $$\frac{}{a_1,\cdots,a_n\xrightarrow{\{a_1,\cdots,a_n\}}\surd}$$
        $$\frac{x\xrightarrow{\{a_1,\cdots,a_n\}}\surd}{x+ y\xrightarrow{\{a_1,\cdots,a_n\}}\surd} \quad\frac{x\xrightarrow{\{a_1,\cdots,a_n\}}x'}{x+ y\xrightarrow{\{a_1,\cdots,a_n\}}x'} \quad\frac{y\xrightarrow{\{a_1,\cdots,a_n\}}\surd}{x+ y\xrightarrow{\{a_1,\cdots,a_n\}}\surd} \quad\frac{y\xrightarrow{\{a_1,\cdots,a_n\}}y'}{x+ y\xrightarrow{\{a_1,\cdots,a_n\}}y'}$$
        $$\frac{x\xrightarrow{\{a_1,\cdots,a_n\}}\surd}{x\cdot y\xrightarrow{\{a_1,\cdots,a_n\}} y} \quad\frac{x\xrightarrow{\{a_1,\cdots,a_n\}}x'}{x\cdot y\xrightarrow{\{a_1,\cdots,a_n\}}x'\cdot y}$$
        \caption{Transition rules of BATC}
        \label{TRForBATC}
    \end{table}
\end{center}

\begin{theorem}[Congruence of BATC with respect to truly concurrent bisimulation equivalences]
Truly concurrent bisimulation equivalences, including pomset bisimulation equivalence $\sim_{p}$, step bisimulation equivalence $\sim_s$, hp-bisimulation equivalence $\sim_{hp}$ and hhp-bisimulation equivalence $\sim_{hhp}$, are all congruences with respect to BATC.
\end{theorem}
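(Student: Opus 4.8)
The plan is to obtain this result immediately from the format meta-theorem, Theorem~\ref{tcbac}: once the transition system specification of $BATC$ given in Table~\ref{TRForBATC} is shown to be \emph{positive after reduction} and in \emph{panth format}, the four equivalences $\sim_p$, $\sim_s$, $\sim_{hp}$ and $\sim_{hhp}$ are automatically congruences. So the proof reduces to two format checks on the rules of Table~\ref{TRForBATC}, after reading those rules inside the pomset-adapted LTS/TSS framework of Section~\ref{sospes} (labels are finite sets $\{a_1,\dots,a_n\}$, and $x\xrightarrow{\{a_1,\dots,a_n\}}\surd$ is a termination predicate).

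First I would settle positivity. Every rule in Table~\ref{TRForBATC} — for $1$, for the atomic pomset $a_1,\dots,a_n$, and for $+$ and $\cdot$ — has only positive premises (transitions $x\xrightarrow{\{a_1,\dots,a_n\}}x'$, $y\xrightarrow{\{a_1,\dots,a_n\}}y'$ and termination predicates $x\xrightarrow{\{a_1,\dots,a_n\}}\surd$, $y\xrightarrow{\{a_1,\dots,a_n\}}\surd$); there are no negative premises at all. Hence a constant weight function is trivially a stratification, and by the Positivity-after-reduction theorem the TSS is positive after reduction. (One may equally note directly that, lacking negative premises, the TSS coincides with its own reduction and is therefore positive.)

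Next I would verify the panth format rule by rule. (i)~Every positive premise of the shape $t\xrightarrow{\{a_1,\dots,a_n\}}t'$ appearing in Table~\ref{TRForBATC} has $t'$ a single variable ($x'$ or $y'$); the premises with $\surd$ on the right are termination predicates, to which this clause does not apply. (ii)~The source of each rule is $1$, $a_1,\dots,a_n$, $x+y$ or $x\cdot y$, hence contains at most one function symbol. (iii)~In each rule the variable occurring on the right-hand side of a positive premise ($x'$, resp.\ $y'$) is fresh: it does not occur in the source (which mentions only $x$ and $y$) and occurs in only one premise, so there is no variable shared between the right-hand sides of positive premises and the source. Thus every rule is a panth rule.

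With both hypotheses in place, Theorem~\ref{tcbac} applies and yields that $\sim_p$, $\sim_s$, $\sim_{hp}$ and $\sim_{hhp}$ are congruences for $BATC$. There is no real obstacle here; the only point requiring care — and it is purely a matter of bookkeeping — is that the pomset labels $\{a_1,\dots,a_n\}$ and the $\surd$-predicates must be interpreted inside the format definitions exactly as set up in Section~\ref{sospes}, so that, in particular, the ``single variable on the right of a positive premise'' and ``at most one function symbol in the source'' restrictions are read for the pomset-labelled rules correctly. Once that reading is fixed, the two checks above are entirely mechanical.
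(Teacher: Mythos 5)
Your proposal is correct and follows exactly the paper's route: the paper likewise derives the result by asserting that the TSS of BATC in Table~\ref{TRForBATC} is positive after reduction and in panth format and then invoking Theorem~\ref{tcbac}. Your explicit rule-by-rule verification of positivity and the panth conditions simply fills in details the paper leaves implicit.
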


\begin{proof}
Since the TSS of BATC in Table \ref{TRForBATC} is positive after reduction and in panth format, according to Theorem \ref{tcbac}, truly concurrent bisimulation equivalences, including pomset bisimulation equivalence $\sim_{p}$, step bisimulation equivalence $\sim_s$, hp-bisimulation equivalence $\sim_{hp}$ and hhp-bisimulation equivalence $\sim_{hhp}$, are all congruences with respect to BATC.
\end{proof}

\begin{theorem}[Soundness of BATC modulo truly concurrent bisimulation equivalences]\label{SBATC}
The axiomatization of BATC is sound modulo truly concurrent bisimulation equivalences, i.e.,

\begin{enumerate}
  \item let $x$ and $y$ be BATC terms. If $BATC\vdash x=y$, then $x\sim_{p} y$;
  \item let $x$ and $y$ be BATC terms. If $BATC\vdash x=y$, then $x\sim_{s} y$;
  \item let $x$ and $y$ be BATC terms. If $BATC\vdash x=y$, then $x\sim_{hp} y$;
  \item let $x$ and $y$ be BATC terms. If $BATC\vdash x=y$, then $x\sim_{hhp} y$.
\end{enumerate}
\end{theorem}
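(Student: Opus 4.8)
The plan is to prove soundness in the standard way for structural operational semantics: for each axiom in Table~\ref{AxiomsForBATC}, exhibit an explicit (weakly) truly concurrent bisimulation (resp.\ posetal relation) relating the left-hand side and the right-hand side of the equation, and then argue that this extends to all provable equalities by congruence. Since each of the four equivalences $\sim_p$, $\sim_s$, $\sim_{hp}$, $\sim_{hhp}$ is a congruence with respect to BATC (by the congruence theorem just proved, which rests on Theorem~\ref{tcbac} together with the fact that the TSS of Table~\ref{TRForBATC} is positive after reduction and in panth format), it suffices to verify soundness axiom by axiom: if $\ell = r$ is an axiom then $\ell \sim_\star r$ for $\star \in \{p,s,hp,hhp\}$, and then reflexivity, symmetry, transitivity, and closure under the operators $\cdot$ and $+$ (all of which are built into the notion of a congruence) propagate this to every equation derivable from the axioms. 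So the argument reduces to nine small verifications.

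For each axiom the plan is to build the witnessing relation as an identity-plus-pairs relation on configurations (for $\sim_p,\sim_s$) or on the posetal/weakly-posetal product (for $\sim_{hp},\sim_{hhp}$), and check the transfer conditions by inspecting the transition rules in Table~\ref{TRForBATC}. Concretely: $A1$ (commutativity of $+$) and $A3$ ($x+x=x$) follow because the four $+$-rules are symmetric in the two summands and idempotent at the level of outgoing transitions; $A2$ (associativity of $+$) is the same observation nested once; $A6$ ($x+0=x$) holds because $0$ has no transition rule and no termination predicate, so $x+0$ and $x$ have literally the same outgoing transitions; $A7$ ($0\cdot x=0$) holds because $0$ can neither terminate nor step, so $0\cdot x$ has no transitions, matching $0$; $A8$ ($1\cdot x=x$) holds because the only move of $1$ is $1\to\surd$, which via the left sequential-composition rule yields $1\cdot x \to x$ (a $\tau$-free internal step that must be matched — here one notes the transition rules attach no label to the $\surd$-step for $1$, so this needs a small check that the intended semantics makes $1\cdot x$ and $x$ identify, consistent with how $1$ is treated as the empty action $\epsilon$); $A9$ ($x\cdot 1 = x$) is symmetric; $A4$ (right distributivity $(x+y)\cdot z = x\cdot z + y\cdot z$) and $A5$ (associativity of $\cdot$) are the two genuinely structural ones, handled by taking the relation generated by pairs of the form $(u\cdot z,\, u\cdot z)$ together with $((x+y)\cdot z,\, x\cdot z + y\cdot z)$ and its residuals, and checking that a first step of $x+y$ (coming from $x$ or from $y$) is matched on the right, and symmetrically — for $A5$ the relation tracks how a term $x\cdot(y\cdot z)$ peels off actions of $x$, then of $y$, then behaves as $z$, exactly as $(x\cdot y)\cdot z$ does.

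For the history-preserving cases one carries along the isomorphism component $f$ of the posetal triple: because the configurations reached on the two sides differ only by bracketing of the syntax and not by the causal/conflict structure of the events actually fired, the natural candidate $f$ is the identity on the fired events, and one checks that the pomset order and labelling are preserved, which is immediate since the same multiset of events with the same $\leq$ and $\sharp$ restrictions is produced. For $\sim_{hhp}$ one additionally checks that the relation so constructed is downward closed, which holds because the sub-configuration of a matched pair is again a matched pair of the same syntactic shape. The step bisimulation case $\sim_s$ is a restriction of the pomset case to transitions labelled by sets of pairwise concurrent events and needs no separate work once the pomset relation is shown to respect that restriction.

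The main obstacle — and it is mild — is the treatment of the constants $0$ and $1$ in the operational rules of Table~\ref{TRForBATC}: the rule for $1$ produces an \emph{unlabelled} termination step $1\to\surd$ rather than a $\tau$-step or an $\emptyset$-labelled step, and one must pin down precisely how such a step interacts with sequential composition (so that $1\cdot x$ really does behave as $x$, underwriting $A8$) and with the bisimulation clauses (which are phrased for transitions labelled by pomsets $X$ and for a termination predicate). Once the convention is fixed that $1\to\surd$ is the empty-pomset move identified with the behaviour $\epsilon$ — consistent with $\widehat{\tau^*}=\epsilon$ and the role of $1$ as the empty action declared in Section~\ref{sospespa} — axioms $A7$, $A8$, $A9$ fall out, and the remaining axioms $A1$–$A6$ are routine finite case analyses on the at-most-two transition rules that can fire from the root of each side.
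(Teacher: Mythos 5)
Your proposal is correct and follows essentially the same route as the paper: since $\sim_p$, $\sim_s$, $\sim_{hp}$, $\sim_{hhp}$ are equivalences and congruences with respect to BATC, soundness reduces to checking each axiom of Table~\ref{AxiomsForBATC} individually, which the paper declares trivial and omits while you spell out the witnessing relations (including the sensible handling of the unlabelled $1\rightarrow\surd$ step). No gap; your write-up simply supplies the routine verifications the paper leaves implicit.
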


\begin{proof}
Since truly concurrent bisimulations $\sim_p$, $\sim_s$, $\sim_{hp}$ and $\sim_{hhp}$ are all both equivalent and congruent relations, we only need to check if each axiom in Table \ref{AxiomsForBATC} is sound modulo truly concurrent bisimulation equivalences $\sim_p$, $\sim_s$, $\sim_{hp}$ and $\sim_{hhp}$. The proof is trivial and we omit it.
\end{proof}

\begin{theorem}[Completeness of BATC modulo pomset bisimulation equivalence]\label{CBATC}
The axiomatization of BATC is complete modulo truly concurrent bisimulation equivalences, i.e.,

\begin{enumerate}
  \item let $p$ and $q$ be closed BATC terms, if $p\sim_{p} q$ then $p=q$;
  \item let $p$ and $q$ be closed BATC terms, if $p\sim_{s} q$ then $p=q$;
  \item let $p$ and $q$ be closed BATC terms, if $p\sim_{hp} q$ then $p=q$;
  \item let $p$ and $q$ be closed BATC terms, if $p\sim_{hhp} q$ then $p=q$.
\end{enumerate}
\end{theorem}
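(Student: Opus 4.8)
The plan is the classical normal-form argument for axiomatic completeness, in the style of Milner's completeness proof for basic process algebra, carried out uniformly for all four equivalences. First I would reduce to basic terms: by the elimination theorem (Theorem~\ref{ETBATC}) every closed BATC term is provably equal to a basic BATC term (Definition~\ref{BTBATC}), and by soundness (Theorem~\ref{SBATC}) provable equality is contained in each of $\sim_p,\sim_s,\sim_{hp},\sim_{hhp}$. Hence, writing $n_p,n_q$ for basic terms with $BATC\vdash p=n_p$ and $BATC\vdash q=n_q$, it suffices to prove $BATC\vdash n_p=n_q$ from $n_p\sim_\bullet n_q$, where $\bullet$ ranges over $\{p,s,hp,hhp\}$.

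Second, I would put basic terms in a normal form. Using $A3$ to delete repeated summands, $A6$ to delete $0$-summands, $A9$ to absorb trailing $1$'s, and $A1,A2$ to treat $+$ as an associative-commutative operator on a finite nonempty collection of summands, every basic term is provably equal to a term $\sum_{i\in I}\alpha_i$ in which each $\alpha_i$ is either $1$, or an atomic pomset $a$, or $a\cdot m_i$ with $m_i$ again a normal form, and in which the $\alpha_i$ are pairwise distinct modulo $A1,A2$. Existence of such a normal form is established by a structural induction that mirrors the case analysis in the proof of Theorem~\ref{ETBATC}.

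Third, the core of the argument: if $n,n'$ are normal forms and $n\sim_\bullet n'$, then $n$ and $n'$ agree modulo $A1,A2$, hence $BATC\vdash n=n'$. This goes by induction on the sum of the depths of $n$ and $n'$. From Table~\ref{TRForBATC} one reads the behaviour of a normal form directly off its summands: $n$ has the terminating move of the $1$-rule iff $1$ is a summand; $n\xrightarrow{\{a\}}\surd$ iff $a$ is a summand; and $n\xrightarrow{\{a\}}m$ iff $a\cdot m$ is a summand. Given $n\sim_\bullet n'$: a summand $a$ of $n$ yields $n\xrightarrow{\{a\}}\surd$, which $n'$ must match, forcing $a$ to be a summand of $n'$; a summand $a\cdot m$ yields $n\xrightarrow{\{a\}}m$, matched by $n'\xrightarrow{\{a\}}m'$ with $m\sim_\bullet m'$, so by the induction hypothesis $BATC\vdash m=m'$ and the matching summand $a\cdot m'$ of $n'$ is provably equal to $a\cdot m$; the $1$-summand is handled by the termination clause of the bisimulation. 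By symmetry the two summand collections coincide up to provable equality, so $BATC\vdash n=n'$, and combining with the first step gives $BATC\vdash p=q$.

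The four equivalences can be treated at one stroke because a basic BATC term contains no parallel composition: all its reachable configurations are totally ordered, pomset/step transitions are labelled by a single atomic pomset, and the order-isomorphism component $f$ of an hp- or hhp-bisimulation is uniquely determined at each step, hence carries no information beyond an ordinary bisimulation, with downward closure (for hhp) holding vacuously. I expect the main obstacle to be exactly this bookkeeping in the hp/hhp cases: one must check that the step-wise identity extensions $f[a\mapsto a]$ stay isomorphisms along the induction and that each residual pair $(m,m')$ inherits a genuine hp- (resp. hhp-) bisimulation, so that the induction hypothesis truly applies. A secondary nuisance is the careful treatment of the constants $0$ and $1$ in the normal form — distinguishing the summand $a$ from $a\cdot 1$, and ensuring $0$ survives only as the whole term — which must be pinned down before the summand-matching step is unambiguous.
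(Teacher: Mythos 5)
Your proposal follows essentially the same route as the paper's proof: reduce closed terms to basic terms by the elimination theorem, normalize modulo $A1,A2$ (AC of $+$), show by induction on size that bisimilar normal forms coincide up to $=_{AC}$ by matching summands ($a$ versus $a\cdot m$), and close the argument with soundness; the four equivalences are then handled uniformly, just as the paper does by saying the remaining cases go ``similarly to (1)''. Your extra bookkeeping about the $0$/$1$ summands and the collapse of the hp-/hhp-isomorphism component on parallel-free terms is a more explicit rendering of what the paper leaves implicit, not a different method.
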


\begin{proof}
(1) Let $p$ and $q$ be closed BATC terms, if $p\sim_{p} q$ then $p=q$.

Firstly, by the elimination theorem of BATC, we know that for each closed BATC term $p$, there exists a closed basic BATC term $p'$, such that $BATC\vdash p=p'$, so, we only need to consider closed basic BATC terms.

The basic terms (see Definition \ref{BTBATC}) modulo associativity and commutativity (AC) of conflict $+$ (defined by axioms $A1$ and $A2$ in Table \ref{AxiomsForBATC}), and this equivalence is denoted by $=_{AC}$. Then, each equivalence class $s$ modulo AC of $+$ has the following normal form

$$s_1+\cdots+ s_k$$

with each $s_i$ either an atomic event or of the form $t_1\cdot t_2$, and each $s_i$ is called the summand of $s$.

Now, we prove that for normal forms $n$ and $n'$, if $n\sim_{p} n'$ then $n=_{AC}n'$. It is sufficient to induct on the sizes of $n$ and $n'$.

\begin{itemize}
  \item Consider a summand $a$ of $n$. Then $n\xrightarrow{a}\surd$, so $n\sim_p n'$ implies $n'\xrightarrow{a}\surd$, meaning that $n'$ also contains the summand $a$.
  \item Consider a summand $t_1\cdot t_2$ of $n$. Then $n\xrightarrow{t_1}t_2$, so $n\sim_p n'$ implies $n'\xrightarrow{t_1}t_2'$ with $t_2\sim_p t_2'$, meaning that $n'$ contains a summand $t_1\cdot t_2'$. Since $t_2$ and $t_2'$ are normal forms and have sizes smaller than $n$ and $n'$, by the induction hypotheses $t_2\sim_p t_2'$ implies $t_2=_{AC} t_2'$.
\end{itemize}

So, we get $n=_{AC} n'$.

Finally, let $s$ and $t$ be basic terms, and $s\sim_p t$, there are normal forms $n$ and $n'$, such that $s=n$ and $t=n'$. The soundness theorem of BATC modulo pomset bisimulation equivalence (see Theorem \ref{SBATC}) yields $s\sim_p n$ and $t\sim_p n'$, so $n\sim_p s\sim_p t\sim_p n'$. Since if $n\sim_p n'$ then $n=_{AC}n'$, $s=n=_{AC}n'=t$, as desired.

(2) Let $p$ and $q$ be closed BATC terms, if $p\sim_{s} q$ then $p=q$.

It can be proven similarly to (1).

(3) Let $p$ and $q$ be closed BATC terms, if $p\sim_{hp} q$ then $p=q$.

It can be proven similarly to (1).

(4) Let $p$ and $q$ be closed BATC terms, if $p\sim_{hhp} q$ then $p=q$.

It can be proven similarly to (1).
\end{proof}

\subsubsection{Algebra of Parallelism for True Concurrency}

In this section, we will discuss parallelism in true concurrency. We know that parallelism can be modeled by left merge and communication merge in ACP \cite{ACP} with an interleaving bisimulation semantics. Parallelism in true concurrency is quite different to that in interleaving bisimulation: it is a fundamental computational pattern (modeled by parallel operators $\parallel$ and $\leftmerge$) and cannot be merged (replaced by other operators $+$ and $\cdot$). As mentioned in the above sections, the unary operator $\Theta$ is the confliction elimination operator and the auxiliary binary operator $\triangleleft$ is the unless operator. The resulted algebra is called Algebra of Parallelism for True Concurrency, abbreviated APTC.

Firstly, we give the transition rules for parallelism as Table \ref{TRForParallel} shows, it is suitable for all truly concurrent behavioral equivalences, including pomset bisimulation, step bisimulation, hp-bisimulation and hhp-bisimulation.

\begin{center}
    \begin{table}
        $$\frac{x\xrightarrow{a_1}\surd\quad y\xrightarrow{a_2}\surd}{x\parallel y\xrightarrow{\{a_1,a_2\}}\surd} \quad\frac{x\xrightarrow{a_1}x'\quad y\xrightarrow{a_2}\surd}{x\parallel y\xrightarrow{\{a_1,a_2\}}x'}$$
        $$\frac{x\xrightarrow{a_1}\surd\quad y\xrightarrow{a_2}y'}{x\parallel y\xrightarrow{\{a_1,a_2\}}y'} \quad\frac{x\xrightarrow{a_1}x'\quad y\xrightarrow{a_2}y'}{x\parallel y\xrightarrow{\{a_1,a_2\}}x'\between y'}$$
        $$\frac{x\xrightarrow{a_1}\surd\quad y\xrightarrow{a_2}\surd \quad(a_1\leq a_2)}{x\leftmerge y\xrightarrow{\{a_1,a_2\}}\surd} \quad\frac{x\xrightarrow{a_1}x'\quad y\xrightarrow{a_2}\surd \quad(a_1\leq a_2)}{x\leftmerge y\xrightarrow{\{a_1,a_2\}}x'}$$
        $$\frac{x\xrightarrow{a_1}\surd\quad y\xrightarrow{a_2}y' \quad(a_1\leq a_2)}{x\leftmerge y\xrightarrow{\{a_1,a_2\}}y'} \quad\frac{x\xrightarrow{a_1}x'\quad y\xrightarrow{a_2}y' \quad(a_1\leq a_2)}{x\leftmerge y\xrightarrow{\{a_1,a_2\}}x'\between y'}$$
        $$\frac{x\xrightarrow{a_1}\surd\quad y\xrightarrow{a_2}\surd}{x\mid y\xrightarrow{\gamma(a_1,a_2)}\surd} \quad\frac{x\xrightarrow{a_1}x'\quad y\xrightarrow{a_2}\surd}{x\mid y\xrightarrow{\gamma(a_1,a_2)}x'}$$
        $$\frac{x\xrightarrow{a_1}\surd\quad y\xrightarrow{a_2}y'}{x\mid y\xrightarrow{\gamma(a_1,a_2)}y'} \quad\frac{x\xrightarrow{a_1}x'\quad y\xrightarrow{a_2}y'}{x\mid y\xrightarrow{\gamma(a_1,a_2)}x'\between y'}$$
        $$\frac{x\xrightarrow{a_1}\surd\quad (\sharp(a_1,a_2))}{\Theta(x)\xrightarrow{a_1}\surd} \quad\frac{x\xrightarrow{a_2}\surd\quad (\sharp(a_1,a_2))}{\Theta(x)\xrightarrow{a_2}\surd}$$
        $$\frac{x\xrightarrow{a_1}x'\quad (\sharp(a_1,a_2))}{\Theta(x)\xrightarrow{a_1}\Theta(x')} \quad\frac{x\xrightarrow{a_2}x'\quad (\sharp(a_1,a_2))}{\Theta(x)\xrightarrow{a_2}\Theta(x')}$$
        $$\frac{x\xrightarrow{a_1}\surd \quad y\nrightarrow^{a_2}\quad (\sharp(a_1,a_2))}{x\triangleleft y\rightarrow \surd}
        \quad\frac{x\xrightarrow{a_1}x' \quad y\nrightarrow^{a_2}\quad (\sharp(a_1,a_2))}{x\triangleleft y\rightarrow x'}$$
        $$\frac{x\xrightarrow{a_3}\surd \quad y\nrightarrow^{a_2}\quad (\sharp(a_1,a_2),a_3\leq a_1)}{x\triangleleft y\xrightarrow{a_3}\surd}
        \quad\frac{x\xrightarrow{a_3}x' \quad y\nrightarrow^{a_2}\quad (\sharp(a_1,a_2),a_3\leq a_1)}{x\triangleleft y\xrightarrow{a_3}x'}$$
        $$\frac{x\xrightarrow{a_1}\surd \quad y\nrightarrow^{a_3}\quad (\sharp(a_1,a_2),a_2\leq a_3)}{x\triangleleft y\rightarrow \surd}
        \quad\frac{x\xrightarrow{a_1}x' \quad y\nrightarrow^{a_3}\quad (\sharp(a_1,a_2),a_2\leq a_3)}{x\triangleleft y\rightarrow x'}$$
        $$\frac{x\xrightarrow{a_3}\surd \quad y\nrightarrow^{a_2}\quad (\sharp(a_1,a_2),a_1\leq a_3)}{x\triangleleft y\rightarrow \surd}
        \quad\frac{x\xrightarrow{a_3}x' \quad y\nrightarrow^{a_2}\quad (\sharp(a_1,a_2),a_1\leq a_3)}{x\triangleleft y\rightarrow x'}$$
        \caption{Transition rules of APTC}
        \label{TRForParallel}
    \end{table}
\end{center}

\begin{theorem}[Congruence theorem of APTC]
Truly concurrent bisimulation equivalences $\sim_{p}$, $\sim_s$, $\sim_{hp}$ and $\sim_{hhp}$ are all congruences with respect to APTC.
\end{theorem}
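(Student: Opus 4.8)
The plan is to obtain the statement from Theorem~\ref{tcbac} in the same way the congruence theorem for $BATC$ was obtained: it suffices to check that the transition system specification (TSS) of $APTC$, i.e.\ the rules of Table~\ref{TRForBATC} together with the rules of Table~\ref{TRForParallel} (and the transition rules for the remaining $APTC$ operators $\between$ and $\partial_H$), is (i)~in panth format and (ii)~positive after reduction. Given (i) and (ii), Theorem~\ref{tcbac} yields at once that $\sim_p$, $\sim_s$, $\sim_{hp}$ and $\sim_{hhp}$ are congruences for $APTC$.

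First I would verify the panth format by inspecting the rule schemes one at a time. Every positive premise appearing in Table~\ref{TRForParallel} has the form $x\xrightarrow{a}\surd$ or $x\xrightarrow{a}x'$ (and symmetrically for $y$), so its right-hand side is a single variable (or the termination symbol), satisfying restriction~1; each source is a single application of one operator (one of $\parallel$, $\leftmerge$, $\mid$, $\Theta$, $\triangleleft$, $\between$, $\partial_H$) to variables, hence contains at most one function symbol, satisfying restriction~2; and the source variables $x,y$ are distinct from one another and never recur on the right-hand side of a premise (those are the fresh primed variables), satisfying restriction~3. The compound targets such as $x'\between y'$ and $\Theta(x')$ are immaterial, since the panth conditions constrain only the premises and the source. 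The rules of Table~\ref{TRForBATC} are already in panth format, so the whole TSS is.

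Next I would establish positivity after reduction. The only negative premises in the TSS are the premises of the unless rules for $\triangleleft$, of the form $y\nrightarrow^{a_2}$ and $y\nrightarrow^{a_3}$; the conditions such as $\sharp(a_1,a_2)$, $a_3\leq a_1$ and the guard $a_1\leq a_2$ on $\leftmerge$ are restrictions on which label instances are admissible, not transition premises, and so are irrelevant here. By the theorem on positivity after reduction it is enough to exhibit a stratification, for which I would take $\phi$ sending a transition $p\xrightarrow{X}p'$ (and a predicate $pP$) to the number of function symbols occurring in the source $p$. In every rule the source of each premise is one of the argument variables of the source of the conclusion, so under any closed substitution its $\phi$-value is strictly smaller than that of the conclusion; this gives the non-strict inequality required of positive premises and, a fortiori, the strict inequality required of the negative premises $y\nrightarrow^{a_2}$, $y\nrightarrow^{a_3}$ of the $\triangleleft$ rules. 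Hence $\phi$ is a stratification, the TSS is positive after reduction, and Theorem~\ref{tcbac} applies.

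The only delicate point is this stratification step: one must be careful to classify the side conditions on the $\Theta$, $\triangleleft$ and $\leftmerge$ rules as restrictions on label variables rather than as (negative) premises, and to confirm that the ``size of the source'' measure genuinely decreases across every premise of every rule --- which it does precisely because each source in this TSS is a single operator applied to variables, so every premise refers to a proper subterm. Everything else is finite, routine case analysis over the rule schemes, exactly parallel to the $BATC$ case.
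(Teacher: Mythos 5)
Your proposal is correct and follows essentially the same route as the paper: the paper's proof simply asserts that the TSS of APTC (Table \ref{TRForParallel}) is positive after reduction and in panth format and then invokes Theorem \ref{tcbac}. Your additional work --- the rule-by-rule panth check and the explicit stratification handling the negative premises of the $\triangleleft$ rules --- just fills in the verification the paper leaves implicit, so the two arguments coincide in substance.
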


\begin{proof}
Since the TSS of APTC in Table \ref{TRForParallel} is positive after reduction and in panth format, according to Theorem \ref{tcbac}, truly concurrent bisimulation equivalences, including pomset bisimulation equivalence $\sim_{p}$, step bisimulation equivalence $\sim_s$, hp-bisimulation equivalence $\sim_{hp}$ and hhp-bisimulation equivalence $\sim_{hhp}$, are all congruences with respect to APTC.
\end{proof}

We design the axioms of parallelism in Table \ref{AxiomsForParallelism}, including algebraic laws for parallel operator $\parallel$, communication operator $\mid$, conflict elimination operator $\Theta$ and unless operator $\triangleleft$, and also the whole parallel operator $\between$. The communication between two communicating events in different parallel branches may cause deadlock 0 (a state of inactivity), which is caused by mismatch of two communicating events or the imperfectness of the communication channel.

\begin{center}
    \begin{table}
        \begin{tabular}{@{}ll@{}}
            \hline No. &Axiom\\
            $P1$ & $x\between y = x\parallel y + x\mid y$\\
            $P2$ & $x\parallel y = x\leftmerge y + y\leftmerge x$\\
            $P3$ & $(a_1\leq a_2)\quad a_1\leftmerge (a_2\cdot y) = (a_1\leftmerge a_2)\cdot y$\\
            $P4$ & $(a_1\leq a_2)\quad (a_1\cdot x)\leftmerge a_2 = (a_1\leftmerge a_2)\cdot x$\\
            $P5$ & $(a_1\leq a_2)\quad (a_1\cdot x)\leftmerge (a_2\cdot y) = (a_1\leftmerge a_2)\cdot (x\between y)$\\
            $P6$ & $(x+ y)\leftmerge z = (x\leftmerge z)+ (y\leftmerge z)$\\
            $P7$ & $0\leftmerge x = 0$\\
            $P8$ & $1\leftmerge x = x$\\
            $P9$ & $x\leftmerge 1 = x$\\
            $C10$ & $a_1\mid a_2 = \gamma(a_1,a_2)$\\
            $C11$ & $a_1\mid (a_2\cdot y) = \gamma(a_1,a_2)\cdot y$\\
            $C12$ & $(a_1\cdot x)\mid a_2 = \gamma(a_1,a_2)\cdot x$\\
            $C13$ & $(a_1\cdot x)\mid (a_2\cdot y) = \gamma(a_1,a_2)\cdot (x\between y)$\\
            $C14$ & $(x+ y)\mid z = (x\mid z) + (y\mid z)$\\
            $C15$ & $x\mid (y+ z) = (x\mid y)+ (x\mid z)$\\
            $C16$ & $0\mid x = 0$\\
            $C17$ & $x\mid0 = 0$\\
            $C18$ & $1\mid x = 0$\\
            $C19$ & $x\mid 1 = 0$\\
            $CE20$ & $\Theta(a) = a$\\
            $CE21$ & $\Theta(0) = 0$\\
            $CE22$ & $\Theta(x+ y) = \Theta(x) + \Theta(y)$\\
            $CE23$ & $\Theta(x\cdot y)=\Theta(x)\cdot\Theta(y)$\\
            $CE24$ & $\Theta(x\leftmerge y) = ((\Theta(x)\triangleleft y)\leftmerge y)+ ((\Theta(y)\triangleleft x)\leftmerge x)$\\
            $CE25$ & $\Theta(x\mid y) = ((\Theta(x)\triangleleft y)\mid y)+ ((\Theta(y)\triangleleft x)\mid x)$\\
            $U26$ & $(\sharp(a_1,a_2))\quad a_1\triangleleft a_2 = 1$\\
            $U27$ & $(\sharp(a_1,a_2),a_3\leq a_1)\quad a_3\triangleleft a_2 = a_3$\\
            $U28$ & $(\sharp(a_1,a_2),a_2\leq a_3)\quad a_1\triangleleft a_3 = 1$\\
            $U29$ & $(\sharp(a_1,a_2),a_2\leq a_3)\quad a_3\triangleleft a_1 = 1$\\
            $U30$ & $a\triangleleft 0 = a$\\
            $U31$ & $0 \triangleleft a = 0$\\
            $U32$ & $a\triangleleft 1 = a$\\
            $U33$ & $1 \triangleleft a = 1$\\
            $U34$ & $(x+ y)\triangleleft z = (x\triangleleft z)+ (y\triangleleft z)$\\
            $U35$ & $(x\cdot y)\triangleleft z = (x\triangleleft z)\cdot (y\triangleleft z)$\\
            $U36$ & $(x\leftmerge y)\triangleleft z = (x\triangleleft z)\leftmerge (y\triangleleft z)$\\
            $U37$ & $(x\mid y)\triangleleft z = (x\triangleleft z)\mid (y\triangleleft z)$\\
            $U38$ & $x\triangleleft (y+ z) = (x\triangleleft y)\triangleleft z$\\
            $U39$ & $x\triangleleft (y\cdot z)=(x\triangleleft y)\triangleleft z$\\
            $U40$ & $x\triangleleft (y\leftmerge z) = (x\triangleleft y)\triangleleft z$\\
            $U41$ & $x\triangleleft (y\mid z) = (x\triangleleft y)\triangleleft z$\\
        \end{tabular}
        \caption{Axioms of parallelism}
        \label{AxiomsForParallelism}
    \end{table}
\end{center}

\begin{definition}[Basic terms of APTC]\label{BTAPTC}
The set of basic terms of APTC, $\mathcal{B}(APTC)$, is inductively defined as follows:
\begin{enumerate}
  \item $0,1\in\mathcal{B}(APTC)$;
  \item $\mathbb{E}\subset\mathcal{B}(APTC)$;
  \item if $a\in \mathbb{E}, t\in\mathcal{B}(APTC)$ then $a\cdot t\in\mathcal{B}(APTC)$;
  \item if $t,s\in\mathcal{B}(APTC)$ then $t+ s\in\mathcal{B}(APTC)$;
  \item if $t,s\in\mathcal{B}(APTC)$ then $t\leftmerge s\in\mathcal{B}(APTC)$.
\end{enumerate}
\end{definition}

Based on the definition of basic terms for APTC (see Definition \ref{BTAPTC}) and axioms of parallelism (see Table \ref{AxiomsForParallelism}), we can prove the elimination theorem of parallelism.

\begin{theorem}[Elimination theorem of parallelism]\label{ETParallelism}
Let $p$ be a closed APTC term. Then there is a basic APTC term $q$ such that $APTC\vdash p=q$.
\end{theorem}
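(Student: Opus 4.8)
The plan is to mimic the proof of the elimination theorem of BATC (Theorem~\ref{ETBATC}) in two phases. First I would orient the axioms of Table~\ref{AxiomsForParallelism} from left to right and add them to the BATC rewrite rules of Table~\ref{TRSForBATC} (keeping $A1$, $A2$ outside the rewrite system, as the $=_{AC}$-congruence). Phase one is to prove this term rewrite system strongly normalizing via Theorem~\ref{SN}; phase two is to prove that every normal form of a closed APTC term is a basic APTC term in the sense of Definition~\ref{BTAPTC}. Together these give: every closed term $p$ has a normal form $q$ which is a basic term, and $APTC\vdash p=q$ because every rewrite step instantiates an APTC axiom.

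For phase one I would take the ordering on the signature in which $\Theta>\triangleleft$, $\between>\parallel>\mid>\leftmerge$, all of these above $\cdot>+$ (and above the atoms, $\gamma$, $0$, $1$), with $\cdot$ given lexicographic status on its first argument and $\triangleleft$ lexicographic status on its second argument, and verify $s>_{lpo}t$ for each rewrite rule. For phase two I would argue exactly as for BATC: suppose $p$ is a normal form that is not basic, let $p'$ be its smallest non-basic subterm (so every proper subterm of $p'$ is basic), and induct on the outermost operator of $p'$. The cases $p'\equiv 0,1,a$ and $p'\equiv p_1+p_2$ are immediately basic, contradicting the choice of $p'$; for $p'\equiv p_1\cdot p_2$ one sub-inducts on the basic term $p_1$ and applies $RA7$, $RA8$, $RA5$ or $RA4$; $p'\equiv p_1\between p_2$ is reducible by $P1$, $p'\equiv p_1\parallel p_2$ by $P2$, $p'\equiv p_1\mid p_2$ by one of $C10$--$C19$, $p'\equiv\Theta(p_1)$ by one of $CE20$--$CE25$, $p'\equiv p_1\triangleleft p_2$ by one of $U26$--$U41$; and for $p'\equiv p_1\leftmerge p_2$ one of $P6$--$P9$ fires if $p_1$ is $0$, $1$ or a sum, one of $P3$--$P5$ fires if $p_1$ (or, for $P3$, $p_2$) begins with a prefix and the side condition $a_1\leq a_2$ holds, and in the remaining cases $p'$ is already a basic term. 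In every non-trivial case $p$ turns out to be reducible, contradicting normality, so the induction goes through.

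The step I expect to be the real obstacle is phase one, and the difficulty is concentrated in the two rules $P5$ and $C13$. With the precedence above every other rewrite rule decreases for $>_{lpo}$ by the same routine computations as for BATC (including $P1$ and $P2$, which unfold $\between$ and $\parallel$ into strictly smaller operators), but $P5$ rewrites $\leftmerge$ of two prefixed terms into a $\cdot$-headed term containing a fresh occurrence of $\between$ (and $C13$ does the same from $\mid$), and since $\between>\leftmerge$ in any precedence compatible with $P1$--$P2$, this step is not caught by the lexicographic path order; nor is it caught by a plain size count, because $P5$ preserves the number of symbols, nor by an additive ``multiset-of-argument-sizes'' measure, because $P1$ and $P2$ duplicate their arguments. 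What makes termination true is that the $\between$ re-created by $P5$ and $C13$ sits on \emph{proper} subterms of the redex, so the fix is to layer the lexicographic path order over a polynomial (or similar) interpretation that charges the parallel operators heavily enough that $P5$ and $C13$ strictly decrease it while the duplicating rules $P1$, $P2$ do not increase it; equivalently, one can abandon a single global termination measure and prove the elimination property directly by a double induction --- outer on the number of symbols of the closed term and inner on the nesting of parallel operators --- feeding the induction hypothesis to the strictly smaller subterms exposed by $P5$, $C13$, $RA4$ and $RA5$. A secondary point to watch in phase two is the side conditions $(a_1\leq a_2)$ on $P3$--$P5$ and the conflict conditions on the $\Theta$- and $\triangleleft$-rules, which restrict which instances fire: when they fail, the residual term (such as $a_1\leftmerge a_2$ with $a_1,a_2$ incomparable, possibly post-composed with $\cdot$) must be recognised as already lying among the basic terms of Definition~\ref{BTAPTC}.
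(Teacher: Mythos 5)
Your proposal follows the same two-phase skeleton as the paper's proof: orient the axioms of Table \ref{AxiomsForParallelism} into the rewrite system of Table \ref{TRSForAPTC}, obtain strong normalization via Theorem \ref{SN}, and then argue that every normal form of a closed APTC term is basic in the sense of Definition \ref{BTAPTC}, exactly as in Theorem \ref{ETBATC}. Where you genuinely diverge is the termination step, and your objection there is well founded: the paper just posits the precedence $\parallel>\cdot>+$ and asserts that $p>_{lpo}q$ ``can easily be proved'' for every rule, but $RP1$ and $RP2$ force $\between>\parallel>\leftmerge$ and $\between>\mid$ in any transitive precedence, while $RP3$--$RP5$ and $RC11$--$RC13$ force $\leftmerge>\cdot$ and $\mid>\cdot$, and then the fresh occurrence of $x\between y$ on the right-hand side of $RP5$/$RC13$ cannot be bounded by the left-hand side under a plain lexicographic path order (its head $\between$ is above the head $\leftmerge$ resp.\ $\mid$ of the redex, and no argument of the redex dominates it). Your repair --- a heavier interpretation that charges the parallel operators, or, more simply, dropping the global rewrite measure and proving elimination by a double induction (outer on term size, inner on nesting of parallel operators), feeding the induction hypothesis to the proper subterms $x\between y$ exposed by $RP5$/$RC13$ --- is the standard, sound way to make this step rigorous, so on this point your argument is more careful than the paper's; what the paper's (claimed) pure-lpo route buys is only brevity. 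Your phase two matches the paper's case analysis and is in fact more accurate in treating $p_1\mid p_2$, $\Theta(p_1)$ and $p_1\triangleleft p_2$ as reducible rather than, as the paper asserts for the $\mid$ and $\triangleleft$ cases, already basic. One residual caveat applies equally to your write-up and to the paper's: when an argument of $\mid$ (or the left argument of $\cdot$) is a $\leftmerge$-headed basic term, e.g.\ $(a\leftmerge b)\mid c$ or $(a\leftmerge b)\cdot c$, none of $RC10$--$RC19$ (resp.\ $RA4$--$RA9$) fires, so your claim that ``one of $C10$--$C19$ fires'' needs the same care you gave the side conditions of $P3$--$P5$: either such terms must be admitted into the class of (generalized) basic/normal forms, as the paper's later completeness proof implicitly does with sums of $\leftmerge$-products, or the case analysis and the axiom set must be extended to cover them.
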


\begin{proof}
(1) Firstly, suppose that the following ordering on the signature of APTC is defined: $\parallel > \cdot > +$ and the symbol $\cdot$ is given the lexicographical status for the first argument, then for each rewrite rule $p\rightarrow q$ in Table \ref{TRSForAPTC} relation $p>_{lpo} q$ can easily be proved. We obtain that the term rewrite system shown in Table \ref{TRSForAPTC} is strongly normalizing, for it has finitely many rewriting rules, and $>$ is a well-founded ordering on the signature of APTC, and if $s>_{lpo} t$, for each rewriting rule $s\rightarrow t$ is in Table \ref{TRSForAPTC} (see Theorem \ref{SN}).

\begin{center}
    \begin{table}
        \begin{tabular}{@{}ll@{}}
            \hline No. &Rewriting Rule\\
            $RP1$ & $x\between y \rightarrow x\parallel y + x\mid y$\\
            $RP2$ & $x\parallel y \rightarrow x\leftmerge y + y\leftmerge x$\\
            $RP3$ & $(a_1\leq a_2)\quad a_1\leftmerge (a_2\cdot y) \rightarrow (a_1\leftmerge a_2)\cdot y$\\
            $RP4$ & $(a_1\leq a_2)\quad (a_1\cdot x)\leftmerge a_2 \rightarrow (a_1\leftmerge a_2)\cdot x$\\
            $RP5$ & $(a_1\leq a_2)\quad (a_1\cdot x)\leftmerge (a_2\cdot y) \rightarrow (a_1\leftmerge a_2)\cdot (x\between y)$\\
            $RP6$ & $(x+ y)\leftmerge z \rightarrow (x\leftmerge z)+ (y\leftmerge z)$\\
            $RP7$ & $0\leftmerge x \rightarrow 0$\\
            $RP8$ & $1\leftmerge x \rightarrow x$\\
            $RP9$ & $x\leftmerge 1 \rightarrow x$\\
            $RC10$ & $a_1\mid a_2 \rightarrow \gamma(a_1,a_2)$\\
            $RC11$ & $a_1\mid (a_2\cdot y) \rightarrow \gamma(a_1,a_2)\cdot y$\\
            $RC12$ & $(a_1\cdot x)\mid a_2 \rightarrow \gamma(a_1,a_2)\cdot x$\\
            $RC13$ & $(a_1\cdot x)\mid (a_2\cdot y) \rightarrow \gamma(a_1,a_2)\cdot (x\between y)$\\
            $RC14$ & $(x+ y)\mid z \rightarrow (x\mid z) + (y\mid z)$\\
            $RC15$ & $x\mid (y+ z) \rightarrow (x\mid y)+ (x\mid z)$\\
            $RC16$ & $0\mid x \rightarrow 0$\\
            $RC17$ & $x\mid0 \rightarrow 0$\\
            $RC18$ & $1\mid x \rightarrow 0$\\
            $RC19$ & $x\mid 1 \rightarrow 0$\\
            $RCE20$ & $\Theta(a) \rightarrow a$\\
            $RCE21$ & $\Theta(0) \rightarrow 0$\\
            $RCE22$ & $\Theta(x+ y) \rightarrow \Theta(x) + \Theta(y)$\\
            $RCE23$ & $\Theta(x\cdot y)\rightarrow\Theta(x)\cdot\Theta(y)$\\
            $RCE24$ & $\Theta(x\leftmerge y) \rightarrow ((\Theta(x)\triangleleft y)\leftmerge y)+ ((\Theta(y)\triangleleft x)\leftmerge x)$\\
            $RCE25$ & $\Theta(x\mid y) \rightarrow ((\Theta(x)\triangleleft y)\mid y)+ ((\Theta(y)\triangleleft x)\mid x)$\\
            $RU26$ & $(\sharp(a_1,a_2))\quad a_1\triangleleft a_2 = 1$\\
            $RU27$ & $(\sharp(a_1,a_2),a_3\leq a_1)\quad a_3\triangleleft a_2 \rightarrow a_3$\\
            $RU28$ & $(\sharp(a_1,a_2),a_2\leq a_3)\quad a_1\triangleleft a_3 \rightarrow 1$\\
            $RU29$ & $(\sharp(a_1,a_2),a_2\leq a_3)\quad a_3\triangleleft a_1 \rightarrow 1$\\
            $RU30$ & $a\triangleleft 0 \rightarrow a$\\
            $RU31$ & $0 \triangleleft a \rightarrow 0$\\
            $RU32$ & $a\triangleleft 1 \rightarrow a$\\
            $RU33$ & $1 \triangleleft a \rightarrow 1$\\
            $RU34$ & $(x+ y)\triangleleft z \rightarrow (x\triangleleft z)+ (y\triangleleft z)$\\
            $RU35$ & $(x\cdot y)\triangleleft z \rightarrow (x\triangleleft z)\cdot (y\triangleleft z)$\\
            $RU36$ & $(x\leftmerge y)\triangleleft z \rightarrow (x\triangleleft z)\leftmerge (y\triangleleft z)$\\
            $RU37$ & $(x\mid y)\triangleleft z \rightarrow (x\triangleleft z)\mid (y\triangleleft z)$\\
            $RU38$ & $x\triangleleft (y+ z) \rightarrow (x\triangleleft y)\triangleleft z$\\
            $RU39$ & $x\triangleleft (y\cdot z)\rightarrow(x\triangleleft y)\triangleleft z$\\
            $RU40$ & $x\triangleleft (y\leftmerge z) \rightarrow (x\triangleleft y)\triangleleft z$\\
            $RU41$ & $x\triangleleft (y\mid z) \rightarrow (x\triangleleft y)\triangleleft z$\\
        \end{tabular}
        \caption{Term rewrite system of APTC}
        \label{TRSForAPTC}
    \end{table}
\end{center}

(2) Then we prove that the normal forms of closed APTC terms are basic APTC terms.

Suppose that $p$ is a normal form of some closed APTC term and suppose that $p$ is not a basic APTC term. Let $p'$ denote the smallest sub-term of $p$ which is not a basic APTC term. It implies that each sub-term of $p'$ is a basic APTC term. Then we prove that $p$ is not a term in normal form. It is sufficient to induct on the structure of $p'$:

\begin{itemize}
  \item Case $p'\equiv 0, 0\in \mathbb{B}(BATC)$. $p'$ is a basic term, which contradicts the assumption that $p'$ is not a basic term, so this case should not occur.
  \item Case $p'\equiv 1, 1\in \mathbb{B}(BATC)$. $p'$ is a basic term, which contradicts the assumption that $p'$ is not a basic term, so this case should not occur.
  \item Case $p'\equiv a, a\in \mathbb{E}$. $p'$ is a basic APTC term, which contradicts the assumption that $p'$ is not a basic APTC term, so this case should not occur.
  \item Case $p'\equiv p_1\cdot p_2$. By induction on the structure of the basic APTC term $p_1$:
      \begin{itemize}
        \item Subcase $p_1\equiv 0$. $p'$ would be a basic term, which contradicts the assumption that $p'$ is not a basic term;
        \item Subcase $p_1\equiv 1$. $p'$ would be a basic term, which contradicts the assumption that $p'$ is not a basic term;
        \item Subcase $p_1\in \mathbb{E}$. $p'$ would be a basic APTC term, which contradicts the assumption that $p'$ is not a basic APTC term;
        \item Subcase $p_1\equiv a\cdot p_1'$. $RA5$ rewriting rule in Table \ref{TRSForBATC} can be applied. So $p$ is not a normal form;
        \item Subcase $p_1\equiv p_1'+ p_1''$. $RA4$ rewriting rule in Table \ref{TRSForBATC} can be applied. So $p$ is not a normal form;
        \item Subcase $p_1\equiv p_1'\leftmerge p_1''$. $p'$ would be a basic APTC term, which contradicts the assumption that $p'$ is not a basic APTC term;
        \item Subcase $p_1\equiv p_1'\mid p_1''$. $p'$ would be a basic APTC term, which contradicts the assumption that $p'$ is not a basic APTC term;
        \item Subcase $p_1\equiv \Theta(p_1')$. $RCE20-RCE25$ rewrite rules in Table \ref{TRSForAPTC} can be applied. So $p$ is not a normal form.
      \end{itemize}
  \item Case $p'\equiv p_1+ p_2$. By induction on the structure of the basic APTC terms both $p_1$ and $p_2$, all subcases will lead to that $p'$ would be a basic APTC term, which contradicts the assumption that $p'$ is not a basic APTC term.
  \item Case $p'\equiv p_1\leftmerge p_2$. By induction on the structure of the basic APTC terms both $p_1$ and $p_2$, all subcases will lead to that $p'$ would be a basic APTC term, which contradicts the assumption that $p'$ is not a basic APTC term.
  \item Case $p'\equiv p_1\mid p_2$. By induction on the structure of the basic APTC terms both $p_1$ and $p_2$, all subcases will lead to that $p'$ would be a basic APTC term, which contradicts the assumption that $p'$ is not a basic APTC term.
  \item Case $p'\equiv \Theta(p_1)$. By induction on the structure of the basic APTC term $p_1$, $RCE20-RCE25$ rewrite rules in Table \ref{TRSForAPTC} can be applied. So $p$ is not a normal form.
  \item Case $p'\equiv p_1\triangleleft p_2$. By induction on the structure of the basic APTC terms both $p_1$ and $p_2$, all subcases will lead to that $p'$ would be a basic APTC term, which contradicts the assumption that $p'$ is not a basic APTC term.
\end{itemize}
\end{proof}

\begin{theorem}[Generalization of the algebra for parallelism with respect to BATC]
The algebra for parallelism is a generalization of BATC.
\end{theorem}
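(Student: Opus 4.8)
The plan is to read ``generalization'' in the standard $ACP$ sense: the signature of $BATC$ is contained in that of $APTC$, and the transition system specification (TSS) of $APTC$ is a \emph{conservative extension} of that of $BATC$, so that adding the parallelism operators forces no new identifications between $BATC$-terms and leaves their operational behaviour untouched. The syntactic half is immediate: every operator of $BATC$ (the constants $0,1$, the atomic actions, $\cdot$ and $+$) is an operator of $APTC$; every axiom $A1$--$A9$ of Table~\ref{AxiomsForBATC} is among the axioms of $APTC$; and the transition rules of Table~\ref{TRForBATC} are a subset of the transition rules of $APTC$ (Tables~\ref{TRForBATC} and~\ref{TRForParallel} together), so in particular $BATC\vdash s=t$ implies $APTC\vdash s=t$. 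The substantive half is the conservativity claim, for which I would invoke the Conservative Extension Theorem of Section~\ref{os} with $T_0$ the TSS of $BATC$ and $T_1$ the rules of Table~\ref{TRForParallel}, and discharge its three hypotheses in turn.

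First, positivity after reduction. The $BATC$ rules are purely positive, and the only negative premises anywhere in $APTC$ sit in the rules for the unless operator $\triangleleft$, each of the shape $y\nrightarrow^{a}$ with $y$ a proper subterm of the source $x\triangleleft y$. I would exhibit a stratification (in the sense of the Stratification definition) by assigning to each transition the number of operator and constant symbols occurring in its \emph{source}. Under any closed substitution $\sigma$, the source of every premise of every rule in Tables~\ref{TRForBATC} and~\ref{TRForParallel} is a proper subterm of the source of the conclusion, hence strictly smaller; so positive premises get weight $\le$, and the negative premises of the $\triangleleft$-rules get weight $<$, that of the conclusion. A stratification therefore exists, and by the Positivity after Reduction Theorem both $T_{BATC}$ and $T_{BATC}\oplus T_{APTC}$ are positive after reduction.

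Second, source-dependency of $T_{BATC}$: one checks each rule of Table~\ref{TRForBATC}. The rules for $1$ and for atomic actions contain no variables; in the rules for $x+y$ and $x\cdot y$ the source already contains both variables $x$ and $y$, and wherever a premise $x\xrightarrow{\{a_1,\ldots,a_n\}}x'$ occurs, $x$ is source-dependent, hence so is $x'$. Thus $T_{BATC}$ is source-dependent. Third -- the point where the parallelism operators enter -- for every rule $\rho$ of Table~\ref{TRForParallel}, the source is headed by one of $\parallel$, $\leftmerge$, $\mid$, $\Theta$, $\triangleleft$, none of which belongs to the signature of $BATC$; so the source of $\rho$ is fresh and the first disjunct of condition~(2) of the theorem is satisfied for every new rule. (The second disjunct, about a premise over $\mathbb{T}(\Sigma_{BATC})$ with a fresh label, is never needed; and it is irrelevant that the fresh operator $\between$ reappears on the right-hand side of some conclusions, since the condition constrains sources only.) With all hypotheses met, the Conservative Extension Theorem yields that $T_{BATC}\oplus T_{APTC}$ is a conservative extension of $T_{BATC}$: $BATC$-terms admit exactly the same transitions and termination predicates in $APTC$ as in $BATC$. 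Combined with the syntactic inclusion above, this is precisely the statement that the algebra for parallelism is a generalization of $BATC$.

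I do not anticipate a genuine obstacle. The only place needing a little care is checking that the size-of-source stratification really dominates the negative premises in the $\triangleleft$-rules, since this is what licenses ``positive after reduction'' and hence the applicability of the Conservative Extension Theorem; a secondary bookkeeping point is confirming that no rule of Table~\ref{TRForParallel} has a source lying in $\mathbb{T}(\Sigma_{BATC})$, which is clear because each such source is headed by an operator not occurring in $BATC$.
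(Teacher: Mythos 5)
Your proposal is correct and follows essentially the same route as the paper: the paper's proof likewise reduces the claim to the Conservative Extension Theorem by noting source-dependency of the $BATC$ rules and that every rule in Table~\ref{TRForParallel} has a source headed by a fresh operator ($\between$, $\leftmerge$, $\mid$, $\Theta$, or $\triangleleft$). You are merely more explicit than the paper in discharging the positivity-after-reduction hypothesis via a stratification handling the negative premises of the $\triangleleft$-rules, which the paper leaves implicit.
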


\begin{proof}
It follows from the following three facts.

\begin{enumerate}
  \item The transition rules of BATC Table \ref{TRForBATC} are all source-dependent;
  \item The sources of the transition rules for the algebra for parallelism in Table \ref{TRForParallel} contain an occurrence of $\between$, or $\leftmerge$, or $\mid$, or $\Theta$, or $\triangleleft$;
  \item The transition rules of APTC are all source-dependent.
\end{enumerate}

So, the algebra for parallelism is a generalization of BATC, that is, BATC is an embedding of the algebra for parallelism, as desired.
\end{proof}

\begin{theorem}[Soundness of parallelism modulo truly concurrent bisimulation equivalences]\label{SAPTC}
The axiomatization of APTC is sound modulo truly concurrent bisimulation equivalences, i.e.,

\begin{enumerate}
  \item let $x$ and $y$ be APTC terms. If $APTC\vdash x=y$, then $x\sim_{p} y$;
  \item let $x$ and $y$ be APTC terms. If $APTC\vdash x=y$, then $x\sim_{s} y$;
  \item let $x$ and $y$ be APTC terms. If $APTC\vdash x=y$, then $x\sim_{hp} y$;
  \item let $x$ and $y$ be APTC terms. If $APTC\vdash x=y$, then $x\sim_{hhp} y$.
\end{enumerate}
\end{theorem}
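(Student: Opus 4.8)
The plan is to reduce soundness to a finite, axiom-by-axiom verification and then to exhibit the witnessing relations. First I would invoke the congruence theorem of APTC just established: each of $\sim_{p}$, $\sim_{s}$, $\sim_{hp}$, $\sim_{hhp}$ is both an equivalence and a congruence for every operator of APTC. Hence the relation $\{(x,y)\mid x\sim_{\star}y\}$, for $\star\in\{p,s,hp,hhp\}$, is closed under reflexivity, symmetry, transitivity, substitution instances and contexts --- precisely the rules that derive $=$ from the axioms. Therefore $APTC\vdash x=y$ implies $x\sim_{\star}y$ as soon as, for each axiom schema $s=t$ in Table \ref{AxiomsForParallelism} and each closed substitution, the two sides are $\star$-related. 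So everything reduces to checking the $41$ axioms, one equivalence at a time.

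For a fixed axiom $s=t$ and a closed instance $p=\sigma(s)$, $q=\sigma(t)$, I would write down the candidate relation explicitly. For $\sim_{p}$ and $\sim_{s}$ it is $R=\mathrm{Id}\cup\{(p,q)\}$ completed so that residuals reached by matching moves are again related; one then inspects the transition rules of Tables \ref{TRForBATC} and \ref{TRForParallel} and checks that every pomset (resp.\ step) transition of one side is answered by an isomorphically labelled pomset (resp.\ step) transition of the other into a related state, and conversely. For $\sim_{hp}$ the candidate is the posetal relation of triples $(C_1,f,C_2)$ where $C_1,C_2$ are configurations reached by corresponding runs and $f$ is the order-isomorphism matching each event of the left run with the syntactically corresponding event of the right run; here only single-event transitions must be matched, updating $f$ to $f[a_1\mapsto a_2]$. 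For $\sim_{hhp}$ I would take that same hp-bisimulation and, in addition, verify downward closure, namely that restricting a triple to a causally closed sub-configuration leaves it in the relation; this holds because the syntactic correspondence behind $f$ respects the causality and conflict relations, by the PES-composition definitions of Section \ref{pesap} and Definition \ref{psb}.

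With this machinery the bulk of the axioms are immediate. The monoid, idempotence, neutral-element and distributivity laws for $+$, $\cdot$, $\parallel$, $\leftmerge$, $\mid$ ($P6$--$P9$, $C14$--$C19$) and the laws pushing $\Theta$ through the other operators ($CE21$--$CE23$) reduce to checks of the same shape as for BATC; the defining laws $P1$ ($x\between y=x\parallel y+x\mid y$), $P2$ ($x\parallel y=x\leftmerge y+y\leftmerge x$), $C10$--$C13$ and $CE20$ follow by a direct comparison of the corresponding blocks of rules in Table \ref{TRForParallel}; and the action laws $P3$--$P5$ rely on the side condition $a_1\leq a_2$ --- which also guards the $\leftmerge$-rules --- together with the definition of $\gamma$. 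In each case the content is simply that the two sides generate the same family of pomset/step transitions and, under the induced isomorphisms, the same causal order.

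The step I expect to be the main obstacle concerns the unless operator $\triangleleft$ and, through $CE24$ and $CE25$, its interaction with $\Theta$, $\leftmerge$ and $\mid$ --- that is, the axioms $U26$--$U41$ and $CE24$, $CE25$. Their transition rules carry negative premises (asserting that some component cannot perform a given action) together with conflict/causality side conditions, so already the plain bisimulation argument must check that such premises are stable along matching runs. For $\sim_{hhp}$ one must moreover verify downward closure: restricting a witnessing triple to a causally closed sub-configuration must neither enable nor disable any negative premise, which is where the hereditariness of $\sharp$ with respect to $\leq$ from Definition \ref{pes} is used. Finally, the right-hand sides of $CE24$, $CE25$ are sums of $\leftmerge$/$\mid$ summands obtained by renaming some events to the empty event $1$ --- exactly the structurization pattern of the V- and H-shapes of Section \ref{pesap} --- so one also has to check that these summands partition the configuration lattice of the left-hand side into pieces whose union is poset-isomorphic to the original. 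Once these facts are in hand the remaining clauses are routine; following the style of the BATC soundness proof above, I would write out $CE24$ in full as the representative case and leave the analogous verifications to the reader.
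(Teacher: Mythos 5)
Your proposal follows essentially the same route as the paper: invoke the fact that $\sim_{p}$, $\sim_{s}$, $\sim_{hp}$ and $\sim_{hhp}$ are equivalences and congruences so that soundness reduces to checking each axiom of Table \ref{AxiomsForParallelism} under closed substitutions, which is exactly how the paper argues (it then declares the per-axiom checks trivial and omits them). Your additional sketch of the witnessing relations and the attention to the $\triangleleft$, $CE24$, $CE25$ cases merely fills in detail the paper leaves out, so the approach is the same and correct.
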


\begin{proof}
Since truly concurrent bisimulations $\sim_p$, $\sim_s$, $\sim_{hp}$ and $\sim_{hhp}$ are all both equivalent and congruent relations, we only need to check if each axiom in Table \ref{AxiomsForParallelism} is sound modulo truly concurrent bisimulation equivalences $\sim_p$, $\sim_s$, $\sim_{hp}$ and $\sim_{hhp}$. The proof is trivial and we omit it.
\end{proof}

\begin{theorem}[Completeness of parallelism modulo truly concurrent bisimulation equivalences]\label{CAPTC}
The axiomatization of APTC is complete modulo truly concurrent bisimulation equivalences, i.e.,

\begin{enumerate}
  \item let $p$ and $q$ be closed APTC terms, if $p\sim_{p} q$ then $p=q$;
  \item let $p$ and $q$ be closed APTC terms, if $p\sim_{s} q$ then $p=q$;
  \item let $p$ and $q$ be closed APTC terms, if $p\sim_{hp} q$ then $p=q$;
  \item let $p$ and $q$ be closed APTC terms, if $p\sim_{hhp} q$ then $p=q$.
\end{enumerate}
\end{theorem}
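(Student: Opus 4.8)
The plan is to transplant the completeness proof for $BATC$ (Theorem~\ref{CBATC}) to the extended signature, combining the elimination theorem of parallelism (Theorem~\ref{ETParallelism}) with soundness (Theorem~\ref{SAPTC}) and a normal-form analysis. I describe case~(1) for pomset bisimulation $\sim_p$; cases~(2)--(4) for $\sim_s$, $\sim_{hp}$ and $\sim_{hhp}$ follow by the same argument, strengthening the transition-matching requirement to steps, to posetal relations carrying the order-isomorphism $f$, and to downward-closed posetal relations, respectively.

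First I would reduce to basic terms. By Theorem~\ref{ETParallelism} every closed $APTC$ term $p$ satisfies $APTC\vdash p=p'$ for a closed basic $APTC$ term $p'$, so by Theorem~\ref{SAPTC} it suffices to show that, for closed basic $APTC$ terms $s,t$, $s\sim_p t$ implies $APTC\vdash s=t$. Next I would fix a normal form: working modulo associativity and commutativity of $+$ (axioms $A1$, $A2$) and using $A3$, $A6$ to delete duplicate and $0$ summands and the left-merge laws $P3$--$P9$ to push merges inward and remove units, every closed basic $APTC$ term equals some $s_1+\cdots+s_k$ in which each summand $s_i$ is either $1$, an atomic event $a\in\mathbb{E}$, a term $a\cdot t$ with $t$ again in normal form, or a left-merge term $t_1\leftmerge t_2$. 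Write $=_{AC}$ for the induced equivalence on normal forms.

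The core step is: for normal forms $n,n'$, if $n\sim_p n'$ then $n=_{AC}n'$, proved by induction on the length of the longest transition sequence from $n$ (resp. $n'$) to successful termination. For each summand of $n$ I read off the transition it contributes in the LTS of Tables~\ref{TRForBATC} and~\ref{TRForParallel}: $a$ gives $n\xrightarrow{a}\surd$; $a\cdot t$ gives $n\xrightarrow{a}t$; $t_1\leftmerge t_2$ gives a pomset transition $n\xrightarrow{\{a_1,a_2\}}u$ with $a_1$, $a_2$ initial events of $t_1$, $t_2$ satisfying $a_1\leq a_2$ and $u=x'\between y'$ the corresponding residual. Because $n\sim_p n'$, each such transition is matched by a transition of $n'$ with a pomset-isomorphic label and a $\sim_p$-related residual; inspecting the rules shows which summand of $n'$ produced it, and after re-normalising the residuals with Theorem~\ref{ETParallelism} --- necessary since $x'\between y'$ is not basic --- the induction hypothesis, applicable because the longest run of $x'\between y'$ is strictly shorter than that of $n$, forces $n'$ to contain an $=_{AC}$-equal summand. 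The symmetric argument gives the reverse inclusion, so $n=_{AC}n'$. To conclude, for basic $s,t$ with $s\sim_p t$ take normal forms $n,n'$ with $APTC\vdash s=n$, $APTC\vdash t=n'$; soundness gives $s\sim_p n$, $t\sim_p n'$, hence $n\sim_p n'$, hence $n=_{AC}n'$ and $APTC\vdash s=n=_{AC}n'=t$.

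The hard part will be the left-merge summands. Since $\leftmerge$ is not eliminable in $APTC$ (unlike $ACP$), it really survives into normal forms, and one must argue that the two-event pomset label $\{a_1,a_2\}$ together with its prescribed order $a_1\leq a_2$ and the bisimilar residual pin the summand $t_1\leftmerge t_2$ down up to $=_{AC}$; this relies on the initial pomset of a normal form determining its top-level summand shape, and it entangles with the non-basic residual $x'\between y'$ that must be expanded and re-normalised before the induction hypothesis applies, which is why the induction is carried on the longest-run measure rather than on the number of symbols. For $\sim_{hp}$ and $\sim_{hhp}$ the same case is where one must check that the order-isomorphism $f$ extends coherently when a $\leftmerge$-step adds two events at once with a fixed mutual order, with downward closure used additionally for $\sim_{hhp}$. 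The remaining cases and the routine $=_{AC}$ bookkeeping I would omit, as the paper does for $BATC$.
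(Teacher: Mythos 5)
Your overall route coincides with the paper's: eliminate to closed basic terms via Theorem \ref{ETParallelism}, pass to sum-normal forms modulo AC of $+$, prove by induction that bisimilar normal forms are $=_{AC}$-equal by matching summands against their transitions, and close with the soundness theorem \ref{SAPTC}. Replacing induction on term size by induction on the longest run, to cope with the non-basic residual $x'\between y'$, is a reasonable local adaptation.

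The genuine gap lies in your normal form, and it hits exactly the case you flag as ``the hard part''. You keep summands of the shape $t_1\leftmerge t_2$ with $t_1,t_2$ arbitrary normal forms, because $P3$--$P9$ only push $\leftmerge$ inward when both arguments are atomic-prefixed and only distribute over $+$ in the \emph{left} argument. With that normal form your key lemma (``$n\sim_p n'$ implies $n=_{AC}n'$'') is false: $a\leftmerge(b+c)$ and $a\leftmerge b+a\leftmerge c$ are both normal forms in your sense, and by the rules of Table \ref{TRForParallel} they have literally the same transitions (a step $\{a,b\}$ and a step $\{a,c\}$ to $\surd$, under the same side conditions), so they are bisimilar in all four semantics, yet they are not equal modulo AC of $+$ --- one is a single summand, the other is a sum of two. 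Hence your claim that the initial pomset together with the bisimilar residual pins the summand $t_1\leftmerge t_2$ down up to $=_{AC}$ cannot hold: a single summand of $n$ may be matched only jointly by several summands of $n'$. The paper's proof sidesteps this by asserting a strictly finer normal form in which every summand is a product whose factors are atomic events or left-merges $u_1\leftmerge\cdots\leftmerge u_l$ of \emph{atomic} events, so each summand contributes exactly one initial multi-event step and its residual is again a normal form. To make your argument go through you would first have to prove that every closed basic APTC term reaches that flatter shape (which requires, for instance, distributivity of $\leftmerge$ over $+$ in the right argument as a derived law, something neither your rewriting steps nor Table \ref{AxiomsForParallelism} directly provide), and only then run the summand-matching induction.
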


\begin{proof}
(1) Let $p$ and $q$ be closed APTC terms, if $p\sim_{p} q$ then $p=q$.

Firstly, by the elimination theorem of APTC (see Theorem \ref{ETParallelism}), we know that for each closed APTC term $p$, there exists a closed basic APTC term $p'$, such that $APTC\vdash p=p'$, so, we only need to consider closed basic APTC terms.

The basic terms (see Definition \ref{BTAPTC}) modulo associativity and commutativity (AC) of conflict $+$ (defined by axioms $A1$ and $A2$ in Table \ref{AxiomsForBATC}), and these equivalences is denoted by $=_{AC}$. Then, each equivalence class $s$ modulo AC of $+$ has the following normal form

$$s_1+\cdots+ s_k$$

with each $s_i$ either an atomic event or of the form

$$t_1\cdot\cdots\cdot t_m$$

with each $t_j$ either an atomic event or of the form

$$u_1\leftmerge\cdots\leftmerge u_l$$

with each $u_l$ an atomic event, and each $s_i$ is called the summand of $s$.

Now, we prove that for normal forms $n$ and $n'$, if $n\sim_{p} n'$ then $n=_{AC}n'$. It is sufficient to induct on the sizes of $n$ and $n'$.

\begin{itemize}
  \item Consider a summand $a$ of $n$. Then $n\xrightarrow{a}\surd$, so $n\sim_p n'$ implies $n'\xrightarrow{a}\surd$, meaning that $n'$ also contains the summand $a$.
  \item Consider a summand $t_1\cdot t_2$ of $n$,
  \begin{itemize}
    \item if $t_1\equiv a'$, then $n\xrightarrow{a'}t_2$, so $n\sim_p n'$ implies $n'\xrightarrow{a'}t_2'$ with $t_2\sim_p t_2'$, meaning that $n'$ contains a summand $a'\cdot t_2'$. Since $t_2$ and $t_2'$ are normal forms and have sizes smaller than $n$ and $n'$, by the induction hypotheses if $t_2\sim_p t_2'$ then $t_2=_{AC} t_2'$;
    \item if $t_1\equiv a_1\leftmerge\cdots\leftmerge a_l$, then $n\xrightarrow{\{a_1,\cdots,a_l\}}t_2$, so $n\sim_p n'$ implies $n'\xrightarrow{\{a_1,\cdots,a_l\}}t_2'$ with $t_2\sim_p t_2'$, meaning that $n'$ contains a summand $(a_1\leftmerge\cdots\leftmerge a_l)\cdot t_2'$. Since $t_2$ and $t_2'$ are normal forms and have sizes smaller than $n$ and $n'$, by the induction hypotheses if $t_2\sim_p t_2'$ then $t_2=_{AC} t_2'$.
  \end{itemize}
\end{itemize}

So, we get $n=_{AC} n'$.

Finally, let $s$ and $t$ be basic APTC terms, and $s\sim_p t$, there are normal forms $n$ and $n'$, such that $s=n$ and $t=n'$. The soundness theorem of parallelism modulo pomset bisimulation equivalence (see Theorem \ref{SAPTC}) yields $s\sim_p n$ and $t\sim_p n'$, so $n\sim_p s\sim_p t\sim_p n'$. Since if $n\sim_p n'$ then $n=_{AC}n'$, $s=n=_{AC}n'=t$, as desired.

(2) Let $p$ and $q$ be closed APTC terms, if $p\sim_{s} q$ then $p=q$.

It can be proven similarly to (1).

(3) Let $p$ and $q$ be closed APTC terms, if $p\sim_{hp} q$ then $p=q$.

It can be proven similarly to (1).

(4) Let $p$ and $q$ be closed APTC terms, if $p\sim_{hhp} q$ then $p=q$.

It can be proven similarly to (1).
\end{proof}

The mismatch of two communicating events in different parallel branches can cause deadlock 0, so the deadlocks in the concurrent processes should be eliminated. Like $ACP$ \cite{ACP}, we also introduce the unary encapsulation operator $\partial_H$ for set $H$ of atomic events, which renames all atomic events in $H$ into $0$. The whole algebra including parallelism for true concurrency in the above subsections, deadlock $0$ and encapsulation operator $\partial_H$, is called Algebra of Parallelism for True Concurrency, abbreviated APTC.

The transition rules of encapsulation operator $\partial_H$ are shown in Table \ref{TRForEncapsulation}.

\begin{center}
    \begin{table}
        $$\frac{x\xrightarrow{a}\surd}{\partial_H(x)\xrightarrow{a}\surd}\quad (a\notin H)\quad\quad\frac{x\xrightarrow{a}x'}{\partial_H(x)\xrightarrow{a}\partial_H(x')}\quad(a\notin H)$$
        \caption{Transition rules of encapsulation operator $\partial_H$}
        \label{TRForEncapsulation}
    \end{table}
\end{center}

Based on the transition rules for encapsulation operator $\partial_H$ in Table \ref{TRForEncapsulation}, we design the axioms as Table \ref{AxiomsForEncapsulation} shows.

\begin{center}
    \begin{table}
        \begin{tabular}{@{}ll@{}}
            \hline No. &Axiom\\
            $D1$ & $a\notin H\quad\partial_H(a) = a$\\
            $D2$ & $a\in H\quad \partial_H(a) = 0$\\
            $D3$ & $\partial_H(0) = 0$\\
            $D4$ & $\partial_H(1) = 1$\\
            $D5$ & $\partial_H(x+ y) = \partial_H(x)+\partial_H(y)$\\
            $D6$ & $\partial_H(x\cdot y) = \partial_H(x)\cdot\partial_H(y)$\\
            $D7$ & $\partial_H(x\leftmerge y) = \partial_H(x)\leftmerge\partial_H(y)$\\
        \end{tabular}
        \caption{Axioms of encapsulation operator}
        \label{AxiomsForEncapsulation}
    \end{table}
\end{center}

\begin{theorem}[Conservativity of APTC with respect to the algebra for parallelism]
APTC is a conservative extension of the algebra for parallelism.
\end{theorem}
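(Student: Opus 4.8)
The plan is to obtain this as an instance of the general Conservative extension theorem of Section~\ref{os}. I would take $T_0$ to be the TSS of the algebra for parallelism, namely the transition rules of Tables~\ref{TRForBATC} and~\ref{TRForParallel}, and $T_1$ to be the TSS whose rules are the two transition rules for $\partial_H$ in Table~\ref{TRForEncapsulation}. Then $T_0\oplus T_1$ is exactly the TSS of APTC, so it suffices to check the three hypotheses of that theorem for this pair: that $T_0$ and $T_0\oplus T_1$ are positive after reduction, that $T_0$ is source-dependent, and that every rule of $T_1$ meets the freshness condition.

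For positivity after reduction, which the $\triangleleft$-rules of Table~\ref{TRForParallel} make non-trivial because of their negative premises $y\nrightarrow^{a}$, I would exhibit a stratification $\phi$ in the sense of the Stratification definition: weight a transition by the syntactic size of its source (suitably lifted), arranged so that any transition with source of the form $x\triangleleft y$ strictly outweighs every transition of $y$ --- more generally, so that every rule with a negative premise about a subterm has a conclusion of strictly greater weight --- while positive premises keep non-strictly smaller weight and the estimate is stable under context closure. The theorem ``Positivity after reduction'' then gives that $T_0$ is positive after reduction (this is in fact already relied upon in the congruence theorem for APTC); extending $\phi$ monotonically to transitions rooted at $\partial_H(\cdot)$, which contribute only positive rules, handles $T_0\oplus T_1$ as well.

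The remaining two conditions are immediate. Source-dependency of $T_0$ has already been recorded in the proof of the generalization theorem above, where all rules of BATC and of the algebra for parallelism are noted to be source-dependent. And every rule of $T_1$ has source $\partial_H(x)$ with $\partial_H$ a function symbol of $\Sigma_1\setminus\Sigma_0$, so the source of each rule in $T_1$ is fresh, which is precisely the first alternative of condition~(2). By the Conservative extension theorem, $T_0\oplus T_1$ --- the TSS of APTC --- is then a conservative extension of $T_0$, the TSS of the algebra for parallelism, as claimed.

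The step I expect to cost the most care is the stratification: it has to simultaneously dominate the negative premises appearing in all of the $\triangleleft$-rules and remain well defined under the context rules; once that is fixed, checking source-dependency and freshness is a two-line matter.
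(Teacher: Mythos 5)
Your proposal matches the paper's own proof: the paper likewise obtains the result by invoking the conservative extension theorem, citing exactly the two facts that the transition rules of the algebra for parallelism are source-dependent and that the sources of the encapsulation rules in Table~\ref{TRForEncapsulation} contain an occurrence of $\partial_H$ (hence are fresh). Your additional care about positivity after reduction via a stratification only makes explicit a hypothesis the paper leaves implicit, so the approach is essentially identical.
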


\begin{proof}
It follows from the following two facts:

\begin{enumerate}
  \item The transition rules of the algebra for parallelism in Table \ref{TRForParallel} are all source-dependent;
  \item The sources of the transition rules for the encapsulation operator in Table \ref{TRForEncapsulation} contain an occurrence of $\partial_H$.
\end{enumerate}

So, APTC is a conservative extension of the algebra for parallelism, as desired.
\end{proof}

\begin{theorem}[Congruence theorem of encapsulation operator $\partial_H$]
Truly concurrent bisimulation equivalences $\sim_{p}$, $\sim_s$, $\sim_{hp}$ and $\sim_{hhp}$ are all congruences with respect to encapsulation operator $\partial_H$.
\end{theorem}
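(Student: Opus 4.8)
The plan is to obtain this statement as a direct instance of the meta-theorem on truly concurrent bisimulations as congruences (Theorem \ref{tcbac}), exactly as the congruence theorems for BATC and for the algebra of parallelism were obtained. So the task reduces to checking that the TSS of APTC extended with the rules for $\partial_H$ in Table \ref{TRForEncapsulation} is (i) in panth format and (ii) positive after reduction.

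For (i), I would inspect the two rules of Table \ref{TRForEncapsulation}: in each, the single positive premise has right-hand side $\surd$ or the single variable $x'$, so the right-hand sides of premises are single variables; the source $\partial_H(x)$ contains exactly one function symbol; and $x'$ does not recur in the source of the rule. The side condition $a\notin H$ restricts the label, not the premise set, so it is irrelevant to the format. Combined with the fact, already used in the preceding congruence theorems, that the BATC rules and the parallelism rules are in panth format, the whole extended TSS is in panth format. For (ii), the only negative premises in the system are those of the unless-operator rules for $\triangleleft$ in Table \ref{TRForParallel}; a stratification witnessing that APTC is positive after reduction is already in hand (this was invoked in the congruence theorem for APTC), and the $\partial_H$-rules add no negative premises and introduce no new cyclic label dependencies, so the stratification extends (for instance by assigning $\partial_H(p)\xrightarrow{a}t$ the weight of its premise transition, or by adding term size), keeping the extended TSS positive after reduction. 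By the theorem that a stratifiable TSS is positive after reduction, (ii) holds.

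With (i) and (ii) established, Theorem \ref{tcbac} immediately gives that $\sim_p$, $\sim_s$, $\sim_{hp}$ and $\sim_{hhp}$ are congruences with respect to APTC, and in particular with respect to $\partial_H$. I do not anticipate a genuine obstacle; the only point deserving care is confirming that extending the stratification past the negative premises of $\triangleleft$ causes no conflict with $\partial_H$, which it does not, since $\partial_H$ never appears in the premises that drive the $\triangleleft$-rules.
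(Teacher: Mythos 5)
Your proposal is correct and takes essentially the same route as the paper: the paper's proof simply observes that the TSS with the $\partial_H$-rules of Table \ref{TRForEncapsulation} is positive after reduction and in panth format and then invokes Theorem \ref{tcbac}. Your explicit verification of the panth-format conditions and of the stratification in the presence of the negative premises of $\triangleleft$ only spells out details the paper leaves implicit.
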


\begin{proof}
Since the TSS of encapsulation operator in Table \ref{TRForEncapsulation} is positive after reduction and in panth format, according to Theorem \ref{tcbac}, truly concurrent bisimulation equivalences, including pomset bisimulation equivalence $\sim_{p}$, step bisimulation equivalence $\sim_s$, hp-bisimulation equivalence $\sim_{hp}$ and hhp-bisimulation equivalence $\sim_{hhp}$, are all congruences with respect to APTC.
\end{proof}

\begin{theorem}[Elimination theorem of APTC]\label{ETEncapsulation}
Let $p$ be a closed APTC term including the encapsulation operator $\partial_H$. Then there is a basic APTC term $q$ such that $APTC\vdash p=q$.
\end{theorem}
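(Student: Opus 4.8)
The plan is to mimic the proofs of Theorem \ref{ETBATC} and Theorem \ref{ETParallelism}: exhibit a terminating term rewriting system whose rules are all directed instances of the axioms of APTC, and show that its normal forms are exactly the basic APTC terms. Concretely, I would take the rewrite rules of Tables \ref{TRSForBATC} and \ref{TRSForAPTC} and adjoin rules $RD1$--$RD7$ obtained by orienting the axioms $D1$--$D7$ from left to right, i.e. $\partial_H(a)\rightarrow a$ for $a\notin H$, $\partial_H(a)\rightarrow 0$ for $a\in H$, $\partial_H(0)\rightarrow 0$, $\partial_H(1)\rightarrow 1$, $\partial_H(x+y)\rightarrow\partial_H(x)+\partial_H(y)$, $\partial_H(x\cdot y)\rightarrow\partial_H(x)\cdot\partial_H(y)$ and $\partial_H(x\leftmerge y)\rightarrow\partial_H(x)\leftmerge\partial_H(y)$.

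First I would establish strong normalization. I extend the well-founded ordering on the APTC signature used in Theorem \ref{ETParallelism} (namely $\parallel>\cdot>+$, with $\cdot$ given lexicographic status on its first argument) by placing $\partial_H$ strictly above $\parallel$, $\cdot$, $+$, $\leftmerge$, $\mid$, $\Theta$ and $\triangleleft$. With this ordering one checks that $s>_{lpo}t$ for every rule $s\rightarrow t$: the rules $RD5$--$RD7$ push the maximal symbol $\partial_H$ strictly downward (the immediate subterm of the left-hand side headed by $+$, $\cdot$ or $\leftmerge$ becomes the head), and $RD1$--$RD4$ simply delete $\partial_H$. Since the combined system still has finitely many rules, Theorem \ref{SN} gives that it is strongly normalizing.

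Next I would show that every normal form of a closed APTC term (now over the signature including $\partial_H$) is a basic APTC term. As in Theorem \ref{ETParallelism}, suppose $p$ is a normal form that is not basic and let $p'$ be a smallest non-basic subterm, so every proper subterm of $p'$ is basic. The cases where $p'$ is $0$, $1$, an atomic event, $p_1\cdot p_2$, $p_1+p_2$, $p_1\leftmerge p_2$, $p_1\mid p_2$, $\Theta(p_1)$ or $p_1\triangleleft p_2$ are handled exactly as in the proof of Theorem \ref{ETParallelism} and all yield a contradiction. The one new case is $p'\equiv\partial_H(p_1)$ with $p_1$ a basic APTC term; since basic APTC terms are generated from $0$, $1$ and atomic events by $\cdot$, $+$ and $\leftmerge$ (Definition \ref{BTAPTC}), $p_1$ has one of the forms $0$, $1$, $a$, $a\cdot t$, $t+s$ or $t\leftmerge s$, and correspondingly one of $RD3$, $RD4$, $RD1/RD2$, $RD6$, $RD5$ or $RD7$ applies to $p'$, so $p$ is not a normal form, a contradiction. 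Hence normal forms are basic APTC terms, and since each rewrite rule is a directed instance of an axiom, rewriting preserves provable equality; combining this with strong normalization, any closed APTC term $p$ rewrites to a basic APTC term $q$ with $APTC\vdash p=q$.

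The one step I would be most careful about is the $>_{lpo}$ decrease for the distribution rules $RD5$--$RD7$ together with the previously introduced rules, i.e. checking that promoting $\partial_H$ to the top of the precedence does not disturb the $lpo$ comparisons already used for $RA4$, $RA5$, the $RP$-rules, the $RC$-rules, the $RCE$-rules and the $RU$-rules. This is expected to be routine, because none of those left-hand sides contain $\partial_H$, so their comparisons are unaffected by where $\partial_H$ sits in the precedence; the genuinely new verifications are only for $RD1$--$RD7$ themselves.
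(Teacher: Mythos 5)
Your proposal takes essentially the same route as the paper's proof: orient the encapsulation axioms $D1$--$D7$ into rewrite rules $RD1$--$RD7$, obtain strong normalization via the lexicographical path ordering and Theorem \ref{SN}, and then argue by induction on a smallest non-basic subterm that the only new case $p'\equiv\partial_H(p_1)$ with $p_1$ basic is always reducible by one of $RD1$--$RD7$, so normal forms are basic APTC terms and provable equality is preserved along the rewrite sequence. Your explicit placement of $\partial_H$ at the top of the precedence, together with the remark that this cannot disturb the lpo comparisons for the earlier rules (whose left-hand sides do not contain $\partial_H$), is a mild sharpening of the paper's terser treatment rather than a different argument.
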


\begin{proof}
(1) Firstly, suppose that the following ordering on the signature of APTC is defined: $\leftmerge > \cdot > +$ and the symbol $\cdot$ is given the lexicographical status for the first argument, then for each rewrite rule $p\rightarrow q$ in Table \ref{TRSForEncapsulation} relation $p>_{lpo} q$ can easily be proved. We obtain that the term rewrite system shown in Table \ref{TRSForEncapsulation} is strongly normalizing, for it has finitely many rewriting rules, and $>$ is a well-founded ordering on the signature of APTC, and if $s>_{lpo} t$, for each rewriting rule $s\rightarrow t$ is in Table \ref{TRSForEncapsulation} (see Theorem \ref{SN}).

\begin{center}
    \begin{table}
        \begin{tabular}{@{}ll@{}}
            \hline No. &Rewriting Rule\\
            $RD1$ & $a\notin H\quad\partial_H(a) \rightarrow a$\\
            $RD2$ & $a\in H\quad \partial_H(a) \rightarrow 0$\\
            $RD3$ & $\partial_H(0) \rightarrow 0$\\
            $RD4$ & $\partial_H(1) \rightarrow 1$\\
            $RD5$ & $\partial_H(x+ y) \rightarrow \partial_H(x)+\partial_H(y)$\\
            $RD6$ & $\partial_H(x\cdot y) \rightarrow \partial_H(x)\cdot\partial_H(y)$\\
            $RD7$ & $\partial_H(x\leftmerge y) \rightarrow \partial_H(x)\leftmerge\partial_H(y)$\\
        \end{tabular}
        \caption{Term rewrite system of encapsulation operator $\partial_H$}
        \label{TRSForEncapsulation}
    \end{table}
\end{center}

(2) Then we prove that the normal forms of closed APTC terms including encapsulation operator $\partial_H$ are basic APTC terms.

Suppose that $p$ is a normal form of some closed APTC term and suppose that $p$ is not a basic APTC term. Let $p'$ denote the smallest sub-term of $p$ which is not a basic APTC term. It implies that each sub-term of $p'$ is a basic APTC term. Then we prove that $p$ is not a term in normal form. It is sufficient to induct on the structure of $p'$, we only prove the new case $p'\equiv \partial_H(p_1)$:

\begin{itemize}
  \item Case $p_1\equiv a$. The transition rules $RD1$ or $RD2$ can be applied, so $p$ is not a normal form;
  \item Case $p_1\equiv 0$. The transition rules $RD3$ can be applied, so $p$ is not a normal form;
  \item Case $p_1\equiv 1$. The transition rules $RD4$ can be applied, so $p$ is not a normal form;
  \item Case $p_1\equiv p_1'+ p_1''$. The transition rules $RD5$ can be applied, so $p$ is not a normal form;
  \item Case $p_1\equiv p_1'\cdot p_1''$. The transition rules $RD6$ can be applied, so $p$ is not a normal form;
  \item Case $p_1\equiv p_1'\leftmerge p_1''$. The transition rules $RD7$ can be applied, so $p$ is not a normal form.
\end{itemize}
\end{proof}

\begin{theorem}[Soundness of APTC modulo truly concurrent bisimulation equivalences]\label{SAPTC2}
The axiomatization of APTC is sound modulo truly concurrent bisimulation equivalences, i.e.,

\begin{enumerate}
  \item let $x$ and $y$ be APTC terms including encapsulation operator $\partial_H$. If $APTC\vdash x=y$, then $x\sim_{p} y$;
  \item let $x$ and $y$ be APTC terms including encapsulation operator $\partial_H$. If $APTC\vdash x=y$, then $x\sim_{s} y$;
  \item let $x$ and $y$ be APTC terms including encapsulation operator $\partial_H$. If $APTC\vdash x=y$, then $x\sim_{hp} y$;
  \item let $x$ and $y$ be APTC terms including encapsulation operator $\partial_H$. If $APTC\vdash x=y$, then $x\sim_{hhp} y$.
\end{enumerate}
\end{theorem}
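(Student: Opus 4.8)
The plan is to follow the pattern already used for the soundness of $BATC$ (Theorem~\ref{SBATC}) and of the algebra for parallelism (Theorem~\ref{SAPTC}). Since $\sim_p$, $\sim_s$, $\sim_{hp}$ and $\sim_{hhp}$ are equivalence relations and, by the congruence theorem for the encapsulation operator $\partial_H$ established just above, congruences with respect to the whole signature of $APTC$ (encapsulation operator included), it is enough to check that every axiom $s=t$ of $APTC$ is sound modulo each of the four equivalences, i.e.\ that $s$ and $t$ have truly concurrent bisimilar transition systems; transitivity and congruence of each equivalence then lift this from axioms to arbitrary derivable equalities $APTC\vdash x=y$. Soundness of the axioms in Table~\ref{AxiomsForBATC} and Table~\ref{AxiomsForParallelism} is already covered by Theorems~\ref{SBATC} and~\ref{SAPTC}, so the only genuinely new obligation is the seven axioms $D1$--$D7$ of Table~\ref{AxiomsForEncapsulation}.

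First I would dispose of $D1$--$D4$, which relate closed terms built only from a single atom, $0$ or $1$: here the transition systems become literally identical after applying the rules $RD1$--$RD4$ of Table~\ref{TRForEncapsulation}, so the identity relation witnesses all four equivalences. Then, for each of $D5$ ($\partial_H(x+y)=\partial_H(x)+\partial_H(y)$), $D6$ ($\partial_H(x\cdot y)=\partial_H(x)\cdot\partial_H(y)$) and $D7$ ($\partial_H(x\leftmerge y)=\partial_H(x)\leftmerge\partial_H(y)$), I would exhibit an explicit witnessing relation --- for the step and pomset cases a plain relation on configurations, for the hp- and hhp- cases a (weakly) posetal relation carrying the obvious order-isomorphism --- obtained by closing $\{(\text{lhs},\text{rhs})\}$ under the reachable pairs and under the identity, and then check, rule by rule against Table~\ref{TRForEncapsulation} together with the rules for $+$, $\cdot$ and $\leftmerge$ in Tables~\ref{TRForBATC} and~\ref{TRForParallel}, that every move on one side is matched on the other with the residuals again related. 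The simple observation that makes every case go through is that $\partial_H$ is label-preserving on the events it does not block: it turns each $H$-labelled event into the dead $0$ (removing the corresponding transition entirely) and leaves all other events, together with their causality, conflict and concurrency relations, untouched; hence the side conditions $a\notin H$ of the $\partial_H$ rules and the ordering premises $(a_1\le a_2)$ of the left-merge rules behave identically on both sides of each equation.

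The main point requiring care --- more a bookkeeping subtlety than a real obstacle --- is the $\sim_{hhp}$ case of $D5$--$D7$, where I would have to check that the witnessing posetal relation is downward closed. This holds because the relation consists of pairs of configurations arising along matching computations of the two sides, each equipped with the canonical identity-on-surviving-events isomorphism, and restricting such a pair to a sub-configuration again yields a pair of the same shape with the restricted isomorphism, so the relation is closed under the pointwise sub-configuration order. Assembling these verifications for $D1$--$D7$ with the already-established soundness of the remaining axioms then gives all four items of the theorem.
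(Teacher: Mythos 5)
Your proposal follows essentially the same route as the paper: reduce soundness of derivable equalities to soundness of the individual axioms via the equivalence and congruence properties of $\sim_p$, $\sim_s$, $\sim_{hp}$ and $\sim_{hhp}$, with the only new obligations being the encapsulation axioms $D1$--$D7$ of Table \ref{AxiomsForEncapsulation}. The paper simply declares this axiom-by-axiom check trivial and omits it, whereas you supply the routine verification (identity of transition systems for $D1$--$D4$, explicit witnessing relations for $D5$--$D7$, and the downward-closure check for $\sim_{hhp}$), which is a correct filling-in of the omitted details rather than a different argument.
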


\begin{proof}
Since truly concurrent bisimulations $\sim_p$, $\sim_s$, $\sim_{hp}$ and $\sim_{hhp}$ are all both equivalent and congruent relations, we only need to check if each axiom in Table \ref{AxiomsForEncapsulation} is sound modulo truly concurrent bisimulation equivalences $\sim_p$, $\sim_s$, $\sim_{hp}$ and $\sim_{hhp}$. The proof is trivial and we omit it.
\end{proof}

\begin{theorem}[Completeness of APTC modulo truly concurrent bisimulation equivalences]\label{CAPTC2}
The axiomatization of APTC is complete modulo truly concurrent bisimulation equivalences, i.e.,

\begin{enumerate}
  \item let $p$ and $q$ be closed APTC terms including encapsulation operator $\partial_H$, if $p\sim_{p} q$ then $p=q$;
  \item let $p$ and $q$ be closed APTC terms including encapsulation operator $\partial_H$, if $p\sim_{s} q$ then $p=q$;
  \item let $p$ and $q$ be closed APTC terms including encapsulation operator $\partial_H$, if $p\sim_{hp} q$ then $p=q$;
  \item let $p$ and $q$ be closed APTC terms including encapsulation operator $\partial_H$, if $p\sim_{hhp} q$ then $p=q$.
\end{enumerate}
\end{theorem}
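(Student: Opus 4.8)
The plan is to mirror exactly the proof of Theorem \ref{CAPTC}, now using the stronger elimination result that covers the encapsulation operator. First I would invoke the elimination theorem of APTC including $\partial_H$ (Theorem \ref{ETEncapsulation}): for every closed APTC term $p$ with encapsulation there is a closed \emph{basic} APTC term $p'$ with $APTC\vdash p=p'$. By the soundness theorem (Theorem \ref{SAPTC2}), $p\sim p'$ for each of the four equivalences, so it suffices to prove completeness for closed basic APTC terms; and basic APTC terms (Definition \ref{BTAPTC}) contain no occurrence of $\partial_H$, $\between$, $\mid$, $\Theta$ or $\triangleleft$. Thus after this reduction the statement collapses to the one already proved as Theorem \ref{CAPTC}.

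Concretely, for part (1) I would work modulo associativity and commutativity of $+$ (axioms $A1,A2$), writing each basic term in the normal form $s_1+\cdots+s_k$ where each summand $s_i$ is an atomic event, or of the form $t_1\cdot t_2$ with $t_1$ either an atomic event or a left-merge $u_1\leftmerge\cdots\leftmerge u_l$ of atomic events. Then I would prove by induction on $|n|+|n'|$ that $n\sim_p n'$ implies $n=_{AC}n'$: a summand $a$ of $n$ forces $n\xrightarrow{a}\surd$, hence $n'\xrightarrow{a}\surd$, so $n'$ has the summand $a$; a summand $a'\cdot t_2$ forces $n\xrightarrow{a'}t_2$, hence $n'\xrightarrow{a'}t_2'$ with $t_2\sim_p t_2'$ and $|t_2|,|t_2'|$ smaller, so by induction $t_2=_{AC}t_2'$ and $n'$ has the summand $a'\cdot t_2'$; a summand $(u_1\leftmerge\cdots\leftmerge u_l)\cdot t_2$ forces $n\xrightarrow{\{u_1,\ldots,u_l\}}t_2$ and the same argument applies. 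This yields $n=_{AC}n'$, and then soundness gives, for basic terms $s,t$ with $s\sim_p t$, normal forms $n,n'$ with $s=n$, $t=n'$, $n\sim_p n'$, hence $s=n=_{AC}n'=t$. Parts (2)--(4) follow the same lines: for $\sim_s$ one restricts the transition labels to steps; for $\sim_{hp}$ and $\sim_{hhp}$ one additionally carries along the order-isomorphism $f$ between the generated configurations, which the panth-format transition rules of Table \ref{TRForParallel} and Table \ref{TRForEncapsulation} make routine.

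The main obstacle is not conceptual but organizational: one must be sure that the normal-form shape for basic APTC terms is genuinely stable under the rewrite system (so that the induction hypothesis applies to proper sub-terms) and that the transition rules expose enough structure—in particular that a leading left-merge block $u_1\leftmerge\cdots\leftmerge u_l$ is detected as a single step label $\{u_1,\ldots,u_l\}$ rather than being confused with a sequential prefix—so that bisimilarity really pins down the summand decomposition up to AC. Since the encapsulation operator has already been eliminated before any of this analysis begins, $\partial_H$ contributes nothing new here; the work is identical to that of Theorem \ref{CAPTC}, and I would simply state that and refer back to it.
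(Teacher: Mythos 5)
Your proposal matches the paper's own proof: eliminate $\partial_H$ via Theorem \ref{ETEncapsulation}, reduce to closed basic APTC terms, repeat the normal-form-modulo-AC induction of Theorem \ref{CAPTC}, and close the argument with the soundness theorem (Theorem \ref{SAPTC2}). No gaps; this is essentially the same route the paper takes.
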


\begin{proof}
(1) Let $p$ and $q$ be closed APTC terms including encapsulation operator $\partial_H$, if $p\sim_{p} q$ then $p=q$.

Firstly, by the elimination theorem of APTC (see Theorem \ref{ETEncapsulation}), we know that the normal form of APTC does not contain $\partial_H$, and for each closed APTC term $p$, there exists a closed basic APTC term $p'$, such that $APTC\vdash p=p'$, so, we only need to consider closed basic APTC terms.

Similarly to Theorem \ref{CAPTC}, we can prove that for normal forms $n$ and $n'$, if $n\sim_{p} n'$ then $n=_{AC}n'$.

Finally, let $s$ and $t$ be basic APTC terms, and $s\sim_p t$, there are normal forms $n$ and $n'$, such that $s=n$ and $t=n'$. The soundness theorem of APTC modulo pomset bisimulation equivalence (see Theorem \ref{SAPTC2}) yields $s\sim_p n$ and $t\sim_p n'$, so $n\sim_p s\sim_p t\sim_p n'$. Since if $n\sim_p n'$ then $n=_{AC}n'$, $s=n=_{AC}n'=t$, as desired.

(2) Let $p$ and $q$ be closed APTC terms including encapsulation operator $\partial_H$, if $p\sim_{s} q$ then $p=q$.

It can be proven similarly to (1).

(3) Let $p$ and $q$ be closed APTC terms including encapsulation operator $\partial_H$, if $p\sim_{hp} q$ then $p=q$.

It can be proven similarly to (1).

(4) Let $p$ and $q$ be closed APTC terms including encapsulation operator $\partial_H$, if $p\sim_{hhp} q$ then $p=q$.

It can be proven similarly to (1).
\end{proof}

\subsubsection{Recursion}

In this section, we introduce recursion to capture infinite processes based on APTC. Since in APTC, there are three basic operators $\cdot$, $+$ and $\leftmerge$, the recursion must be adapted this situation to include $\leftmerge$.

In the following, $E,F,G$ are recursion specifications, $X,Y,Z$ are recursive variables.

\begin{definition}[Recursive specification]
A recursive specification is a finite set of recursive equations

$$X_1=t_1(X_1,\cdots,X_n)$$
$$\cdots$$
$$X_n=t_n(X_1,\cdots,X_n)$$

where the left-hand sides of $X_i$ are called recursion variables, and the right-hand sides $t_i(X_1,\cdots,X_n)$ are process terms in APTC with possible occurrences of the recursion variables $X_1,\cdots,X_n$.
\end{definition}

\begin{definition}[Solution]
Processes $p_1,\cdots,p_n$ are solutions for a recursive specification $\{X_i=t_i(X_1,\cdots,X_n)|i\in\{1,\cdots,n\}\}$ (with respect to truly concurrent bisimulation equivalences $\sim_s$($\sim_p$, $\sim_{hp}$)) if $p_i\sim_s (\sim_p, \sim_{hp})t_i(p_1,\cdots,p_n)$ for $i\in\{1,\cdots,n\}$.
\end{definition}

\begin{definition}[Guarded recursive specification]
A recursive specification

$$X_1=t_1(X_1,\cdots,X_n)$$
$$...$$
$$X_n=t_n(X_1,\cdots,X_n)$$

is guarded if the right-hand sides of its recursive equations can be adapted to the form by applications of the axioms in APTC and replacing recursion variables by the right-hand sides of their recursive equations, and there does not exist an infinite sequence of $1$-transitions $\langle X|E\rangle\rightarrow\langle X'|E\rangle\rightarrow\langle X''|E\rangle\rightarrow\cdots$,

$$(a_{11}\leftmerge\cdots\leftmerge a_{1i_1})\cdot s_1(X_1,\cdots,X_n)+\cdots+(a_{k1}\leftmerge\cdots\leftmerge a_{ki_k})\cdot s_k(X_1,\cdots,X_n)+(b_{11}\leftmerge\cdots\leftmerge b_{1j_1})+\cdots+(b_{1j_1}\leftmerge\cdots\leftmerge b_{lj_l})$$

where $a_{11},\cdots,a_{1i_1},a_{k1},\cdots,a_{ki_k},b_{11},\cdots,b_{1j_1},b_{1j_1},\cdots,b_{lj_l}\in \mathbb{E}$, and the sum above is allowed to be empty, in which case it represents the deadlock $0$.
\end{definition}

\begin{definition}[Linear recursive specification]\label{LRS}
A recursive specification is linear if its recursive equations are of the form

$$(a_{11}\leftmerge\cdots\leftmerge a_{1i_1})X_1+\cdots+(a_{k1}\leftmerge\cdots\leftmerge a_{ki_k})X_k+(b_{11}\leftmerge\cdots\leftmerge b_{1j_1})+\cdots+(b_{1j_1}\leftmerge\cdots\leftmerge b_{lj_l})$$

where $a_{11},\cdots,a_{1i_1},a_{k1},\cdots,a_{ki_k},b_{11},\cdots,b_{1j_1},b_{1j_1},\cdots,b_{lj_l}\in \mathbb{E}$, and the sum above is allowed to be empty, in which case it represents the deadlock $0$.
\end{definition}

For a guarded recursive specifications $E$ with the form

$$X_1=t_1(X_1,\cdots,X_n)$$
$$\cdots$$
$$X_n=t_n(X_1,\cdots,X_n)$$

the behavior of the solution $\langle X_i|E\rangle$ for the recursion variable $X_i$ in $E$, where $i\in\{1,\cdots,n\}$, is exactly the behavior of their right-hand sides $t_i(X_1,\cdots,X_n)$, which is captured by the two transition rules in Table \ref{TRForGR}.

\begin{center}
    \begin{table}
        $$\frac{t_i(\langle X_1|E\rangle,\cdots,\langle X_n|E\rangle)\xrightarrow{\{a_1,\cdots,a_k\}}\surd}{\langle X_i|E\rangle\xrightarrow{\{a_1,\cdots,a_k\}}\surd}$$
        $$\frac{t_i(\langle X_1|E\rangle,\cdots,\langle X_n|E\rangle)\xrightarrow{\{a_1,\cdots,a_k\}} y}{\langle X_i|E\rangle\xrightarrow{\{a_1,\cdots,a_k\}} y}$$
        \caption{Transition rules of guarded recursion}
        \label{TRForGR}
    \end{table}
\end{center}

\begin{theorem}[Conservativity of APTC with guarded recursion]
APTC with guarded recursion is a conservative extension of APTC.
\end{theorem}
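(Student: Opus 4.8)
The plan is to derive the result from the Conservative Extension theorem of Section~\ref{os}, exactly as was done above for the encapsulation operator $\partial_H$. Take $T_0$ to be the TSS of APTC (the rules of Tables~\ref{TRForBATC}, \ref{TRForParallel} and \ref{TRForEncapsulation}) over its signature $\Sigma_0$, and let $T_1$ consist of the two transition rules for guarded recursion in Table~\ref{TRForGR}, so that $T_0\oplus T_1$ is the TSS of APTC with guarded recursion over some $\Sigma_1$. It then suffices to verify the hypotheses of that theorem: (i) $T_0$ and $T_0\oplus T_1$ are positive after reduction; (ii) $T_0$ is source-dependent; and (iii) every rule in $T_1$ either has a fresh source, or has a premise $t\xrightarrow{a}t'$ (or $tP$) with $t\in\mathbb{T}(\Sigma_0)$, all variables of $t$ occurring in the source and in $t'$, and $a$ (or $P$) fresh.

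For (i), all rules in $T_0$ and in $T_1$ are positive, so the constant weight function is a stratification and the Positivity after reduction theorem applies; indeed positivity is immediate. For (ii), it was already recorded in the conservativity proofs for the algebra of parallelism and for $\partial_H$ that the rules of BATC are source-dependent and that the parallelism and encapsulation rules are source-dependent, whence $T_0$ is source-dependent. For (iii), observe that for each guarded recursive specification $E$ and each recursion variable $X_i$ of $E$ the constant $\langle X_i\mid E\rangle$ belongs to $\Sigma_1\setminus\Sigma_0$; hence the source $\langle X_i\mid E\rangle$ of each of the two rules in Table~\ref{TRForGR} is fresh, and the first alternative of condition~(2) of the Conservative Extension theorem holds outright (no analysis of the premises $t_i(\langle X_1\mid E\rangle,\cdots,\langle X_n\mid E\rangle)\xrightarrow{\{a_1,\cdots,a_k\}}\surd$ or $t_i(\langle X_1\mid E\rangle,\cdots,\langle X_n\mid E\rangle)\xrightarrow{\{a_1,\cdots,a_k\}}y$ is required). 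Invoking the theorem then gives that $T_0\oplus T_1$ is a conservative extension of $T_0$, i.e., APTC with guarded recursion is a conservative extension of APTC.

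The closest thing to an obstacle is the small amount of bookkeeping around ``positive after reduction'': one must note that adjoining the recursion rules does not disturb the stratification of the base system. But since Table~\ref{TRForGR} contributes only positive premises, no interaction with negative premises can arise and a common stratification exists trivially; the remaining work --- freshness of the constants $\langle X_i\mid E\rangle$ and source-dependency of the base TSS --- is a routine syntactic check once the Conservative Extension theorem is cited, so the proof is short.
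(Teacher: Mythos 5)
Your proof is correct and follows essentially the same route as the paper: both invoke the Conservative Extension theorem from Section~\ref{os}, using source-dependency of the APTC transition rules together with the fact that the sources of the two recursion rules in Table~\ref{TRForGR} are the fresh constants $\langle X_i|E\rangle$. Your additional bookkeeping on positivity after reduction is harmless extra detail that the paper leaves implicit.
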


\begin{proof}
Since the transition rules of APTC are source-dependent, and the transition rules for guarded recursion in Table \ref{TRForGR} contain only a fresh constant in their source, so the transition rules of APTC with guarded recursion are conservative extensions of those of APTC.
\end{proof}

\begin{theorem}[Congruence theorem of APTC with guarded recursion]
Truly concurrent bisimulation equivalences $\sim_{p}$, $\sim_s$ and $\sim_{hp}$ are all congruences with respect to APTC with guarded recursion.
\end{theorem}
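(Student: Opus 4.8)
The plan is to prove this exactly as the congruence theorems for $BATC$, the algebra for parallelism, and $\partial_H$ were proved above: reduce it to the meta-theorem on truly concurrent bisimulations, Theorem~\ref{tcbac}. Thus it suffices to show that the transition system specification $T$ of $APTC$ with guarded recursion --- the union of the rules of Tables~\ref{TRForBATC}, \ref{TRForParallel}, \ref{TRForEncapsulation} and \ref{TRForGR} --- is (i) in panth format and (ii) positive after reduction; then Theorem~\ref{tcbac} yields at once that $\sim_p$, $\sim_s$, $\sim_{hp}$ (and in fact $\sim_{hhp}$) are congruences with respect to $APTC$ with guarded recursion. The first three tables were already dealt with in the earlier proofs, so the only new work is the two guarded-recursion rules of Table~\ref{TRForGR}. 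For these, the panth conditions are immediate: the source of each rule is the constant $\langle X_i|E\rangle$, which carries no function symbol at all (hence at most one); the right-hand side of its unique, positive premise is either $\surd$ or the single variable $y$; and no variable occurs twice among the right-hand sides of premises and the source. So $T$ is in panth format.

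For positivity after reduction I would invoke the positivity-after-reduction theorem and exhibit a stratification $\phi$ of $T$. The only negative premises anywhere in $T$ are the premises $y\nrightarrow^{a_2}$ (and $y\nrightarrow^{a_3}$) of the unless-operator $\triangleleft$ rules in Table~\ref{TRForParallel}, and in each of those rules the source $y$ of the negative premise is a proper subterm of the conclusion's source $x\triangleleft y$; hence the same kind of stratification that makes $APTC$ positive after reduction --- assigning to a transition a weight that strictly dominates the weights of the transitions of the subterms appearing in its negative premises --- works unchanged on $APTC$-terms. It remains to extend $\phi$ to transitions whose source is a recursion constant $\langle X_i|E\rangle$. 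Since the two guarded-recursion rules are \emph{positive}, all they demand of $\phi$ is that the weight of the premise $t_i(\langle X_1|E\rangle,\dots,\langle X_n|E\rangle)\xrightarrow{\{a_1,\dots,a_k\}}\cdot$ not exceed the weight of the conclusion $\langle X_i|E\rangle\xrightarrow{\{a_1,\dots,a_k\}}\cdot$; I would arrange this by giving every transition with a recursion-constant source an ordinal weight lying above all weights of transitions obtainable by finitely many unfoldings of $E$, which is coherent precisely because $E$ is a \emph{guarded} recursive specification, so the unfolding process is well-founded (cf.\ the observation in the conservativity proof that the rules of Table~\ref{TRForGR} introduce only a fresh constant in their source). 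With such a $\phi$ in hand, $T$ allows a stratification and is therefore positive after reduction, completing the reduction to Theorem~\ref{tcbac}.

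The main obstacle is exactly this last point: producing a single stratification that simultaneously satisfies the strictly-decreasing requirement forced by the negative premises of $\triangleleft$ and the (merely non-increasing) requirement forced by the unfolding rules of Table~\ref{TRForGR}, whose premises $t_i(\langle X_1|E\rangle,\dots)$ may be arbitrarily large and may themselves contain occurrences of $\triangleleft$. A too-simple measure such as ``number of $\triangleleft$ occurrences in the source'' or ``term size'' fails for the recursion rules, so the layered choice of $\phi$ above --- recursion-constant transitions placed above everything else --- is essential, and verifying that this layering is consistent is the one place where guardedness of $E$ is genuinely used. Everything else, namely the panth-format checks and the final appeal to Theorem~\ref{tcbac}, is routine.
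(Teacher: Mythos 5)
Your overall route is exactly the paper's: the published proof consists of the single observation that the TSS of APTC with guarded recursion is positive after reduction and in panth format, followed by an appeal to Theorem~\ref{tcbac}; your panth-format checks for the two rules of Table~\ref{TRForGR} and your final appeal to that theorem reproduce this argument (and, like the paper's, in fact cover $\sim_{hhp}$ as well as the three equivalences in the statement).

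The one place where your added detail goes wrong is the stratification you sketch for positivity after reduction. The negative premises of the $\triangleleft$-rules in Table~\ref{TRForParallel} require, for every closed substitution $\sigma$ and all closed $t'$, that $\phi(\sigma(y)\xrightarrow{a_2}t')$ be \emph{strictly} smaller than the weight of the conclusion with source $\sigma(x)\triangleleft\sigma(y)$. Instantiating $\sigma(y)=\langle X|E\rangle$, this forces every transition with recursion-constant source $\langle X|E\rangle$ to lie strictly below the transitions of $p\triangleleft\langle X|E\rangle$; but your layering puts all transitions with a recursion-constant source above everything else, in particular above transitions of terms headed by $\triangleleft$, so the two requirements are contradictory and the $\phi$ you describe is not a stratification. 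The repair is to let the weight grow with the $\triangleleft$/$\Theta$-nesting of the source on top of whatever (large) value is assigned to recursion constants, so that wrapping any closed term --- including $\langle X|E\rangle$ --- in an unless context strictly increases the weight, while guardedness of $E$ is what keeps the non-strict premise-to-conclusion inequality of the unfolding rules of Table~\ref{TRForGR} coherent. To be fair, the paper never exhibits any stratification --- it simply asserts positivity after reduction --- so apart from this fixable slip your proposal is, if anything, more explicit than the published proof.
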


\begin{proof}
Since the TSS of APTC with guarded recursion in Table \ref{TRForGR} is positive after reduction and in panth format, according to Theorem \ref{tcbac}, truly concurrent bisimulation equivalences, including pomset bisimulation equivalence $\sim_{p}$, step bisimulation equivalence $\sim_s$, hp-bisimulation equivalence $\sim_{hp}$ and hhp-bisimulation equivalence $\sim_{hhp}$, are all congruences with respect to APTC with guarded recursion.
\end{proof}

The $RDP$ (Recursive Definition Principle) and the $RSP$ (Recursive Specification Principle) are shown in Table \ref{RDPRSP}.

\begin{center}
\begin{table}
  \begin{tabular}{@{}ll@{}}
\hline No. &Axiom\\
  $RDP$ & $\langle X_i|E\rangle = t_i(\langle X_1|E\rangle,\cdots,\langle X_n|E\rangle)\quad (i\in\{1,\cdots,n\})$\\
  $RSP$ & if $y_i=t_i(y_1,\cdots,y_n)$ for $i\in\{1,\cdots,n\}$, then $y_i=\langle X_i|E\rangle \quad(i\in\{1,\cdots,n\})$\\
\end{tabular}
\caption{Recursive definition and specification principle}
\label{RDPRSP}
\end{table}
\end{center}

$RDP$ follows immediately from the two transition rules for guarded recursion, which express that $\langle X_i|E\rangle$ and $t_i(\langle X_1|E\rangle,\cdots,\langle X_n|E\rangle)$ have the same initial transitions for $i\in\{1,\cdots,n\}$. $RSP$ follows from the fact that guarded recursive specifications have only one solution.

\begin{theorem}[Elimination theorem of APTC with linear recursion]\label{ETRecursion}
Each process term in APTC with linear recursion is equal to a process term $\langle X_1|E\rangle$ with $E$ a linear recursive specification.
\end{theorem}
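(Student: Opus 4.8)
The plan is to prove the statement by structural induction on the process term $t$ of $APTC$ with linear recursion, exhibiting for each $t$ a \emph{linear} recursive specification $E_t$ together with a distinguished variable, which after renaming we call $X_1$, such that $t=\langle X_1|E_t\rangle$ is derivable. The two workhorses are $RDP$ and $RSP$ (Table \ref{RDPRSP}): $RDP$ unfolds a recursion constant into the right-hand side of its equation, and $RSP$ --- which applies because every linear specification is guarded, each variable occurrence on a right-hand side being prefixed by a non-empty parallel product of atomic actions from $\mathbb{E}$ --- lets us identify two process terms as soon as we display them as solutions of one and the same linear specification. We also use the routine gluing fact that if $E'$ is a sub-specification of $E$ over an extra, disjoint set of variables, then $\langle X|E'\rangle=\langle X|E\rangle$ for every variable $X$ of $E'$; this follows from conservativity of $APTC$ with guarded recursion together with $RSP$.

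The base constants $a\in\mathbb{E}$, $0$ and $1$ are each trivially equal to some $\langle X_1|E\rangle$ with $E$ linear, and a recursion constant $\langle X_i|E\rangle$ with $E$ already linear only needs $X_i$ renamed to $X_1$. (A recursion-free, hence finite, subterm may alternatively be flattened first via the elimination theorem, Theorem \ref{ETEncapsulation}, but this is not essential.) In the inductive step a term $t=f(t_1,\dots,t_k)$ is treated one operator $f$ at a time, where by induction each $t_i=\langle X_1^{(i)}|E_i\rangle$ with the $E_i$ linear and over pairwise disjoint variable sets. For $+$ we adjoin a fresh variable $Z$ with equation $Z=(\text{the right-hand side of }X_1^{(1)}\text{ in }E_1)+(\text{the right-hand side of }X_1^{(2)}\text{ in }E_2)$ and put $E_t=\{Z=\cdots\}\cup E_1\cup E_2$, which is linear since a sum of linear summands is linear, and $RDP$ gives $\langle Z|E_t\rangle=t_1+t_2$. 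For $\cdot$ we graft $t_2$ onto the terminating branches of $t_1$: in every equation of $E_1$ replace each terminating summand $(a_1\leftmerge\cdots\leftmerge a_l)$ by $(a_1\leftmerge\cdots\leftmerge a_l)\cdot X_1^{(2)}$, keep the variable-summands, adjoin $E_2$, and check with $A4$ and $A5$ that the constants $\langle X_i|E_1\rangle\cdot t_2$ together with $t_2$ solve this specification, so $RSP$ gives the claim. For the parallel operators $\leftmerge$, $\parallel$, $\mid$ and $\between$ we introduce one auxiliary variable $Z_{i,j}$ for each pair of a variable $X_i$ of $E_1$ and a variable $Y_j$ of $E_2$, meant to denote $\langle X_i|E_1\rangle\between\langle Y_j|E_2\rangle$, and obtain its defining equation by unfolding both arguments with $RDP$, expanding with $P1$, $P2$ and the left-merge/communication axioms $P3$--$P9$, $C10$--$C19$ together with the distributivity laws $P6$, $C14$, $C15$, and reading the residual tail $\langle X_{i'}|E_1\rangle\between\langle Y_{j'}|E_2\rangle$ of each resulting summand as $Z_{i',j'}$; the operators $\Theta$, $\triangleleft$ and $\partial_H$ are pushed to the level of summands similarly, using $CE22$--$CE25$, $U34$--$U41$ and $D5$--$D7$. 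In every case $E_t$ is finite and each of its equations is a sum of summands each of which is a parallel product of atomic actions optionally followed by one recursion variable, i.e.\ linear, and $RSP$ delivers $t=\langle Z|E_t\rangle$.

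The step I expect to be the main obstacle is the parallel case. Two points require genuine care. First, one must check that the family $\{Z_{i,j}\}$ actually closes: after expanding $\langle X_i|E_1\rangle\between\langle Y_j|E_2\rangle$ every tail that appears is again of the form $\langle X_{i'}|E_1\rangle\between\langle Y_{j'}|E_2\rangle$ with $i',j'$ ranging over the finite variable sets of $E_1$, $E_2$, so only finitely many auxiliary variables are ever produced --- but establishing this precisely means tracking which of $\leftmerge$, $\mid$, $\between$ yields which residual and invoking $P3$--$P5$ and $C11$--$C13$, whose right-hand sides reintroduce $\between$. Second, one must guarantee that the \emph{head} of every summand is a parallel product of atomic actions and nothing more; this needs the generalized left-merge and communication identities for a head $a_1\leftmerge\cdots\leftmerge a_l$ of width $l>1$ (derivable from $P3$--$P5$, $C10$--$C13$ by induction on $l$), so that for instance $(a_1\leftmerge\cdots\leftmerge a_l)\cdot x\,\leftmerge\,(b_1\leftmerge\cdots\leftmerge b_m)\cdot y$ rewrites to head $a_1\leftmerge\cdots\leftmerge a_l\leftmerge b_1\leftmerge\cdots\leftmerge b_m$ with tail $x\between y$. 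Once this bookkeeping is in place the rest is a routine if lengthy verification via $RSP$, following the pattern of the analogous result for $ACP$ with linear recursion in \cite{ACP}.
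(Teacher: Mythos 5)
Your proposal is correct and follows essentially the same route as the paper: a structural induction that displays the given term, together with all the residual terms arising from it, as a solution of one finite linear recursive specification, after which $RSP$ (with $RDP$ used for the unfoldings) yields $t_1=\langle X_1|E\rangle$. The operator-by-operator bookkeeping you spell out --- fresh variables for $+$, grafting $E_2$ onto terminating summands for $\cdot$, product variables $Z_{i,j}$ closed under the expansion axioms for $\leftmerge$, $\mid$, $\parallel$, $\between$, and pushing $\Theta$, $\triangleleft$, $\partial_H$ to summands --- is exactly the detail the paper's terse "each process term generates a process that can be expressed in the form of equations" leaves implicit.
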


\begin{proof}
By applying structural induction with respect to term size, each process term $t_1$ in APTC with linear recursion generates a process can be expressed in the form of equations

$$t_i=(a_{i11}\leftmerge\cdots\leftmerge a_{i1i_1})t_{i1}+\cdots+(a_{ik_i1}\leftmerge\cdots\leftmerge a_{ik_ii_k})t_{ik_i}+(b_{i11}\leftmerge\cdots\leftmerge b_{i1i_1})+\cdots+(b_{il_i1}\leftmerge\cdots\leftmerge b_{il_ii_l})$$

for $i\in\{1,\cdots,n\}$. Let the linear recursive specification $E$ consist of the recursive equations

$$X_i=(a_{i11}\leftmerge\cdots\leftmerge a_{i1i_1})X_{i1}+\cdots+(a_{ik_i1}\leftmerge\cdots\leftmerge a_{ik_ii_k})X_{ik_i}+(b_{i11}\leftmerge\cdots\leftmerge b_{i1i_1})+\cdots+(b_{il_i1}\leftmerge\cdots\leftmerge b_{il_ii_l})$$

for $i\in\{1,\cdots,n\}$. Replacing $X_i$ by $t_i$ for $i\in\{1,\cdots,n\}$ is a solution for $E$, $RSP$ yields $t_1=\langle X_1|E\rangle$.
\end{proof}

\begin{theorem}[Soundness of APTC with guarded recursion]\label{SAPTCR}
Let $x$ and $y$ be APTC with guarded recursion terms. If $APTC\textrm{ with guarded recursion}\vdash x=y$, then
\begin{enumerate}
  \item $x\sim_{p} y$;
  \item $x\sim_{s} y$;
  \item $x\sim_{hp} y$;
  \item $x\sim_{hhp} y$.
\end{enumerate}
\end{theorem}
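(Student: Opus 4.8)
The plan is to reduce the statement to checking the axioms one at a time, dispatch the axioms already present in APTC by appealing to the earlier soundness results, and then concentrate on the two genuinely new axioms $RDP$ and $RSP$. First I would observe that each of $\sim_p,\sim_s,\sim_{hp},\sim_{hhp}$ is an equivalence relation and, by the congruence theorem for APTC with guarded recursion proved just above, a congruence on the signature; hence it suffices to show that every axiom of the theory is sound modulo each of the four equivalences, since an equational derivation then transports soundness through reflexivity, symmetry, transitivity and the congruence (context) rule. For the axioms of BATC (Table~\ref{AxiomsForBATC}), of parallelism (Table~\ref{AxiomsForParallelism}) and of encapsulation (Table~\ref{AxiomsForEncapsulation}) this is exactly Theorems~\ref{SBATC}, \ref{SAPTC} and \ref{SAPTC2}, so only $RDP$ and $RSP$ remain.

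For $RDP$, the two transition rules of Table~\ref{TRForGR} state that $\langle X_i|E\rangle$ and $t_i(\langle X_1|E\rangle,\ldots,\langle X_n|E\rangle)$ have precisely the same outgoing pomset transitions and the same termination predicate. Therefore the relation consisting of all pairs $(\langle X_i|E\rangle,\, t_i(\langle X_1|E\rangle,\ldots,\langle X_n|E\rangle))$ together with the identity is a pomset bisimulation, and since the transitions it matches are preserved as steps it is also a step bisimulation; equipping each such pair with the identity isomorphism on the reached configuration gives a (weakly) posetal relation that is trivially downward closed, which also settles the $hp$- and $hhp$-cases. Hence $RDP$ is sound modulo all four equivalences.

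For $RSP$ I must show that a guarded recursive specification $E$ has a unique solution up to each of the four equivalences, i.e. if $p_1,\ldots,p_n$ and $q_1,\ldots,q_n$ both solve $E$ then $p_i\sim_p q_i$ (and likewise for $\sim_s$, $\sim_{hp}$, $\sim_{hhp}$). The argument uses guardedness: by the definition of a guarded recursive specification, after finitely many substitutions of right-hand sides followed by normalisation via the APTC axioms, every recursion variable in the resulting term occurs only underneath at least $m$ nested action prefixes, for any prescribed $m$. From this I would build, for each $m$, the relation ``agree on all transition sequences of length at most $m$'' and show by induction on $m$ that any two solutions of $E$ belong to it; the union over all $m$ is then the desired pomset bisimulation, with the step case following since the matched labels are pomset-isomorphic and hence steps whenever the originals are. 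For the $hp$- and $hhp$-cases one additionally transports, along each transition, the order-isomorphism between the configurations reached on the two sides, and checks that the limiting weakly posetal relation is downward closed. The main obstacle is making this approximation argument precise: verifying carefully that guardedness really does force action-guardedness after finitely many unfoldings, so that the depth-$m$ relations behave correctly under pomset transitions, and that passing to the limit preserves the transfer conditions and, for $\sim_{hhp}$, downward closure.
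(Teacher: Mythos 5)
Your proposal follows essentially the same route as the paper: since the four truly concurrent bisimulation equivalences are equivalences and congruences for APTC with guarded recursion, soundness reduces to checking each axiom of Table~\ref{RDPRSP} (the earlier axioms being covered by Theorems~\ref{SBATC}, \ref{SAPTC}, \ref{SAPTC2}), which is exactly the reduction the paper makes before declaring the axiom checks trivial. Your additional sketches for $RDP$ (same initial transitions by the rules of Table~\ref{TRForGR}) and $RSP$ (uniqueness of solutions of guarded specifications via a depth-approximation argument) supply the detail the paper omits and are consistent with its remarks following Table~\ref{RDPRSP}.
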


\begin{proof}
Since truly concurrent bisimulation equivalences, including pomset bisimulation $\sim_{p}$, step bisimulation $\sim_s$, hp-bisimulation $\sim_{hp}$ and hhp-bisimulation $\sim_{hhp}$, are all both equivalent and congruent relations with respect to APTC with guarded recursion, we only need to check if each axiom in Table \ref{RDPRSP} is sound modulo truly concurrent bisimulation equivalences. The proof is trivial and we omit it.
\end{proof}

\begin{theorem}[Completeness of APTC with linear recursion]\label{CAPTCR}
Let $p$ and $q$ be closed APTC with linear recursion terms, then,

\begin{enumerate}
  \item if $p\sim_{s} q$ then $p=q$;
  \item if $p\sim_{p} q$ then $p=q$;
  \item if $p\sim_{hp} q$ then $p=q$;
  \item if $p\sim_{hhp} q$ then $p=q$.
\end{enumerate}
\end{theorem}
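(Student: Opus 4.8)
The plan is to reduce to processes of the form $\langle X_1|E\rangle$ with $E$ linear by the elimination theorem for APTC with linear recursion (Theorem \ref{ETRecursion}), and then run the standard ``merge the two recursive specifications'' argument closed off by $RSP$. First I would apply Theorem \ref{ETRecursion} to write $p=\langle X_1|E_1\rangle$ and $q=\langle Y_1|E_2\rangle$ for linear recursive specifications $E_1$, $E_2$ over disjoint sets of recursion variables $\{X_1,\dots,X_m\}$ and $\{Y_1,\dots,Y_n\}$. I would carry out the argument in detail only for case (1), step bisimulation; case (2) is identical after replacing the transition-label multisets by pomsets, and (3)--(4) require in addition dragging along the posetal isomorphism (and, for hhp, maintaining downward closure) of the bisimulation, which does not change the shape of the argument, so I would only remark on it.

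Next, from $p\sim_s q$ I would fix a step bisimulation $R$ with $(p,q)\in R$. Using $RDP$ and the linearity of $E_1$, $E_2$, I would observe that whenever $\langle X_i|E_1\rangle\mathrel{R}\langle Y_j|E_2\rangle$ the defining equations of $X_i$ and $Y_j$ match summand-by-summand: each prefix summand $(c_1\leftmerge\cdots\leftmerge c_k)\langle X_{i'}|E_1\rangle$ of $X_i$ is matched by a summand $(c_1\leftmerge\cdots\leftmerge c_k)\langle Y_{j'}|E_2\rangle$ of $Y_j$ with $\langle X_{i'}|E_1\rangle\mathrel{R}\langle Y_{j'}|E_2\rangle$, each terminating summand $(c_1\leftmerge\cdots\leftmerge c_k)$ of one equation is matched by an identical terminating summand of the other, and conversely. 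Then I would build a combined linear recursive specification $E$ with a fresh variable $Z_{ij}$ for every matched pair $(i,j)$, whose equation collects for each matched pair of summands the summand $(c_1\leftmerge\cdots\leftmerge c_k)Z_{i'j'}$ (respectively the terminating summand $(c_1\leftmerge\cdots\leftmerge c_k)$); linearity of $E_1$, $E_2$ makes $E$ linear, hence guarded. I would then verify that both $Z_{ij}\mapsto\langle X_i|E_1\rangle$ and $Z_{ij}\mapsto\langle Y_j|E_2\rangle$ are solutions of $E$, using $RDP$ together with the axioms $A1$, $A2$, $A3$ to reconcile the possibly fewer-or-repeated summands. Finally, $RSP$ (applicable since $E$ is guarded) yields $\langle X_i|E_1\rangle=\langle Z_{ij}|E\rangle=\langle Y_j|E_2\rangle$ for every matched pair; taking $(i,j)=(1,1)$, which is matched since $(p,q)\in R$, gives $p=q$. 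Cases (2)--(4) then follow mutatis mutandis, invoking soundness (Theorem \ref{SAPTCR}) and the congruence theorem for APTC with guarded recursion where needed.

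The step I expect to be the main obstacle is checking that $\langle X_i|E_1\rangle$ and $\langle Y_j|E_2\rangle$ genuinely solve the merged specification $E$: the $Z_{ij}$-equation need not list one summand per summand of $X_i$'s equation, since several summands of $X_i$ may be matched by a single summand of $Y_j$ (or conversely), so showing that the right-hand side of the $Z_{ij}$-equation, with each $Z_{i'j'}$ replaced by $\langle X_{i'}|E_1\rangle$, is provably equal to $\langle X_i|E_1\rangle$ needs a careful bookkeeping argument with idempotence ($A3$) to reinstate absorbed copies and with $RDP$ to fold the summands back into $\langle X_i|E_1\rangle$. A secondary point of care is the truly concurrent setting: one must confirm that a prefix $c_1\leftmerge\cdots\leftmerge c_k$ behaves as a single ``atomic'' step $\xrightarrow{\{c_1,\dots,c_k\}}$ with respect to the transition rules and to $RSP$, and that for the hp-/hhp-cases the posetal (and downward-closed) data are compatible with the merge construction; these are routine given the congruence and soundness results already established, but should be stated explicitly.
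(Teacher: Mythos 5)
Your proposal is correct and follows essentially the same route as the paper's proof: eliminate to $\langle X_1|E_1\rangle$, $\langle Y_1|E_2\rangle$ via Theorem \ref{ETRecursion}, build the merged linear specification with variables $Z_{XY}$ for $\sim_s$-related (resp.\ $\sim_p$, $\sim_{hp}$, $\sim_{hhp}$-related) pairs whose summands are the matched summands of $t_X$ and $t_Y$, check via $RDP$ (plus $A1$--$A3$ bookkeeping) that both original terms solve it, and conclude $\langle X_1|E_1\rangle=\langle Z_{X_1Y_1}|E\rangle=\langle Y_1|E_2\rangle$ by $RSP$. The bookkeeping subtlety you flag about absorbed or repeated summands is real but is exactly what the paper's appeal to $\sigma(t_X)=\psi(t_{XY})$ silently handles, so no change of approach is needed.
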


\begin{proof}
Firstly, by the elimination theorem of APTC with guarded recursion (see Theorem \ref{ETRecursion}), we know that each process term in APTC with linear recursion is equal to a process term $\langle X_1|E\rangle$ with $E$ a linear recursive specification.

It remains to prove the following cases.

(1) If $\langle X_1|E_1\rangle \sim_s \langle Y_1|E_2\rangle$ for linear recursive specification $E_1$ and $E_2$, then $\langle X_1|E_1\rangle = \langle Y_1|E_2\rangle$.

Let $E_1$ consist of recursive equations $X=t_X$ for $X\in \mathcal{X}$ and $E_2$
consists of recursion equations $Y=t_Y$ for $Y\in\mathcal{Y}$. Let the linear recursive specification $E$ consist of recursion equations $Z_{XY}=t_{XY}$, and $\langle X|E_1\rangle\sim_s\langle Y|E_2\rangle$, and $t_{XY}$ consists of the following summands:

\begin{enumerate}
  \item $t_{XY}$ contains a summand $(a_1\leftmerge\cdots\leftmerge a_m)Z_{X'Y'}$ iff $t_X$ contains the summand $(a_1\leftmerge\cdots\leftmerge a_m)X'$ and $t_Y$ contains the summand $(a_1\leftmerge\cdots\leftmerge a_m)Y'$ such that $\langle X'|E_1\rangle\sim_s\langle Y'|E_2\rangle$;
  \item $t_{XY}$ contains a summand $b_1\leftmerge\cdots\leftmerge b_n$ iff $t_X$ contains the summand $b_1\leftmerge\cdots\leftmerge b_n$ and $t_Y$ contains the summand $b_1\leftmerge\cdots\leftmerge b_n$.
\end{enumerate}

Let $\sigma$ map recursion variable $X$ in $E_1$ to $\langle X|E_1\rangle$, and let $\psi$ map recursion variable $Z_{XY}$ in $E$ to $\langle X|E_1\rangle$. So, $\sigma((a_1\leftmerge\cdots\leftmerge a_m)X')\equiv(a_1\leftmerge\cdots\leftmerge a_m)\langle X'|E_1\rangle\equiv\psi((a_1\leftmerge\cdots\leftmerge a_m)Z_{X'Y'})$, so by $RDP$, we get $\langle X|E_1\rangle=\sigma(t_X)=\psi(t_{XY})$. Then by $RSP$, $\langle X|E_1\rangle=\langle Z_{XY}|E\rangle$, particularly, $\langle X_1|E_1\rangle=\langle Z_{X_1Y_1}|E\rangle$. Similarly, we can obtain $\langle Y_1|E_2\rangle=\langle Z_{X_1Y_1}|E\rangle$. Finally, $\langle X_1|E_1\rangle=\langle Z_{X_1Y_1}|E\rangle=\langle Y_1|E_2\rangle$, as desired.

(2) If $\langle X_1|E_1\rangle \sim_p \langle Y_1|E_2\rangle$ for linear recursive specification $E_1$ and $E_2$, then $\langle X_1|E_1\rangle = \langle Y_1|E_2\rangle$.

It can be proven similarly to (1), we omit it.

(3) If $\langle X_1|E_1\rangle \sim_{hp} \langle Y_1|E_2\rangle$ for linear recursive specification $E_1$ and $E_2$, then $\langle X_1|E_1\rangle = \langle Y_1|E_2\rangle$.

It can be proven similarly to (1), we omit it.

(4) If $\langle X_1|E_1\rangle \sim_{hhp} \langle Y_1|E_2\rangle$ for linear recursive specification $E_1$ and $E_2$, then $\langle X_1|E_1\rangle = \langle Y_1|E_2\rangle$.

It can be proven similarly to (1), we omit it.
\end{proof}

Then, we introduce approximation induction principle ($AIP$) and try to explain that $AIP$ is still valid in true concurrency. $AIP$ can be used to try and equate truly concurrent bisimilar guarded recursive specifications. $AIP$ says that if two process terms are truly concurrent bisimilar up to any finite depth, then they are truly concurrent bisimilar.

Also, we need the auxiliary unary projection operator $\Pi_n$ for $n\in\mathbb{N}$ and $\mathbb{N}\triangleq\{0,1,2,\cdots\}$. The transition rules of $\Pi_n$ are expressed in Table \ref{TRForProjection}.

\begin{center}
    \begin{table}
        $$\frac{x\xrightarrow{\{a_1,\cdots,a_k\}}\surd}{\Pi_{n+1}(x)\xrightarrow{\{a_1,\cdots,a_k\}}\surd}
        \quad\frac{x\xrightarrow{\{a_1,\cdots,a_k\}}x'}{\Pi_{n+1}(x)\xrightarrow{\{a_1,\cdots,a_k\}}\Pi_n(x')}$$
        \caption{Transition rules of projection operator $\Pi_n$}
        \label{TRForProjection}
    \end{table}
\end{center}

Based on the transition rules for projection operator $\Pi_n$ in Table \ref{TRForProjection}, we design the axioms as Table \ref{AxiomsForProjection} shows.

\begin{center}
    \begin{table}
        \begin{tabular}{@{}ll@{}}
            \hline No. &Axiom\\
            $PR1$ & $\Pi_n(x+y)=\Pi_n(x)+\Pi_n(y)$\\
            $PR2$ & $\Pi_n(x\leftmerge y)=\Pi_n(x)\leftmerge \Pi_n(y)$\\
            $PR3$ & $\Pi_{n+1}(a_1\leftmerge\cdots\leftmerge a_k)=a_1\leftmerge\cdots\leftmerge a_k$\\
            $PR4$ & $\Pi_{n+1}((a_1\leftmerge\cdots\leftmerge a_k)\cdot x)=(a_1\leftmerge\cdots\leftmerge a_k)\cdot\Pi_n(x)$\\
            $PR5$ & $\Pi_0(x)=0$\\
            $PR6$ & $\Pi_n(0)=0$\\
            $PR7$ & $\Pi_n(1)=1$\\
        \end{tabular}
        \caption{Axioms of projection operator}
        \label{AxiomsForProjection}
    \end{table}
\end{center}

\begin{theorem}[Conservativity of APTC with projection operator and guarded recursion]
APTC with projection operator and guarded recursion is a conservative extension of APTC with guarded recursion.
\end{theorem}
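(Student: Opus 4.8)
The plan is to invoke the Conservative extension theorem, which reduces the claim to two syntactic checks on the transition system specifications, with $T_0$ the TSS of APTC with guarded recursion (Tables \ref{TRForBATC}, \ref{TRForParallel}, \ref{TRForEncapsulation}, \ref{TRForGR}) and $T_1$ the two transition rules for $\Pi_n$ in Table \ref{TRForProjection}. First I would record that $T_0$ is source-dependent: every rule for $\cdot$, $+$, $\leftmerge$, $\mid$, $\Theta$, $\triangleleft$ and $\partial_H$ has this property (it was already used in the conservativity proofs earlier in this section), and the two rules for guarded recursion in Table \ref{TRForGR} are source-dependent too, since their single premise has a closed source $t_i(\langle X_1|E\rangle,\cdots,\langle X_n|E\rangle)$ whose only variable $y$ occurs on the right-hand side of that premise. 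Second I would observe that both rules in $T_1$ have source $\Pi_{n+1}(x)$, so the source of each such $\rho$ contains the function symbol $\Pi_n\in\Sigma_1\setminus\Sigma_0$ and is therefore fresh; this immediately discharges condition (2) of the Conservative extension theorem via its ``source is fresh'' alternative, with no need for the premise-based alternative.

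It then remains to check the standing hypothesis of the Conservative extension theorem, namely that $T_0$ and $T_0\oplus T_1$ are positive after reduction. Since APTC with guarded recursion is positive after reduction (the negative premises occurring in the rules for $\triangleleft$ admit a stratification, whence positivity after reduction by the Positivity after reduction theorem) and the projection rules introduce no new negative premises, any stratification witnessing positivity for $T_0$ extends to $T_0\oplus T_1$ --- for instance by assigning $\Pi_{n+1}(p)\xrightarrow{X}q$ a weight strictly above that of the premise transition of $p$. Hence $T_0\oplus T_1$ is positive after reduction as well, and the Conservative extension theorem yields the stated conclusion.

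The only point needing care --- and it is bookkeeping rather than a genuine obstacle --- is making precise the source-dependency of APTC with guarded recursion itself, because guarded recursion adjoins the fresh constants $\langle X_i|E\rangle$ and one must confirm the two rules in Table \ref{TRForGR} do not destroy source-dependency; as noted above this is immediate, so the argument closes in exactly the same shape as the earlier conservativity results of this section (conservativity of APTC over the algebra for parallelism, and of APTC with guarded recursion over APTC).
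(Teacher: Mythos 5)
Your proposal is correct and follows essentially the same route as the paper: the paper's proof also reduces the claim to the two facts that the transition rules of APTC with guarded recursion are source-dependent and that the sources of the projection rules contain an occurrence of $\Pi_n$ (hence are fresh), and then appeals to the Conservative extension theorem. Your additional verification that $T_0\oplus T_1$ remains positive after reduction is a point the paper leaves implicit, but it does not change the shape of the argument.
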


\begin{proof}
It follows from the following two facts:

\begin{enumerate}
  \item The transition rules of APTC with guarded recursion are all source-dependent;
  \item The sources of the transition rules for the projection operator contain an occurrence of $\Pi_n$.
\end{enumerate}
\end{proof}

\begin{theorem}[Congruence theorem of projection operator $\Pi_n$]
Truly concurrent bisimulation equivalences $\sim_{p}$, $\sim_s$, $\sim_{hp}$ and $\sim_{hhp}$ are all congruences with respect to projection operator $\Pi_n$.
\end{theorem}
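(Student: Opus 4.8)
The plan is to derive this from Theorem~\ref{tcbac} in exactly the same way the congruence theorems for $BATC$, $APTC$, the encapsulation operator $\partial_H$ and guarded recursion were obtained. So it suffices to check that the transition system specification formed by adjoining the two rules for $\Pi_n$ in Table~\ref{TRForProjection} to the TSS of $APTC$ with guarded recursion is positive after reduction and in panth format; Theorem~\ref{tcbac} then yields at one stroke that $\sim_p$, $\sim_s$, $\sim_{hp}$ and $\sim_{hhp}$ are all congruences with respect to $\Pi_n$.

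First I would check the panth format for the two new rules. Each has source $\Pi_{n+1}(x)$, which contains exactly one function symbol, so the restriction on the source is met. The sole premise of each rule is $x\xrightarrow{\{a_1,\cdots,a_k\}}\surd$, respectively $x\xrightarrow{\{a_1,\cdots,a_k\}}x'$; in the second case the right-hand side of the premise is the single variable $x'$, and $x'$ does not reoccur in the source, so the restrictions on right-hand sides of positive premises are met as well. Since the remaining rules of $APTC$ with guarded recursion were already argued to be in panth format in the earlier congruence proofs, the combined TSS is in panth format.

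Next I would check positivity after reduction via the ``Positivity after reduction'' theorem: it is enough to exhibit a stratification. The rules for $\Pi_n$ carry no negative premises, so any stratification $\phi$ of the TSS of $APTC$ with guarded recursion (which exists by the reasoning used there, in particular to accommodate the negative premises of the $\triangleleft$ rules of Table~\ref{TRForParallel}) extends to the combined TSS by giving each new transition $\Pi_{n+1}(p)\xrightarrow{\{a_1,\cdots,a_k\}}q$ a weight strictly above $\phi$ of its premise. Hence the combined TSS is positive after reduction, and Theorem~\ref{tcbac} applies, giving the claim.

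There is no genuine obstacle here: the only point requiring a moment's care is the bookkeeping that adding $\Pi_n$ preserves the two hypotheses of Theorem~\ref{tcbac}, and this is immediate because the $\Pi_n$ rules are purely positive and their sources involve only the fresh symbol $\Pi_n$. If one prefers, this last observation can instead be packaged as a conservativity statement (as was done for $\partial_H$ and for guarded recursion) and the congruence result transported along it.
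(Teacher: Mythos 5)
Your proposal is correct and follows essentially the same route as the paper: both invoke Theorem~\ref{tcbac} after observing that the TSS obtained by adding the $\Pi_n$ rules of Table~\ref{TRForProjection} is positive after reduction and in panth format. You merely spell out the format and stratification checks that the paper leaves implicit, which is harmless (indeed slightly more careful) detail.
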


\begin{proof}
Since the TSS of APTC with projection operator in Table \ref{TRForProjection} is positive after reduction and in panth format, according to Theorem \ref{tcbac}, truly concurrent bisimulation equivalences, including pomset bisimulation equivalence $\sim_{p}$, step bisimulation equivalence $\sim_s$, hp-bisimulation equivalence $\sim_{hp}$ and hhp-bisimulation equivalence $\sim_{hhp}$, are all congruences with respect to APTC with projection operator.
\end{proof}

\begin{theorem}[Elimination theorem of APTC with linear recursion and projection operator]\label{ETProjection}
Each process term in APTC with linear recursion and projection operator is equal to a process term $\langle X_1|E\rangle$ with $E$ a linear recursive specification.
\end{theorem}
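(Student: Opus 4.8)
The plan is to follow the pattern of Theorem~\ref{ETRecursion}, treating the projection operator $\Pi_n$ by a levelwise unfolding. I would argue by structural induction on a process term $t$ of APTC with linear recursion and projection. For every operator of APTC (including $\partial_H$) and for recursion, the induction hypothesis rewrites the immediate subterms into the form $\langle X_1|E\rangle$, and the resulting term is then a term of APTC with linear recursion, so Theorem~\ref{ETRecursion} already delivers the required form. Hence the only genuinely new case is $t\equiv\Pi_n(s)$: by the induction hypothesis $s=\langle X_1|E\rangle$ for a linear recursive specification $E$, say with recursion variables $X_1,\dots,X_N$ and equations $X_i=t_i(X_1,\dots,X_N)$, each $t_i$ a sum of summands of the two shapes allowed by Definition~\ref{LRS}: prefixed summands $(a_{i11}\leftmerge\cdots)X_j$ and terminating summands $(b_{i11}\leftmerge\cdots)$.

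Given this $E$ and the fixed $n\in\mathbb{N}$, I would construct a new recursive specification $E'$ with recursion variables $Y_i^m$ for $i\in\{1,\dots,N\}$ and $m\in\{0,1,\dots,n\}$, the intended reading being that $Y_i^m$ denotes $\Pi_m(\langle X_i|E\rangle)$. Its equations are $Y_i^0=0$, and for $0\le m<n$ the equation for $Y_i^{m+1}$ is obtained from $t_i$ by replacing each prefixed summand $(a\leftmerge\cdots)X_j$ with $(a\leftmerge\cdots)Y_j^m$ and keeping each terminating summand unchanged. This $E'$ is finite because $n$ is fixed and $E$ is finite; every equation of $E'$ is in the linear form of Definition~\ref{LRS}; and $E'$ is guarded since each unfolding strictly lowers the superscript $m$, so no infinite sequence of $1$-transitions can occur.

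The core step is to check that the family $p_i^m:=\Pi_m(\langle X_i|E\rangle)$ is a solution of $E'$. For $m=0$ this is exactly axiom $PR5$. For the successor case, I would apply $RDP$ to get $\Pi_{m+1}(\langle X_i|E\rangle)=\Pi_{m+1}(t_i(\langle X_1|E\rangle,\dots,\langle X_N|E\rangle))$, push $\Pi_{m+1}$ through the sum with $PR1$, and on each summand use $PR3$ for a terminating summand and $PR4$ for a prefixed summand $(a\leftmerge\cdots)\langle X_j|E\rangle$, rewriting it to $(a\leftmerge\cdots)\cdot\Pi_m(\langle X_j|E\rangle)$; this is precisely the right-hand side of the $Y_i^{m+1}$ equation evaluated at the $p^\bullet$. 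Since the $p_i^m$ solve the guarded specification $E'$, $RSP$ gives $\Pi_m(\langle X_i|E\rangle)=\langle Y_i^m|E'\rangle$ for all $i,m$; in particular $t\equiv\Pi_n(\langle X_1|E\rangle)=\langle Y_1^n|E'\rangle$ with $E'$ linear, completing the induction.

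I expect the main obstacle to be purely organizational: carrying the double index $(i,m)$ and the prefixed summands through the summand-by-summand verification that $p_i^m$ solves $E'$, using $PR1$--$PR7$ together with $RDP$, and confirming guardedness of $E'$ so that $RSP$ applies. Everything outside this case reduces either to Theorem~\ref{ETRecursion} or to a single application of $RSP$.
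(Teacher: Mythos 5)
Your proposal is correct, and it rests on the same underlying mechanism as the paper's proof---exhibit the given term as a solution of a linear recursive specification and invoke $RSP$---but it is organized quite differently in the one place that matters. The paper essentially repeats the proof of Theorem~\ref{ETRecursion} verbatim: a single structural induction is claimed to bring every term, projection operators included, into the guarded head-normal form from which the linear specification $E$ is read off, and the only acknowledgement of the new operator is the closing remark that $E$ contains no occurrence of $\Pi_n$; how a term of the shape $\Pi_n(s)$ acquires that form is left implicit. You instead reduce every $\Pi$-free case to Theorem~\ref{ETRecursion} and isolate $\Pi_n(\langle X_1|E\rangle)$ as the only genuinely new case, which you settle with the level-indexed specification $E'$ over variables $Y_i^m$, checking via $RDP$ together with $PR1$ and $PR3$--$PR5$ that the family $\Pi_m(\langle X_i|E\rangle)$ solves $E'$, and then applying $RSP$. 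What your route buys is an explicit justification of exactly the step the paper glosses over: the superscript $m$ strictly decreases, so only finitely many new variables arise, $E'$ is guarded, and the projection operator disappears from the specification; the cost is the bookkeeping of the double index $(i,m)$. One detail worth stating when you write it up: the equations $Y_i^0=0$ are legitimate because Definition~\ref{LRS} allows the empty sum to stand for the deadlock $0$, so $E'$ is indeed a linear recursive specification and $RSP$ applies.
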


\begin{proof}
By applying structural induction with respect to term size, each process term $t_1$ in APTC with linear recursion and projection operator $\Pi_n$ generates a process can be expressed in the form of equations

$$t_i=(a_{i11}\leftmerge\cdots\leftmerge a_{i1i_1})t_{i1}+\cdots+(a_{ik_i1}\leftmerge\cdots\leftmerge a_{ik_ii_k})t_{ik_i}+(b_{i11}\leftmerge\cdots\leftmerge b_{i1i_1})+\cdots+(b_{il_i1}\leftmerge\cdots\leftmerge b_{il_ii_l})$$

for $i\in\{1,\cdots,n\}$. Let the linear recursive specification $E$ consist of the recursive equations

$$X_i=(a_{i11}\leftmerge\cdots\leftmerge a_{i1i_1})X_{i1}+\cdots+(a_{ik_i1}\leftmerge\cdots\leftmerge a_{ik_ii_k})X_{ik_i}+(b_{i11}\leftmerge\cdots\leftmerge b_{i1i_1})+\cdots+(b_{il_i1}\leftmerge\cdots\leftmerge b_{il_ii_l})$$

for $i\in\{1,\cdots,n\}$. Replacing $X_i$ by $t_i$ for $i\in\{1,\cdots,n\}$ is a solution for $E$, $RSP$ yields $t_1=\langle X_1|E\rangle$.

That is, in $E$, there is not the occurrence of projection operator $\Pi_n$.
\end{proof}

\begin{theorem}[Soundness of APTC with projection operator and guarded recursion]\label{SAPTCRP}
Let $x$ and $y$ be APTC with projection operator and guarded recursion terms. If APTC with projection operator and guarded recursion $\vdash x=y$, then
\begin{enumerate}
  \item $x\sim_{p} y$;
  \item $x\sim_{s} y$;
  \item $x\sim_{hp} y$;
  \item $x\sim_{hhp} y$.
\end{enumerate}
\end{theorem}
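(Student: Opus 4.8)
The plan is to proceed exactly as in the proofs of Theorems \ref{SBATC}, \ref{SAPTC}, \ref{SAPTC2} and \ref{SAPTCR}: reduce the statement to the verification that \emph{each individual axiom} of APTC with projection operator and guarded recursion is sound for $\sim_p$, $\sim_s$, $\sim_{hp}$ and $\sim_{hhp}$. This reduction is legitimate because each of these four equivalences is both an equivalence relation and, by the Congruence theorem of APTC with guarded recursion and the Congruence theorem of projection operator $\Pi_n$ proved above, a congruence for the full signature; soundness of the generating axioms then propagates along reflexivity, symmetry, transitivity and arbitrary contexts to every derivable equation $x=y$. So it suffices to treat the axioms one family at a time.

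For the axioms inherited from APTC with guarded recursion (Tables \ref{AxiomsForBATC}, \ref{AxiomsForParallelism}, \ref{AxiomsForEncapsulation}, \ref{RDPRSP}), note that the witnessing (posetal) bisimulations constructed in the proofs of Theorems \ref{SAPTC2} and \ref{SAPTCR} are schematic in the processes substituted for the variables — they only inspect the transitions of those processes — and hence remain bisimulations when those processes happen to contain $\Pi_n$; so those cases need no new argument. The genuinely new work is the soundness of $PR1$--$PR7$, which I would establish by exhibiting explicit witnessing relations read off from the transition rules of Table \ref{TRForProjection}, with the induction measure being the projection index $n$ (lexicographically with term size). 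For the step/pomset cases: $PR5$ is witnessed by $\{(\Pi_0(p),0)\mid p\textrm{ closed}\}$, neither side having any outgoing transition; $PR6$ and $PR7$ are immediate (with the index conventions of Table \ref{AxiomsForProjection} read consistently, $PR7$ for $n\geq 1$); $PR3$ and $PR4$ are read off directly, $\Pi_{n+1}$ not blocking the first (single or simultaneous) step and decrementing the index just as the right-hand side prescribes; and $PR1$, $PR2$ follow by checking that $\Pi_n$ commutes with the branching of $+$ and with the left-merge step, pairing $\Pi_n(p+q)$ with $\Pi_n(p)+\Pi_n(q)$ (resp.\ $\Pi_n(p\leftmerge q)$ with $\Pi_n(p)\leftmerge\Pi_n(q)$) and, inductively, their residuals. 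For $\sim_{hp}$ and $\sim_{hhp}$ the same relations are lifted to (weakly) posetal relations: since $\Pi_n$ neither relabels events nor alters the causal order among the events it retains, an order-isomorphism $f$ produced on the left side is verbatim an order-isomorphism on the right, and downward closure survives because truncation at depth $n$ commutes with passing to sub-configurations.

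The main obstacle, and the only point where the bookkeeping is not purely mechanical, is $PR2$ together with the residual of a left-merge step: the rule $x\leftmerge y\xrightarrow{\{a_1,a_2\}}x'\between y'$ forces one to compare $\Pi_{n-1}(x'\between y')$ with $\Pi_{n-1}(x')\between\Pi_{n-1}(y')$, and there is no $PR$-axiom for $\between$. The clean way around this is to invoke the Elimination theorem (Theorem \ref{ETParallelism}): every closed residual is provably equal to a basic APTC term, on which $\between$ and $\mid$ do not occur and $PR1$--$PR7$ already apply; combined with the (available) soundness of $P1$, $P2$ and the $C$-axioms this yields $\Pi_{n-1}(x'\between y')\sim\Pi_{n-1}(x')\between\Pi_{n-1}(y')$, closing the induction. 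Apart from this point the argument is routine depth-counting, and — exactly as the excerpt does for the companion soundness theorems — the detailed case analysis can be omitted once the witnessing relations and the induction measure have been fixed.
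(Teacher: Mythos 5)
Your proposal is correct and follows essentially the same route as the paper: appeal to the congruence (and equivalence) properties of $\sim_p$, $\sim_s$, $\sim_{hp}$, $\sim_{hhp}$ to reduce soundness of every derivable equation to an axiom-by-axiom check of Table \ref{AxiomsForProjection}, which the paper then declares trivial and omits. Your additional detail — explicit witnessing relations for $PR1$--$PR7$ and the treatment of the $\between$-residual of a left-merge step via the elimination theorem — simply fills in the verification the paper leaves out, and does so soundly.
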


\begin{proof}
Since truly concurrent bisimulation equivalences, including pomset bisimulation $\sim_{p}$, step bisimulation $\sim_s$, hp-bisimulation $\sim_{hp}$ and hhp-bisimulation $\sim_{hhp}$, are all both equivalent and congruent relations with respect to APTC with projection operator and guarded recursion, we only need to check if each axiom in Table \ref{AxiomsForProjection} is sound modulo truly concurrent bisimulation equivalences. The proof is trivial and we omit it.
\end{proof}

Then $AIP$ is given in Table \ref{AIP}.

\begin{center}
    \begin{table}
        \begin{tabular}{@{}ll@{}}
            \hline No. &Axiom\\
            $AIP$ & if $\Pi_n(x)=\Pi_n(y)$ for $n\in\mathbb{N}$, then $x=y$\\
        \end{tabular}
        \caption{$AIP$}
        \label{AIP}
    \end{table}
\end{center}

\begin{theorem}[Soundness of $AIP$]\label{SAIP}
Let $x$ and $y$ be APTC with projection operator and guarded recursion terms.

\begin{enumerate}
  \item If $\Pi_n(x)\sim_s\Pi_n(y)$ for $n\in\mathbb{N}$, then $x\sim_s y$;
  \item If $\Pi_n(x)\sim_p\Pi_n(y)$ for $n\in\mathbb{N}$, then $x\sim_p y$;
  \item If $\Pi_n(x)\sim_{hp}\Pi_n(y)$ for $n\in\mathbb{N}$, then $x\sim_{hp} y$;
  \item If $\Pi_n(x)\sim_{hhp}\Pi_n(y)$ for $n\in\mathbb{N}$, then $x\sim_{hhp} y$
\end{enumerate}
\end{theorem}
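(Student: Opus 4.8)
The plan is to prove each clause directly: assuming $\Pi_n(x)\sim\Pi_n(y)$ for every $n\in\mathbb{N}$ (with $\sim$ one of $\sim_s,\sim_p,\sim_{hp},\sim_{hhp}$), I would construct an explicit witnessing bisimulation relating $x$ and $y$. First I would record the elementary facts about $\Pi_n$ coming from Table~\ref{TRForProjection}: $p\xrightarrow{X}p'$ implies $\Pi_{n+1}(p)\xrightarrow{X}\Pi_n(p')$ and $p\xrightarrow{X}\surd$ implies $\Pi_{n+1}(p)\xrightarrow{X}\surd$, and conversely every transition of $\Pi_{n+1}(p)$ has one of these two shapes, while $\Pi_0(p)$ has no transitions. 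Combined with the congruence theorem for $\Pi_n$ already proved, this gives the monotonicity property I will need: if $\Pi_n(p)\sim\Pi_n(q)$ then $\Pi_m(p)\sim\Pi_m(q)$ for all $m\le n$ (since $\Pi_m(p)\sim\Pi_m(\Pi_n(p))\sim\Pi_m(\Pi_n(q))\sim\Pi_m(q)$), so the hypothesis is equivalent to its restriction to arbitrarily large $n$. I would also make explicit, and use essentially, that every process generated by APTC with guarded recursion is finitely branching: guardedness rules out infinite $1$-transition sequences, and each APTC operator preserves finite branching.

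For clause~1 (step bisimulation), I would set $R=\{(p,q): \Pi_n(p)\sim_s\Pi_n(q)\text{ for all }n\in\mathbb{N}\}$, restricted to processes reachable from $x$ and $y$; by hypothesis $(x,y)\in R$. To check $R$ is a step bisimulation, suppose $(p,q)\in R$ and $p\xrightarrow{X}p'$ is a step. For each $n$, $\Pi_{n+1}(p)\xrightarrow{X}\Pi_n(p')$, so from $\Pi_{n+1}(p)\sim_s\Pi_{n+1}(q)$ there is a step $\Pi_{n+1}(q)\xrightarrow{X}r_n$ with $\Pi_n(p')\sim_s r_n$, and by the shape of the projection rules $r_n=\Pi_n(q'_n)$ for some $q'_n$ with $q\xrightarrow{X}q'_n$. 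Since $q$ has only finitely many $X$-successors, by the pigeonhole principle one of them, say $q'$, equals $q'_n$ for infinitely many $n$; then $\Pi_n(p')\sim_s\Pi_n(q')$ holds for infinitely many $n$, hence for all $n$ by the monotonicity property, i.e. $(p',q')\in R$. The terminating-transition case is easier: $p\xrightarrow{X}\surd$ gives $\Pi_1(p)\xrightarrow{X}\surd$, hence $\Pi_1(q)\xrightarrow{X}\surd$, hence $q\xrightarrow{X}\surd$. The symmetric clauses are identical with $p$ and $q$ swapped. So $R$ is a step bisimulation containing $(x,y)$ and $x\sim_s y$. Clause~2 (pomset) is word-for-word the same with pomset transitions in place of steps.

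For clauses~3 and~4 I would take the posetal relation $R$ consisting of triples $(C_1,f,C_2)$ such that, for every $n$, the corresponding configurations of $\Pi_n(x)$ and $\Pi_n(y)$ are related by a hp-bisimulation via the appropriate restriction of $f$. The single-action transitions of hp-bisimulation make the bookkeeping lighter than the step case, but the core move is the same: for each depth $n$ one obtains a matching $a_2$-successor and an extension $f[a_1\mapsto a_2]$, and one must diagonalize over the finitely many $a_2$-successors (and the finitely many compatible extensions of $f$) to get a single pair valid for all $n$; then $(C_1',f[a_1\mapsto a_2],C_2')\in R$. For hhp one additionally verifies that downward closure of $R$ follows from downward closure of the depth-$n$ relations. \emph{The main obstacle} is precisely this uniformization step: a priori the matching successor and order-isomorphism can differ from depth to depth, and only finite branching of APTC-with-guarded-recursion processes permits the pigeonhole/diagonalization argument that produces one choice working at all $n$ — so the proof must invoke the finitely-branching property explicitly, rather than treating $AIP$ as a purely syntactic manipulation.
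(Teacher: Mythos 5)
Your proposal is correct and follows essentially the same route as the paper: you define the same relation $R=\{(p,q):\Pi_n(p)\sim\Pi_n(q)\text{ for all }n\}$, handle termination via $\Pi_1$, and use finite branching to extract a single matching successor valid at every depth. The paper does this last step via a decreasing chain of non-empty finite sets $S_n$ with non-empty intersection, which is interchangeable with your pigeonhole-plus-monotonicity argument, and like you it settles the step case in detail and treats the pomset/hp/hhp cases as analogous.
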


\begin{proof}
(1) If $\Pi_n(x)\sim_s\Pi_n(y)$ for $n\in\mathbb{N}$, then $x\sim_s y$.

Since step bisimulation $\sim_{s}$ is both an equivalent and a congruent relation with respect to APTC with guarded recursion and projection operator, we only need to check if $AIP$ in Table \ref{AIP} is sound modulo step bisimulation equivalence.

Let $p,p_0$ and $q,q_0$ be closed APTC with projection operator and guarded recursion terms such that $\Pi_n(p_0)\sim_s\Pi_n(q_0)$ for $n\in\mathbb{N}$. We define a relation $R$ such that $p R q$ iff $\Pi_n(p)\sim_s \Pi_n(q)$. Obviously, $p_0 R q_0$, next, we prove that $R\in\sim_s$.

Let $p R q$ and $p\xrightarrow{\{a_1,\cdots,a_k\}}\surd$, then $\Pi_1(p)\xrightarrow{\{a_1,\cdots,a_k\}}\surd$, $\Pi_1(p)\sim_s\Pi_1(q)$ yields $\Pi_1(q)\xrightarrow{\{a_1,\cdots,a_k\}}\surd$. Similarly, $q\xrightarrow{\{a_1,\cdots,a_k\}}\surd$ implies $p\xrightarrow{\{a_1,\cdots,a_k\}}\surd$.

Let $p R q$ and $p\xrightarrow{\{a_1,\cdots,a_k\}}p'$. We define the set of process terms

$$S_n\triangleq\{q'|q\xrightarrow{\{a_1,\cdots,a_k\}}q'\textrm{ and }\Pi_n(p')\sim_s\Pi_n(q')\}$$

\begin{enumerate}
  \item Since $\Pi_{n+1}(p)\sim_s\Pi_{n+1}(q)$ and $\Pi_{n+1}(p)\xrightarrow{\{a_1,\cdots,a_k\}}\Pi_n(p')$, there exist $q'$ such that $\Pi_{n+1}(q)\xrightarrow{\{a_1,\cdots,a_k\}}\Pi_n(q')$ and $\Pi_{n}(p')\sim_s\Pi_{n}(q')$. So, $S_n$ is not empty.
  \item There are only finitely many $q'$ such that $q\xrightarrow{\{a_1,\cdots,a_k\}}q'$, so, $S_n$ is finite.
  \item $\Pi_{n+1}(p)\sim_s\Pi_{n+1}(q)$ implies $\Pi_{n}(p')\sim_s\Pi_{n}(q')$, so $S_n\supseteq S_{n+1}$.
\end{enumerate}

So, $S_n$ has a non-empty intersection, and let $q'$ be in this intersection, then $q\xrightarrow{\{a_1,\cdots,a_k\}}q'$ and $\Pi_n(p')\sim_s\Pi_n(q')$, so $p' R q'$. Similarly, let $p\mathcal{q}q$, we can obtain $q\xrightarrow{\{a_1,\cdots,a_k\}}q'$ implies $p\xrightarrow{\{a_1,\cdots,a_k\}}p'$ such that $p' R q'$.

Finally, $R\in\sim_s$ and $p_0\sim_s q_0$, as desired.

(2) If $\Pi_n(x)\sim_p\Pi_n(y)$ for $n\in\mathbb{N}$, then $x\sim_p y$.

Since pomset bisimulation $\sim_{p}$ is both an equivalent and a congruent relation with respect to APTC with guarded recursion and projection operator, we only need to check if $AIP$ in Table \ref{AIP} is sound modulo pomset bisimulation equivalence.
The proof is trivial and we omit it.

(3) If $\Pi_n(x)\sim_{hp}\Pi_n(y)$ for $n\in\mathbb{N}$, then $x\sim_{hp} y$.

Since hp-bisimulation $\sim_{hp}$ is both an equivalent and a congruent relation with respect to APTC with guarded recursion and projection operator, we only need to check if $AIP$ in Table \ref{AIP} is sound modulo hp-bisimulation equivalence.
The proof is trivial and we omit it.

(4) If $\Pi_n(x)\sim_{hhp}\Pi_n(y)$ for $n\in\mathbb{N}$, then $x\sim_{hhp} y$.

Since hhp-bisimulation $\sim_{hhp}$ is both an equivalent and a congruent relation with respect to APTC with guarded recursion and projection operator, we only need to check if $AIP$ in Table \ref{AIP} is sound modulo hhp-bisimulation equivalence.
The proof is trivial and we omit it.
\end{proof}

\begin{theorem}[Completeness of $AIP$]\label{CAIP}
Let $p$ and $q$ be closed APTC with linear recursion and projection operator terms, then,
\begin{enumerate}
  \item if $p\sim_{s} q$ then $\Pi_n(p)=\Pi_n(q)$;
  \item if $p\sim_{p} q$ then $\Pi_n(p)=\Pi_n(q)$;
  \item if $p\sim_{hp} q$ then $\Pi_n(p)=\Pi_n(q)$;
  \item if $p\sim_{hhp} q$ then $\Pi_n(p)=\Pi_n(q)$.
\end{enumerate}
\end{theorem}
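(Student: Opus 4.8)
The plan is to reduce the claim to the completeness of the finite ($\Pi$-free and recursion-free) fragment of APTC. First I would invoke the elimination theorem for APTC with linear recursion and projection operator (Theorem \ref{ETProjection}): every closed term of this system is provably equal to some $\langle X_1|E\rangle$ with $E$ a linear recursive specification in which $\Pi_n$ does not occur, so it suffices to treat $p=\langle X_1|E_1\rangle$ and $q=\langle Y_1|E_2\rangle$. The heart of the argument is the following lemma: for every linear recursive specification $E$, every recursion variable $X$ of $E$, and every $n\in\mathbb{N}$, the term $\Pi_n(\langle X|E\rangle)$ is provably equal to a closed basic APTC term (Definition \ref{BTAPTC}).

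I would prove the lemma by induction on $n$. For $n=0$, axiom $PR5$ gives $\Pi_0(\langle X|E\rangle)=0$, which is basic. For the step $n\mapsto n+1$, apply $RDP$ to rewrite $\langle X|E\rangle$ as $t_X(\langle X_1|E\rangle,\dots,\langle X_m|E\rangle)$, which by linearity of $E$ is a (possibly empty) sum of summands of the form $(a_1\leftmerge\cdots\leftmerge a_i)\langle X_j|E\rangle$ and $(b_1\leftmerge\cdots\leftmerge b_k)$; distributing $\Pi_{n+1}$ over $+$ by $PR1$ and using $PR6$ for the empty-sum/deadlock case, $PR3$ on the termination summands, and $PR4$ on the remaining summands turns $\Pi_{n+1}(\langle X|E\rangle)$ into a sum of summands $b_1\leftmerge\cdots\leftmerge b_k$ and $(a_1\leftmerge\cdots\leftmerge a_i)\cdot\Pi_n(\langle X_j|E\rangle)$. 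Each $\Pi_n(\langle X_j|E\rangle)$ is provably equal to a closed basic APTC term by the induction hypothesis, and the closure clauses of Definition \ref{BTAPTC} then make the whole sum provably equal to a closed basic APTC term, which proves the lemma.

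Given the lemma, the four cases are finished uniformly. Take the $\sim_s$ case: since $p\sim_s q$ and, by the congruence theorem for the projection operator $\Pi_n$ proved above, $\sim_s$ is a congruence for $\Pi_n$, we get $\Pi_n(p)\sim_s\Pi_n(q)$. By the lemma there are closed basic APTC terms $p'$ and $q'$ with $\vdash\Pi_n(p)=p'$ and $\vdash\Pi_n(q)=q'$; soundness of APTC with projection operator and guarded recursion (Theorem \ref{SAPTCRP}) gives $p'\sim_s\Pi_n(p)\sim_s\Pi_n(q)\sim_s q'$, and completeness of APTC modulo $\sim_s$ for closed terms (Theorem \ref{CAPTC2}) yields $p'=q'$, hence $\Pi_n(p)=p'=q'=\Pi_n(q)$. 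The cases of $\sim_p$, $\sim_{hp}$ and $\sim_{hhp}$ are verbatim the same, invoking the corresponding instances of the congruence, soundness and completeness theorems.

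The main obstacle is the lemma, and more precisely convincing oneself that the combined use of $RDP$-unfolding followed by projection really terminates in a finite basic term: the key observation is that $\Pi_{n+1}$ strips off exactly one layer of action prefixes and places a $\Pi_n$ under each surviving prefix, so linearity of $E$ (each summand being a single parallel-product prefix followed by at most one recursion variable) guarantees that the induction descends cleanly in $n$ and that every intermediate term stays finite. A secondary, purely bookkeeping point is the treatment of the empty sum (deadlock $0$) and of the termination summands $b_1\leftmerge\cdots\leftmerge b_k$ via $PR6$, $PR7$ and $PR3$, which is routine once the inductive scheme is in place.
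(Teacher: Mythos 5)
Your proposal follows essentially the same route as the paper's proof: eliminate to $\langle X_1|E\rangle$ with $E$ linear, use congruence of $\sim_s$ ($\sim_p$, $\sim_{hp}$, $\sim_{hhp}$) for $\Pi_n$ to get $\Pi_n(p)\sim\Pi_n(q)$, pass to basic terms, and conclude by soundness plus the completeness of APTC for closed terms. The only difference is that you explicitly prove, by induction on $n$ using $RDP$ and $PR1$--$PR7$, the step the paper merely asserts (that $\Pi_n(p)$ and $\Pi_n(q)$ are provably equal to closed basic APTC terms), which is a welcome addition rather than a deviation.
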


\begin{proof}
Firstly, by the elimination theorem of APTC with guarded recursion and projection operator (see Theorem \ref{ETProjection}), we know that each process term in APTC with linear recursion and projection operator is equal to a process term $\langle X_1|E\rangle$ with $E$ a linear recursive specification:

$$X_i=(a_{i11}\leftmerge\cdots\leftmerge a_{i1i_1})X_{i1}+\cdots+(a_{ik_i1}\leftmerge\cdots\leftmerge a_{ik_ii_k})X_{ik_i}+(b_{i11}\leftmerge\cdots\leftmerge b_{i1i_1})+\cdots+(b_{il_i1}\leftmerge\cdots\leftmerge b_{il_ii_l})$$

for $i\in\{1,\cdots,n\}$.

It remains to prove the following cases.

(1) if $p\sim_{s} q$ then $\Pi_n(p)=\Pi_n(q)$.

Let $p\sim_s q$, and fix an $n\in\mathbb{N}$, there are $p',q'$ in basic APTC terms such that $p'=\Pi_n(p)$ and $q'=\Pi_n(q)$. Since $\sim_s$ is a congruence with respect to APTC, if $p\sim_s q$ then $\Pi_n(p)\sim_s\Pi_n(q)$. The soundness theorem yields $p'\sim_s\Pi_n(p)\sim_s\Pi_n(q)\sim_s q'$. Finally, the completeness of APTC modulo $\sim_s$ (see Theorem \ref{SAPTC2}) ensures $p'=q'$, and $\Pi_n(p)=p'=q'=\Pi_n(q)$, as desired.

(2) if $p\sim_{p} q$ then $\Pi_n(p)=\Pi_n(q)$.

Let $p\sim_p q$, and fix an $n\in\mathbb{N}$, there are $p',q'$ in basic APTC terms such that $p'=\Pi_n(p)$ and $q'=\Pi_n(q)$. Since $\sim_p$ is a congruence with respect to APTC, if $p\sim_p q$ then $\Pi_n(p)\sim_p\Pi_n(q)$. The soundness theorem yields $p'\sim_p\Pi_n(p)\sim_p\Pi_n(q)\sim_p q'$. Finally, the completeness of APTC modulo $\sim_p$ (see Theorem \ref{SAPTC2}) ensures $p'=q'$, and $\Pi_n(p)=p'=q'=\Pi_n(q)$, as desired.

(3) if $p\sim_{hp} q$ then $\Pi_n(p)=\Pi_n(q)$.

Let $p\sim_{hp} q$, and fix an $n\in\mathbb{N}$, there are $p',q'$ in basic APTC terms such that $p'=\Pi_n(p)$ and $q'=\Pi_n(q)$. Since $\sim_{hp}$ is a congruence with respect to APTC, if $p\sim_{hp} q$ then $\Pi_n(p)\sim_{hp}\Pi_n(q)$. The soundness theorem yields $p'\sim_{hp}\Pi_n(p)\sim_{hp}\Pi_n(q)\sim_{hp} q'$. Finally, the completeness of APTC modulo $\sim_{hp}$ (see Theorem \ref{SAPTC2}) ensures $p'=q'$, and $\Pi_n(p)=p'=q'=\Pi_n(q)$, as desired.

(3) if $p\sim_{hhp} q$ then $\Pi_n(p)=\Pi_n(q)$.

Let $p\sim_{hhp} q$, and fix an $n\in\mathbb{N}$, there are $p',q'$ in basic APTC terms such that $p'=\Pi_n(p)$ and $q'=\Pi_n(q)$. Since $\sim_{hhp}$ is a congruence with respect to APTC, if $p\sim_{hhp} q$ then $\Pi_n(p)\sim_{hhp}\Pi_n(q)$. The soundness theorem yields $p'\sim_{hhp}\Pi_n(p)\sim_{hhp}\Pi_n(q)\sim_{hhp} q'$. Finally, the completeness of APTC modulo $\sim_{hhp}$ (see Theorem \ref{SAPTC2}) ensures $p'=q'$, and $\Pi_n(p)=p'=q'=\Pi_n(q)$, as desired.
\end{proof}

\subsubsection{Abstraction}

To abstract away from the internal implementations of a program, and verify that the program exhibits the desired external behaviors, the silent step $\tau$ and abstraction operator $\tau_I$ are introduced, where $I\subseteq \mathbb{E}$ denotes the internal events. The silent step $\tau$ represents the internal events, when we consider the external behaviors of a process, $\tau$ events can be removed, that is, $\tau$ events must keep silent. The transition rule of $\tau$ is shown in Table \ref{TRForTau}. In the following, let the atomic event $a$ range over $\mathbb{E}\cup\{0\}\cup\{1\}\cup\{\tau\}$, and let the communication function $\gamma:\mathbb{E}\cup\{\tau\}\times \mathbb{E}\cup\{\tau\}\rightarrow \mathbb{E}\cup\{0\}$, with each communication involved $\tau$ resulting into $0$.

\begin{center}
    \begin{table}
        $$\frac{}{\tau\xrightarrow{\tau}\surd}$$
        \caption{Transition rule of the silent step}
        \label{TRForTau}
    \end{table}
\end{center}

The silent step $\tau$ as an atomic event, is introduced into $E$. Considering the recursive specification $X=\tau X$, $\tau s$, $\tau\tau s$, and $\tau\cdots s$ are all its solutions, that is, the solutions make the existence of $\tau$-loops which cause unfairness. To prevent $\tau$-loops, we extend the definition of linear recursive specification (Definition \ref{LRS}) to the guarded one.

\begin{definition}[Guarded linear recursive specification]\label{GLRS}
A recursive specification is linear if its recursive equations are of the form

$$(a_{11}\leftmerge\cdots\leftmerge a_{1i_1})X_1+\cdots+(a_{k1}\leftmerge\cdots\leftmerge a_{ki_k})X_k+(b_{11}\leftmerge\cdots\leftmerge b_{1j_1})+\cdots+(b_{1j_1}\leftmerge\cdots\leftmerge b_{lj_l})$$

where $a_{11},\cdots,a_{1i_1},a_{k1},\cdots,a_{ki_k},b_{11},\cdots,b_{1j_1},b_{1j_1},\cdots,b_{lj_l}\in \mathbb{E}\cup\{\tau\}$, and the sum above is allowed to be empty, in which case it represents the deadlock $0$.

A linear recursive specification $E$ is guarded if there does not exist an infinite sequence of $1$-transitions $\langle X|E\rangle\rightarrow\langle X'|E\rangle\rightarrow\langle X''|E\rangle\rightarrow\cdots$ and $\tau$-transitions $\langle X|E\rangle\xrightarrow{\tau}\langle X'|E\rangle\xrightarrow{\tau}\langle X''|E\rangle\xrightarrow{\tau}\cdots$.
\end{definition}

\begin{theorem}[Conservativity of APTC with silent step and guarded linear recursion]
APTC with silent step and guarded linear recursion is a conservative extension of APTC with linear recursion.
\end{theorem}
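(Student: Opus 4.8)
The plan is to apply the conservative extension theorem of Section~\ref{os}, in the same pattern used for the preceding conservativity results of this chapter. Let $T_0$ be the TSS of APTC with linear recursion over a signature $\Sigma_0$, and let $T_1$ be the extension that adjoins the fresh constant $\tau$, the fresh transition label $\tau$, the rule $\frac{}{\tau\xrightarrow{\tau}\surd}$ of Table~\ref{TRForTau}, and the re-definition of the communication function $\gamma$ so that every pair one of whose arguments is $\tau$ yields $0$. The target is to show that $T_0\oplus T_1$ generates exactly the same transitions $t\xrightarrow{a}t'$ and $tP$ as $T_0$ for $t\in\mathcal{T}(\Sigma_0)$.

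First I would dispatch the hypotheses of the conservative extension theorem. Source-dependency of $T_0$ is inherited from the earlier conservativity arguments: the rules for $\cdot,+,\parallel,\leftmerge,\mid,\Theta,\triangleleft,\partial_H$ are source-dependent, and the recursion rules of Table~\ref{TRForGR} have a source that is the single fresh constant $\langle X_i|E\rangle$, so all their (nonexistent) variables are vacuously source-dependent. For positivity after reduction, $T_0$ is assumed positive after reduction (as in the earlier APTC conservativity theorems), and since the one new rule $\frac{}{\tau\xrightarrow{\tau}\surd}$ has no premises it cannot disturb a stratification — extend any stratification of $T_0$ by giving the conclusion $\tau\xrightarrow{\tau}\surd$ weight $0$ — so $T_0\oplus T_1$ is positive after reduction by the Positivity after reduction theorem.

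Next I would check the second clause of the conservative extension theorem for each rule of $T_1$. The only such rule is $\frac{}{\tau\xrightarrow{\tau}\surd}$, whose source $\tau$ is fresh, being the function symbol $\tau\in\Sigma_1\setminus\Sigma_0$; so the ``source is fresh'' alternative holds trivially. The re-definition of $\gamma$ adds no new rule: it only alters the label and target produced by the communication-merge rules of Table~\ref{TRForParallel}, and those can involve a $\tau$-communication only if one of the premises $x\xrightarrow{a_i}\dots$ is instantiated with $a_i\equiv\tau$, which is impossible for $x\in\mathcal{T}(\Sigma_0)$ since no closed $\Sigma_0$-term has a $\tau$-labelled transition. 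Finally, the passage from linear to guarded linear recursive specifications (Definition~\ref{GLRS}) merely restricts which specifications $E$ are admissible; it does not modify any transition rule, hence contributes no new transitions on $\Sigma_0$-terms.

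With both clauses verified, the conservative extension theorem gives the claim. The only place where more than bookkeeping is needed --- the would-be main obstacle --- is the observation that redefining $\gamma$ on arguments involving $\tau$ cannot generate fresh transitions between old terms; but this reduces immediately to source-dependency of $T_0$ together with the fact that $\tau$ is an unreachable label for closed $\Sigma_0$-terms, so the whole argument remains routine.
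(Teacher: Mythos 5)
Your proposal is correct and follows essentially the same route as the paper: verify source-dependency of the transition rules of APTC with linear recursion and observe that the only new rule, $\frac{}{\tau\xrightarrow{\tau}\surd}$, has the fresh constant $\tau$ as its source, then invoke the conservative extension theorem. The extra bookkeeping you add (stratification, the redefinition of $\gamma$, the restriction to guarded specifications) is sound but goes beyond what the paper's own one-line argument records.
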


\begin{proof}
Since the transition rules of APTC with linear recursion are source-dependent, and the transition rules for silent step in Table \ref{TRForTau} contain only a fresh constant $\tau$ in their source, so the transition rules of APTC with silent step and guarded linear recursion is a conservative extension of those of APTC with linear recursion.
\end{proof}

\begin{theorem}[Congruence theorem of APTC with silent step and guarded linear recursion]
Rooted branching truly concurrent bisimulation equivalences $\approx_{rbp}$, $\approx_{rbs}$, $\approx_{rbhp}$ and $\approx_{rbhhp}$ are all congruences with respect to APTC with silent step and guarded linear recursion.
\end{theorem}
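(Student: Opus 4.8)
The plan is to derive this in exactly the way the strong congruence theorems above were derived from Theorem~\ref{tcbac}: as an instance of the rooted-branching format theorem, Theorem~\ref{rbtcbac}. So the whole task reduces to verifying that the transition system specification $T$ of $APTC$ with silent step and guarded linear recursion is \emph{positive after reduction} and in \emph{RBB cool format}. Here $T$ is the union of the rules for $1$, atomic events and $\cdot,+$ (Table~\ref{TRForBATC}), for $\parallel,\leftmerge,\mid,\Theta,\triangleleft$ (Table~\ref{TRForParallel}), for $\partial_H$ (Table~\ref{TRForEncapsulation}), for the projection operators $\Pi_n$ (Table~\ref{TRForProjection}), for guarded recursion (Table~\ref{TRForGR}, now with $E$ ranging over guarded linear recursive specifications), together with the rule $\tau\xrightarrow{\tau}\surd$ for the silent step (Table~\ref{TRForTau}); the signature consists of the constants $0,1$, $\mathbb{E}\cup\{\tau\}$ and the recursion constants $\langle X|E\rangle$, and the operators $\cdot,+,\between,\parallel,\leftmerge,\mid,\Theta,\triangleleft,\partial_H,\Pi_n$.

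\emph{Positive after reduction.} The only rules of $T$ carrying negative premises are the rules for the unless operator $\triangleleft$ (premises of the form ``$y$ cannot perform $a$''); this is exactly the situation for pure $APTC$, for which positivity after reduction was already invoked above, and neither the silent-step axiom nor the guarded-recursion rules add any new negative premises. I would make this precise by exhibiting a stratification $\phi$: assign to each closed transition a weight strictly dominating the finite nesting depth of $\triangleleft$ in its source — each negative premise of a $\triangleleft$-rule points at a strict subterm, so its weight drops — while every positive premise of every rule has a source that is a subterm of, or no more complex than, the rule's source. The theorem that a TSS admitting a stratification is positive after reduction then applies.

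\emph{RBB cool format.} Panth format for the underlying rules was already established, so it remains to check (i) absence of lookahead and (ii) the patience-rule condition. For (i): in every rule of $T$ the variables occurring at the left of a premise are precisely the source variables, and no variable occurring at the right of a premise recurs at the left of another premise, so no rule has lookahead. For (ii): one inspects which operators occur at the right-hand side of a conclusion — these are $\cdot$ (in $x'\cdot y$), $\between$ (in $x'\between y'$), $\Theta$ (in $\Theta(x')$), $\partial_H$ (in $\partial_H(x')$) and $\Pi_n$ (in $\Pi_n(x')$); the operators $+,\parallel,\leftmerge,\mid,\triangleleft$ never do, hence carry no obligation. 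For each of $\cdot,\between,\Theta,\partial_H,\Pi_n$ I would verify that in every non-patience rule with that operator as source each argument variable occurs in at most one premise of the active form ``$x_i P$'' or ``$x_i\xrightarrow{\{a_1,\cdots,a_n\}}y$ with all $a_j\neq\tau$'', and that the required $\tau$-patience rule for the active argument is already present as the singleton-$\tau$ instance of the corresponding schematic rule (for instance $\frac{x\xrightarrow{\tau}x'}{x\cdot y\xrightarrow{\tau}x'\cdot y}$, and analogously for $\Theta,\partial_H,\Pi_n$, and for both arguments of $\between$). With (i) and (ii) in place, Theorem~\ref{rbtcbac} yields that $\approx_{rbp}$, $\approx_{rbs}$, $\approx_{rbhp}$ and $\approx_{rbhhp}$ are all congruences for $APTC$ with silent step and guarded linear recursion.

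\emph{Main obstacle.} The delicate part is item (ii). Since in a truly concurrent calculus $\parallel$ and $\between$ are step operators rather than interleaving ones, the patience-rule bookkeeping for $\between$ — which occurs at the right of the $\parallel$-, $\leftmerge$- and $\mid$-conclusions — must be carried out explicitly, making sure the $\tau$-patience rules for its two arguments are genuinely available; and the conflict operators $\Theta$ and $\triangleleft$ come with the data-dependent side conditions $\sharp(a_1,a_2)$, which must be threaded through the format check without ever violating the ``at most one active premise per argument'' requirement. By contrast, the guardedness hypothesis on the linear recursion is used only to forbid infinite sequences of $\tau$- and $1$-transitions and plays no role in the RBB cool format verification itself.
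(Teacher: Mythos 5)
Your proposal is correct and follows exactly the route the paper takes: the paper's proof is a one-line appeal to Theorem~\ref{rbtcbac}, asserting that the TSS of APTC with silent step and guarded linear recursion is positive after reduction and in RBB cool format. You simply carry out the format verification (stratification for the negative $\triangleleft$-premises, absence of lookahead, patience rules for $\cdot$, $\between$, $\Theta$, $\partial_H$) in more detail than the paper bothers to, which is a faithful elaboration rather than a different argument.
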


\begin{proof}
Since the TSS of APTC with silent step and guarded linear recursion in Table \ref{TRForTau} is positive after reduction and in RBB cool format, according to Theorem \ref{rbtcbac}, rooted branching truly concurrent bisimulation equivalences $\approx_{rbp}$, $\approx_{rbs}$, $\approx_{rbhp}$ and $\approx_{rbhhp}$ are all congruences with respect to APTC with with silent step and guarded linear recursion.
\end{proof}

We design the axioms for the silent step $\tau$ in Table \ref{AxiomsForTau}.

\begin{center}
\begin{table}
  \begin{tabular}{@{}ll@{}}
\hline No. &Axiom\\
  $B1$ & $x\cdot\tau=x$\\
  $B2$ & $a\cdot(\tau\cdot(x+y)+x)=a\cdot(x+y)$\\
  $B3$ & $x\leftmerge\tau=x$\\
\end{tabular}
\caption{Axioms of silent step}
\label{AxiomsForTau}
\end{table}
\end{center}

\begin{theorem}[Elimination theorem of APTC with silent step and guarded linear recursion]\label{ETTau}
Each process term in APTC with silent step and guarded linear recursion is equal to a process term $\langle X_1|E\rangle$ with $E$ a guarded linear recursive specification.
\end{theorem}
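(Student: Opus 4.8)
The plan is to mimic the proof of the elimination theorem for APTC with linear recursion (Theorem \ref{ETRecursion}), the only genuinely new ingredients being that atomic events now range over $\mathbb{E}\cup\{\tau\}$, so the guards $(a_1\leftmerge\cdots\leftmerge a_m)$ occurring in the generated recursive equations may contain $\tau$, and that the linear recursive specification produced must be shown to be \emph{guarded} in the sense of Definition \ref{GLRS}. I would proceed by structural induction on the size of the process term $t_1$, showing that every term of APTC with silent step and guarded linear recursion is provably equal to some $\langle X_1|E\rangle$ with $E$ a guarded linear recursive specification. The base cases $0$, $1$ and $a\in\mathbb{E}\cup\{\tau\}$ are handled by the one-equation specifications $X=0$, $X=1$ and $X=a$.

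For a composite term I would first invoke the elimination theorems already available — Theorem \ref{ETParallelism} and Theorem \ref{ETEncapsulation} for $\between$, $\parallel$, $\mid$, $\Theta$, $\triangleleft$, $\partial_H$, and Theorem \ref{ETProjection} for $\Pi_n$ — together with the axioms $B1$ ($x\cdot\tau=x$) and $B3$ ($x\leftmerge\tau=x$) to drive all these operators inward and to absorb trailing silent steps (the law $B2$ is not needed for elimination), and then use $A1$--$A9$ for the $\cdot$ and $+$ cases, exactly as in the proof of Theorem \ref{ETRecursion}, combining the recursive specifications of the subterms. This brings the term into the shape of a finite system
$$t_i=(a_{i11}\leftmerge\cdots\leftmerge a_{i1i_1})t_{i1}+\cdots+(a_{ik_i1}\leftmerge\cdots\leftmerge a_{ik_ii_k})t_{ik_i}+(b_{i11}\leftmerge\cdots\leftmerge b_{i1i_1})+\cdots+(b_{il_i1}\leftmerge\cdots\leftmerge b_{il_ii_l})$$
for $i\in\{1,\cdots,n\}$ with all $a,b\in\mathbb{E}\cup\{\tau\}$. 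Letting $E$ be the linear recursive specification $X_i=(a_{i11}\leftmerge\cdots\leftmerge a_{i1i_1})X_{i1}+\cdots+(b_{il_i1}\leftmerge\cdots\leftmerge b_{il_ii_l})$ for $i\in\{1,\cdots,n\}$, the substitution $X_i\mapsto t_i$ is a solution of $E$, so $RSP$ gives $t_1=\langle X_1|E\rangle$.

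The main obstacle is checking that the specification $E$ thus obtained is guarded, i.e., admits no infinite chain of $1$-transitions $\langle X|E\rangle\rightarrow\langle X'|E\rangle\rightarrow\cdots$ and no infinite chain of $\tau$-transitions $\langle X|E\rangle\xrightarrow{\tau}\langle X'|E\rangle\xrightarrow{\tau}\cdots$. I would establish this along the same structural induction: in the atomic-action, $+$ and $\cdot$ cases the newly introduced guards are non-empty and not $\tau$-looping, while for a subterm of the form $\langle Y|F\rangle$ the induction hypothesis already provides that $F$ is guarded; tracing which summand is fired at each step, any hypothetical infinite $1$- or $\tau$-chain through $\langle X_1|E\rangle$ would restrict either to an infinite $1$/$\tau$-chain inside one of the component specifications $F$, contradicting its guardedness, or to an infinite strictly decreasing sequence of subterm sizes, which is impossible. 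Here $B1$ and $B3$ are precisely what prevent a spurious $x\cdot\tau$ or $x\leftmerge\tau$ summand from surviving and generating such a chain. With guardedness secured, the appeal to $RSP$ above is legitimate and the proof is complete.
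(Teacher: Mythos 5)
Your proposal follows essentially the same route as the paper: structural induction on term size to bring the term into the head normal form $(a_{i11}\leftmerge\cdots\leftmerge a_{i1i_1})t_{i1}+\cdots+(b_{il_i1}\leftmerge\cdots\leftmerge b_{il_ii_l})$, construction of the corresponding linear recursive specification $E$, and an appeal to $RSP$ after observing that $X_i\mapsto t_i$ is a solution. The paper's own proof is in fact terser — it does not explicitly verify that the resulting specification is guarded (free of infinite $1$- and $\tau$-chains) — so your additional argument via $B1$/$B3$ and the induction hypothesis fills a detail the paper leaves implicit rather than diverging from it.
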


\begin{proof}
By applying structural induction with respect to term size, each process term $t_1$ in APTC with silent step and guarded linear recursion generates a process can be expressed in the form of equations

$$t_i=(a_{i11}\leftmerge\cdots\leftmerge a_{i1i_1})t_{i1}+\cdots+(a_{ik_i1}\leftmerge\cdots\leftmerge a_{ik_ii_k})t_{ik_i}+(b_{i11}\leftmerge\cdots\leftmerge b_{i1i_1})+\cdots+(b_{il_i1}\leftmerge\cdots\leftmerge b_{il_ii_l})$$

for $i\in\{1,\cdots,n\}$. Let the linear recursive specification $E$ consist of the recursive equations

$$X_i=(a_{i11}\leftmerge\cdots\leftmerge a_{i1i_1})X_{i1}+\cdots+(a_{ik_i1}\leftmerge\cdots\leftmerge a_{ik_ii_k})X_{ik_i}+(b_{i11}\leftmerge\cdots\leftmerge b_{i1i_1})+\cdots+(b_{il_i1}\leftmerge\cdots\leftmerge b_{il_ii_l})$$

for $i\in\{1,\cdots,n\}$. Replacing $X_i$ by $t_i$ for $i\in\{1,\cdots,n\}$ is a solution for $E$, $RSP$ yields $t_1=\langle X_1|E\rangle$.
\end{proof}

\begin{theorem}[Soundness of APTC with silent step and guarded linear recursion]\label{SAPTCTAU}
Let $x$ and $y$ be APTC with silent step and guarded linear recursion terms. If APTC with silent step and guarded linear recursion $\vdash x=y$, then

\begin{enumerate}
  \item $x\approx_{rbs} y$;
  \item $x\approx_{rbp} y$;
  \item $x\approx_{rbhp} y$;
  \item $x\approx_{rbhhp} y$
\end{enumerate}
\end{theorem}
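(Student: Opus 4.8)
The plan is to lean on the congruence theorem proved just above this one, which states that $\approx_{rbp}$, $\approx_{rbs}$, $\approx_{rbhp}$ and $\approx_{rbhhp}$ are all congruences with respect to APTC with silent step and guarded linear recursion. Since each of these is also an equivalence relation, soundness reduces to checking, for each of the four relations, that every axiom of the theory has its two sides related under an arbitrary closed substitution. The axioms split into two families. All axioms that do not mention $\tau$ --- A1--A9, P1--P9, C10--C19, CE20--CE25, U26--U41, D1--D7, PR1--PR7, and RDP/RSP (and AIP) for the $\tau$-free theory --- were already shown sound modulo the strong equivalences $\sim_p,\sim_s,\sim_{hp},\sim_{hhp}$ in Theorems~\ref{SBATC}, \ref{SAPTC}, \ref{SAPTC2}, \ref{SAPTCRP}, \ref{SAPTCR} and \ref{SAIP}; together with the standard inclusion of strong truly concurrent bisimilarity in the corresponding rooted branching bisimilarity, those soundness proofs transfer, so only the silent-step axioms B1--B3 and the recursion principles in the guarded-linear (silent) setting need fresh work.

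For B1 ($x\cdot\tau=x$) and B3 ($x\leftmerge\tau=x$) I would take the witnessing relation to be the identity on configurations together with the single extra pair identifying the state reached just before the absorbed $\tau$ with the corresponding terminated state; the rootedness clause holds because the disputed $\tau$ is never a first move, and the branching clauses are met because a trailing $\tau$ into $\surd$ is matched by $\surd$ itself. For B2 ($a\cdot(\tau\cdot(x+y)+x)=a\cdot(x+y)$) the relation pairs the two sides at the root and, after the common first step $a$, pairs $\tau\cdot(x+y)+x$ with $x+y$: the $\tau$-step into $x+y$ is matched by remaining in $x+y$ (the ``either'' branch of the branching clause), a move of the summand $x$ is matched directly, and a move coming from inside $x+y$ on the right is matched on the left by first performing the $\tau$ and then copying the move, after which one continues with the identity. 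For the pomset and step versions nothing changes, since the first action is a single atomic event; for $\approx_{rbhp}$ and $\approx_{rbhhp}$ one checks in addition that the accompanying posetal isomorphism extends correctly across the $\tau$-step --- which is exactly what the \emph{weakly} posetal product permits, since a $\tau$-event is glued to the empty event rather than tracked --- and that downward closure survives in the hhp-case.

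Finally, RDP is immediate from the two transition rules of Table~\ref{TRForGR}: $\langle X_i|E\rangle$ and $t_i(\langle X_1|E\rangle,\ldots,\langle X_n|E\rangle)$ have exactly the same outgoing transitions, so the identity with the root pair adjoined witnesses the equality. For RSP one shows that any solution $y_1,\ldots,y_n$ of a guarded linear specification $E$ satisfies $y_i\approx\langle X_i|E\rangle$ for each of the four relations, by taking $R$ to relate $y_i$ to $\langle X_i|E\rangle$ and propagating it along reachable states; here the guardedness hypothesis of Definition~\ref{GLRS} --- no infinite chain of $1$-transitions and no infinite chain of $\tau$-transitions between $\langle X|E\rangle$-states --- is precisely what rules out the unfair $\tau$-loop solutions (such as the solutions of $X=\tau X$) and guarantees that the matching procedure in the branching clauses terminates. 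The step I expect to be the main obstacle is the B2 verification lifted to the hp- and hhp-levels, where the bookkeeping of the posetal isomorphism around the silent step and the preservation of downward closure must be handled carefully; a secondary delicate point is phrasing the guardedness argument for RSP so that finiteness of the matching is genuinely invoked. The remaining checks are routine case analyses on the transition rules of Tables~\ref{TRForTau} and \ref{TRForGR} and their predecessors, which we would perform but not belabour.
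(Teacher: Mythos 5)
Your proposal takes essentially the same route as the paper: the paper's proof likewise reduces soundness, via the congruence theorem and the fact that $\approx_{rbs}$, $\approx_{rbp}$, $\approx_{rbhp}$, $\approx_{rbhhp}$ are equivalences, to checking each axiom of Table \ref{AxiomsForTau} side by side, and then declares those per-axiom checks trivial and omits them. Your extra detail --- the explicit witnessing relations for $B1$--$B3$, transferring the $\tau$-free axioms through the inclusion of the strong truly concurrent equivalences in the rooted branching ones, and the treatment of $RDP$/$RSP$ under guardedness --- simply fills in what the paper leaves implicit.
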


\begin{proof}
Since rooted branching truly concurrent bisimulation $\approx_{rbp}$, $\approx_{rbs}$, $\approx_{rbhp}$ and $\approx_{rbhhp}$ are all both equivalent and congruent relations with respect to APTC with silent step and guarded linear recursion, we only need to check if each axiom in Table \ref{AxiomsForTau} is sound modulo rooted branching rooted branching truly concurrent bisimulation equivalences. The proof is trivial and we omit it.
\end{proof}

\begin{theorem}[Completeness of APTC with silent step and guarded linear recursion]\label{CAPTCTAU}
Let $p$ and $q$ be closed APTC with silent step and guarded linear recursion terms, then,

\begin{enumerate}
  \item if $p\approx_{rbs} q$ then $p=q$;
  \item if $p\approx_{rbp} q$ then $p=q$;
  \item if $p\approx_{rbhp} q$ then $p=q$;
  \item if $p\approx_{rbhhp} q$ then $p=q$
\end{enumerate}
\end{theorem}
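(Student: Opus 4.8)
The plan is to reuse, almost verbatim, the structure of the completeness proof for APTC with linear recursion (Theorem~\ref{CAPTCR}), upgrading every ingredient from strong truly concurrent bisimulation to \emph{rooted branching} truly concurrent bisimulation. First, by the elimination theorem of APTC with silent step and guarded linear recursion (Theorem~\ref{ETTau}), each closed term of the theory is provably equal to some $\langle X_1|E\rangle$ with $E$ a guarded linear recursive specification; hence for item~(1) it suffices to show that $\langle X_1|E_1\rangle\approx_{rbs}\langle Y_1|E_2\rangle$ with $E_1,E_2$ guarded linear implies $\langle X_1|E_1\rangle=\langle Y_1|E_2\rangle$, and analogously for (2)--(4).

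I would then build a common guarded linear recursive specification exactly as in Theorem~\ref{CAPTCR}: introduce a recursion variable $Z_{XY}$ for every pair $(X,Y)$ with $\langle X|E_1\rangle\approx_{bs}\langle Y|E_2\rangle$ (branching, not rooted), and let $t_{XY}$ contain a summand $(a_1\leftmerge\cdots\leftmerge a_m)Z_{X'Y'}$ precisely when $t_X$ has the summand $(a_1\leftmerge\cdots\leftmerge a_m)X'$ matched by a move of $Y$ reaching some $Y'$ with $\langle X'|E_1\rangle\approx_{bs}\langle Y'|E_2\rangle$ (and symmetrically for $E_2$), together with the terminating summands $b_1\leftmerge\cdots\leftmerge b_n$ common to $t_X$ and $t_Y$. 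The new feature is that in rooted branching bisimulation a step or pomset move on one side may be answered only after a run of stuttering $\tau$-transitions on the other; to keep the matching in the definition of $t_{XY}$ well posed, I would first normalise $E_1$ and $E_2$ with the $\tau$-laws $B1$, $B3$ (and $B2$) so that every $\tau$-summand leading to a state branching-bisimilar to its source has been absorbed. Using $RDP$, the $\tau$-laws, and the soundness and congruence results of this subsection (in particular Theorem~\ref{SAPTCTAU}), one shows $\langle X|E_1\rangle=\psi(t_{XY})$ where $\psi\colon Z_{XY}\mapsto\langle X|E_1\rangle$, and symmetrically $\langle Y|E_2\rangle$ is a solution of the same equation read through $Z_{XY}\mapsto\langle Y|E_2\rangle$; then $RSP$ yields $\langle X|E_1\rangle=\langle Z_{XY}|E\rangle=\langle Y|E_2\rangle$, and specialising to $(X_1,Y_1)$ proves~(1). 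The root condition of $\approx_{rbs}$ is exactly what guarantees that the outermost move is matched by a single move, so that $E$ genuinely has $\langle X_1|E_1\rangle$ and $\langle Y_1|E_2\rangle$ as solutions and $RSP$ applies.

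The main obstacle is verifying that $E$ is still \emph{guarded} — that collecting summands over all stuttering matches introduces no infinite $1$- or $\tau$-transition sequence — and that the $\tau$-absorption normalisation is provable and preserves the recursion; this is where guardedness of $E_1,E_2$ (Definition~\ref{GLRS}), which already forbids $\tau$-loops, is used decisively, together with finiteness of the reachable state spaces. Items~(2)--(4) follow by the same construction, replacing step transitions by pomset transitions, and in the hp- and hhp-cases threading the configuration isomorphisms through the $Z_{XY}$'s and imposing downward closure, exactly as in the strong case.
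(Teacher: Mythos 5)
Your overall skeleton (eliminate to $\langle X_1|E_1\rangle$, $\langle Y_1|E_2\rangle$, merge the two specifications into one specification $E$ with variables $Z_{XY}$, then conclude by $RDP$/$RSP$) is the same as the paper's, but the way you dispose of the stuttering $\tau$-moves hides the actual difficulty and, as stated, does not go through. You propose to first ``normalise $E_1$ and $E_2$ with $B1$, $B3$ (and $B2$) so that every $\tau$-summand leading to a state branching-bisimilar to its source has been absorbed'', and then to match summands of $t_X$ and $t_Y$ syntactically and take $\psi\colon Z_{XY}\mapsto\langle X|E_1\rangle$ uniformly. But $B1$ ($x\cdot\tau=x$) and $B3$ ($x\leftmerge\tau=x$) never touch a summand of the form $\tau X'$ in a linear equation, and $B2$ ($a\cdot(\tau\cdot(x+y)+x)=a\cdot(x+y)$) only removes an inert $\tau$ \emph{under an action prefix}; inside the equation $X=\tau X'+\cdots$ no such prefix is available, the needed prefix only occurs in the other equations that call $X$. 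Deciding which $\tau$-summands are ``inert'' moreover requires the semantic relation $\approx_{bs}$, so this ``normalisation'' is essentially a completeness claim of the same strength as the theorem itself; it cannot simply be assumed. Note also that $x=x+\tau x$ is not sound for rooted branching bisimilarity, so after dropping the normalisation your uniform map $\psi(Z_{XY})=\langle X|E_1\rangle$ fails to be a solution of $E$ as soon as $t_Y$ contributes a $\tau$-summand that $t_X$ does not have: $\psi(t_{XY})$ then contains a summand $\tau\langle X|E_1\rangle$ which is not provably absorbed by $\langle X|E_1\rangle$.

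The paper resolves exactly this point differently: it keeps the problematic moves explicitly, adding to $t_{XY}$ the summands $\tau Z_{X'Y}$ and $\tau Z_{XY'}$ (only for $XY\nequiv X_1Y_1$, which is where the root condition enters), and it does \emph{not} use $\langle X|E_1\rangle$ as the solution everywhere. Instead it builds the auxiliary term $u_{XY}$ and the case-defined term $s_{XY}$, equal to $\tau\langle X|E_1\rangle+u_{XY}$ when $t_Y$ has an inert $\tau$-summand and to $\langle X|E_1\rangle$ otherwise, and then proves the key identity $(a_1\leftmerge\cdots\leftmerge a_m)(\tau\langle X|E_1\rangle+u_{XY})=(a_1\leftmerge\cdots\leftmerge a_m)\langle X|E_1\rangle$ via $\langle X|E_1\rangle=\langle X|E_1\rangle+u_{XY}$ and $B2$ under the prefix; only then does $s_{XY}$ become a solution of $E$ and $RSP$ apply. (The only preliminary clean-up the paper performs is removing equations $W=\tau+\cdots+\tau$ with $W\nequiv X_1$.) To repair your proof you would have to replace the unjustified absorption step by this $u_{XY}$/$s_{XY}$ construction, or give an independent, axiomatic proof that inert $\tau$-summands can be eliminated from guarded linear specifications while preserving provable equality of the root, which is not available at this point of the development.
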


\begin{proof}
Firstly, by the elimination theorem of APTC with silent step and guarded linear recursion (see Theorem \ref{ETTau}), we know that each process term in APTC with silent step and guarded linear recursion is equal to a process term $\langle X_1|E\rangle$ with $E$ a guarded linear recursive specification.

It remains to prove the following cases.

(1) If $\langle X_1|E_1\rangle \approx_{rbs} \langle Y_1|E_2\rangle$ for guarded linear recursive specification $E_1$ and $E_2$, then $\langle X_1|E_1\rangle = \langle Y_1|E_2\rangle$.

Firstly, the recursive equation $W=\tau+\cdots+\tau$ with $W\nequiv X_1$ in $E_1$ and $E_2$, can be removed, and the corresponding summands $aW$ are replaced by $a$, to get $E_1'$ and $E_2'$, by use of the axioms $RDP$, $A3$ and $B1$, and $\langle X|E_1\rangle = \langle X|E_1'\rangle$, $\langle Y|E_2\rangle = \langle Y|E_2'\rangle$.

Let $E_1$ consists of recursive equations $X=t_X$ for $X\in \mathcal{X}$ and $E_2$
consists of recursion equations $Y=t_Y$ for $Y\in\mathcal{Y}$, and are not the form $\tau+\cdots+\tau$. Let the guarded linear recursive specification $E$ consists of recursion equations $Z_{XY}=t_{XY}$, and $\langle X|E_1\rangle\approx_{rbs}\langle Y|E_2\rangle$, and $t_{XY}$ consists of the following summands:

\begin{enumerate}
  \item $t_{XY}$ contains a summand $(a_1\leftmerge\cdots\leftmerge a_m)Z_{X'Y'}$ iff $t_X$ contains the summand $(a_1\leftmerge\cdots\leftmerge a_m)X'$ and $t_Y$ contains the summand $(a_1\leftmerge\cdots\leftmerge a_m)Y'$ such that $\langle X'|E_1\rangle\approx_{rbs}\langle Y'|E_2\rangle$;
  \item $t_{XY}$ contains a summand $b_1\leftmerge\cdots\leftmerge b_n$ iff $t_X$ contains the summand $b_1\leftmerge\cdots\leftmerge b_n$ and $t_Y$ contains the summand $b_1\leftmerge\cdots\leftmerge b_n$;
  \item $t_{XY}$ contains a summand $\tau Z_{X'Y}$ iff $XY\nequiv X_1Y_1$, $t_X$ contains the summand $\tau X'$, and $\langle X'|E_1\rangle\approx_{rbs}\langle Y|E_2\rangle$;
  \item $t_{XY}$ contains a summand $\tau Z_{XY'}$ iff $XY\nequiv X_1Y_1$, $t_Y$ contains the summand $\tau Y'$, and $\langle X|E_1\rangle\approx_{rbs}\langle Y'|E_2\rangle$.
\end{enumerate}

Since $E_1$ and $E_2$ are guarded, $E$ is guarded. Constructing the process term $u_{XY}$ consist of the following summands:

\begin{enumerate}
  \item $u_{XY}$ contains a summand $(a_1\leftmerge\cdots\leftmerge a_m)\langle X'|E_1\rangle$ iff $t_X$ contains the summand $(a_1\leftmerge\cdots\leftmerge a_m)X'$ and $t_Y$ contains the summand $(a_1\leftmerge\cdots\leftmerge a_m)Y'$ such that $\langle X'|E_1\rangle\approx_{rbs}\langle Y'|E_2\rangle$;
  \item $u_{XY}$ contains a summand $b_1\leftmerge\cdots\leftmerge b_n$ iff $t_X$ contains the summand $b_1\leftmerge\cdots\leftmerge b_n$ and $t_Y$ contains the summand $b_1\leftmerge\cdots\leftmerge b_n$;
  \item $u_{XY}$ contains a summand $\tau \langle X'|E_1\rangle$ iff $XY\nequiv X_1Y_1$, $t_X$ contains the summand $\tau X'$, and $\langle X'|E_1\rangle\approx_{rbs}\langle Y|E_2\rangle$.
\end{enumerate}

Let the process term $s_{XY}$ be defined as follows:

\begin{enumerate}
  \item $s_{XY}\triangleq\tau\langle X|E_1\rangle + u_{XY}$ iff $XY\nequiv X_1Y_1$, $t_Y$ contains the summand $\tau Y'$, and $\langle X|E_1\rangle\approx_{rbs}\langle Y'|E_2\rangle$;
  \item $s_{XY}\triangleq\langle X|E_1\rangle$, otherwise.
\end{enumerate}

So, $\langle X|E_1\rangle=\langle X|E_1\rangle+u_{XY}$, and $(a_1\leftmerge\cdots\leftmerge a_m)(\tau\langle X|E_1\rangle+u_{XY})=(a_1\leftmerge\cdots\leftmerge a_m)((\tau\langle X|E_1\rangle+u_{XY})+u_{XY})=(a_1\leftmerge\cdots\leftmerge a_m)(\langle X|E_1\rangle+u_{XY})=(a_1\leftmerge\cdots\leftmerge a_m)\langle X|E_1\rangle$, hence, $(a_1\leftmerge\cdots\leftmerge a_m)s_{XY}=(a_1\leftmerge\cdots\leftmerge a_m)\langle X|E_1\rangle$.

Let $\sigma$ map recursion variable $X$ in $E_1$ to $\langle X|E_1\rangle$, and let $\psi$ map recursion variable $Z_{XY}$ in $E$ to $s_{XY}$. It is sufficient to prove $s_{XY}=\psi(t_{XY})$ for recursion variables $Z_{XY}$ in $E$. Either $XY\equiv X_1Y_1$ or $XY\nequiv X_1Y_1$, we all can get $s_{XY}=\psi(t_{XY})$. So, $s_{XY}=\langle Z_{XY}|E\rangle$ for recursive variables $Z_{XY}$ in $E$ is a solution for $E$. Then by $RSP$, particularly, $\langle X_1|E_1\rangle=\langle Z_{X_1Y_1}|E\rangle$. Similarly, we can obtain $\langle Y_1|E_2\rangle=\langle Z_{X_1Y_1}|E\rangle$. Finally, $\langle X_1|E_1\rangle=\langle Z_{X_1Y_1}|E\rangle=\langle Y_1|E_2\rangle$, as desired.

(2) If $\langle X_1|E_1\rangle \approx_{rbp} \langle Y_1|E_2\rangle$ for guarded linear recursive specification $E_1$ and $E_2$, then $\langle X_1|E_1\rangle = \langle Y_1|E_2\rangle$.

It can be proven similarly to (1), we omit it.

(3) If $\langle X_1|E_1\rangle \approx_{rbhb} \langle Y_1|E_2\rangle$ for guarded linear recursive specification $E_1$ and $E_2$, then $\langle X_1|E_1\rangle = \langle Y_1|E_2\rangle$.

It can be proven similarly to (1), we omit it.

(4) If $\langle X_1|E_1\rangle \approx_{rbhhb} \langle Y_1|E_2\rangle$ for guarded linear recursive specification $E_1$ and $E_2$, then $\langle X_1|E_1\rangle = \langle Y_1|E_2\rangle$.

It can be proven similarly to (1), we omit it.
\end{proof}

The unary abstraction operator $\tau_I$ ($I\subseteq \mathbb{E}$) renames all atomic events in $I$ into $\tau$. APTC with silent step and abstraction operator is called $APTC_{\tau}$. The transition rules of operator $\tau_I$ are shown in Table \ref{TRForAbstraction}.

\begin{center}
    \begin{table}
        $$\frac{x\xrightarrow{a}\surd}{\tau_I(x)\xrightarrow{a}\surd}\quad a\notin I
        \quad\quad\frac{x\xrightarrow{a}x'}{\tau_I(x)\xrightarrow{a}\tau_I(x')}\quad a\notin I$$

        $$\frac{x\xrightarrow{a}\surd}{\tau_I(x)\xrightarrow{\tau}\surd}\quad a\in I
        \quad\quad\frac{x\xrightarrow{a}x'}{\tau_I(x)\xrightarrow{\tau}\tau_I(x')}\quad a\in I$$
        \caption{Transition rule of the abstraction operator}
        \label{TRForAbstraction}
    \end{table}
\end{center}

\begin{theorem}[Conservativity of $APTC_{\tau}$ with guarded linear recursion]
$APTC_{\tau}$ with guarded linear recursion is a conservative extension of APTC with silent step and guarded linear recursion.
\end{theorem}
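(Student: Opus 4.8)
The plan is to apply the \emph{Conservative extension} theorem from Section~\ref{os}, taking $T_0$ to be the TSS of APTC with silent step and guarded linear recursion over a signature $\Sigma_0$, and $T_1$ to be the transition rules for the abstraction operator $\tau_I$ in Table~\ref{TRForAbstraction}, which introduce the fresh unary function symbols $\tau_I$ into $\Sigma_1\setminus\Sigma_0$. Concretely I would discharge the two hypotheses of that theorem. For \emph{positivity after reduction} of both $T_0$ and $T_0\oplus T_1$: the $\tau_I$-rules carry no negative premises at all, and the only negative premises anywhere in the combined system are those of the unless operator $\triangleleft$ already present in Table~\ref{TRForParallel}, for which a stratification exists exactly as invoked in the earlier congruence and conservativity results, so the \emph{Positivity after reduction} theorem applies. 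For \emph{source-dependency} of $T_0$: this has already been recorded step by step in the preceding conservativity proofs (the rules of APTC, of guarded recursion, of the silent step, and of all auxiliary operators are source-dependent), and the abstraction rules add no new variables.

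It then remains to check clause~(2) of the \emph{Conservative extension} theorem for each $\rho\in T_1$: every rule in Table~\ref{TRForAbstraction} has source $\tau_I(x)$, which contains the function symbol $\tau_I\in\Sigma_1\setminus\Sigma_0$ and is therefore fresh. Since the source of each $\tau_I$-rule is fresh, the first alternative of clause~(2) is satisfied outright, with no need to exhibit a premise carrying a fresh label or predicate. With both hypotheses verified, the theorem delivers that $T_0\oplus T_1$ is a conservative extension of $T_0$, i.e., $APTC_\tau$ with guarded linear recursion is a conservative extension of APTC with silent step and guarded linear recursion, as desired.

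The one point that genuinely wants care — and the only thing resembling an obstacle — is that $\tau$ is a \emph{non-fresh} transition label: it already occurs throughout $T_0$ (e.g. in the rule $\tau\xrightarrow{\tau}\surd$ and in patience rules). Hence the argument cannot follow the ``fresh-label premise'' route used for the communication- or encapsulation-based conservativity results; it must rest entirely on the freshness of the source symbol $\tau_I$. Once that is noticed, the remainder is a routine re-run of the bookkeeping already performed for $\partial_H$ and for the projection operator $\Pi_n$.
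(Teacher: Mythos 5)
Your proposal is correct and follows essentially the same route as the paper: the paper's proof likewise invokes the conservative extension theorem by noting that the rules of APTC with silent step and guarded linear recursion are source-dependent and that every rule for $\tau_I$ has the fresh operator $\tau_I$ in its source. Your additional observations (positivity after reduction, and that one cannot rely on $\tau$ as a fresh label) only make explicit hypotheses the paper leaves implicit.
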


\begin{proof}
Since the transition rules of APTC with silent step and guarded linear recursion are source-dependent, and the transition rules for abstraction operator in Table \ref{TRForAbstraction} contain only a fresh operator $\tau_I$ in their source, so the transition rules of $APTC_{\tau}$ with guarded linear recursion is a conservative extension of those of APTC with silent step and guarded linear recursion.
\end{proof}

\begin{theorem}[Congruence theorem of $APTC_{\tau}$ with guarded linear recursion]
Rooted branching truly concurrent bisimulation equivalences $\approx_{rbp}$, $\approx_{rbs}$, $\approx_{rbhp}$ and $\approx_{rbhhp}$ are all congruences with respect to $APTC_{\tau}$ with guarded linear recursion.
\end{theorem}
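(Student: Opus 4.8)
The plan is to follow the same route as every preceding congruence theorem in this section: appeal to Theorem~\ref{rbtcbac}, which states that a TSS that is positive after reduction and in RBB cool format induces rooted branching truly concurrent bisimulation equivalences $\approx_{rbp},\approx_{rbs},\approx_{rbhp},\approx_{rbhhp}$ that are all congruences. So the whole task reduces to checking these two syntactic properties for the TSS of $APTC_\tau$ with guarded linear recursion, and the work can be done incrementally.

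The TSS in question is the TSS of APTC with silent step and guarded linear recursion, extended with the four transition rules for the abstraction operator $\tau_I$ in Table~\ref{TRForAbstraction}. The preceding congruence theorem (for APTC with silent step and guarded linear recursion) already establishes that the first part is positive after reduction and in RBB cool format, so it only remains to verify that adjoining Table~\ref{TRForAbstraction} preserves both properties. Positivity after reduction is immediate: the four $\tau_I$-rules carry no negative premises, so the stratification that witnesses positivity of the smaller system extends --- propagate it through the $\tau_I$-context, e.g.\ by giving $\tau_I(t)\xrightarrow{a}t'$ and $\tau_I(t)\xrightarrow{\tau}t'$ the same weight as the premise transition of $t$ --- and the positivity-after-reduction theorem applies.

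The substantive step is the RBB cool format for the $\tau_I$-rules. All four are plainly panth rules (right-hand sides of positive premises are single variables, each source $\tau_I(x)$ carries one function symbol, no variable is repeated on the right of premises or in the source) and none contains lookahead, since the only variable on the left of a premise is $x$ and it does not reoccur on the right of a premise. Now $\tau_I$ does occur on the right-hand side of conclusions, so I must exhibit, for argument $1$ of $\tau_I$, a patience rule whenever a non-patience rule for $\tau_I$ has a premise of the form $xP$ or $x\xrightarrow{\{a_1,\dots,a_n\}}y$ with all $a_j\nequiv\tau$. The non-patience $\tau_I$-rules are exactly the instances of Table~\ref{TRForAbstraction} with $a\in\mathbb{E}$ (whether $a\in I$ or $a\notin I$): each has the single $x$-premise $x\xrightarrow{a}\surd$ or $x\xrightarrow{a}x'$, which is $\tau$-free as required, and $x$ occurs in no other premise. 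The required patience rule $\frac{x\xrightarrow{\tau}x'}{\tau_I(x)\xrightarrow{\tau}\tau_I(x')}$ is precisely the instance of the first $\tau_I$-rule with $a=\tau$, which is licensed because $I\subseteq\mathbb{E}$ and hence $\tau\notin I$. Thus the extended TSS stays in RBB cool format, and Theorem~\ref{rbtcbac} gives the claim.

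The only place that needs genuine attention --- the main obstacle --- is this last patience-rule bookkeeping for $\tau_I$: one has to notice that the schematic action rule with side condition $a\notin I$ silently supplies the needed $\tau$-patience rule, and that the $a\in I$ rules, although they relabel to $\tau$ in the conclusion, still count as non-patience rules whose argument premise is $\tau$-free, so they do not break the format. Everything else is inherited verbatim from the previously verified fragment.
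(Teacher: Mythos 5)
Your proposal takes essentially the same route as the paper: the paper's proof simply asserts that the TSS of $APTC_{\tau}$ with guarded linear recursion (Table \ref{TRForAbstraction}) is positive after reduction and in RBB cool format and then invokes Theorem \ref{rbtcbac}. You additionally spell out the verification (no negative premises, panth rules without lookahead, and the $a\notin I$ rule instantiated at $\tau$ supplying the patience rule for $\tau_I$), which is a more careful rendering of the same argument rather than a different one.
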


\begin{proof}
Since the TSS of $APTC_{\tau}$ with guarded linear recursion in Table \ref{TRForAbstraction} is positive after reduction and in RBB cool format, according to Theorem \ref{rbtcbac}, rooted branching truly concurrent bisimulation equivalences $\approx_{rbp}$, $\approx_{rbs}$, $\approx_{rbhp}$ and $\approx_{rbhhp}$ are all congruences with respect to $APTC_{\tau}$ with guarded linear recursion.
\end{proof}

We design the axioms for the abstraction operator $\tau_I$ in Table \ref{AxiomsForAbstraction}.

\begin{center}
\begin{table}
  \begin{tabular}{@{}ll@{}}
\hline No. &Axiom\\
  $TI1$ & $a\notin I\quad \tau_I(a)=a$\\
  $TI2$ & $a\in I\quad \tau_I(a)=\tau$\\
  $TI3$ & $\tau_I(0)=0$\\
  $TI4$ & $\tau_I(1)=1$\\
  $TI5$ & $\tau_I(x+y)=\tau_I(x)+\tau_I(y)$\\
  $TI6$ & $\tau_I(x\cdot y)=\tau_I(x)\cdot\tau_I(y)$\\
  $TI7$ & $\tau_I(x\leftmerge y)=\tau_I(x)\leftmerge\tau_I(y)$\\
\end{tabular}
\caption{Axioms of abstraction operator}
\label{AxiomsForAbstraction}
\end{table}
\end{center}

\begin{theorem}[Soundness of $APTC_{\tau}$ with guarded linear recursion]\label{SAPTCABS}
Let $x$ and $y$ be $APTC_{\tau}$ with guarded linear recursion terms. If $APTC_{\tau}$ with guarded linear recursion $\vdash x=y$, then

\begin{enumerate}
  \item $x\approx_{rbs} y$;
  \item $x\approx_{rbp} y$;
  \item $x\approx_{rbhp} y$;
  \item $x\approx_{rbhhp} y$.
\end{enumerate}
\end{theorem}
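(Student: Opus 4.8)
The plan is to follow the pattern of the earlier soundness results (Theorems \ref{SBATC}, \ref{SAPTC}, \ref{SAPTC2}, \ref{SAPTCR}, \ref{SAPTCRP}, \ref{SAPTCTAU}). By the preceding congruence theorem for $APTC_{\tau}$ with guarded linear recursion, the equivalences $\approx_{rbp}$, $\approx_{rbs}$, $\approx_{rbhp}$ and $\approx_{rbhhp}$ are each both an equivalence and a congruence with respect to this theory. Consequently, closure under reflexivity, symmetry, transitivity and contexts reduces the claim to the following: for each of the four equivalences, and for each axiom $s=t$ in Table \ref{AxiomsForAbstraction}, one has $s\approx_{rb*}t$. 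The axioms inherited from the earlier modules are already sound --- the silent-step axioms $B1$--$B3$ and the recursion principles by Theorem \ref{SAPTCTAU}, and the $BATC$/$APTC$/encapsulation/projection axioms modulo strong bisimilarity, which is finer than $\approx_{rb*}$ --- so only the seven new axioms $TI1$--$TI7$ of the abstraction operator need to be checked.

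Second, I would dispatch these seven axioms one at a time by exhibiting an explicit witnessing relation and verifying the transfer conditions (for the hp- and hhp-cases, on the weakly posetal product, carrying along the obvious isomorphism on configurations). For $TI1$ and $TI2$ the transition rules of Table \ref{TRForAbstraction} applied to a single atomic event $a$ reproduce, respectively, the rule for $a$ when $a\notin I$ and the rule for $\tau$ when $a\in I$, so $\{(\tau_I(a),a),(\surd,\surd)\}$ and $\{(\tau_I(a),\tau),(\surd,\surd)\}$ are the required relations and the rootedness clause is immediate. For $TI3$, $TI4$ the singleton-style relations $\{(\tau_I(0),0)\}$ and $\{(\tau_I(1),1),(\surd,\surd)\}$ work. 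For the homomorphism axioms $TI5$--$TI7$ I would use the relation pairing $\tau_I(p_1)+\tau_I(p_2)$ with $\tau_I(p_1+p_2)$ (and likewise $\tau_I(p_1)\cdot\tau_I(p_2)$ with $\tau_I(p_1\cdot p_2)$, and $\tau_I(p_1)\leftmerge\tau_I(p_2)$ with $\tau_I(p_1\leftmerge p_2)$), closed under all reachable pairs $(\tau_I(r),\tau_I(r))$; the structural transition rules for $\tau_I$ then make this a rooted branching bisimulation, with the identity-on-relabelled-events isomorphism for the hp-case and automatic downward closure (the relation already contains all subconfiguration pairs) for the hhp-case.

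The routine but slightly delicate point, and the place I would spend most care, is the interaction of rootedness with the relabelling done by $\tau_I$: one must confirm that an initial transition of $\tau_I(x)$ is matched by a single matching initial transition rather than by a $\tau$-prefixed sequence. This holds because $\tau_I$ sits at the outermost position, so each initial step of $x$ becomes exactly one step of $\tau_I(x)$ (itself if its label is outside $I$, else $\tau$), and only after this first step do the weaker branching conditions come into play. A secondary bookkeeping point arises for $TI7$, where a parallel step $\{a_1,\dots,a_k\}$ of $x\leftmerge y$ is turned into the step obtained by applying $\tau_I$ pointwise to each $a_i$, so the two sides still perform pomset-isomorphic steps. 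With these observations the verification of every axiom is mechanical, and we omit the remaining details as in the analogous earlier proofs.
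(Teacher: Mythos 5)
Your proposal follows essentially the same route as the paper: by the congruence theorem for $APTC_{\tau}$ with guarded linear recursion, the claim reduces to checking each axiom of the abstraction operator in Table \ref{AxiomsForAbstraction} modulo $\approx_{rbs}$, $\approx_{rbp}$, $\approx_{rbhp}$ and $\approx_{rbhhp}$, which the paper declares trivial and omits. Your explicit witnessing relations for $TI1$--$TI7$ and the attention to rootedness simply fill in the routine verification the paper leaves out, so the argument is correct and matches the paper's approach.
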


\begin{proof}
Since rooted branching truly concurrent bisimulations $\approx_{rbp}$, $\approx_{rbs}$, $\approx_{rbhp}$ and $\approx_{rbhhp}$ are all both equivalent and congruent relations with respect to $APTC_{\tau}$ with guarded linear recursion, we only need to check if each axiom in Table \ref{AxiomsForAbstraction} is sound modulo rooted branching truly concurrent bisimulation equivalences. The proof is trivial and we omit it.
\end{proof}

Though $\tau$-loops are prohibited in guarded linear recursive specifications (see Definition \ref{GLRS}) in a specifiable way, they can be constructed using the abstraction operator, for example, there exist $\tau$-loops in the process term $\tau_{\{a\}}(\langle X|X=aX\rangle)$. To avoid $\tau$-loops caused by $\tau_I$ and ensure fairness, the concept of cluster and $CFAR$ (Cluster Fair Abstraction Rule) \cite{CFAR} are still valid in true concurrency, we introduce them below.

\begin{definition}[Cluster]\label{CLUSTER}
Let $E$ be a guarded linear recursive specification, and $I\subseteq \mathbb{E}$. Two recursion variable $X$ and $Y$ in $E$ are in the same cluster for $I$ iff there exist sequences of transitions $\langle X|E\rangle\xrightarrow{\{b_{11},\cdots, b_{1i}\}}\cdots\xrightarrow{\{b_{m1},\cdots, b_{mi}\}}\langle Y|E\rangle$ and $\langle Y|E\rangle\xrightarrow{\{c_{11},\cdots, c_{1j}\}}\cdots\xrightarrow{\{c_{n1},\cdots, c_{nj}\}}\langle X|E\rangle$, where $b_{11},\cdots,b_{mi},c_{11},\cdots,c_{nj}\in I\cup\{\tau\}$.

$a_1\leftmerge\cdots\leftmerge a_k$ or $(a_1\leftmerge\cdots\leftmerge a_k) X$ is an exit for the cluster $C$ iff: (1) $a_1\leftmerge\cdots\leftmerge a_k$ or $(a_1\leftmerge\cdots\leftmerge a_k) X$ is a summand at the right-hand side of the recursive equation for a recursion variable in $C$, and (2) in the case of $(a_1\leftmerge\cdots\leftmerge a_k) X$, either $a_l\notin I\cup\{\tau\}(l\in\{1,2,\cdots,k\})$ or $X\notin C$.
\end{definition}

\begin{center}
\begin{table}
  \begin{tabular}{@{}ll@{}}
\hline No. &Axiom\\
  $CFAR$ & If $X$ is in a cluster for $I$ with exits \\
           & $\{(a_{11}\leftmerge\cdots\leftmerge a_{1i})Y_1,\cdots,(a_{m1}\leftmerge\cdots\leftmerge a_{mi})Y_m, b_{11}\leftmerge\cdots\leftmerge b_{1j},\cdots,b_{n1}\leftmerge\cdots\leftmerge b_{nj}\}$, \\
           & then $\tau\cdot\tau_I(\langle X|E\rangle)=$\\
           & $\tau\cdot\tau_I((a_{11}\leftmerge\cdots\leftmerge a_{1i})\langle Y_1|E\rangle+\cdots+(a_{m1}\leftmerge\cdots\leftmerge a_{mi})\langle Y_m|E\rangle+b_{11}\leftmerge\cdots\leftmerge b_{1j}+\cdots+b_{n1}\leftmerge\cdots\leftmerge b_{nj})$\\
\end{tabular}
\caption{Cluster fair abstraction rule}
\label{CFAR}
\end{table}
\end{center}

\begin{theorem}[Soundness of $CFAR$]\label{SCFAR}
$CFAR$ is sound modulo rooted branching truly concurrent bisimulation equivalences $\approx_{rbs}$, $\approx_{rbp}$, $\approx_{rbhp}$ and $\approx_{rbhhp}$.
\end{theorem}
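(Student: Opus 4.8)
The plan is to reduce the soundness of $CFAR$ to exhibiting, for each instance of the rule, an explicit witnessing bisimulation, exploiting that all four rooted branching truly concurrent equivalences have already been shown to be congruences for $APTC_{\tau}$ with guarded linear recursion. Fix a guarded linear recursive specification $E$, a set $I\subseteq\mathbb{E}$, a cluster $C$ for $I$ containing a recursion variable $X$, and write $P$ for the sum of exits appearing on the right-hand side of the $CFAR$ equation. It suffices to prove $\tau\cdot\tau_I(\langle X|E\rangle)\approx_{rbp}\tau\cdot\tau_I(P)$ (and likewise for $\approx_{rbs}$, $\approx_{rbhp}$, $\approx_{rbhhp}$), since by the soundness and congruence results already in place every other consequence of adjoining the new axiom then follows.

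First I would settle the non-rooted core. Define a branching pomset bisimulation $R$ consisting of the identity on closed terms together with all pairs $(\tau_I(\langle Y|E\rangle),\tau_I(P))$ with $Y$ ranging over the recursion variables of $C$. The verification splits on the shape of a transition out of $\tau_I(\langle Y|E\rangle)$: a summand of $t_Y$ that is an internal move $(c_1\leftmerge\cdots\leftmerge c_k)Y'$ with all $c_l\in I\cup\{\tau\}$ and $Y'\in C$ becomes, after $\tau_I$, a pomset of silent events, hence is answered by the stuttering clause, and its target $\tau_I(\langle Y'|E\rangle)$ is again $R$-related to $\tau_I(P)$ because $Y'\in C$; a summand that is an exit is answered directly by the corresponding summand of $P$, landing in the identity part of $R$. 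For the converse direction, a transition of $\tau_I(P)$ via an exit $(a_{l1}\leftmerge\cdots\leftmerge a_{li})Y_l$, I would invoke the defining property of a cluster: $Y$ and the variable $Y'_l\in C$ whose equation contains that exit are mutually reachable through $I\cup\{\tau\}$-labelled transitions, so $\tau_I(\langle Y|E\rangle)$ reaches $\tau_I(\langle Y'_l|E\rangle)$ by a sequence of silent pomset steps staying inside the $R$-related region, after which the exit is taken to match. The termination clauses are immediate, since neither a cluster variable nor $P$ can terminate without first performing an action.

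With $\tau_I(\langle X|E\rangle)\approx_{bp}\tau_I(P)$ in hand, the rooted layer is routine: the single leading $\tau$ on each side of $\tau\cdot\tau_I(\langle X|E\rangle)$ and $\tau\cdot\tau_I(P)$ is a genuine matching step, after which the branching bisimulation just constructed takes over, giving $\approx_{rbp}$; this is precisely the point of prepending $\tau$, since otherwise an internal summand of $t_X$ would force $\tau_I(P)$ to reply by stuttering, which is forbidden at the root. For $\approx_{rbs}$ the same $R$ works verbatim, because every pomset produced above is in fact a step (a left-merge of atomic events). For $\approx_{rbhp}$ and $\approx_{rbhhp}$ I would carry along the order isomorphism $f$ on configurations: internal silent moves leave $\hat{C}$ unchanged so $f$ is untouched, exit moves extend $f$ on the newly executed events in the obvious way, and downward closure for the hereditary case is inherited from the identity component together with the uniformity of the cluster-to-exit reachability argument.

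The step I expect to be the main obstacle is the converse matching inside the cluster: showing that from an arbitrary $Y\in C$ one can silently reach a variable offering a prescribed exit while every intermediate state stays related to $\tau_I(P)$, and checking that the (possibly partially abstracted) exit pomsets on the two sides agree as pomsets. This is exactly where the fairness content of $CFAR$ is used, namely that the infinite internal itinerary through the cluster is deliberately discarded in favour of its exits; everything else reduces to the transition rules for $\tau_I$, $\leftmerge$ and guarded recursion and to bookkeeping already developed in the earlier congruence theorems.
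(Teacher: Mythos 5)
Your proposal is correct and follows essentially the same route as the paper's proof: both reduce soundness to checking the single $CFAR$ instance $\tau\cdot\tau_I(\langle X|E\rangle)=\tau\cdot\tau_I(P)$ (with $P$ the sum of exits) and argue that, after abstraction, the intra-cluster moves become non-initial silent pomset/step moves absorbed by the branching clauses while each exit is matched by the corresponding summand of $P$, the leading $\tau$ securing rootedness, and the remaining equivalences handled analogously. The only difference is one of rigor: you exhibit the witnessing relation (identity together with the pairs $(\tau_I(\langle Y|E\rangle),\tau_I(P))$ for cluster variables $Y$) and the cluster-reachability argument for the converse matching explicitly, which the paper leaves implicit in its trace-level sketch that appeals to axiom $B1$.
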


\begin{proof}
(1) Soundness of $CFAR$ with respect to rooted branching step bisimulation $\approx_{rbs}$.

Let $X$ be in a cluster for $I$ with exits $\{(a_{11}\leftmerge\cdots\leftmerge a_{1i})Y_1,\cdots,(a_{m1}\leftmerge\cdots\leftmerge a_{mi})Y_m,b_{11}\leftmerge\cdots\leftmerge b_{1j},\cdots,b_{n1}\leftmerge\cdots\leftmerge b_{nj}\}$. Then $\langle X|E\rangle$ can execute a string of atomic events from $I\cup\{\tau\}$ inside the cluster of $X$, followed by an exit $(a_{i'1}\leftmerge\cdots\leftmerge a_{i'i})Y_{i'}$ for $i'\in\{1,\cdots,m\}$ or $b_{j'1}\leftmerge\cdots\leftmerge b_{j'j}$ for $j'\in\{1,\cdots,n\}$. Hence, $\tau_I(\langle X|E\rangle)$ can execute a string of $\tau^*$ inside the cluster of $X$, followed by an exit $\tau_I((a_{i'1}\leftmerge\cdots\leftmerge a_{i'i})\langle Y_{i'}|E\rangle)$ for $i'\in\{1,\cdots,m\}$ or $\tau_I(b_{j'1}\leftmerge\cdots\leftmerge b_{j'j})$ for $j'\in\{1,\cdots,n\}$. And these $\tau^*$ are non-initial in $\tau\tau_I(\langle X|E\rangle)$, so they are truly silent by the axiom $B1$, we obtain $\tau\tau_I(\langle X|E\rangle)\approx_{rbs}\tau\cdot\tau_I((a_{11}\leftmerge\cdots\leftmerge a_{1i})\langle Y_1|E\rangle+\cdots+(a_{m1}\leftmerge\cdots\leftmerge a_{mi})\langle Y_m|E\rangle+b_{11}\leftmerge\cdots\leftmerge b_{1j}+\cdots+b_{n1}\leftmerge\cdots\leftmerge b_{nj})$, as desired.

(2) Soundness of $CFAR$ with respect to rooted branching pomset bisimulation $\approx_{rbp}$.

Similarly to the proof of soundness of $CFAR$ modulo rooted branching step bisimulation $\approx_{rbs}$ (1), we can prove that $CFAR$ in Table \ref{CFAR} is sound modulo rooted branching pomset bisimulation $\approx_{rbp}$, we omit it.

(3) Soundness of $CFAR$ with respect to rooted branching hp-bisimulation $\approx_{rbhp}$.

Similarly to the proof of soundness of $CFAR$ modulo rooted branching pomset bisimulation equivalence (2), we can prove that $CFAR$ in Table \ref{CFAR} is sound modulo rooted branching hp-bisimulation equivalence, we omit it.

(4) Soundness of $CFAR$ with respect to rooted branching hhp-bisimulation $\approx_{rbhhp}$.

Similarly to the proof of soundness of $CFAR$ modulo rooted branching hp-bisimulation equivalence (3), we can prove that $CFAR$ in Table \ref{CFAR} is sound modulo rooted branching hhp-bisimulation equivalence, we omit it.
\end{proof}

\begin{theorem}[Completeness of $APTC_{\tau}$ with guarded linear recursion and $CFAR$]\label{CCFAR}
Let $p$ and $q$ be closed $APTC_{\tau}$ with guarded linear recursion and $CFAR$ terms, then,

\begin{enumerate}
  \item if $p\approx_{rbs} q$ then $p=q$;
  \item if $p\approx_{rbp} q$ then $p=q$;
  \item if $p\approx_{rbhp} q$ then $p=q$;
  \item if $p\approx_{rbhhp} q$ then $p=q$.
\end{enumerate}
\end{theorem}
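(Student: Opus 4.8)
The plan is to reduce the claim to the completeness of $APTC$ with silent step and guarded linear recursion (Theorem~\ref{CAPTCTAU}), exactly as the proof of that theorem reduced to the completeness of $APTC$ with linear recursion. Concretely, I would first establish the normal-form statement: every closed $APTC_{\tau}$ term $t$ with guarded linear recursion and $CFAR$ satisfies $APTC_{\tau}+CFAR\vdash t=\langle X_1|E\rangle$ for some guarded linear recursive specification $E$ in which the abstraction operator $\tau_I$ does not occur. Granting this, the four cases are uniform: given closed terms $p,q$ with $p\approx_{rbs}q$, write $p=\langle X_1|E_1\rangle$ and $q=\langle Y_1|E_2\rangle$ with $E_1,E_2$ guarded linear and $\tau_I$-free; soundness (Theorems~\ref{SAPTCABS} and~\ref{SCFAR}) gives $\langle X_1|E_1\rangle\approx_{rbs}\langle Y_1|E_2\rangle$, and Theorem~\ref{CAPTCTAU} then yields $\langle X_1|E_1\rangle=\langle Y_1|E_2\rangle$, hence $p=q$. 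Replacing $\approx_{rbs}$ by $\approx_{rbp}$, $\approx_{rbhp}$, $\approx_{rbhhp}$ and invoking the corresponding parts of Theorem~\ref{CAPTCTAU} disposes of (2)--(4).

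So the real content is the elimination of $\tau_I$. I would prove it by induction on the number of occurrences of abstraction operators in $t$, treating an innermost $\tau_I$ first. Its argument contains no abstraction operator, hence is a closed $APTC$ term with silent step and guarded linear recursion, so by Theorem~\ref{ETTau} it may be taken to be $\langle X_1|E\rangle$ with $E$ guarded linear. Using the axioms $TI1$--$TI7$ of Table~\ref{AxiomsForAbstraction} together with $RDP$, I push $\tau_I$ through the right-hand sides of $E$, renaming to $\tau$ every atomic event from $I$ occurring in a summand; this produces a linear recursive specification, but one that may fail to be guarded, because $\tau_I$ can turn an $I$-labelled cycle of $E$ into a $\tau$-cycle, i.e.\ the recursion variables in a cluster of $E$ for $I$ (Definition~\ref{CLUSTER}) now form a $\tau$-loop. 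This is precisely what $CFAR$ (Table~\ref{CFAR}) is built to remove, and once $\tau_I(\langle X_1|E\rangle)$ has been shown equal to some $\langle Z_1|E'\rangle$ with $E'$ guarded linear and $\tau_I$-free, substituting back into $t$ lowers the number of abstraction operators by one, so the induction closes.

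The main obstacle, and the delicate part, is therefore the cluster analysis. Because $E$ is guarded, the quotient of its recursion variables by ``being in the same cluster for $I$'' is well-founded; I would process clusters in an order compatible with this quotient. For a cluster $C$ with exits $\{(a_{11}\leftmerge\cdots\leftmerge a_{1i})Y_1,\cdots,(a_{m1}\leftmerge\cdots\leftmerge a_{mi})Y_m,b_{11}\leftmerge\cdots\leftmerge b_{1j},\cdots,b_{n1}\leftmerge\cdots\leftmerge b_{nj}\}$, $CFAR$ gives, for every $X\in C$, that $\tau\cdot\tau_I(\langle X|E\rangle)$ equals $\tau\cdot\tau_I$ applied to the plain sum of the exits; combining this with $B1$, $B2$ and $TI5$--$TI7$ collapses the whole cluster to a single equation whose right-hand side mentions only variables outside $C$ (by the exit condition of Definition~\ref{CLUSTER}) and atomic events. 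Iterating this over the well-founded quotient assembles a genuinely guarded linear $\tau_I$-free specification $E'$ with $\tau_I(\langle X_1|E\rangle)=\langle Z_1|E'\rangle$, the final identification being by $RDP$ and $RSP$ as in Theorem~\ref{CAPTCTAU}. The two points requiring care are the leading $\tau$ supplied by $CFAR$ --- it must be retained or absorbed by $B1$ according to whether the cluster being processed is entered at the root --- and the explicit verification that the assembled $E'$ meets the guardedness condition of Definition~\ref{GLRS}; the remaining steps are routine applications of the axioms.

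Cases (2)--(4) then follow from (1) verbatim, since the normal-form construction never uses which of the four rooted branching truly concurrent equivalences is intended, and Theorem~\ref{CAPTCTAU} supplies the concluding equality for each of $\approx_{rbs}$, $\approx_{rbp}$, $\approx_{rbhp}$, $\approx_{rbhhp}$.
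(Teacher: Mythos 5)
Your proposal is correct and follows essentially the same route as the paper: the only new case is an occurrence of $\tau_I$ applied to a term $\langle X|E\rangle$ with $E$ guarded linear, which is eliminated by decomposing the recursion variables of $E$ into clusters for $I$, using $CFAR$ (together with $B1$ and the $TI$-axioms) to rewrite each $\tau_I(\langle Z|E\rangle)$ into the sum $s_Z$ of its cluster's exits, packaging these into a guarded linear $\tau_I$-free specification $F$ identified via $RDP$/$RSP$, and then concluding by the completeness of $APTC$ with silent step and guarded linear recursion (Theorem \ref{CAPTCTAU}); the remaining cases for $\approx_{rbp}$, $\approx_{rbhp}$, $\approx_{rbhhp}$ are handled the same way in both arguments. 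Your framing of this as a standalone normal-form lemma proved by induction on the number of abstraction operators, with explicit attention to guardedness of the assembled specification and to the leading $\tau$ from $CFAR$, is only a more carefully organized presentation of the paper's proof, not a different method.
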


\begin{proof}
(1) For the case of rooted branching step bisimulation, the proof is following.

Firstly, in the proof the Theorem \ref{CAPTCTAU}, we know that each process term $p$ in APTC with silent step and guarded linear recursion is equal to a process term $\langle X_1|E\rangle$ with $E$ a guarded linear recursive specification. And we prove if $\langle X_1|E_1\rangle\approx_{rbs}\langle Y_1|E_2\rangle$, then $\langle X_1|E_1\rangle=\langle Y_1|E_2\rangle$

The only new case is $p\equiv\tau_I(q)$. Let $q=\langle X|E\rangle$ with $E$ a guarded linear recursive specification, so $p=\tau_I(\langle X|E\rangle)$. Then the collection of recursive variables in $E$ can be divided into its clusters $C_1,\cdots,C_N$ for $I$. Let

$$(a_{1i1}\leftmerge\cdots\leftmerge a_{k_{i1}i1}) Y_{i1}+\cdots+(a_{1im_i}\leftmerge\cdots\leftmerge a_{k_{im_i}im_i}) Y_{im_i}+b_{1i1}\leftmerge\cdots\leftmerge b_{l_{i1}i1}+\cdots+b_{1im_i}\leftmerge\cdots\leftmerge b_{l_{im_i}im_i}$$

be the conflict composition of exits for the cluster $C_i$, with $i\in\{1,\cdots,N\}$.

For $Z\in C_i$ with $i\in\{1,\cdots,N\}$, we define

$s_Z\triangleq (\hat{a_{1i1}}\leftmerge\cdots\leftmerge \hat{a_{k_{i1}i1}}) \tau_I(\langle Y_{i1}|E\rangle)+\cdots+(\hat{a_{1im_i}}\leftmerge\cdots\leftmerge \hat{a_{k_{im_i}im_i}}) \tau_I(\langle Y_{im_i}|E\rangle)+\hat{b_{1i1}}\leftmerge\cdots\leftmerge \hat{b_{l_{i1}i1}}+\cdots+\hat{b_{1im_i}}\leftmerge\cdots\leftmerge \hat{b_{l_{im_i}im_i}}$

For $Z\in C_i$ and $a_1,\cdots,a_j\in \mathbb{E}\cup\{\tau\}$ with $j\in\mathbb{N}$, we have

$(a_1\leftmerge\cdots\leftmerge a_j)\tau_I(\langle Z|E\rangle)$

$=(a_1\leftmerge\cdots\leftmerge a_j)\tau_I((a_{1i1}\leftmerge\cdots\leftmerge a_{k_{i1}i1}) \langle Y_{i1}|E\rangle+\cdots+(a_{1im_i}\leftmerge\cdots\leftmerge a_{k_{im_i}im_i}) \langle Y_{im_i}|E\rangle+b_{1i1}\leftmerge\cdots\leftmerge b_{l_{i1}i1}+\cdots+b_{1im_i}\leftmerge\cdots\leftmerge b_{l_{im_i}im_i})$

$=(a_1\leftmerge\cdots\leftmerge a_j)s_Z$

Let the linear recursive specification $F$ contain the same recursive variables as $E$, for $Z\in C_i$, $F$ contains the following recursive equation

$Z=(\hat{a_{1i1}}\leftmerge\cdots\leftmerge \hat{a_{k_{i1}i1}}) Y_{i1}+\cdots+(\hat{a_{1im_i}}\leftmerge\cdots\leftmerge \hat{a_{k_{im_i}im_i}})  Y_{im_i}+\hat{b_{1i1}}\leftmerge\cdots\leftmerge \hat{b_{l_{i1}i1}}+\cdots+\hat{b_{1im_i}}\leftmerge\cdots\leftmerge \hat{b_{l_{im_i}im_i}}$

It is easy to see that there is no sequence of one or more $\tau$-transitions from $\langle Z|F\rangle$ to itself, so $F$ is guarded.

For

$s_Z=(\hat{a_{1i1}}\leftmerge\cdots\leftmerge \hat{a_{k_{i1}i1}}) Y_{i1}+\cdots+(\hat{a_{1im_i}}\leftmerge\cdots\leftmerge \hat{a_{k_{im_i}im_i}}) Y_{im_i}+\hat{b_{1i1}}\leftmerge\cdots\leftmerge \hat{b_{l_{i1}i1}}+\cdots+\hat{b_{1im_i}}\leftmerge\cdots\leftmerge \hat{b_{l_{im_i}im_i}}$

is a solution for $F$. So, $(a_1\leftmerge\cdots\leftmerge a_j)\tau_I(\langle Z|E\rangle)=(a_1\leftmerge\cdots\leftmerge a_j)s_Z=(a_1\leftmerge\cdots\leftmerge a_j)\langle Z|F\rangle$.

So,

$\langle Z|F\rangle=(\hat{a_{1i1}}\leftmerge\cdots\leftmerge \hat{a_{k_{i1}i1}}) \langle Y_{i1}|F\rangle+\cdots+(\hat{a_{1im_i}}\leftmerge\cdots\leftmerge \hat{a_{k_{im_i}im_i}}) \langle Y_{im_i}|F\rangle+\hat{b_{1i1}}\leftmerge\cdots\leftmerge \hat{b_{l_{i1}i1}}+\cdots+\hat{b_{1im_i}}\leftmerge\cdots\leftmerge \hat{b_{l_{im_i}im_i}}$

Hence, $\tau_I(\langle X|E\rangle=\langle Z|F\rangle)$, as desired.

(2) For the case of rooted branching pomset bisimulation, it can be proven similarly to (1), we omit it.

(3) For the case of rooted branching hp-bisimulation, it can be proven similarly to (1), we omit it.

(4) For the case of rooted branching hhp-bisimulation, it can be proven similarly to (1), we omit it.
\end{proof}

\newpage\section{Process Algebra vs. Petri Net}\label{pp}

In this chapter, we discuss the relationship between process algebra and Petri net. Firstly, we establish the relationship between Petri nets and processes in section \ref{pnap}, based on the structurization of Petri nets. Then, we establish structural operational semantics of Petri net in section \ref{sosopn}. Finally, we reproduce the guarded truly concurrent process algebra $APTC_G$ \cite{APTC} based on the structural operational semantics of Petri net, in section \ref{paopn}.

\subsection{Petri Nets as Processes}\label{pnap}

Informally, a Petri net can be viewed as a bipartite directed graph with two kinds of nodes, called places (denoted by circles $\bigcirc$) and transitions (denoted by squares $\square$) respectively, and also causalities among them. A place is usually a condition and a transition is usually an action, and causalities among the places and transitions: a place may be an input of a transition, or an output of a transition, or the input and output of the same transition. 

The set of places, transitions and causalities among them define the static structure of a Petri net. The behaviour of a Petri net is defined with respect to a starting marking of the graph, called initial marking, in which a marking is a function from the set of places to $\mathbb{N}$. So, a set of execution sequences of a marked Petri net transform markings into other markings through occurrences of transitions. Usually, a token is removed from each input place of a transition, after the execution of this transition, then a token is added to each output place of this transition.

As mentioned in chapter \ref{pe}, a process graph describes the execution of a (closed) term and the definition of the initial term is following the structural way, but, in the definition of a Petri net, there may exist unstructured causalities. So, before treating a Petri net as a process, we must structurize the Petri net.

Formally, we give the definition of Petri net as follows.

\begin{definition}[Petri net]
A Petri net is a triple $PN=\langle\mathbb{B},\mathbb{E},F\rangle$, where $\mathbb{B}$ is a set of places or conditions, $\mathbb{E}$ is a set of transitions or events (actions), and $F\subseteq(\mathbb{B}\times\mathbb{E})\cup(\mathbb{E}\times\mathbb{B})$, with $\mathbb{B}\cap\mathbb{E}=\emptyset$ and $F\neq\emptyset$. And let $\mathbb{B}_i\subseteq\mathbb{B}$ be the initial places and $\mathbb{E}_o\subseteq\mathbb{E}$ be the set of ending transitions.
\end{definition}

Petri nets can be composed together into a bigger Petri net, and a Petri net can or cannot be decomposed into several smaller Petri nets. To compose Petri nets or decompose a Petri net, the first problem is to define the relations between Petri nets. We can see that there are only one kind of relation called causality among the places (conditions) and transitions (actions) in the definition of a Petri net, and concurrency is implicit. We adopt the four relations of causality, choice, concurrency and recursion as the basic relations to compose Petri nets or decompose a Petri net. Since recursions are expressed by recursive equations, while recursive equations are mixtures of recursive variables and places and transitions by sequence $\cdot$, choice $+$ and parallelism $\parallel$ to form terms, so, causality, choice and concurrency are three fundamental relations to compose or decompose Petri nets.

In concurrency theory, causality can be classified finely into sequence and communication, the situation of Petri net is similar to that of event structure in section \ref{pesap}, but without the unstructured conflictions.

Some Petri nets can be composed into a bigger Petri net in sequence, in choice, in parallel, and in concurrency.

\begin{definition}[Petri net composition in sequence]
Let Petri net $PN_1=\langle \mathbb{B}_1, \mathbb{E}_1, F_1\rangle$, $PN_2=\langle \mathbb{B}_2,\mathbb{E}_2, F_2\rangle$ with $\mathbb{B}_i$, $\mathbb{E}_i$ and $F_i$ for $i\in\{1,2\}$ being the corresponding set of places, set of transitions and set of causality relations respectively of Petri net $PN_i$ for $i\in\{1,2\}$ (with a little abuse of symbols), we write $PN_1\cdot PN_2=\langle \mathbb{B}_{PN1\cdot PN2},\mathbb{E}_{PN_1\cdot PN_2}, F_{PN_1\cdot PN_2}\rangle$ for the sequential composition of $PN_1$ and $PN_2$, where
$$\mathbb{B}_{PN_1\cdot PN_2}=\mathbb{B}_1\cup\mathbb{B}_2\quad\mathbb{E}_{PN_1\cdot PN_2}=\mathbb{E}_1\cup\mathbb{E}_2\quad F_{PN_1\cdot PN_2}=F_1\cup F_2\cup(\mathbb{E}_{o_1}\times\mathbb{B}_{i_2}) $$

with $\mathbb{E}_{o_1}\times\mathbb{B}_{i_2}$ is the Cartesian product of $\mathbb{E}_{o_1}$ and $\mathbb{B}_{i_2}$.
\end{definition}

\begin{definition}[Petri net composition in choice]
Let Petri nets $PN_1=\langle \mathbb{B}_1, \mathbb{E}_1, F_1\rangle$, $PN_2=\langle \mathbb{B}_2, \mathbb{E}_2, F_2\rangle$ with $\mathbb{B}_i$, $\mathbb{E}_i$ and $F_i$ for $i\in\{1,2\}$ being the corresponding set of places, set of transitions and set of causality relations respectively of Petri net $PN_i$ for $i\in\{1,2\}$ (with a little abuse of symbols), we write $PN_1+PN_2=\langle \mathbb{B}_{PN_1+PN_2},\mathbb{E}_{PN_1+PN_2}, F_{PN_1+PN_2}\rangle$ for the alternative composition of $PN_1$ and $PN_2$, where
$$\mathbb{B}_{PN_1+PN_2}=\mathbb{B}_1\cup\mathbb{B}_2\quad\mathbb{E}_{PN_1+PN_2}=\mathbb{E}_1\cup\mathbb{E}_2\quad F_{PN_1+PN_2}=F_1\cup F_2$$
\end{definition}

\begin{definition}[Petri net composition in parallel]
Let Petri nets $PN_1=\langle \mathbb{B}_1, \mathbb{E}_1, F_1\rangle$, $PN_2=\langle \mathbb{B}_2,\mathbb{E}_2, F_2\rangle$ with $\mathbb{B}_i$, $\mathbb{E}_i$ and $F_i$ for $i\in\{1,2\}$ being the corresponding set of places, set of transitions and set of causality relations respectively of Petri net $PN_i$ for $i\in\{1,2\}$ (with a little abuse of symbols), we write $PN_1\parallel PN_2=\langle \mathbb{B}_{PN_1\parallel PN_2}, \mathbb{E}_{PN_1\parallel PN_2}, F_{PN_1\parallel PN_2}\rangle$ for the parallel composition of $PN_1$ and $PN_2$, where
$$\mathbb{B}_{PN_1\parallel PN_2}=\mathbb{B}_1\cup\mathbb{B}_2\quad\mathbb{E}_{PN_1\parallel PN_2}=\mathbb{E}_1\cup\mathbb{E}_2\quad F_{PN_1\parallel PN_2}=F_1\cup F_2$$
\end{definition}

\begin{definition}[Petri net composition in concurrency]
Let Petri nets $PN_1=\langle \mathbb{B}_1, \mathbb{E}_1, F_1\rangle$, $PN_2=\langle \mathbb{B}_2, \mathbb{E}_2, F_2\rangle$ with $\mathbb{B}_i$, $\mathbb{E}_i$ and $F_i$ for $i\in\{1,2\}$ being the corresponding set of places, set of transitions and set of causality relations respectively of Petri net $PN_i$ for $i\in\{1,2\}$ (with a little abuse of symbols), we write $PN_1\between PN_2=\langle \mathbb{B}_{PN_1\between PN_2}, \mathbb{E}_{PN_1\between PN_2}, F_{PN_1\between PN_2}\rangle$ for the concurrent composition of $PN_1$ and $PN_2$, where
$$\mathbb{B}_{PN_1\between PN_2}=\mathbb{B}_1\cup\mathbb{B}_2\quad \mathbb{E}_{PN_1\between PN_2}=\mathbb{E}_1\cup\mathbb{E}_2\quad F_{PN_1\between PN_2}=F_1\cup F_2\cup F_{1,2}$$

where $F_{1,2}$ are the set of the newly added causality relations among places and transitions in $\mathbb{B}_1$ and $\mathbb{E}_2$, and $\mathbb{E}_1$ and $\mathbb{B}_2$, and $\mathbb{B}_2$ and $\mathbb{E}_1$, and $\mathbb{E}_2$ and $\mathbb{B}_1$ which are unstructured.
\end{definition}

Note that concurrent composition of Petri net is the common sense composition pattern, and other compositions are all special cases of concurrent composition, such that sequential composition is a concurrent composition with newly added causalities from the ending actions (actions without outgoing causalities) of the first Petri net to the beginning places (places without incoming causalities) of the second Petri net, alternative composition is a concurrent composition with newly added conflictions between the beginning actions of the two Petri nets, parallel composition is a concurrent composition without newly added causalities and conflictions.

Then we discuss the decomposition of a Petri net. We say that a Petri net is structured, we mean that a Petri net can be decomposed into several sub-Petri nets (the sub-Petri nets can composed into the original Petri net by the above composition patterns) without unstructured causalities and conflictions among them. Structured Petri net can capture the above meanings inductively.

\begin{definition}[Structured Petri net]
A Structured Petri net $\mathcal{SPN}$ which is a Petri net $PN=\langle \mathbb{B}, \mathbb{E}, F\rangle$, is inductively defined as follows:

\begin{enumerate}
  \item $\mathbb{B}\cup\mathbb{E}\subset \mathcal{SPN}$;
  \item If $PN_1$ is an $\mathcal{SPN}$ and $PN_2$ is an $\mathcal{SPN}$, then $PN_1\cdot PN_2$ is an $\mathcal{SPN}$;
  \item If $PN_1$ is an $\mathcal{SPN}$ and $PN_2$ is an $\mathcal{SPN}$, then $PN_1+PN_2$ is an $\mathcal{SPN}$;
  \item If $PN_1$ is an $\mathcal{SPN}$ and $PN_2$ is an $\mathcal{SPN}$, then $PN_1\parallel PN_2$ is an $\mathcal{SPN}$.
\end{enumerate}
\end{definition}

Actually, a Petri net defines an unstructured graph with a truly concurrent flavor and can not be structured usually.

\begin{figure}[h]
  \centering
  \includegraphics{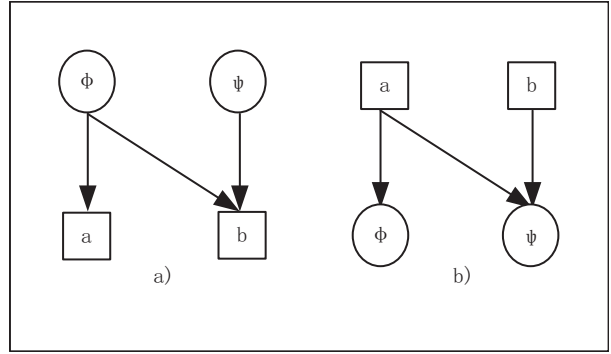}
  \caption{N-shape of Petri net}
  \label{nshapepn}
\end{figure}

\begin{definition}[N-shape]
A Petri net is an N-shape, if it has at least four places and transitions in $\mathbb{B}\cup\mathbb{E}$ with labels such as $\phi$, $\psi$, $a$ and $b$, and causality relations $\phi\leq a$, $\psi\leq b$ and $\phi\leq b$, or causality relations $a\leq \phi$, $b\leq \psi$ and $a\leq \psi$ as illustrated in Fig. \ref{nshapepn}.
\end{definition}

\begin{proposition}[Structurization of N-shape]
An N-shape can not be structurized.
\end{proposition}

\begin{proof}
For Figure \ref{nshapepn}-a) $\phi$ and $\psi$ are in parallel, $a$ after $\phi$, so $\phi$ and $a$ are in the same parallel branch; $b$ after $\psi$, so $\psi$ and $b$ are in the same parallel branch; so $\phi$ and $b$ are in different parallel branches. But, $b$ after $\phi$ means that $\phi$ and $b$ are in the same parallel branch. This causes contradictions.

For the case of Figure \ref{nshapepn}-b, it can be proven similarly that it can not be structurized too.
\end{proof}

Through the above analyses on the composition of Petri nets, it is also reasonable to assume that a Petri net is composed by parallel branches and the unstructured causalities always exist between places and transitions in different parallel branches, usually the unstructured causalities are caused by communications. Now, let us discuss the structurization of Petri net. Firstly, we only consider the synchronous communications. In a synchronous communication, two atomic event pair $a,b$ shakes hands denoted $a\mid b$ and merges as a communication action $\gamma(a,b)$ if the communication exists, otherwise, will cause deadlock (the special constant 0). 

$$a\mid b=\begin{cases}\gamma(a,b),& \textrm{if }\gamma(a,b)\textrm{ is defined;} \\ 0, & \textrm{otherwise.}\end{cases}$$

For the case of N-shape as Figure \ref{nshapepn}-a) illustrated, since evaluation of a place is either 0 or 1, and the evaluation process depends on the manipulation of some data, before the evaluation of the place $\phi$, some data should be communicated between the two branches. Here, we do not process the communication explicitly, and just make a copy of $\phi$ in parallel with the place $\psi$, i.e., $\phi\parallel\psi$ as the new input of the transition $b$ (the algebraic laws in section \ref{paopn} ensure the equivalence), as illustrated in Figure \ref{snshapepn}-a).

For the case of N-shape as Figure \ref{nshapepn}-b) illustrated, the two actions $a$ and $b$ are the inputs of the place $\psi$, they should merge into a single communication action $\gamma(a,b)$ (the algebraic laws in section \ref{paopn} ensure the equivalence), as illustrated in Figure \ref{snshapepn}-b). 

\begin{figure}[h]
  \centering
  \includegraphics{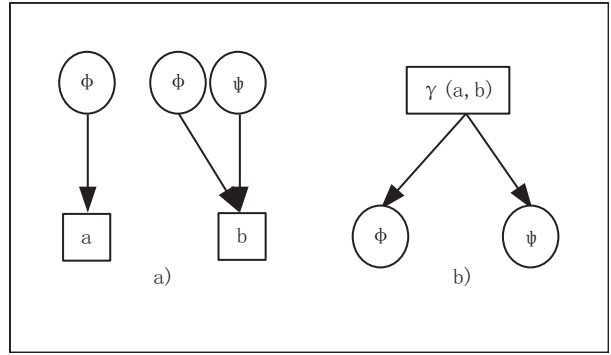}
  \caption{Structurization of N-shape of Petri net}
  \label{snshapepn}
\end{figure}

Thus, with the assumption of parallel branches, and unstructured causalities among them, by use of the elimination methods in Fig. \ref{snshapepn}, a Petri net $PN$ can be structurizated to a structured one $\mathcal{SPN}$. Actually, the unstructured Petri nets and their corresponding structured ones are equivalent modulo truly concurrent bisimulation equivalences, such as pomset bisimulation $\sim_p$, step bisimulation $\sim_s$, hp-bisimulation $\sim_{hp}$ and hhp-bisimulation $\sim_{hhp}$ in section \ref{sosopn}. A Petri net can be expressed by a term of atomic actions (including 0 and 1), binary operators $\cdot$, $+$, $\between$, $\parallel$, $\mid$. This means that a Petri net can be treated as a process, so, in the following, we will not distinguish Petri nets and processes, transitions and events (actions), guards and places.

\subsection{Structural Operational Semantics of Petri Net}\label{sosopn}

In this section, we use some concepts of structural operational semantics, and we do not refer to the original reference, please refer to \cite{SOS} for details.

\subsubsection{Truly Concurrent Bisimulations as Congruences}

\begin{definition}[Configuration]
Let $PN$ be a Petri net. A (finite) configuration in $PN$ is a (finite) consistent subset of events $C\subseteq PN$, closed with respect to causality (i.e. $\lceil C\rceil=C$), and a data state $s\in S$ with $S$ the set of all data states, denoted $\langle C, s\rangle$. The set of finite configurations of $PN$ is denoted by $\langle\mathcal{C}(PN), S\rangle$. We let $\hat{C}=C\backslash\{\tau\}\cup\{\epsilon\}$, where $\tau$ is the silent step and $\epsilon$ is the empty place or transition.
\end{definition}

A consistent subset of $X\subseteq \mathbb{E}$ of events can be seen as a pomset. Given $X, Y\subseteq \mathbb{E}$, $\hat{X}\sim \hat{Y}$ if $\hat{X}$ and $\hat{Y}$ are isomorphic as pomsets. In the following of the paper, we say $C_1\sim C_2$, we mean $\hat{C_1}\sim\hat{C_2}$.

\begin{definition}[Pomset transitions and step]
Let $PN$ be a Petri net and let $C\in\mathcal{C}(PN)$, and $\emptyset\neq X\subseteq \mathbb{E}$, if $C\cap X=\emptyset$ and $C'=C\cup X\in\mathcal{C}(PN)$, then $\langle C,s\rangle\xrightarrow{X} \langle C',s'\rangle$ is called a pomset transition from $\langle C,s\rangle$ to $\langle C',s'\rangle$. When the events in $X$ are pairwise concurrent, we say that $\langle C,s\rangle\xrightarrow{X}\langle C',s'\rangle$ is a step. It is obvious that $\rightarrow^*\xrightarrow{X}\rightarrow^*=\xrightarrow{X}$ and $\rightarrow^*\xrightarrow{e}\rightarrow^*=\xrightarrow{e}$ for any $e\in\mathbb{E}$ and $X\subseteq\mathbb{E}$.
\end{definition}

\begin{definition}[Weak pomset transitions and weak step]
Let $PN$ be a Petri net and let $C\in\mathcal{C}(PN)$, and $\emptyset\neq X\subseteq \hat{\mathbb{E}}$, if $C\cap X=\emptyset$ and $\hat{C'}=\hat{C}\cup X\in\mathcal{C}(PN)$, then $\langle C,s\rangle\xRightarrow{X} \langle C',s'\rangle$ is called a weak pomset transition from $\langle C,s\rangle$ to $\langle C',s'\rangle$, where we define $\xRightarrow{e}\triangleq\xrightarrow{\tau^*}\xrightarrow{e}\xrightarrow{\tau^*}$. And $\xRightarrow{X}\triangleq\xrightarrow{\tau^*}\xrightarrow{e}\xrightarrow{\tau^*}$, for every $e\in X$. When the events in $X$ are pairwise concurrent, we say that $\langle C,s\rangle\xRightarrow{X}\langle C',s'\rangle$ is a weak step.
\end{definition}

We will also suppose that all the Petri nets in this paper are image finite, that is, for any Petri net $PN$ and $C\in \mathcal{C}(PN)$, $\{e\in \mathbb{E}|\langle C,s\rangle\xrightarrow{e} \langle C',s'\rangle\}$ and $\{e\in\hat{\mathbb{E}}|\langle C,s\rangle\xRightarrow{e} \langle C',s'\rangle\}$ is finite.

\begin{definition}[Pomset, step bisimulation]\label{PSBG}
Let $PN_1$, $PN_2$ be Petri nets. A pomset bisimulation is a relation $R\subseteq\langle\mathcal{C}(PN_1),S\rangle\times\langle\mathcal{C}(PN_2),S\rangle$, such that if $(\langle C_1,s\rangle,\langle C_2,s\rangle)\in R$, and $\langle C_1,s\rangle\xrightarrow{X_1}\langle C_1',s'\rangle$ then $\langle C_2,s\rangle\xrightarrow{X_2}\langle C_2',s'\rangle$, with $X_1\subseteq \mathbb{E}_1$, $X_2\subseteq \mathbb{E}_2$, $X_1\sim X_2$ and $(\langle C_1',s'\rangle,\langle C_2',s'\rangle)\in R$ for all $s,s'\in S$, and vice-versa. We say that $PN_1$, $PN_2$ are pomset bisimilar, written $PN_1\sim_pPN_2$, if there exists a pomset bisimulation $R$, such that $(\langle\emptyset,\emptyset\rangle,\langle\emptyset,\emptyset\rangle)\in R$. By replacing pomset transitions with steps, we can get the definition of step bisimulation. When Petri nets $PN_1$ and $PN_2$ are step bisimilar, we write $PN_1\sim_sPN_2$.
\end{definition}

\begin{definition}[Weak pomset, step bisimulation]\label{WPSBG}
Let $PN_1$, $PN_2$ be Petri nets. A weak pomset bisimulation is a relation $R\subseteq\langle\mathcal{C}(PN_1),S\rangle\times\langle\mathcal{C}(PN_2),S\rangle$, such that if $(\langle C_1,s\rangle,\langle C_2,s\rangle)\in R$, and $\langle C_1,s\rangle\xRightarrow{X_1}\langle C_1',s'\rangle$ then $\langle C_2,s\rangle\xRightarrow{X_2}\langle C_2',s'\rangle$, with $X_1\subseteq \hat{\mathbb{E}_1}$, $X_2\subseteq \hat{\mathbb{E}_2}$, $X_1\sim X_2$ and $(\langle C_1',s'\rangle,\langle C_2',s'\rangle)\in R$ for all $s,s'\in S$, and vice-versa. We say that $PN_1$, $PN_2$ are weak pomset bisimilar, written $PN_1\approx_pPN_2$, if there exists a weak pomset bisimulation $R$, such that $(\langle\emptyset,\emptyset\rangle,\langle\emptyset,\emptyset\rangle)\in R$. By replacing weak pomset transitions with weak steps, we can get the definition of weak step bisimulation. When Petri nets $PN_1$ and $PN_2$ are weak step bisimilar, we write $PN_1\approx_sPN_2$.
\end{definition}

\begin{definition}[Posetal product]
Given two Petri nets $PN_1$, $PN_2$, the posetal product of their configurations, denoted $\langle\mathcal{C}(PN_1),S\rangle\overline{\times}\langle\mathcal{C}(PN_2),S\rangle$, is defined as

$$\{(\langle C_1,s\rangle,f,\langle C_2,s\rangle)|C_1\in\mathcal{C}(PN_1),C_2\in\mathcal{C}(PN_2),f:C_1\rightarrow C_2 \textrm{ isomorphism}\}.$$

A subset $R\subseteq\langle\mathcal{C}(PN_1),S\rangle\overline{\times}\langle\mathcal{C}(PN_2),S\rangle$ is called a posetal relation. We say that $R$ is downward closed when for any $(\langle C_1,s\rangle,f,\langle C_2,s\rangle),(\langle C_1',s'\rangle,f',\langle C_2',s'\rangle)\in \langle\mathcal{C}(PN_1),S\rangle\overline{\times}\langle\mathcal{C}(PN_2),S\rangle$, if $(\langle C_1,s\rangle,f,\langle C_2,s\rangle)\subseteq (\langle C_1',s'\rangle,f',\langle C_2',s'\rangle)$ pointwise and $(\langle C_1',s'\rangle,f',\langle C_2',s'\rangle)\in R$, then $(\langle C_1,s\rangle,f,\langle C_2,s\rangle)\in R$.

For $f:X_1\rightarrow X_2$, we define $f[x_1\mapsto x_2]:X_1\cup\{x_1\}\rightarrow X_2\cup\{x_2\}$, $z\in X_1\cup\{x_1\}$,(1)$f[x_1\mapsto x_2](z)=
x_2$,if $z=x_1$;(2)$f[x_1\mapsto x_2](z)=f(z)$, otherwise. Where $X_1\subseteq \mathbb{E}_1$, $X_2\subseteq \mathbb{E}_2$, $x_1\in \mathbb{E}_1$, $x_2\in \mathbb{E}_2$.
\end{definition}

\begin{definition}[Weakly posetal product]
Given two Petri nets $PN_1$, $PN_2$, the weakly posetal product of their configurations, denoted $\langle\mathcal{C}(PN_1),S\rangle\overline{\times}\langle\mathcal{C}(PN_2),S\rangle$, is defined as

$$\{(\langle C_1,s\rangle,f,\langle C_2,s\rangle)|C_1\in\mathcal{C}(PN_1),C_2\in\mathcal{C}(PN_2),f:\hat{C_1}\rightarrow \hat{C_2} \textrm{ isomorphism}\}.$$

A subset $R\subseteq\langle\mathcal{C}(PN_1),S\rangle\overline{\times}\langle\mathcal{C}(PN_2),S\rangle$ is called a weakly posetal relation. We say that $R$ is downward closed when for any $(\langle C_1,s\rangle,f,\langle C_2,s\rangle),(\langle C_1',s'\rangle,f,\langle C_2',s'\rangle)\in \langle\mathcal{C}(PN_1),S\rangle\overline{\times}\langle\mathcal{C}(PN_2),S\rangle$, if $(\langle C_1,s\rangle,f,\langle C_2,s\rangle)\subseteq (\langle C_1',s'\rangle,f',\langle C_2',s'\rangle)$ pointwise and $(\langle C_1',s'\rangle,f',\langle C_2',s'\rangle)\in R$, then $(\langle C_1,s\rangle,f,\langle C_2,s\rangle)\in R$.

For $f:X_1\rightarrow X_2$, we define $f[x_1\mapsto x_2]:X_1\cup\{x_1\}\rightarrow X_2\cup\{x_2\}$, $z\in X_1\cup\{x_1\}$,(1)$f[x_1\mapsto x_2](z)=
x_2$,if $z=x_1$;(2)$f[x_1\mapsto x_2](z)=f(z)$, otherwise. Where $X_1\subseteq \hat{\mathbb{E}_1}$, $X_2\subseteq \hat{\mathbb{E}_2}$, $x_1\in \hat{\mathbb{E}}_1$, $x_2\in \hat{\mathbb{E}}_2$. Also, we define $f(\tau^*)=f(\tau^*)$.
\end{definition}

\begin{definition}[(Hereditary) history-preserving bisimulation]\label{HHPBG}
A history-preserving (hp-) bisimulation is a posetal relation $R\subseteq\langle\mathcal{C}(PN_1),S\rangle\overline{\times}\langle\mathcal{C}(PN_2),S\rangle$ such that if $(\langle C_1,s\rangle,f,\langle C_2,s\rangle)\in R$, and $\langle C_1,s\rangle\xrightarrow{e_1} \langle C_1',s'\rangle$, then $\langle C_2,s\rangle\xrightarrow{e_2} \langle C_2',s'\rangle$, with $(\langle C_1',s'\rangle,f[e_1\mapsto e_2],\langle C_2',s'\rangle)\in R$ for all $s,s'\in S$, and vice-versa. $PN_1,PN_2$ are history-preserving (hp-)bisimilar and are written $PN_1\sim_{hp}PN_2$ if there exists a hp-bisimulation $R$ such that $(\langle\emptyset,\emptyset\rangle,\emptyset,\langle\emptyset,\emptyset\rangle)\in R$.

A hereditary history-preserving (hhp-)bisimulation is a downward closed hp-bisimulation. $PN_1,PN_2$ are hereditary history-preserving (hhp-)bisimilar and are written $PN_1\sim_{hhp}PN_2$.
\end{definition}

\begin{definition}[Weak (hereditary) history-preserving bisimulation]\label{WHHPBG}
A weak history-preserving (hp-) bisimulation is a weakly posetal relation $R\subseteq\langle\mathcal{C}(PN_1),S\rangle\overline{\times}\langle\mathcal{C}(PN_2),S\rangle$ such that if $(\langle C_1,s\rangle,f,\langle C_2,s\rangle)\in R$, and $\langle C_1,s\rangle\xRightarrow{e_1} \langle C_1',s'\rangle$, then $\langle C_2,s\rangle\xRightarrow{e_2} \langle C_2',s'\rangle$, with $(\langle C_1',s'\rangle,f[e_1\mapsto e_2],\langle C_2',s'\rangle)\in R$ for all $s,s'\in S$, and vice-versa. $PN_1,PN_2$ are weak history-preserving (hp-)bisimilar and are written $PN_1\approx_{hp}PN_2$ if there exists a weak hp-bisimulation $R$ such that $(\langle\emptyset,\emptyset\rangle,\emptyset,\langle\emptyset,\emptyset\rangle)\in R$.

A weakly hereditary history-preserving (hhp-)bisimulation is a downward closed weak hp-bisimulation. $PN_1,PN_2$ are weakly hereditary history-preserving (hhp-)bisimilar and are written $PN_1\approx_{hhp}PN_2$.
\end{definition}

\begin{theorem}[Truly concurrent bisimulations as congruences]\label{tcbacpn}
If a TSS is positive after reduction and in panth format, then the truly concurrent bisimulation equivalences, including pomset bisimulation equivalence $\sim_p$, step bisimulation equivalence $\sim_s$, hp-bisimulation equivalence $\sim_{hp}$ and hhp-bisimulation equivalence $\sim_{hhp}$, that it induces are all congruences.
\end{theorem}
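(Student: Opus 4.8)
The statement is the verbatim analogue of Theorem \ref{tcbac}, the only new feature being that here a configuration carries a data state, so that it has the form $\langle C,s\rangle$ with $s\in S$ and transitions have the shape $\langle C,s\rangle\xrightarrow{X}\langle C',s'\rangle$. The plan is therefore to reuse the argument behind Theorem \ref{tcbac} and to check that the extra data-state component is inert as far as the format restrictions are concerned. First I would set up the transition systems over which the four equivalences live: for $\sim_p$ and $\sim_s$ take the states to be the pairs $\langle C,s\rangle$ and the transitions to be the pomset (resp. step) transitions, with labels closed under the pomset isomorphism $\sim$; for $\sim_{hp}$ and $\sim_{hhp}$ pass to the (weakly) posetal product, so that hp-bisimilarity is bisimilarity on the transition system whose states are triples $(\langle C_1,s\rangle,f,\langle C_2,s\rangle)$ and hhp-bisimilarity is its downward-closed refinement.

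Next I would observe that, after the relabelling of single-action labels $a$ by pomset labels $\{a_1,\dots,a_n\}$ performed at the start of Section \ref{sosopn} (mirroring Section \ref{sospes}), every transition rule of the Petri-net TSS still meets the three panth conditions: right-hand sides of positive premises are single variables, the source contains at most one function symbol, and no variable occurs twice among those right-hand sides and the source. The data state $s$ occurs only as a parameter of states, never as an argument of a function symbol of the signature $\Sigma$, so it touches none of these conditions, and the stratification witnessing ``positive after reduction'' is likewise unaffected. With this in place I would invoke the panth-format congruence meta-theorem — for a TSS that is positive after reduction and in panth format, bisimilarity on the induced LTS is a congruence — which yields the claim for $\sim_p$ and $\sim_s$ directly, since these are exactly bisimilarity on the pomset- and step-labelled systems and the relabelling preserves the format.

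For $\sim_{hp}$ I would lift the argument to the posetal product. Given a function symbol $f\in\Sigma$ and, for each argument $i$, an hp-bisimulation $R_i$ relating $\langle C_1^i,s\rangle$ to $\langle C_2^i,s\rangle$ through an isomorphism $f_i$, the panth-format decomposition lemma expresses any transition of $f(\langle C_1^1,s\rangle,\dots)$ as assembled from transitions of finitely many of the arguments together with a closed rule instance; matching these argument transitions through the $R_i$ and composing the $f_i$ coordinate-wise (written $f_i[e_1\mapsto e_2]$ on the newly executed events) produces a single isomorphism $f$ on the composite configurations, closing the congruence diagram, and downward closure is inherited componentwise, which settles $\sim_{hhp}$. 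The part that needs genuine care — and the main obstacle — is this hp-/hhp-bookkeeping: one must verify that splitting a composite transition into argument transitions splits the induced causal order on the newly executed events into exactly the relations the operators $\cdot,+,\parallel,\leftmerge,\mid,\Theta,\triangleleft,\partial_H$ introduce, with no cross-branch dependency. This is where the structurization of Section \ref{pnap} is used: after structurization there are no unstructured causalities or conflictions, so the only relations among events in different arguments of an operator are the ones the operator itself adds, and these agree on both sides because the argument configurations agree; with that compatibility the diagrams for $\sim_{hp}$ and $\sim_{hhp}$ close just as for Theorem \ref{tcbac}, and since no step of the argument involved $s$, the data state causes no additional difficulty.
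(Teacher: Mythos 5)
You are attempting a proof that the paper in fact omits: Theorem \ref{tcbacpn}, like its analogue Theorem \ref{tcbac}, is stated without proof, as a format meta-theorem adapted from the standard panth-format congruence result of \cite{SOS} after single-action labels are replaced by pomset labels, and it is then used as a black box in the later congruence theorems. So your sketch must be judged as a proof of the general implication: for \emph{every} TSS that is positive after reduction and in panth format, the induced equivalences $\sim_p$, $\sim_s$, $\sim_{hp}$, $\sim_{hhp}$ are congruences. Your reduction of $\sim_p$ and $\sim_s$ to ordinary bisimilarity on the pomset- (resp. step-) labelled transition system, followed by the classical meta-theorem, is in the right spirit and is surely what the paper intends.

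The gaps lie elsewhere. Most seriously, your hp-/hhp-argument appeals to the structurization of Section \ref{pnap} (``after structurization there are no unstructured causalities or conflictions'') and to the specific operators $\cdot,+,\parallel,\leftmerge,\mid,\Theta,\triangleleft,\partial_H$; but the theorem's hypotheses mention only the rule format, so its proof cannot use properties of structured Petri nets or of a particular signature --- an arbitrary panth-format TSS carries no such guarantee, and if the argument genuinely needed it, the theorem as stated would be refuted rather than proved. Relatedly, checking that the concrete Petri-net rules satisfy the panth conditions is verification of the \emph{hypotheses}; that belongs to the application theorems for $BATC_G$ and $APTC_G$, not to the proof of the implication. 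Second, the hp-/hhp-case is exactly where the work lies, and your sketch rests on an unstated ``panth-format decomposition lemma'' plus the assertion that downward closure is inherited componentwise; for hereditary hp-bisimilarity this is the delicate point (backward closure must be shown compatible with the coordinate-wise construction $f_i[e_1\mapsto e_2]$), and no argument is given. Third, the data states: the classical meta-theorem concerns LTSs whose states are closed terms, whereas here transitions relate pairs $\langle C,s\rangle$ and congruence quantifies over all $s\in S$; one must either fold $s$ into the labels or invoke a data-aware format result, and observing that $s$ is not an argument of a function symbol does not by itself discharge this. As written, the proposal does not yet establish the theorem.
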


\subsubsection{Branching Truly Concurrent Bisimulations as Congruences}

In this section, we give the concepts of rooted branching truly concurrent bisimulation equivalences, based on these concepts, we can design the axiom system of the silent step $\tau$.

\begin{definition}[Branching pomset, step bisimulation]\label{BPSBG}
Assume a special termination predicate $\downarrow$, and let $\surd$ represent a state with $\surd\downarrow$. Let $PN_1$, $PN_2$ be Petri nets. A branching pomset bisimulation is a relation $R\subseteq\langle\mathcal{C}(PN_1),S\rangle\times\langle\mathcal{C}(PN_2),S\rangle$, such that:
 \begin{enumerate}
   \item if $(\langle C_1,s\rangle,\langle C_2,s\rangle)\in R$, and $\langle C_1,s\rangle\xrightarrow{X}\langle C_1',s'\rangle$ then
   \begin{itemize}
     \item either $X\equiv \tau^*$, and $(\langle C_1',s'\rangle,\langle C_2,s\rangle)\in R$ with $s'\in \tau(s)$;
     \item or there is a sequence of (zero or more) $\tau$-transitions $\langle C_2,s\rangle\xrightarrow{\tau^*} \langle C_2^0,s^0\rangle$, such that $(\langle C_1,s\rangle,\langle C_2^0,s^0\rangle)\in R$ and $\langle C_2^0,s^0\rangle\xRightarrow{X}\langle C_2',s'\rangle$ with $(\langle C_1',s'\rangle,\langle C_2',s'\rangle)\in R$;
   \end{itemize}
   \item if $(\langle C_1,s\rangle,\langle C_2,s\rangle)\in R$, and $\langle C_2,s\rangle\xrightarrow{X}\langle C_2',s'\rangle$ then
   \begin{itemize}
     \item either $X\equiv \tau^*$, and $(\langle C_1,s\rangle,\langle C_2',s'\rangle)\in R$;
     \item or there is a sequence of (zero or more) $\tau$-transitions $\langle C_1,s\rangle\xrightarrow{\tau^*} \langle C_1^0,s^0\rangle$, such that $(\langle C_1^0,s^0\rangle,\langle C_2,s\rangle)\in R$ and $\langle C_1^0,s^0\rangle\xRightarrow{X}\langle C_1',s'\rangle$ with $(\langle C_1',s'\rangle,\langle C_2',s'\rangle)\in R$;
   \end{itemize}
   \item if $(\langle C_1,s\rangle,\langle C_2,s\rangle)\in R$ and $\langle C_1,s\rangle\downarrow$, then there is a sequence of (zero or more) $\tau$-transitions $\langle C_2,s\rangle\xrightarrow{\tau^*}\langle C_2^0,s^0\rangle$ such that $(\langle C_1,s\rangle,\langle C_2^0,s^0\rangle)\in R$ and $\langle C_2^0,s^0\rangle\downarrow$;
   \item if $(\langle C_1,s\rangle,\langle C_2,s\rangle)\in R$ and $\langle C_2,s\rangle\downarrow$, then there is a sequence of (zero or more) $\tau$-transitions $\langle C_1,s\rangle\xrightarrow{\tau^*}\langle C_1^0,s^0\rangle$ such that $(\langle C_1^0,s^0\rangle,\langle C_2,s\rangle)\in R$ and $\langle C_1^0,s^0\rangle\downarrow$.
 \end{enumerate}

We say that $PN_1$, $PN_2$ are branching pomset bisimilar, written $PN_1\approx_{bp}PN_2$, if there exists a branching pomset bisimulation $R$, such that $(\langle\emptyset,\emptyset\rangle,\langle\emptyset,\emptyset\rangle)\in R$.

By replacing pomset transitions with steps, we can get the definition of branching step bisimulation. When Petri nets $PN_1$ and $PN_2$ are branching step bisimilar, we write $PN_1\approx_{bs}PN_2$.
\end{definition}

\begin{definition}[Rooted branching pomset, step bisimulation]\label{RBPSBG}
Assume a special termination predicate $\downarrow$, and let $\surd$ represent a state with $\surd\downarrow$. Let $PN_1$, $PN_2$ be Petri nets. A rooted branching pomset bisimulation is a relation $R\subseteq\langle\mathcal{C}(PN_1),S\rangle\times\langle\mathcal{C}(PN_2),S\rangle$, such that:
 \begin{enumerate}
   \item if $(\langle C_1,s\rangle,\langle C_2,s\rangle)\in R$, and $\langle C_1,s\rangle\xrightarrow{X}\langle C_1',s'\rangle$ then $\langle C_2,s\rangle\xrightarrow{X}\langle C_2',s'\rangle$ with $\langle C_1',s'\rangle\approx_{bp}\langle C_2',s'\rangle$;
   \item if $(\langle C_1,s\rangle,\langle C_2,s\rangle)\in R$, and $\langle C_2,s\rangle\xrightarrow{X}\langle C_2',s'\rangle$ then $\langle C_1,s\rangle\xrightarrow{X}\langle C_1',s'\rangle$ with $\langle C_1',s'\rangle\approx_{bp}\langle C_2',s'\rangle$;
   \item if $(\langle C_1,s\rangle,\langle C_2,s\rangle)\in R$ and $\langle C_1,s\rangle\downarrow$, then $\langle C_2,s\rangle\downarrow$;
   \item if $(\langle C_1,s\rangle,\langle C_2,s\rangle)\in R$ and $\langle C_2,s\rangle\downarrow$, then $\langle C_1,s\rangle\downarrow$.
 \end{enumerate}

We say that $PN_1$, $PN_2$ are rooted branching pomset bisimilar, written $PN_1\approx_{rbp}PN_2$, if there exists a rooted branching pomset bisimulation $R$, such that $(\langle\emptyset,\emptyset\rangle,\langle\emptyset,\emptyset\rangle)\in R$.

By replacing pomset transitions with steps, we can get the definition of rooted branching step bisimulation. When Petri nets $PN_1$ and $PN_2$ are rooted branching step bisimilar, we write $PN_1\approx_{rbs}PN_2$.
\end{definition}

\begin{definition}[Branching (hereditary) history-preserving bisimulation]\label{BHHPBG}
Assume a special termination predicate $\downarrow$, and let $\surd$ represent a state with $\surd\downarrow$. A branching history-preserving (hp-) bisimulation is a weakly posetal relation $R\subseteq\langle\mathcal{C}(PN_1),S\rangle\overline{\times}\langle\mathcal{C}(PN_2),S\rangle$ such that:

 \begin{enumerate}
   \item if $(\langle C_1,s\rangle,f,\langle C_2,s\rangle)\in R$, and $\langle C_1,s\rangle\xrightarrow{e_1}\langle C_1',s'\rangle$ then
   \begin{itemize}
     \item either $e_1\equiv \tau$, and $(\langle C_1',s'\rangle,f[e_1\mapsto \tau^{e_1}],\langle C_2,s\rangle)\in R$;
     \item or there is a sequence of (zero or more) $\tau$-transitions $\langle C_2,s\rangle\xrightarrow{\tau^*} \langle C_2^0,s^0\rangle$, such that $(\langle C_1,s\rangle,f,\langle C_2^0,s^0\rangle)\in R$ and $\langle C_2^0,s^0\rangle\xrightarrow{e_2}\langle C_2',s'\rangle$ with $(\langle C_1',s'\rangle,f[e_1\mapsto e_2],\langle C_2',s'\rangle)\in R$;
   \end{itemize}
   \item if $(\langle C_1,s\rangle,f,\langle C_2,s\rangle)\in R$, and $\langle C_2,s\rangle\xrightarrow{e_2}\langle C_2',s'\rangle$ then
   \begin{itemize}
     \item either $e_2\equiv \tau$, and $(\langle C_1,s\rangle,f[e_2\mapsto \tau^{e_2}],\langle C_2',s'\rangle)\in R$;
     \item or there is a sequence of (zero or more) $\tau$-transitions $\langle C_1,s\rangle\xrightarrow{\tau^*} \langle C_1^0,s^0\rangle$, such that $(\langle C_1^0,s^0\rangle,f,\langle C_2,s\rangle)\in R$ and $\langle C_1^0,s^0\rangle\xrightarrow{e_1}\langle C_1',s'\rangle$ with $(\langle C_1',s'\rangle,f[e_2\mapsto e_1],\langle C_2',s'\rangle)\in R$;
   \end{itemize}
   \item if $(\langle C_1,s\rangle,f,\langle C_2,s\rangle)\in R$ and $\langle C_1,s\rangle\downarrow$, then there is a sequence of (zero or more) $\tau$-transitions $\langle C_2,s\rangle\xrightarrow{\tau^*}\langle C_2^0,s^0\rangle$ such that $(\langle C_1,s\rangle,f,\langle C_2^0,s^0\rangle)\in R$ and $\langle C_2^0,s^0\rangle\downarrow$;
   \item if $(\langle C_1,s\rangle,f,\langle C_2,s\rangle)\in R$ and $\langle C_2,s\rangle\downarrow$, then there is a sequence of (zero or more) $\tau$-transitions $\langle C_1,s\rangle\xrightarrow{\tau^*}\langle C_1^0,s^0\rangle$ such that $(\langle C_1^0,s^0\rangle,f,\langle C_2,s\rangle)\in R$ and $\langle C_1^0,s^0\rangle\downarrow$.
 \end{enumerate}

$PN_1,PN_2$ are branching history-preserving (hp-)bisimilar and are written $PN_1\approx_{bhp}PN_2$ if there exists a branching hp-bisimulation $R$ such that $(\langle\emptyset,\emptyset\rangle,\emptyset,\langle\emptyset,\emptyset\rangle)\in R$.

A branching hereditary history-preserving (hhp-)bisimulation is a downward closed branching hp-bisimulation. $PN_1,PN_2$ are branching hereditary history-preserving (hhp-)bisimilar and are written $PN_1\approx_{bhhp}PN_2$.
\end{definition}

\begin{definition}[Rooted branching (hereditary) history-preserving bisimulation]\label{RBHHPBG}
Assume a special termination predicate $\downarrow$, and let $\surd$ represent a state with $\surd\downarrow$. A rooted branching history-preserving (hp-) bisimulation is a weakly posetal relation $R\subseteq\langle\mathcal{C}(PN_1),S\rangle\overline{\times}\langle\mathcal{C}(PN_2),S\rangle$ such that:

 \begin{enumerate}
   \item if $(\langle C_1,s\rangle,f,\langle C_2,s\rangle)\in R$, and $\langle C_1,s\rangle\xrightarrow{e_1}\langle C_1',s'\rangle$, then $\langle C_2,s\rangle\xrightarrow{e_2}\langle C_2',s'\rangle$ with $\langle C_1',s'\rangle\approx_{bhp}\langle C_2',s'\rangle$;
   \item if $(\langle C_1,s\rangle,f,\langle C_2,s\rangle)\in R$, and $\langle C_2,s\rangle\xrightarrow{e_2}\langle C_2',s'\rangle$, then $\langle C_1,s\rangle\xrightarrow{e_1}\langle C_1',s'\rangle$ with $\langle C_1',s'\rangle\approx_{bhp}\langle C_2',s'\rangle$;
   \item if $(\langle C_1,s\rangle,f,\langle C_2,s\rangle)\in R$ and $\langle C_1,s\rangle\downarrow$, then $\langle C_2,s\rangle\downarrow$;
   \item if $(\langle C_1,s\rangle,f,\langle C_2,s\rangle)\in R$ and $\langle C_2,s\rangle\downarrow$, then $\langle C_1,s\rangle\downarrow$.
 \end{enumerate}

$PN_1,PN_2$ are rooted branching history-preserving (hp-)bisimilar and are written $PN_1\approx_{rbhp}PN_2$ if there exists a rooted branching hp-bisimulation $R$ such that $(\langle\emptyset,\emptyset\rangle,\emptyset,\langle\emptyset,\emptyset\rangle)\in R$.

A rooted branching hereditary history-preserving (hhp-)bisimulation is a downward closed rooted branching hp-bisimulation. $PN_1,PN_2$ are rooted branching hereditary history-preserving (hhp-)bisimilar and are written $PN_1\approx_{rbhhp}PN_2$.
\end{definition}

\begin{theorem}[Rooted branching truly concurrent bisimulations as congruences]\label{rbtcbacpn}
If a TSS is positive after reduction and in RBB cool format, then the rooted branching truly concurrent bisimulation equivalences, including rooted branching pomset bisimulation equivalence $\approx_{rbp}$, rooted branching step bisimulation equivalence $\approx_{rbs}$, rooted branching hp-bisimulation equivalence $\approx_{rbhp}$ and rooted branching hhp-bisimulation equivalence $\approx_{rbhhp}$, that it induces are all congruences.
\end{theorem}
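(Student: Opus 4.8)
The plan is to establish Theorem~\ref{rbtcbacpn} as the Petri-net-with-data-states counterpart of Theorem~\ref{rbtcbac}, by transporting the classical ``rooted branching bisimulation is a congruence for the RBB cool format'' meta-result (see \cite{SOS}) to the setting of pomset/step transitions and of configuration-plus-data-state transition systems. The first observation is that, for the purposes of a congruence argument, a transition $\langle C,s\rangle \xrightarrow{X} \langle C',s'\rangle$ behaves exactly like an ordinary labelled transition whose label is the composite object $X$ (a pomset, or a set of pairwise concurrent events in the step case) and whose target carries a passive data component: none of the cool-format restrictions are sensitive to the internal structure of $X$ or to the presence of $s$. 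Hence it suffices to re-run, over these generalized transitions, the standard proof schema for the cool format, exactly as was done for the pomset/step cases of Theorem~\ref{tcbacpn} and for Theorem~\ref{rbtcbac}.

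Concretely, I would proceed as follows. \textbf{Step 1 (reduction to a single operator).} Since congruence is compositional, it suffices to fix one function symbol $f$ and show that componentwise rooted branching truly concurrent bisimilarity of the arguments implies $f(\vec p)\approx_{rbp} f(\vec q)$ (and likewise for $\approx_{rbs},\approx_{rbhp},\approx_{rbhhp}$). \textbf{Step 2 (the non-rooted core).} Build the candidate relation consisting of the diagonal together with all pairs $(g[f(\vec p)], g[f(\vec q)])$ for contexts $g$ built from cool operators, and verify that it is a branching truly concurrent bisimulation. The RBB cool format is engineered precisely so that this works: absence of lookahead, together with the restriction that each argument $x_i$ of a non-patience rule with source $f(\vec x)$ occurs in at most one premise (of the form $x_iP$ or $x_i\xrightarrow{X}y$ with all labels non-$\tau$), guarantees that a non-silent move of $f(\vec p)$ is caused by a single argument move, which can be mimicked through that argument's branching bisimulation, while the \emph{patience rules} supply in every other argument the $\tau^*$-prefixes needed to line the configurations up before the matching move. \textbf{Step 3 (rootedness).} The first transition must be matched exactly; this is immediate from Definitions~\ref{RBPSBG} and~\ref{RBHHPBG} for the components, and the residual pairs are branching bisimilar, hence related by Step~2. \textbf{Step 4 (hp/hhp bookkeeping).} For the history-preserving variants, carry the order-isomorphism $f[e_1\mapsto e_2]$ (and the $f(\tau^*)$ bookkeeping of the weakly posetal product) along the construction and check that it remains an isomorphism between the hatted configurations; for hhp, additionally check that the relation built in Step~2 is downward closed whenever the component relations are. \textbf{Step 5 (hypotheses).} The hypothesis ``positive after reduction'' (available, e.g., via a stratification and the ``Positivity after reduction'' theorem) is what makes the transition relation unambiguous, so that the proof-tree manipulations in Steps~2--4 are legitimate; this is the only place negativity of premises could interfere.

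The hard part will be Step~4 in the hhp case: preserving downward closure of the congruence-closure relation under cool $f$-contexts in the \emph{weak} (branching) setting, since the silent ``$\tau$-steps'' recorded in the weakly posetal product must be shown to restrict coherently to sub-configurations, and one must check that the passive data state $s$ never obstructs the order-isomorphism tracking. A secondary, more routine obstacle is verifying that the operators actually used later (those of $APTC_G$, including the parallel-type operators together with their patience rules) genuinely fall within the RBB cool format; that verification is naturally deferred to the sections where those operators and their transition rules are introduced, so that here only the generic meta-theorem is needed.
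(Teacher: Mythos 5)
The paper itself gives no proof of Theorem~\ref{rbtcbacpn}: as with Theorems~\ref{tcbac}, \ref{rbtcbac} and \ref{tcbacpn}, it is stated as a bare meta-theorem, implicitly imported from the standard congruence-format literature. So there is nothing in the paper to compare your argument against line by line; your plan is exactly the route the paper tacitly assumes (re-run the RBB cool format congruence proof with pomsets as labels and the data state carried along passively), and at that level it is the right idea.

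As a proof, however, it has genuine gaps at precisely the points you defer. First, the reduction of a pomset transition to "an ordinary transition with a composite label" does not go through unchanged for the \emph{branching} bisimulations: in Definitions~\ref{BPSBG} and \ref{BHHPBG} a move $\langle C_1,s\rangle\xrightarrow{X}\langle C_1',s'\rangle$ may be matched by $\langle C_2,s\rangle\xrightarrow{\tau^*}\langle C_2^0,s^0\rangle\xRightarrow{X}\langle C_2',s'\rangle$, and $\xRightarrow{X}$ is defined eventwise ($\tau^*$ interleaved around each $a\in X$), not as a single step with label $X$; moreover the case $X\equiv\tau^*$ concerns a whole multiset of silent events. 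The classical cool-format argument matches one label at a time, so your Step~2 needs an explicit decomposition/recomposition lemma for these weak pomset moves rather than the claim that the format restrictions "are not sensitive to the internal structure of $X$." Second, $\approx_{rbhp}$ and especially $\approx_{rbhhp}$ are not bisimulations on an LTS at all but relations on the (weakly) posetal product; no published format theorem covers them, and preserving downward closure of your congruence-closure relation under arbitrary cool contexts is exactly the notoriously delicate part (hhp-type equivalences are known to interact badly with contextual constructions), which your Step~4 names but does not argue. Third, in this chapter the transitions are data-decorated, and the rules of $APTC_G$ genuinely consult and update $s$ (via $test$ and $effect$), so the "passive data component" assumption itself needs a lemma, as does the deferred verification that the $APTC_G$ rules (including the $\tau_I$ rules, whose conclusions relabel to $\tau$) really satisfy the RBB cool conditions with the required patience rules. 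Until those three items are supplied, the proposal is a plausible programme rather than a proof.
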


\subsection{Process Algebra of Petri Net}\label{paopn} 

In this section, we use $APTC_G$ \cite{APTC} as the process algebra of Petri net, and use guards to model the places in Petri net.

\subsubsection{$BATC$ with Guards}

Let $\mathbb{E}$ be the set of atomic events (actions), $\mathbb{B}_{at}$ be the set of atomic guards to model places in Petri net, $0$ be the deadlock constant, and $1$ be the empty event. We extend $\mathbb{B}_{at}$ to the set of basic guards $\mathbb{B}$ (places) with element $\phi,\psi,\cdots$, which is generated by the following formation rules:

$$\phi::=0|1|\neg\phi|\psi\in \mathbb{B}_{at}|\phi+\psi|\phi\cdot\psi$$

In the following, let $e_1, e_2, e_1', e_2'\in \mathbb{E}$, $\phi,\psi\in \mathbb{B}$ and let variables $x,y,z$ range over the set of process terms, $p,q,s$ range over the set of closed terms. The predicate $test(\phi,s)$ represents that $\phi$ holds in the state $s$, and $test(1,s)$ holds and $test(0,s)$ does not hold. $effect(e,s)\in S$ denotes $s'$ in $s\xrightarrow{e}s'$. The predicate weakest precondition $wp(e,\phi)$ denotes that $\forall s,s'\in S, test(\phi,effect(e,s))$ holds.

The set of axioms of $BATC_G$ consists of the laws given in Table \ref{AxiomsForBATCG}.

\begin{center}
    \begin{table}
        \begin{tabular}{@{}ll@{}}
            \hline No. &Axiom\\
            $A1$ & $x+ y = y+ x$\\
            $A2$ & $(x+ y)+ z = x+ (y+ z)$\\
            $A3$ & $x+ x = x$\\
            $A4$ & $(x+ y)\cdot z = x\cdot z + y\cdot z$\\
            $A5$ & $(x\cdot y)\cdot z = x\cdot(y\cdot z)$\\
            $A6$ & $x+0 = x$\\
            $A7$ & $0\cdot x = 0$\\
            $A8$ & $1\cdot x = x$\\
            $A9$ & $x\cdot1 = x$\\
            $G1$ & $\phi\cdot\neg\phi = 0$\\
            $G2$ & $\phi+\neg\phi = 1$\\
            $G3$ & $\phi+0 = 0$\\
            $G4$ & $\phi(x+y)=\phi x+\phi y$\\
            $G5$ & $\phi(x\cdot y)= \phi x\cdot y$\\
            $G6$ & $(\phi+\psi)x = \phi x + \psi x$\\
            $G7$ & $(\phi\cdot \psi)\cdot x = \phi\cdot(\psi\cdot x)$\\
            $G8$ & $\phi=1$ if $\forall s\in S.test(\phi,s)$\\
            $G9$ & $\phi_0\cdot\cdots\cdot\phi_n = 0$ if $\forall s\in S,\exists i\leq n.test(\neg\phi_i,s)$\\
            $G10$ & $wp(e,\phi)e\phi=wp(e,\phi)e$\\
            $G11$ & $\neg wp(e,\phi)e\neg\phi=\neg wp(e,\phi)e$\\
        \end{tabular}
        \caption{Axioms of $BATC_G$}
        \label{AxiomsForBATCG}
    \end{table}
\end{center}

Note that, by eliminating atomic event from the process terms, the axioms in Table \ref{AxiomsForBATCG} will lead to a Boolean Algebra. And $G9$ is a precondition of $e$ and $\phi$, $G10$ is the weakest precondition of $e$ and $\phi$. A data environment with $effect$ function is sufficiently deterministic, and it is obvious that if the weakest precondition is expressible and $G9$, $G10$ are sound, then the related data environment is sufficiently deterministic.

\begin{definition}[Basic terms of $BATC_G$]\label{BTBATCG}
The set of basic terms of $BATC_G$, $\mathcal{B}(BATC_G)$, is inductively defined as follows:
\begin{enumerate}
  \item $\mathbb{E}\subset\mathcal{B}(BATC_G)$;
  \item $\mathbb{B}\subset\mathcal{B}(BATC_G)$;
  \item if $e\in \mathbb{E}, t\in\mathcal{B}(BATC_G)$ then $e\cdot t\in\mathcal{B}(BATC_G)$;
  \item if $\phi\in \mathbb{B}, t\in\mathcal{B}(BATC_G)$ then $\phi\cdot t\in\mathcal{B}(BATC_G)$;
  \item if $t,s\in\mathcal{B}(BATC_G)$ then $t+ s\in\mathcal{B}(BATC_G)$.
\end{enumerate}
\end{definition}

\begin{theorem}[Elimination theorem of $BATC_G$]\label{ETBATCG}
Let $p$ be a closed $BATC_G$ term. Then there is a basic $BATC_G$ term $q$ such that $BATC_G\vdash p=q$.
\end{theorem}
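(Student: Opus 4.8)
The plan is to follow exactly the same two-step recipe already used for the elimination theorems of $BATC$ (Theorem \ref{ETBATC}) and $APTC$ (Theorems \ref{ETParallelism}, \ref{ETEncapsulation}): first exhibit a terminating term rewrite system derived from the axioms of $BATC_G$, and then argue that the normal forms of closed terms are exactly the basic $BATC_G$ terms of Definition \ref{BTBATCG}. I would begin by presenting a term rewrite system for $BATC_G$ in a table (call it Table \ref{TRSForBATCG}), orienting the axioms of Table \ref{AxiomsForBATCG} from left to right: the rules $RA3$--$RA9$ as in Table \ref{TRSForBATC}, together with rules $RG1$--$RG11$ obtained from $G1$--$G11$ (for instance $\phi\cdot\neg\phi\rightarrow 0$, $\phi+\neg\phi\rightarrow 1$, $\phi+0\rightarrow 0$, $\phi(x+y)\rightarrow\phi x+\phi y$, $\phi(x\cdot y)\rightarrow\phi x\cdot y$, $(\phi+\psi)x\rightarrow\phi x+\psi x$, $(\phi\cdot\psi)\cdot x\rightarrow\phi\cdot(\psi\cdot x)$, and the conditional rules coming from $G8$, $G9$, $G10$, $G11$). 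To invoke Theorem \ref{SN} I would fix a well-founded ordering on the signature of $BATC_G$ — extending the ordering $\cdot>+$ used for $BATC$, with $\cdot$ given lexicographic status on its first argument, and placing $\neg$ above $\cdot$ and $+$ so that the Boolean-distribution rules decrease — and check that $p>_{lpo}q$ for each rewrite rule $p\rightarrow q$; since there are finitely many rules and the ordering is well-founded, the system is strongly normalizing.

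For step (2) I would mimic the inductive argument in the proof of Theorem \ref{ETBATC}. Assume $p$ is a normal form of a closed $BATC_G$ term that is not basic, let $p'$ be the smallest non-basic subterm, so every proper subterm of $p'$ is basic, and induct on the structure of $p'$. The cases $p'\equiv 0,1,a\ (a\in\mathbb{E})$ and $p'\equiv\phi\ (\phi\in\mathbb{B})$ are immediate (each is a basic term, contradiction). The new ingredient relative to $BATC$ is that a basic term may have the shape $\phi\cdot t$ with $\phi$ a guard; so in the case $p'\equiv p_1\cdot p_2$ I would split on the structure of the basic term $p_1$ exactly as before — subcases $p_1\in\{0,1\}$, $p_1\in\mathbb{E}$, $p_1\equiv\phi\in\mathbb{B}$ (here $p'=\phi\cdot p_2$ is already basic, contradiction), $p_1\equiv 0\cdot p_1'$ ($RA7$), $1\cdot p_1'$ ($RA8$), $a\cdot p_1'$ ($RA5$), $\phi\cdot p_1'$ (rule from $G7$), $p_1'+p_1''$ ($RA4$) — each either contradicts minimality or exhibits an applicable rule, so $p$ is not a normal form. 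The cases $p'\equiv p_1+p_2$ are handled as in Theorem \ref{ETBATC}: all subcases make $p'$ basic. I should also handle guard-headed compound terms such as $\phi\cdot(x+y)$ or $\phi\cdot(x\cdot y)$ via the rules from $G4$, $G5$; these are subsumed by the $p_1\equiv\phi$ analysis once one observes the argument is normal. Having shown every non-basic normal form leads to a contradiction, normal forms are basic, and since every closed term rewrites to a normal form and $\rightarrow\,\subseteq\,=$ in $BATC_G$, we conclude $BATC_G\vdash p=q$ with $q$ basic.

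The main obstacle I anticipate is the well-foundedness/termination check in step (1), specifically accommodating the conditional Boolean axioms $G8$--$G11$ as rewrite rules: $G8$ ($\phi=1$ when $\forall s.\,test(\phi,s)$) and $G9$ ($\phi_0\cdots\phi_n=0$ under a semantic side condition) are side-conditioned rules whose left-hand sides must still strictly dominate their right-hand sides in the chosen ordering, and $G10$--$G11$ involve the $wp(e,\phi)$ terms, so the ordering must be arranged so that $wp(e,\phi)e\phi>_{lpo}wp(e,\phi)e$, i.e. deleting a trailing guard decreases the term — this is fine under an lpo with $\cdot$ lexicographic on the first argument but needs to be stated carefully. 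A secondary, more cosmetic obstacle is bookkeeping: the basic-term grammar now has two "prefix" productions ($e\cdot t$ and $\phi\cdot t$), so the case analysis on $p'\equiv p_1\cdot p_2$ has one more subcase than in the $BATC$ proof, and one must be sure the rule from $G7$ (associativity of guard prefixing, $(\phi\cdot\psi)\cdot x\rightarrow\phi\cdot(\psi\cdot x)$) is present so that $p_1\equiv\phi\cdot p_1'$ is reducible. None of this is deep; it is the same pattern as Theorem \ref{ETBATC}, extended by the Boolean-guard rules.
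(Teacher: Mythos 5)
Your proposal follows essentially the same route as the paper's own proof: the same term rewrite system (RA3--RA9 together with RG1--RG11 oriented from the axioms of Table \ref{AxiomsForBATCG}), the same lpo ordering $\cdot > +$ with lexicographic status invoked via Theorem \ref{SN}, and the same induction on the smallest non-basic subterm $p'$ to show that normal forms of closed terms are basic $BATC_G$ terms. The only difference is cosmetic: you spell out the termination concerns for the conditional rules $G8$--$G11$ more explicitly than the paper does, and the paper dispatches the subcase $p_1\equiv\phi\cdot p_1'$ by citing the applicable guard rules (RG1, RG3--RG5, RG7, RG8--9) rather than only $G7$, but the argument is the same.
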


\begin{proof}
(1) Firstly, suppose that the following ordering on the signature of $BATC_G$ is defined: $\cdot > +$ and the symbol $\cdot$ is given the lexicographical status for the first argument, then for each rewrite rule $p\rightarrow q$ in Table \ref{TRSForBATCG} relation $p>_{lpo} q$ can easily be proved. We obtain that the term rewrite system shown in Table \ref{TRSForBATCG} is strongly normalizing, for it has finitely many rewriting rules, and $>$ is a well-founded ordering on the signature of $BATC_G$, and if $s>_{lpo} t$, for each rewriting rule $s\rightarrow t$ is in Table \ref{TRSForBATCG}.

\begin{center}
    \begin{table}
        \begin{tabular}{@{}ll@{}}
            \hline No. &Rewriting Rule\\
            $RA3$ & $x+ x \rightarrow x$\\
            $RA4$ & $(x+ y)\cdot z \rightarrow x\cdot z + y\cdot z$\\
            $RA5$ & $(x\cdot y)\cdot z \rightarrow x\cdot(y\cdot z)$\\
            $RA6$ & $x+0 \rightarrow x$\\
            $RA7$ & $0\cdot x \rightarrow 0$\\
            $RA8$ & $1\cdot x \rightarrow x$\\
            $RA9$ & $x\cdot1 \rightarrow x$\\
            $RG1$ & $\phi\cdot\neg\phi \rightarrow 0$\\
            $RG2$ & $\phi+\neg\phi \rightarrow 1$\\
            $RG3$ & $\phi0 \rightarrow 0$\\
            $RG4$ & $\phi(x+y)\rightarrow\phi x+\phi y$\\
            $RG5$ & $\phi(x\cdot y)\rightarrow \phi x\cdot y$\\
            $RG6$ & $(\phi+\psi)x \rightarrow \phi x + \psi x$\\
            $RG7$ & $(\phi\cdot \psi)\cdot x \rightarrow \phi\cdot(\psi\cdot x)$\\
            $RG8$ & $\phi\rightarrow 1$ if $\forall s\in S.test(\phi,s)$\\
            $RG9$ & $\phi_0\cdot\cdots\cdot\phi_n \rightarrow 0$ if $\forall s\in S,\exists i\leq n.test(\neg\phi_i,s)$\\
            $RG10$ & $wp(e,\phi)e\phi\rightarrow wp(e,\phi)e$\\
            $RG11$ & $\neg wp(e,\phi)e\neg\phi\rightarrow\neg wp(e,\phi)e$\\
        \end{tabular}
        \caption{Term rewrite system of $BATC_G$}
        \label{TRSForBATCG}
    \end{table}
\end{center}

(2) Then we prove that the normal forms of closed $BATC_G$ terms are basic $BATC_G$ terms.

Suppose that $p$ is a normal form of some closed $BATC_G$ term and suppose that $p$ is not a basic term. Let $p'$ denote the smallest sub-term of $p$ which is not a basic term. It implies that each sub-term of $p'$ is a basic term. Then we prove that $p$ is not a term in normal form. It is sufficient to induct on the structure of $p'$:

\begin{itemize}
  \item Case $p'\equiv e, e\in \mathbb{E}$. $p'$ is a basic term, which contradicts the assumption that $p'$ is not a basic term, so this case should not occur.
  \item Case $p'\equiv \phi, \phi\in \mathbb{B}$. $p'$ is a basic term, which contradicts the assumption that $p'$ is not a basic term, so this case should not occur.
  \item Case $p'\equiv p_1\cdot p_2$. By induction on the structure of the basic term $p_1$:
      \begin{itemize}
        \item Subcase $p_1\in \mathbb{E}$. $p'$ would be a basic term, which contradicts the assumption that $p'$ is not a basic term;
        \item Subcase $p_1\in \mathbb{B}$. $p'$ would be a basic term, which contradicts the assumption that $p'$ is not a basic term;
        \item Subcase $p_1\equiv e\cdot p_1'$. $RA5$ or $RA9$ rewriting rule can be applied. So $p$ is not a normal form;
        \item Subcase $p_1\equiv \phi\cdot p_1'$. $RG1$, $RG3$, $RG4$, $RG5$, $RG7$, or $RG8-9$ rewriting rules can be applied. So $p$ is not a normal form;
        \item Subcase $p_1\equiv p_1'+ p_1''$. $RA4$, $RA6$, $RG2$, or $RG6$ rewriting rules can be applied. So $p$ is not a normal form.
      \end{itemize}
  \item Case $p'\equiv p_1+ p_2$. By induction on the structure of the basic terms both $p_1$ and $p_2$, all subcases will lead to that $p'$ would be a basic term, which contradicts the assumption that $p'$ is not a basic term.
\end{itemize}
\end{proof}

We will define a term-deduction system which gives the operational semantics of $BATC_G$. We give the operational transition rules for $1$, atomic guard $\phi\in \mathbb{B}_{at}$, atomic event $e\in\mathbb{E}$, operators $\cdot$ and $+$ as Table \ref{SETRForBATCG} shows. And the predicate $e\xrightarrow{e}\underline{e}$ represents successful termination after execution of the event $e$, where $\underline{e}$ is the past of event $e$.

\begin{center}
    \begin{table}
        $$\frac{}{\langle 1,s\rangle\rightarrow\langle\surd,s\rangle}$$
        $$\frac{}{\langle e,s\rangle\xrightarrow{e}\langle\underline{e},s'\rangle}\textrm{ if }s'\in effect(e,s)$$
        $$\frac{}{\langle\phi,s\rangle\rightarrow\langle\surd,s\rangle}\textrm{ if }test(\phi,s)$$
        $$\frac{\langle x,s\rangle\xrightarrow{e}\langle\underline{e},s'\rangle}{\langle x+ y,s\rangle\xrightarrow{e}\langle\underline{e},s'\rangle} \quad\frac{\langle x,s\rangle\xrightarrow{e}\langle x',s'\rangle}{\langle x+ y,s\rangle\xrightarrow{e}\langle x',s'\rangle}$$
        $$\frac{\langle y,s\rangle\xrightarrow{e}\langle\underline{e},s'\rangle}{\langle x+ y,s\rangle\xrightarrow{e}\langle\underline{e},s'\rangle} \quad\frac{\langle y,s\rangle\xrightarrow{e}\langle y',s'\rangle}{\langle x+ y,s\rangle\xrightarrow{e}\langle y',s'\rangle}$$
        $$\frac{\langle x,s\rangle\xrightarrow{e}\langle\underline{e},s'\rangle}{\langle x\cdot y,s\rangle\xrightarrow{e} \langle \underline{e}\cdot y,s'\rangle} \quad\frac{\langle x,s\rangle\xrightarrow{e}\langle x',s'\rangle}{\langle x\cdot y,s\rangle\xrightarrow{e}\langle x'\cdot y,s'\rangle}$$
        \caption{Single event transition rules of $BATC_G$}
        \label{SETRForBATCG}
    \end{table}
\end{center}

Note that, we replace the single atomic event $e\in\mathbb{E}$ by $X\subseteq\mathbb{E}$, we can obtain the pomset transition rules of $BATC_G$, and omit them.

\begin{theorem}[Congruence of $BATC_G$ with respect to truly concurrent bisimulation equivalences]\label{CBATCG}
(1) Pomset bisimulation equivalence $\sim_{p}$ is a congruence with respect to $BATC_G$.

(2) Step bisimulation equivalence $\sim_{s}$ is a congruence with respect to $BATC_G$.

(3) Hp-bisimulation equivalence $\sim_{hp}$ is a congruence with respect to $BATC_G$.

(4) Hhp-bisimulation equivalence $\sim_{hhp}$ is a congruence with respect to $BATC_G$.
\end{theorem}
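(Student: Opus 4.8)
The plan is to derive all four items at once from the general meta-theorem on truly concurrent bisimulations as congruences (Theorem~\ref{tcbacpn}), mirroring the congruence proofs already given for $BATC$ and for the parallel fragment of $APTC$. Concretely, I would first fix the transition system specification of $BATC_G$ to be the single-event rules of Table~\ref{SETRForBATCG} together with the pomset-labelled variants obtained (as noted just before the statement) by replacing the atomic event $e\in\mathbb{E}$ by a set $X\subseteq\mathbb{E}$ of pairwise concurrent events. Then it suffices to verify the two hypotheses of Theorem~\ref{tcbacpn}: that this TSS is \emph{positive after reduction} and that it is in \emph{panth format}.

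For positivity after reduction, I would observe that none of the rules for $1$, for an atomic guard $\phi$, for an atomic event $e$, or for $\cdot$ and $+$ carries a negative transition premise; the side conditions $test(\phi,s)$ and $s'\in effect(e,s)$ are data-level constraints on states, not negative premises. Hence the constant weight function is a stratification, and the ``Positivity after reduction'' theorem gives what is needed. For panth format, I would check rule by rule that (a) the right-hand side of each positive premise is either a single variable ($x'$ or $y'$) or a configuration of the termination shape $\langle\underline{e},s'\rangle$ playing the role of the termination predicate, (b) the source of each rule contains at most one function symbol ($+$ or $\cdot$), and (c) no variable occurs twice among the right-hand sides of the positive premises and the source. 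All rules of Table~\ref{SETRForBATCG} and their pomset variants pass this inspection, so Theorem~\ref{tcbacpn} yields that $\sim_p$, $\sim_s$, $\sim_{hp}$ and $\sim_{hhp}$ are congruences with respect to $BATC_G$, which is exactly items (1)--(4).

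The step I expect to be the main obstacle is reconciling the data-state component with the hypotheses of the meta-theorem: Theorem~\ref{tcbacpn} and the panth-format machinery are stated for plain labelled transition systems over $\mathbb{T}(\Sigma)$, whereas the configurations of $BATC_G$ are pairs $\langle C,s\rangle$ and the rules are parametrized by states through $test$, $effect$ and $wp$. The honest move is to argue once that this parametrization is benign --- the state may be folded into the transition relation as an oracle, so that every choice of $s,s'$ instantiates an ordinary panth rule and the congruence construction runs uniformly in $s$ --- and then reuse that observation for all the subsequent guarded algebras in this chapter, since otherwise each later congruence theorem inherits the same gap. The remaining obligations (that $\sim_p,\sim_s,\sim_{hp},\sim_{hhp}$ are equivalence relations, and the purely syntactic format check) are routine.
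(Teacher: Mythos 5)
Your proof follows essentially the same route as the paper's own argument: both reduce all four items to the meta-theorem (Theorem~\ref{tcbacpn}) by checking that the TSS of $BATC_G$ in Table~\ref{SETRForBATCG} (with its pomset-labelled variants) is positive after reduction and in panth format. If anything, you are more careful than the paper, which simply asserts the panth-format condition and is silent both about positivity after reduction and about folding the data-state component $\langle C,s\rangle$ into the plain-LTS setting of the meta-theorem, a point you rightly flag and discharge explicitly.
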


\begin{proof}
(1) It is easy to see that pomset bisimulation is an equivalent relation on $BATC_G$ terms, we only need to prove that $\sim_{p}$ is preserved by the operators $\cdot$ and $+$. Since the TSS in Table \ref{SETRForBATCG} is in Panth format, by Theorem \ref{tcbacpn}, pomset bisimulation equivalence $\sim_{p}$ is a congruence with respect to $BATC_G$.

(2) It is easy to see that step bisimulation is an equivalent relation on $BATC_G$ terms, we only need to prove that $\sim_{s}$ is preserved by the operators $\cdot$ and $+$. Since the TSS in Table \ref{SETRForBATCG} is in Panth format, by Theorem \ref{tcbacpn}, step bisimulation equivalence $\sim_{s}$ is a congruence with respect to $BATC_G$.

(3) It is easy to see that hp-bisimulation is an equivalent relation on $BATC_G$ terms, we only need to prove that $\sim_{hp}$ is preserved by the operators $\cdot$ and $+$. Since the TSS in Table \ref{SETRForBATCG} is in Panth format, by Theorem \ref{tcbacpn}, hp-bisimulation equivalence $\sim_{hp}$ is a congruence with respect to $BATC_G$.

(4) It is easy to see that hhp-bisimulation is an equivalent relation on $BATC_G$ terms, we only need to prove that $\sim_{hhp}$ is preserved by the operators $\cdot$ and $+$. Since the TSS in Table \ref{SETRForBATCG} is in Panth format, by Theorem \ref{tcbacpn}, hhp-bisimulation equivalence $\sim_{hhp}$ is a congruence with respect to $BATC_G$.
\end{proof}

\begin{theorem}[Soundness of $BATC_G$ modulo truly concurrent bisimulation equivalences]\label{SBATCG}
(1) Let $x$ and $y$ be $BATC_G$ terms. If $BATC\vdash x=y$, then $x\sim_{p} y$.

(2) Let $x$ and $y$ be $BATC_G$ terms. If $BATC\vdash x=y$, then $x\sim_{s} y$.

(3) Let $x$ and $y$ be $BATC_G$ terms. If $BATC\vdash x=y$, then $x\sim_{hp} y$.

(4) Let $x$ and $y$ be $BATC_G$ terms. If $BATC\vdash x=y$, then $x\sim_{hhp} y$.
\end{theorem}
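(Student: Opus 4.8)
The plan is to follow the standard route for soundness results of this kind. By Theorem~\ref{CBATCG}, each of $\sim_p$, $\sim_s$, $\sim_{hp}$ and $\sim_{hhp}$ is both an equivalence and a congruence with respect to the operators $\cdot$ and $+$ of $BATC_G$; hence it suffices to prove that every axiom $s=t$ in Table~\ref{AxiomsForBATCG} is sound, i.e.\ that for every closed substitution $\sigma$ and every data state, $\sigma(s)$ and $\sigma(t)$ are related by a bisimulation of the relevant type. For each axiom I would exhibit an explicit witness relation: take the identity relation on configurations and close it under the finitely many pairs forced by the axiom, then check the transfer conditions against the transition rules of Table~\ref{SETRForBATCG} and their pomset liftings. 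For $\sim_{hp}$ and $\sim_{hhp}$ one additionally carries along the order-isomorphism $f$ between the executed-event posets; since $\cdot$ and $+$ create no genuine concurrency at the top level, at each step there is a unique way to extend $f$, so the posetal component is determined and downward closure (needed for $\sim_{hhp}$) is immediate.

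For the purely process-algebraic axioms $A1$--$A9$ the verification is the classical one and essentially the same as in Theorem~\ref{SBATC}: commutativity, associativity and idempotence of $+$ ($A1$--$A3$) are witnessed by structural identity of the reachable configurations; right distributivity and associativity of $\cdot$ ($A4$, $A5$) are checked by tracking which summand, respectively which factor, fires first; and the unit/zero laws $A6$--$A9$ follow because $0$ has no outgoing transitions while $1$ only performs the unlabelled transition $\langle 1,s\rangle\rightarrow\langle\surd,s\rangle$. The only new ingredient relative to $BATC$ is that a configuration is now a pair $\langle C,s\rangle$; but none of $A1$--$A9$ touches the data state, so the $S$-component is merely carried along unchanged on both sides.

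The substantive part, and the step I expect to be the main obstacle, is the soundness of the guard axioms $G1$--$G11$, because here the transitions genuinely depend on the data state through $test(\phi,s)$, $effect(e,s)$ and $wp(e,\phi)$. For $G1$--$G7$ one argues state by state: e.g.\ for $G1$, in a state $s$ either $test(\phi,s)$ fails, so $\langle\phi\cdot\neg\phi,s\rangle$ is stuck like $\langle 0,s\rangle$, or it holds, and then the residual $\langle\neg\phi,s\rangle$ is stuck since $test(\neg\phi,s)$ fails --- so in either case the behaviour matches $0$. The laws $G8$--$G11$ are where the assumptions on the data environment enter: $G8$ uses $\forall s\in S.\,test(\phi,s)$ so that $\phi$ behaves exactly like $1$; $G9$ uses that in every state some conjunct is falsified, so the guarded term is stuck; and $G10$, $G11$ require that the environment is sufficiently deterministic so that $wp(e,\phi)$ really does guarantee $test(\phi,effect(e,s))$ for all $s$, making $wp(e,\phi)\,e\,\phi$ and $wp(e,\phi)\,e$ pass through exactly the same states after firing $e$. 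The care needed is to keep the $S$-component of every configuration synchronised between the two sides across both the unlabelled guard transitions and the labelled event transitions, and to ensure the witness relation only ever relates configurations sitting over the same data state; once this bookkeeping is set up, each individual check is short. Finally, since all of $G1$--$G11$ again introduce no concurrency, the pomset, step, hp- and hhp-versions are obtained from the step- or single-event argument with only the routine posetal and downward-closure additions noted above.
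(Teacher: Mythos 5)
Your proposal follows essentially the same route as the paper: since the truly concurrent bisimulation equivalences are both equivalences and congruences with respect to $BATC_G$ (Theorem \ref{CBATCG}), soundness reduces to checking each axiom of Table \ref{AxiomsForBATCG} individually, which is exactly the reduction the paper makes before leaving the per-axiom verification as an exercise. Your additional detail on the guard axioms $G1$--$G11$ (state-by-state analysis via $test$, $effect$ and $wp$, keeping the data-state component synchronised) correctly fills in the omitted checks and is consistent with the paper's intent.
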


\begin{proof}
(1) Since pomset bisimulation $\sim_p$ is both an equivalent and a congruent relation, we only need to check if each axiom in Table \ref{AxiomsForBATCG} is sound modulo pomset bisimulation equivalence. We leave the proof as an exercise for the readers.

(2) Since step bisimulation $\sim_s$ is both an equivalent and a congruent relation, we only need to check if each axiom in Table \ref{AxiomsForBATCG} is sound modulo step bisimulation equivalence. We leave the proof as an exercise for the readers.

(3) Since hp-bisimulation $\sim_{hp}$ is both an equivalent and a congruent relation, we only need to check if each axiom in Table \ref{AxiomsForBATCG} is sound modulo hp-bisimulation equivalence. We leave the proof as an exercise for the readers.

(4) Since hhp-bisimulation $\sim_{hhp}$ is both an equivalent and a congruent relation, we only need to check if each axiom in Table \ref{AxiomsForBATCG} is sound modulo hhp-bisimulation equivalence. We leave the proof as an exercise for the readers.
\end{proof}

\begin{theorem}[Completeness of $BATC_G$ modulo truly concurrent bisimulation equivalences]\label{CBATCG}
(1) Let $p$ and $q$ be closed $BATC_G$ terms, if $p\sim_{p} q$ then $p=q$.

(2) Let $p$ and $q$ be closed $BATC_G$ terms, if $p\sim_{s} q$ then $p=q$.

(3) Let $p$ and $q$ be closed $BATC_G$ terms, if $p\sim_{hp} q$ then $p=q$.

(4) Let $p$ and $q$ be closed $BATC_G$ terms, if $p\sim_{hhp} q$ then $p=q$.
\end{theorem}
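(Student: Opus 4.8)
The plan is to follow the same three-layer strategy used for Theorem \ref{CBATC} and Theorem \ref{CAPTC}, adapted to the presence of guards. First I would invoke the elimination theorem of $BATC_G$ (Theorem \ref{ETBATCG}) to reduce the problem: since every closed $BATC_G$ term is provably equal to a closed basic $BATC_G$ term (Definition \ref{BTBATCG}), it suffices to prove the four implications for closed basic terms $p,q$. Then I would set up normal forms. Working modulo associativity and commutativity of $+$ (axioms $A1$, $A2$), written $=_{AC}$, each equivalence class of a basic term can be brought to the shape $s_1+\cdots+s_k$ where each summand $s_i$ is either an atomic event $e\in\mathbb{E}$, an atomic guard (more precisely a basic guard $\phi\in\mathbb{B}$), a product $e\cdot t$, or a product $\phi\cdot t$. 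A preliminary step I would perform here is to normalize the guard subterms themselves: using $G1$--$G11$ (which, after erasing atomic events, generate a Boolean algebra, and which collapse unsatisfiable guard products to $0$ via $G9$ and provable guards to $1$ via $G8$), every guard occurring in a normal form can be put into a canonical Boolean form, and guard-headed summands whose guard is unsatisfiable get absorbed by $A6$, $A7$, $G3$.

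The heart of the argument is then the claim: for normal forms $n$ and $n'$, if $n\sim_p n'$ then $n=_{AC}n'$. I would prove this by induction on the combined size of $n$ and $n'$, inspecting summands exactly as in the proof of Theorem \ref{CBATC}, but now with the extra guard cases. For a summand $e$ of $n$, we have $\langle n,s\rangle\xrightarrow{e}\langle\underline{e},s'\rangle$ for every state, so $n\sim_p n'$ forces a matching summand in $n'$; for a summand $e\cdot t_2$ we get $\langle n,s\rangle\xrightarrow{e}\langle\underline{e}\cdot t_2,s'\rangle$ and use the induction hypothesis on the smaller residuals. For a summand $\phi$ (or $\phi\cdot t$), the transition rules in Table \ref{SETRForBATCG} give $\langle\phi,s\rangle\rightarrow\langle\surd,s\rangle$ precisely when $test(\phi,s)$; hence one matches a guard summand of $n$ against a guard summand of $n'$ by comparing, over all data states $s$, in which $s$ the $1$-transition (the $\rightarrow$ without a label) to a terminating, resp. continuing, configuration is enabled, and then reconciling the two canonical Boolean forms. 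Once the key claim holds, I would finish as before: given basic terms $s,t$ with $s\sim_p t$, take their normal forms $n,n'$ with $BATC_G\vdash s=n$ and $BATC_G\vdash t=n'$; soundness (Theorem \ref{SBATCG}) yields $n\sim_p s\sim_p t\sim_p n'$, the claim gives $n=_{AC}n'$, and therefore $s=n=_{AC}n'=t$. Parts (2), (3), (4) for step, hp- and hhp-bisimulation follow by the same scheme, since these equivalences are finer than $\sim_p$ on these terms and the summand-matching argument is insensitive to the choice among them.

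The main obstacle I anticipate is the guard bookkeeping, not the operator structure. Two difficulties deserve care. First, the canonical form of guards: to make ``$n=_{AC}n'$'' a genuine syntactic statement I must fix once and for all a canonical representative for each satisfiable Boolean combination of atomic guards, and argue that two guards enabling the same transitions in the same data states are provably --- hence, after normalization, literally --- equal; this is where the hypothesis that the data environment (the $effect$ function, and the predicates $test$, $wp$) is \emph{sufficiently deterministic}, as discussed after Table \ref{AxiomsForBATCG}, is used, so that $G8$--$G11$ suffice to identify semantically equal guards. Second, a guard-prefixed summand $\phi\cdot t$ has a state-dependent behaviour: in states where $\neg test(\phi,s)$ it contributes nothing, so the bisimulation must be checked \emph{for all} $s,s'\in S$ (as the definitions in Definition \ref{PSBG}ff. require), and the inductive matching of residuals $t$ against $t'$ must be carried out uniformly in the state reached after passing the guard. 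Handling these two points carefully is the bulk of the work; the remaining case analysis is routine and parallels Theorem \ref{CBATC}.
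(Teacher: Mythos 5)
Your proposal follows essentially the same route as the paper's proof: reduce to closed basic $BATC_G$ terms by the elimination theorem, work with normal forms modulo AC of $+$, prove by induction on the sizes of normal forms that $n\sim_p n'$ implies $n=_{AC}n'$ by matching event, guard and sequential summands via their transitions, and conclude with the soundness theorem that $s=n=_{AC}n'=t$, with (2)--(4) handled analogously. The only difference is that you make explicit the guard canonicalization (via $G8$--$G11$ and the sufficiently deterministic data environment) that the paper's guard-summand matching step silently assumes; this is a refinement of the same argument rather than a different approach.
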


\begin{proof}
(1) Firstly, by the elimination theorem of $BATC_G$, we know that for each closed $BATC_G$ term $p$, there exists a closed basic $BATC_G$ term $p'$, such that $BATC_G\vdash p=p'$, so, we only need to consider closed basic $BATC_G$ terms.

The basic terms (see Definition \ref{BTBATCG}) modulo associativity and commutativity (AC) of conflict $+$ (defined by axioms $A1$ and $A2$ in Table \ref{AxiomsForBATCG}), and this equivalence is denoted by $=_{AC}$. Then, each equivalence class $s$ modulo AC of $+$ has the following normal form

$$s_1+\cdots+ s_k$$

with each $s_i$ either an atomic event, or an atomic guard, or of the form $t_1\cdot t_2$, and each $s_i$ is called the summand of $s$.

Now, we prove that for normal forms $n$ and $n'$, if $n\sim_{p} n'$ then $n=_{AC}n'$. It is sufficient to induct on the sizes of $n$ and $n'$.

\begin{itemize}
  \item Consider a summand $e$ of $n$. Then $\langle n,s\rangle\xrightarrow{e}\langle \underline{e},s'\rangle$, so $n\sim_p n'$ implies $\langle n',s\rangle\xrightarrow{e}\langle \underline{e},s\rangle$, meaning that $n'$ also contains the summand $e$.
  \item Consider a summand $\phi$ of $n$. Then $\langle n,s\rangle\rightarrow\langle \surd,s\rangle$, if $test(\phi,s)$ holds, so $n\sim_p n'$ implies $\langle n',s\rangle\rightarrow\langle \surd,s\rangle$, if $test(\phi,s)$ holds, meaning that $n'$ also contains the summand $\phi$.
  \item Consider a summand $t_1\cdot t_2$ of $n$. Then $\langle n,s\rangle\xrightarrow{t_1}\langle t_2,s'\rangle$, so $n\sim_p n'$ implies $\langle n',s\rangle\xrightarrow{t_1}\langle t_2',s'\rangle$ with $t_2\sim_p t_2'$, meaning that $n'$ contains a summand $t_1\cdot t_2'$. Since $t_2$ and $t_2'$ are normal forms and have sizes smaller than $n$ and $n'$, by the induction hypotheses $t_2\sim_p t_2'$ implies $t_2=_{AC} t_2'$.
\end{itemize}

So, we get $n=_{AC} n'$.

Finally, let $s$ and $t$ be basic terms, and $s\sim_p t$, there are normal forms $n$ and $n'$, such that $s=n$ and $t=n'$. The soundness theorem of $BATC_G$ modulo pomset bisimulation equivalence (see Theorem \ref{SBATCG}) yields $s\sim_p n$ and $t\sim_p n'$, so $n\sim_p s\sim_p t\sim_p n'$. Since if $n\sim_p n'$ then $n=_{AC}n'$, $s=n=_{AC}n'=t$, as desired.

(2) It can be proven similarly as (1).

(3) It can be proven similarly as (1).

(4) It can be proven similarly as (1).
\end{proof}

\subsubsection{$APTC$ with Guards}{\label{aptcg}}

In this subsection, we will extend $APTC$ with guards, which is abbreviated $APTC_G$. The set of basic guards $\mathbb{B}$ with element $\phi,\psi,\cdots$, which is extended by the following formation rules:

$$\phi::=0|1|\neg\phi|\psi\in \mathbb{B}_{at}|\phi+\psi|\phi\cdot\psi|\phi\parallel\psi$$

The set of axioms of $APTC_G$ including axioms of $BATC_G$ in Table \ref{AxiomsForBATCG} and the axioms are shown in Table \ref{AxiomsForAPTCG}.

\begin{center}
    \begin{table}
        \begin{tabular}{@{}ll@{}}
            \hline No. &Axiom\\
            $P1$ & $x\between y = x\parallel y + x\mid y$\\
            $P2$ & $e_1\parallel (e_2\cdot y) = (e_1\parallel e_2)\cdot y$\\
            $P3$ & $(e_1\cdot x)\parallel e_2 = (e_1\parallel e_2)\cdot x$\\
            $P4$ & $(e_1\cdot x)\parallel (e_2\cdot y) = (e_1\parallel e_2)\cdot (x\between y)$\\
            $P5$ & $(x+ y)\parallel z = (x\parallel z)+ (y\parallel z)$\\
            $P6$ & $x\parallel (y+ z) = (x\parallel y)+ (x\parallel z)$\\
            $P7$ & $0\parallel x = 0$\\
            $P8$ & $x\parallel 0 = 0$\\
            $P9$ & $1\parallel x = x$\\
            $P10$ & $x\parallel 1 = x$\\
            $C1$ & $e_1\mid e_2 = \gamma(e_1,e_2)$\\
            $C2$ & $e_1\mid (e_2\cdot y) = \gamma(e_1,e_2)\cdot y$\\
            $C3$ & $(e_1\cdot x)\mid e_2 = \gamma(e_1,e_2)\cdot x$\\
            $C4$ & $(e_1\cdot x)\mid (e_2\cdot y) = \gamma(e_1,e_2)\cdot (x\between y)$\\
            $C5$ & $(x+ y)\mid z = (x\mid z) + (y\mid z)$\\
            $C6$ & $x\mid (y+ z) = (x\mid y)+ (x\mid z)$\\
            $C7$ & $0\mid x = 0$\\
            $C8$ & $x\mid0 = 0$\\
            $C9$ & $1\mid x = 0$\\
            $C10$ & $x\mid1 = 0$\\
            $D1$ & $e\notin H\quad\partial_H(e) = e$\\
            $D2$ & $e\in H\quad \partial_H(e) = 0$\\
            $D3$ & $\partial_H(0) = 0$\\
            $D4$ & $\partial_H(x+ y) = \partial_H(x)+\partial_H(y)$\\
            $D5$ & $\partial_H(x\cdot y) = \partial_H(x)\cdot\partial_H(y)$\\
            $D6$ & $\partial_H(x\parallel y) = \partial_H(x)\parallel\partial_H(y)$\\
            $G12$ & $\phi(x\parallel y) =\phi x\parallel \phi y$\\
            $G13$ & $\phi(x\mid y) =\phi x\mid \phi y$\\
            $G14$ & $\phi\parallel 0 = 0$\\
            $G15$ & $0\parallel \phi = 0$\\
            $G16$ & $\phi\mid 0 = 0$\\
            $G17$ & $0\mid \phi = 0$\\
            $G18$ & $\phi\parallel 1 = \phi$\\
            $G19$ & $1\parallel \phi = \phi$\\
            $G20$ & $\phi\mid 1 = 0$\\
            $G21$ & $1\mid \phi = 0$\\
            $G22$ & $\phi\parallel\neg\phi = 0$\\
            $G23$ & $\partial_H(\phi) = \phi$\\
            $G24$ & $\phi_0\parallel\cdots\parallel\phi_n = 0$ if $\forall s_0,\cdots,s_n\in S,\exists i\leq n.test(\neg\phi_i,s_0\cup\cdots\cup s_n)$\\
        \end{tabular}
        \caption{Axioms of $APTC_G$}
        \label{AxiomsForAPTCG}
    \end{table}
\end{center}

\begin{definition}[Basic terms of $APTC_G$]\label{BTAPTCG}
The set of basic terms of $APTC_G$, $\mathcal{B}(APTC_G)$, is inductively defined as follows:
\begin{enumerate}
    \item $\mathbb{E}\subset\mathcal{B}(APTC_G)$;
    \item $\mathbb{B}\subset\mathcal{B}(APTC_G)$;
    \item if $e\in \mathbb{E}, t\in\mathcal{B}(APTC_G)$ then $e\cdot t\in\mathcal{B}(APTC_G)$;
    \item if $\phi\in \mathbb{B}, t\in\mathcal{B}(APTC_G)$ then $\phi\cdot t\in\mathcal{B}(APTC_G)$;
    \item if $t,s\in\mathcal{B}(APTC_G)$ then $t+ s\in\mathcal{B}(APTC_G)$.
    \item if $t,s\in\mathcal{B}(APTC_G)$ then $t\parallel s\in\mathcal{B}(APTC_G)$.
\end{enumerate}
\end{definition}

Based on the definition of basic terms for $APTC_G$ (see Definition \ref{BTAPTCG}) and axioms of $APTC_G$, we can prove the elimination theorem of $APTC_G$.

\begin{theorem}[Elimination theorem of $APTC_G$]\label{ETAPTCG}
Let $p$ be a closed $APTC_G$ term. Then there is a basic $APTC_G$ term $q$ such that $APTC_G\vdash p=q$.
\end{theorem}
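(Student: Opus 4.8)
The plan is to follow exactly the pattern used for Theorems \ref{ETBATCG}, \ref{ETParallelism} and \ref{ETEncapsulation}: orient the axioms of $APTC_G$ (those of Table \ref{AxiomsForBATCG} together with those of Table \ref{AxiomsForAPTCG}) into a term rewrite system, show it is strongly normalizing, and then show that every normal form is a basic $APTC_G$ term in the sense of Definition \ref{BTAPTCG}. First I would exhibit the TRS obtained by reading every axiom left-to-right, except the purely structural/commutative ones ($A1$, $A2$, $P1$ read as a defining rewrite for $\between$, the communicativity-type guard equalities kept as two-sided), and the conditional guard axioms $G8$, $G9$, $G24$, $G10$, $G11$ retained as conditional rewrite rules with their side conditions on $test$ and $wp$ unchanged. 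The rules for $\parallel$, $\mid$, $\partial_H$ push these operators toward the leaves and eventually eliminate them in favour of $\cdot$, $+$ and (for $\parallel$) the allowed $\parallel$-shaped basic subterms; the rules for $\phi$ (namely $RG1$--$RG24$) push guards down through $\cdot$, $+$, $\parallel$, $\mid$ and collapse deadlocking or trivial guard products.

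Second, I would prove strong normalization via Theorem \ref{SN}. The ordering on the signature would extend the one used for $APTC$: put $\partial_H,\Theta > \between > \mid > \parallel > \cdot > +$ and guard symbols ($\neg$, and the guard product) above $\cdot$ and $+$ as well, giving $\cdot$ lexicographic status in its first argument. With this well-founded ordering on the (finite) signature, for each rewrite rule $s\to t$ in the TRS one checks $s >_{lpo} t$ by the usual case analysis; since the system has finitely many rules this yields strong normalization. The conditional rules cause no difficulty here because the side conditions do not affect the term comparison $s >_{lpo} t$.

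Third, I would show the normal forms are basic $APTC_G$ terms by assuming, for contradiction, a normal form $p$ that is not basic, taking $p'$ a smallest non-basic subterm, and inducting on the structure of $p'$. The cases $p'\equiv e\in\mathbb{E}$ and $p'\equiv\phi\in\mathbb{B}$ are immediate contradictions; for $p'\equiv p_1\cdot p_2$, $p'\equiv p_1+p_2$, $p'\equiv p_1\parallel p_2$, $p'\equiv p_1\mid p_2$, $p'\equiv\Theta(p_1)$, $p'\equiv p_1\triangleleft p_2$ and the new case $p'\equiv\partial_H(p_1)$ one inducts on the shape of the already-basic immediate subterms and shows that in every subcase either $p'$ is itself basic (contradiction) or some rewrite rule applies (so $p$ is not a normal form, contradiction). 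This is essentially the union of the case analyses already carried out in the proofs of Theorems \ref{ETParallelism}, \ref{ETEncapsulation} and \ref{ETBATCG}, with the guard subcases handled by $RG1$--$RG24$: e.g. $p_1\equiv\phi\cdot p_1'$ is reduced by $RG5$/$RG7$, $\phi\parallel p_1'$ by $G12$--$G24$, and $\partial_H(\phi)$ by $G23$.

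The main obstacle I expect is bookkeeping rather than mathematical depth: the signature is large (it combines $BATC_G$, all of $APTC$'s auxiliary operators $\parallel,\mid,\Theta,\triangleleft$, the encapsulation $\partial_H$, and guards), so the case split in step three is long, and one must be careful that the chosen ordering simultaneously orients the $\partial_H$-rules, the guard-distribution rules, and the $APTC$ rules without creating a cycle. A second, milder point of care is that the conditional guard rules $RG8$, $RG9$, $RG10$, $RG11$, $RG24$ are genuinely conditional, so "normal form" must be read relative to those conditions; this is harmless for elimination but should be stated explicitly, exactly as in the proof of Theorem \ref{ETBATCG}.
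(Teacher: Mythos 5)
Your proposal follows essentially the same route as the paper's proof: orient the axioms of Tables \ref{AxiomsForBATCG} and \ref{AxiomsForAPTCG} into the term rewrite system of Table \ref{TRSForAPTCG}, establish strong normalization via Theorem \ref{SN} with a well-founded ordering placing $\parallel$ above $\cdot$ above $+$, and then argue by contradiction on a smallest non-basic subterm of a normal form, with the guard subcases discharged by the $RG$-rules and the new operators $\parallel$, $\mid$, $\partial_H$ handled exactly as you describe. The only slip is that your case analysis includes $\Theta$ and $\triangleleft$, which belong to the APTC of chapter \ref{pe} and are not in the signature of $APTC_G$ (whose parallelism axioms use $\parallel$ directly, with no conflict-elimination or unless operator), so those cases are superfluous but harmless.
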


\begin{proof}
(1) Firstly, suppose that the following ordering on the signature of $APTC_G$ is defined: $\parallel > \cdot > +$ and the symbol $\parallel$ is given the lexicographical status for the first argument, then for each rewrite rule $p\rightarrow q$ in Table \ref{TRSForAPTCG} relation $p>_{lpo} q$ can easily be proved. We obtain that the term rewrite system shown in Table \ref{TRSForAPTCG} is strongly normalizing, for it has finitely many rewriting rules, and $>$ is a well-founded ordering on the signature of $APTC_G$, and if $s>_{lpo} t$, for each rewriting rule $s\rightarrow t$ is in Table \ref{TRSForAPTCG} (see Theorem \ref{SN}).

\begin{center}
    \begin{table}
        \begin{tabular}{@{}ll@{}}
            \hline No. &Rewriting Rule\\
            $RP1$ & $x\between y \rightarrow x\parallel y + x\mid y$\\
            $RP2$ & $e_1\parallel (e_2\cdot y) \rightarrow (e_1\parallel e_2)\cdot y$\\
            $RP3$ & $(e_1\cdot x)\parallel e_2 \rightarrow (e_1\parallel e_2)\cdot x$\\
            $RP4$ & $(e_1\cdot x)\parallel (e_2\cdot y) \rightarrow (e_1\parallel e_2)\cdot (x\between y)$\\
            $RP5$ & $(x+ y)\parallel z \rightarrow (x\parallel z)+ (y\parallel z)$\\
            $RP6$ & $x\parallel (y+ z) \rightarrow (x\parallel y)+ (x\parallel z)$\\
            $RP7$ & $0\parallel x \rightarrow 0$\\
            $RP8$ & $x\parallel 0 \rightarrow 0$\\
            $RP9$ & $1\parallel x \rightarrow x$\\
            $RP10$ & $x\parallel 1 \rightarrow x$\\
            $RC1$ & $e_1\mid e_2 \rightarrow \gamma(e_1,e_2)$\\
            $RC2$ & $e_1\mid (e_2\cdot y) \rightarrow \gamma(e_1,e_2)\cdot y$\\
            $RC3$ & $(e_1\cdot x)\mid e_2 \rightarrow \gamma(e_1,e_2)\cdot x$\\
            $RC4$ & $(e_1\cdot x)\mid (e_2\cdot y) \rightarrow \gamma(e_1,e_2)\cdot (x\between y)$\\
            $RC5$ & $(x+ y)\mid z \rightarrow (x\mid z) + (y\mid z)$\\
            $RC6$ & $x\mid (y+ z) \rightarrow (x\mid y)+ (x\mid z)$\\
            $RC7$ & $0\mid x \rightarrow 0$\\
            $RC8$ & $x\mid0 \rightarrow 0$\\
            $RC9$ & $1\mid x \rightarrow 0$\\
            $RC10$ & $x\mid1 \rightarrow 0$\\
            $RD1$ & $e\notin H\quad\partial_H(e) \rightarrow e$\\
            $RD2$ & $e\in H\quad \partial_H(e) \rightarrow 0$\\
            $RD3$ & $\partial_H(0) \rightarrow 0$\\
            $RD4$ & $\partial_H(x+ y) \rightarrow \partial_H(x)+\partial_H(y)$\\
            $RD5$ & $\partial_H(x\cdot y) \rightarrow \partial_H(x)\cdot\partial_H(y)$\\
            $RD6$ & $\partial_H(x\parallel y) \rightarrow \partial_H(x)\parallel\partial_H(y)$\\
            $RG12$ & $\phi(x\parallel y) \rightarrow\phi x\parallel \phi y$\\
            $RG13$ & $\phi(x\mid y) \rightarrow\phi x\mid \phi y$\\
            $RG14$ & $\phi\parallel 0 \rightarrow 0$\\
            $RG15$ & $0\parallel \phi \rightarrow 0$\\
            $RG16$ & $\phi\mid 0 \rightarrow 0$\\
            $RG17$ & $0\mid \phi \rightarrow 0$\\
            $RG18$ & $\phi\parallel 1 \rightarrow \phi$\\
            $RG19$ & $1\parallel \phi \rightarrow \phi$\\
            $RG20$ & $\phi\mid 1 \rightarrow 0$\\
            $RG21$ & $1\mid \phi \rightarrow 0$\\
            $RG22$ & $\phi\parallel\neg\phi \rightarrow 0$\\
            $RG23$ & $\partial_H(\phi) \rightarrow \phi$\\
            $RG24$ & $\phi_0\parallel\cdots\parallel\phi_n \rightarrow 0$ if $\forall s_0,\cdots,s_n\in S,\exists i\leq n.test(\neg\phi_i,s_0\cup\cdots\cup s_n)$\\
        \end{tabular}
        \caption{Term rewrite system of $APTC_G$}
        \label{TRSForAPTCG}
    \end{table}
\end{center}

(2) Then we prove that the normal forms of closed $APTC_G$ terms are basic $APTC_G$ terms.

Suppose that $p$ is a normal form of some closed $APTC_G$ term and suppose that $p$ is not a basic $APTC_G$ term. Let $p'$ denote the smallest sub-term of $p$ which is not a basic $APTC_G$ term. It implies that each sub-term of $p'$ is a basic $APTC_G$ term. Then we prove that $p$ is not a term in normal form. It is sufficient to induct on the structure of $p'$:

\begin{itemize}
  \item Case $p'\equiv e, e\in \mathbb{E}$. $p'$ is a basic $APTC_G$ term, which contradicts the assumption that $p'$ is not a basic $APTC_G$ term, so this case should not occur.
  \item Case $p'\equiv \phi, \phi\in \mathbb{B}$. $p'$ is a basic term, which contradicts the assumption that $p'$ is not a basic term, so this case should not occur.
  \item Case $p'\equiv p_1\cdot p_2$. By induction on the structure of the basic $APTC_G$ term $p_1$:
      \begin{itemize}
        \item Subcase $p_1\in \mathbb{E}$. $p'$ would be a basic $APTC_G$ term, which contradicts the assumption that $p'$ is not a basic $APTC_G$ term;
        \item Subcase $p_1\in \mathbb{B}$. $p'$ would be a basic term, which contradicts the assumption that $p'$ is not a basic term;
        \item Subcase $p_1\equiv e\cdot p_1'$. $RA5$ or $RA9$ rewriting rules in Table \ref{TRSForBATCG} can be applied. So $p$ is not a normal form;
        \item Subcase $p_1\equiv \phi\cdot p_1'$. $RG1$, $RG3$, $RG4$, $RG5$, $RG7$, or $RG8-9$ rewriting rules can be applied. So $p$ is not a normal form;
        \item Subcase $p_1\equiv p_1'+ p_1''$. $RA4$, $RA6$, $RG2$, or $RG6$ rewriting rules in Table \ref{TRSForBATCG} can be applied. So $p$ is not a normal form;
        \item Subcase $p_1\equiv p_1'\parallel p_1''$. $RP2$-$RP10$ rewrite rules in Table \ref{TRSForAPTCG} can be applied. So $p$ is not a normal form;
        \item Subcase $p_1\equiv p_1'\mid p_1''$. $RC1$-$RC10$ rewrite rules in Table \ref{TRSForAPTCG} can be applied. So $p$ is not a normal form;
        \item Subcase $p_1\equiv \partial_H(p_1')$. $RD1$-$RD6$ rewrite rules in Table \ref{TRSForAPTCG} can be applied. So $p$ is not a normal form.
      \end{itemize}
  \item Case $p'\equiv p_1+ p_2$. By induction on the structure of the basic $APTC_G$ terms both $p_1$ and $p_2$, all subcases will lead to that $p'$ would be a basic $APTC_G$ term, which contradicts the assumption that $p'$ is not a basic $APTC_G$ term.
  \item Case $p'\equiv p_1\parallel p_2$. By induction on the structure of the basic $APTC_G$ terms both $p_1$ and $p_2$, all subcases will lead to that $p'$ would be a basic $APTC_G$ term, which contradicts the assumption that $p'$ is not a basic $APTC_G$ term.
  \item Case $p'\equiv p_1\mid p_2$. By induction on the structure of the basic $APTC_G$ terms both $p_1$ and $p_2$, all subcases will lead to that $p'$ would be a basic $APTC_G$ term, which contradicts the assumption that $p'$ is not a basic $APTC_G$ term.
  \item Case $p'\equiv \partial_H(p_1)$. By induction on the structure of the basic $APTC_G$ terms of $p_1$, all subcases will lead to that $p'$ would be a basic $APTC_G$ term, which contradicts the assumption that $p'$ is not a basic $APTC_G$ term.
\end{itemize}
\end{proof}

We will define a term-deduction system which gives the operational semantics of $APTC_G$..

\begin{center}
    \begin{table}
        $$\frac{}{\langle e_1\parallel\cdots \parallel e_n,s\rangle\xrightarrow{\{e_1,\cdots,e_n\}}\langle \underline{e_1}\parallel\cdots \parallel \underline{e_n},s'\rangle}\textrm{ if }s'\in effect(e_1,s)\cup\cdots\cup effect(e_n,s)$$
        $$\frac{}{\langle\phi_1\parallel\cdots\parallel \phi_n,s\rangle\rightarrow\langle\surd,s\rangle}\textrm{ if }test(\phi_1,s),\cdots,test(\phi_n,s)$$

        $$\frac{\langle x,s\rangle\xrightarrow{e_1}\langle\underline{e_1},s'\rangle\quad \langle y,s\rangle\xrightarrow{e_2}\langle\underline{e_2},s''\rangle}{\langle x\parallel y,s\rangle\xrightarrow{\{e_1,e_2\}}\langle\underline{e_1}\parallel\underline{e_2},s'\cup s''\rangle} \quad\frac{\langle x,s\rangle\xrightarrow{e_1}\langle x',s'\rangle\quad \langle y,s\rangle\xrightarrow{e_2}\langle\underline{e_2},s''\rangle}{\langle x\parallel y,s\rangle\xrightarrow{\{e_1,e_2\}}\langle x'\parallel\underline{e_2},s'\cup s''\rangle}$$

        $$\frac{\langle x,s\rangle\xrightarrow{e_1}\langle\underline{e_1},s'\rangle\quad \langle y,s\rangle\xrightarrow{e_2}\langle y',s''\rangle}{\langle x\parallel y,s\rangle\xrightarrow{\{e_1,e_2\}}\langle \underline{e_1}\parallel y',s'\cup s''\rangle} \quad\frac{\langle x,s\rangle\xrightarrow{e_1}\langle x',s'\rangle\quad \langle y,s\rangle\xrightarrow{e_2}\langle y',s''\rangle}{\langle x\parallel y,s\rangle\xrightarrow{\{e_1,e_2\}}\langle x'\between y',s'\cup s''\rangle}$$

        $$\frac{\langle x,s\rangle\xrightarrow{e_1}\langle\underline{e_1},s'\rangle\quad \langle y,s\rangle\xrightarrow{e_2}\langle\underline{e_2},s''\rangle}{\langle x\mid y,s\rangle\xrightarrow{\gamma(e_1,e_2)}\langle\underline{\gamma(e_1,e_2)},effect(\gamma(e_1,e_2),s)\rangle} \quad\frac{\langle x,s\rangle\xrightarrow{e_1}\langle x',s'\rangle\quad \langle y,s\rangle\xrightarrow{e_2}\langle\underline{e_2},s''\rangle}{\langle x\mid y,s\rangle\xrightarrow{\gamma(e_1,e_2)}\langle x'\between\underline{\gamma(e_1,e_2)},effect(\gamma(e_1,e_2),s)\rangle}$$

        $$\frac{\langle x,s\rangle\xrightarrow{e_1}\langle\underline{e_1},s'\rangle\quad \langle y,s\rangle\xrightarrow{e_2}\langle y',s''\rangle}{\langle x\mid y,s\rangle\xrightarrow{\gamma(e_1,e_2)}\langle y'\between\underline{\gamma(e_1,e_2)},effect(\gamma(e_1,e_2),s)\rangle} \quad\frac{\langle x,s\rangle\xrightarrow{e_1}\langle x',s'\rangle\quad \langle y,s\rangle\xrightarrow{e_2}\langle y',s''\rangle}{\langle x\mid y,s\rangle\xrightarrow{\gamma(e_1,e_2)}\langle x'\between y',effect(\gamma(e_1,e_2),s)\rangle}$$

        $$\frac{\langle x,s\rangle\xrightarrow{e}\langle\underline{e},s'\rangle}{\langle\partial_H(x),s\rangle\xrightarrow{e}\langle\underline{e},s'\rangle}\quad (e\notin H)\quad\frac{\langle x,s\rangle\xrightarrow{e}\langle x',s'\rangle}{\langle\partial_H(x),s\rangle\xrightarrow{e}\langle\partial_H(x'),s'\rangle}\quad(e\notin H)$$
        \caption{Transition rules of $APTC_G$}
        \label{TRForAPTCG}
    \end{table}
\end{center}

\begin{theorem}[Generalization of $APTC_G$ with respect to $BATC_G$]
$APTC_G$ is a generalization of $BATC_G$.
\end{theorem}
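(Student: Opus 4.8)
The plan is to show that $APTC_G$ is a conservative extension of $BATC_G$; by the standard reading of ``generalization'' used earlier in the paper (namely, that the smaller algebra is an embedding of the larger one), this immediately yields the statement. The tool is the Conservative Extension theorem of Section~\ref{os}, applied with $T_0$ the TSS of $BATC_G$ in Table~\ref{SETRForBATCG} and $T_1$ the TSS consisting of the new transition rules for $APTC_G$ in Table~\ref{TRForAPTCG} (the rules for $\parallel$, $\mid$, $\partial_H$, and the guard-specific rule for $\phi_1\parallel\cdots\parallel\phi_n$). I would mirror the structure of the earlier ``Generalization of the algebra for parallelism with respect to BATC'' proof.

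First I would verify that $T_0$ is source-dependent: the rules for $1$, for an atomic guard $\phi$, and for an atomic event $e$ contain no variables at all; in the rules for $+$ and $\cdot$ the single variable appearing in a premise is precisely the variable occurring in the source, so its right-hand-side occurrence is source-dependent, and hence $x'$ (resp.\ $y'$) is source-dependent too. This gives condition~(1) of the Conservative Extension theorem. Next I would check condition~(2): for every rule $\rho\in T_1$ the source contains an occurrence of $\parallel$, $\mid$, or $\partial_H$ --- operators absent from the signature of $BATC_G$ --- so each such source is fresh and condition~(2) is satisfied trivially (and even in the $\partial_H$ case, where $\rho$ has a premise $\langle x,s\rangle\xrightarrow{e}\langle x',s'\rangle$ with all variables of the left term occurring in the source, the source $\partial_H(x)$ already carries the fresh symbol). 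Finally, since no rule in $T_0$ or $T_1$ has negative premises, both $T_0$ and $T_0\oplus T_1$ are positive after reduction, so all hypotheses of the theorem hold, and we conclude that $APTC_G$ is a conservative extension of $BATC_G$, i.e.\ $BATC_G$ embeds in $APTC_G$.

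The main obstacle I anticipate is the bookkeeping forced by the data-state component: configurations are pairs $\langle C,s\rangle$ and the guard rules carry side conditions $test(\phi,s)$ and $s'\in effect(e,s)$, so I must make sure ``source-dependent'' and ``fresh'' are applied in the state-annotated setting and that no rule of Table~\ref{TRForAPTCG} accidentally has a source that is a pure $BATC_G$ term --- in particular the rule with source $\langle\phi_1\parallel\cdots\parallel\phi_n,s\rangle$ still has a source containing $\parallel$, hence remains fresh. Once this is checked, the argument is routine, exactly as for the corresponding generalization results for $BATC$/$APTC$ earlier in the paper.
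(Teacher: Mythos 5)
Your proposal is correct and follows essentially the same route as the paper: both argue via the conservative-extension machinery by checking that the $BATC_G$ rules are source-dependent and that every new $APTC_G$ rule has a source containing a fresh operator ($\between$, $\parallel$, $\mid$, or $\partial_H$), concluding that $BATC_G$ embeds in $APTC_G$. Your additional remarks on positivity after reduction and on handling the data-state annotations are sensible refinements of the same argument, not a different approach.
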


\begin{proof}
It follows from the following three facts.

\begin{enumerate}
  \item The transition rules of $BATC_G$ are all source-dependent;
  \item The sources of the transition rules $APTC_G$ contain an occurrence of $\between$, or $\parallel$, or $\mid$;
  \item The transition rules of $APTC_G$ are all source-dependent.
\end{enumerate}

So, $APTC_G$ is a generalization of $BATC_G$, that is, $BATC_G$ is an embedding of $APTC_G$, as desired.
\end{proof}

\begin{theorem}[Congruence of $APTC_G$ with respect to truly concurrent bisimulation equivalences]\label{CAPTCG}
(1) Pomset bisimulation equivalence $\sim_{p}$ is a congruence with respect to $APTC_G$.

(2) Step bisimulation equivalence $\sim_{s}$ is a congruence with respect to $APTC_G$.

(3) Hp-bisimulation equivalence $\sim_{hp}$ is a congruence with respect to $APTC_G$.

(4) Hhp-bisimulation equivalence $\sim_{hhp}$ is a congruence with respect to $APTC_G$.
\end{theorem}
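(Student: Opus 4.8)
The plan is to obtain all four clauses at once from the meta-theorem on truly concurrent bisimulations as congruences, Theorem \ref{tcbacpn}: if the transition system specification (TSS) of $APTC_G$ is positive after reduction and in panth format, then $\sim_p$, $\sim_s$, $\sim_{hp}$ and $\sim_{hhp}$ are all congruences with respect to it. So the entire argument reduces to a syntactic inspection of the rules, exactly as in the congruence proof for $BATC_G$. First I would assemble the full TSS of $APTC_G$: the single-event (and the obvious pomset) rules for $1$, atomic guards $\phi\in\mathbb{B}_{at}$, atomic events $e\in\mathbb{E}$ and the operators $\cdot$, $+$ from Table \ref{SETRForBATCG}, together with the rules for $\parallel$, $\mid$ and $\partial_H$ in Table \ref{TRForAPTCG}; the operator $\between$ is eliminable via axiom $P1$ and carries no rules of its own, so it needs no separate treatment.

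Next I would verify the two hypotheses of Theorem \ref{tcbacpn}. Positivity after reduction is immediate: every rule in Tables \ref{SETRForBATCG} and \ref{TRForAPTCG} has only positive premises (there are no negative premises anywhere), so the TSS is already positive and hence \emph{a fortiori} positive after reduction — one may, if desired, exhibit the constant stratification $\phi\equiv 0$ and invoke the positivity-after-reduction criterion. For panth format I would check, rule by rule, the three restrictions: in each positive premise $\langle x,s\rangle\xrightarrow{e}\langle\underline{e},s'\rangle$ or $\langle x,s\rangle\xrightarrow{e}\langle x',s'\rangle$ the target term on the right-hand side is a single variable (or the distinguished past $\underline{e}$, which plays the role of a constant); the source of each conclusion contains at most one function symbol — e.g. $\langle x\parallel y,s\rangle$, $\langle x\mid y,s\rangle$, $\langle\partial_H(x),s\rangle$ each has exactly one operator; and no variable occurs more than once jointly among the right-hand sides of the premises and the source. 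All of these hold by inspection, since the rules have the standard ACP-style shape. Granting the hypotheses, Theorem \ref{tcbacpn} then delivers clauses (1)--(4) simultaneously. Equivalently, one can phrase the conclusion as in the $BATC_G$ congruence proof: each of the four relations is an equivalence on $APTC_G$-terms, and it remains to see it is preserved by each of $\cdot$, $+$, $\parallel$, $\mid$, $\partial_H$ (and hence $\between$), which is precisely what the panth-format check guarantees.

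The one delicate point — the place I expect to need the most care — is that configurations here carry a data component, so transitions have the shape $\langle C,s\rangle\xrightarrow{X}\langle C',s'\rangle$ rather than the bare $t\xrightarrow{a}t'$ of the original panth-format definition, and the $\parallel$- and $\mid$-rules merge states via $s'\cup s''$ while the axioms $G12$--$G24$ use side conditions such as $test(\phi,s)$, $s'\in effect(e,s)$ and $wp(e,\phi)$. I would therefore make explicit that the format and Theorem \ref{tcbacpn} are read in the extended setting of Table \ref{SETRForBATCG}: states are treated as part of the configuration, the $\underline{e}$-annotations as constants, and the guard conditions as admissible data-dependent side conditions that do not alter the term-level structure of any rule. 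Once this reading is fixed, the three panth-format restrictions are manifestly met and nothing else in the argument is non-routine, so Theorem \ref{tcbacpn} applies and gives the claim.
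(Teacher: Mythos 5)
Your proposal matches the paper's own argument: the paper likewise proves all four clauses by observing that each relation is an equivalence and that the TSS of $APTC_G$ (Table \ref{TRForAPTCG}, on top of the $BATC_G$ rules) is in panth format, so Theorem \ref{tcbacpn} yields congruence with respect to $\parallel$, $\mid$ and $\partial_H$. Your additional care about positivity after reduction and about reading the format with the data-state component $\langle C,s\rangle$ only makes explicit what the paper leaves implicit, so nothing further is needed.
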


\begin{proof}
(1) It is easy to see that pomset bisimulation is an equivalent relation on $APTC_G$ terms, we only need to prove that $\sim_{p}$ is preserved by the operators $\parallel$, $\mid$, $\partial_H$. Since the TSS in Table \ref{TRForAPTCG} is in Panth format, by Theorem \ref{tcbacpn}, pomset bisimulation equivalence $\sim_{p}$ is a congruence with respect to $APTC_G$.

(2) It is easy to see that step bisimulation is an equivalent relation on $APTC_G$ terms, we only need to prove that $\sim_{s}$ is preserved by the operators $\parallel$, $\mid$, $\partial_H$. Since the TSS in Table \ref{TRForAPTCG} is in Panth format, by Theorem \ref{tcbacpn}, step bisimulation equivalence $\sim_{s}$ is a congruence with respect to $APTC_G$.

(3) It is easy to see that hp-bisimulation is an equivalent relation on $APTC_G$ terms, we only need to prove that $\sim_{hp}$ is preserved by the operators $\parallel$, $\mid$, $\partial_H$. Since the TSS in Table \ref{TRForAPTCG} is in Panth format, by Theorem \ref{tcbacpn}, hp-bisimulation equivalence $\sim_{hp}$ is a congruence with respect to $APTC_G$.

(4) It is easy to see that hhp-bisimulation is an equivalent relation on $APTC_G$ terms, we only need to prove that $\sim_{hhp}$ is preserved by the operators $\parallel$, $\mid$, $\partial_H$. Since the TSS in Table \ref{TRForAPTCG} is in Panth format, by Theorem \ref{tcbacpn}, hhp-bisimulation equivalence $\sim_{hhp}$ is a congruence with respect to $APTC_G$.
\end{proof}

\begin{theorem}[Soundness of $APTC_G$ modulo truly concurrent bisimulation equivalences]\label{SAPTCG}
(1) Let $x$ and $y$ be $APTC_G$ terms. If $APTC\vdash x=y$, then $x\sim_{p} y$.

(2) Let $x$ and $y$ be $APTC_G$ terms. If $APTC\vdash x=y$, then $x\sim_{s} y$.

(3) Let $x$ and $y$ be $APTC_G$ terms. If $APTC\vdash x=y$, then $x\sim_{hp} y$.
\end{theorem}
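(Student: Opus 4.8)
The plan is to follow exactly the template used for the earlier soundness results (Theorems \ref{SBATC}, \ref{SAPTC}, \ref{SBATCG}). First I would invoke the congruence theorem of $APTC_G$ (Theorem \ref{CAPTCG}): each of $\sim_p$, $\sim_s$ and $\sim_{hp}$ is simultaneously an equivalence relation and a congruence on $APTC_G$ terms. Since the provable equality relation is the least congruence containing all axiom instances, it is contained in the bisimilarity relation as soon as every individual axiom is sound; hence the task reduces to checking soundness axiom by axiom.

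Second, the axioms of $APTC_G$ split into those inherited from $BATC_G$ (Table \ref{AxiomsForBATCG}), whose soundness is already Theorem \ref{SBATCG}, and the genuinely new ones in Table \ref{AxiomsForAPTCG}: the parallel laws $P1$--$P10$, the communication laws $C1$--$C10$, the encapsulation laws $D1$--$D6$, and the guard-interaction laws $G12$--$G24$. For each new axiom $s=t$ and each closed substitution instance $\sigma$, I would exhibit an explicit witnessing relation built from the transition rules of Table \ref{TRForAPTCG}: for the pomset and step cases a relation on configurations in the style of Definition \ref{PSBG}, and for the hp case a posetal relation carrying the order-isomorphism $f$ in the style of Definition \ref{HHPBG}. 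The standard candidate is the diagonal $\{(\langle C,s\rangle,\langle C,s\rangle)\}$ augmented by the finitely many ``structural identity'' pairs obtained by comparing the syntactic shapes of $\sigma(s)$ and $\sigma(t)$; one then matches every outgoing pomset/step transition on one side by an equally-labelled transition on the other and verifies that the resulting data states coincide.

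The routine heart of the argument is that the rules of Table \ref{TRForAPTCG} are arranged so that, e.g., $x\between y$ and $x\parallel y + x\mid y$ ($P1$), or $(e_1\cdot x)\parallel(e_2\cdot y)$ and $(e_1\parallel e_2)\cdot(x\between y)$ ($P4$), generate precisely the same set of initial labelled transitions with the same target configurations and the same effect on the state; the $\partial_H$ laws follow because the side conditions $e\in H$ and $e\notin H$ are preserved along derivations, and the communication laws follow from the definition of $\gamma$ together with the $\mid$-rules. The step of this proof I expect to be the main obstacle is the block of guard axioms $G12$--$G24$, since their soundness is not purely structural: it hinges on the data environment through $test$, $effect$ and the weakest-precondition predicate $wp$. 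In particular $G24$, which states $\phi_0\parallel\cdots\parallel\phi_n = 0$ whenever $\forall s_0,\dots,s_n\in S,\ \exists i\le n.\ test(\neg\phi_i,s_0\cup\cdots\cup s_n)$, is sound precisely because under that hypothesis the parallel-guard rule can never fire, so the left-hand side has no outgoing transitions and is thus bisimilar to $0$ in all of $\sim_p$, $\sim_s$, $\sim_{hp}$; establishing this, together with the analogous interactions $G12$--$G23$ of guards with $\parallel$, $\mid$ and $\partial_H$, requires appealing to the assumed determinism of the $effect$ function and to the way markings are combined by $s_0\cup\cdots\cup s_n$. Once the guard cases are settled the pomset case is complete, and the step and hp cases follow using the very same witnessing relations, the hp case additionally checking that the accompanying label-bijection $f$ is order-preserving, which holds because the two sides of each axiom share the same underlying events. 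As in the companion theorems of this section, the remaining verifications are routine, so I would state the result and remark that the proof is straightforward and omitted.
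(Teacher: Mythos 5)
Your proposal matches the paper's argument: the paper likewise invokes that $\sim_p$, $\sim_s$ and $\sim_{hp}$ are equivalences and congruences (via the congruence theorem for $APTC_G$), reduces the claim to checking each axiom of Table \ref{AxiomsForAPTCG}, and then leaves the axiom-by-axiom verification as a routine exercise. Your additional sketch of the witnessing relations and of the guard axioms $G12$--$G24$ only elaborates on what the paper omits, so the approach is essentially identical.
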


\begin{proof}
(1) Since pomset bisimulation $\sim_p$ is both an equivalent and a congruent relation, we only need to check if each axiom in Table \ref{AxiomsForAPTCG} is sound modulo pomset bisimulation equivalence. We leave the proof as an exercise for the readers.

(2) Since step bisimulation $\sim_s$ is both an equivalent and a congruent relation, we only need to check if each axiom in Table \ref{AxiomsForAPTCG} is sound modulo step bisimulation equivalence. We leave the proof as an exercise for the readers.

(3) Since hp-bisimulation $\sim_{hp}$ is both an equivalent and a congruent relation, we only need to check if each axiom in Table \ref{AxiomsForAPTCG} is sound modulo hp-bisimulation equivalence. We leave the proof as an exercise for the readers.
\end{proof}

\begin{theorem}[Completeness of $APTC_G$ modulo truly concurrent bisimulation equivalences]\label{CAPTCG}
(1) Let $p$ and $q$ be closed $APTC_G$ terms, if $p\sim_{p} q$ then $p=q$.

(2) Let $p$ and $q$ be closed $APTC_G$ terms, if $p\sim_{s} q$ then $p=q$.

(3) Let $p$ and $q$ be closed $APTC_G$ terms, if $p\sim_{hp} q$ then $p=q$.
\end{theorem}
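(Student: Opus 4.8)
The plan is to follow the three-stage strategy already used for Theorem~\ref{CAPTC} and Theorem~\ref{CBATCG}, combining the guard-handling of the $BATC_G$ completeness proof with the parallel-summand handling of the $APTC$ completeness proof. First I would apply the elimination theorem of $APTC_G$ (Theorem~\ref{ETAPTCG}): every closed $APTC_G$ term is provably equal to a closed \emph{basic} $APTC_G$ term, so it suffices to prove completeness for those. Next I would fix a normal form: working modulo associativity and commutativity of $+$ (axioms $A1$, $A2$), and using the Boolean-algebra and distribution axioms $G1$--$G24$ together with $RP2$--$RP10$ to push $\parallel$ and $\mid$ inward, every basic $APTC_G$ term is provably equal to a sum $s_1+\cdots+s_k$ in which each summand $s_i$ is an atomic event $e\in\mathbb{E}$, an atomic guard $\phi\in\mathbb{B}$, a parallel product $a_1\parallel\cdots\parallel a_l$ of atomic events and guards, or a product $t_1\cdot t_2$ with $t_1$ of one of those forms and $t_2$ again a normal form; call the induced congruence $=_{AC}$.

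The heart of the argument is the lemma: for normal forms $n$ and $n'$, if $n\sim_p n'$ then $n=_{AC}n'$, proved by induction on the combined size of $n$ and $n'$, going summand by summand using the transition rules of Table~\ref{TRForAPTCG} (and the inherited rules of Table~\ref{SETRForBATCG}) while carrying the data-state component along. An atomic summand $e$ of $n$ gives $\langle n,s\rangle\xrightarrow{e}\langle\underline{e},s'\rangle$, so $n\sim_p n'$ forces a matching summand $e$ in $n'$; a guard summand $\phi$ gives $\langle n,s\rangle\rightarrow\langle\surd,s\rangle$ for every $s$ with $test(\phi,s)$, which pins $\phi$ down in $n'$ exactly as in the $BATC_G$ case; a parallel summand $a_1\parallel\cdots\parallel a_l$ produces a step $\xrightarrow{\{a_1,\dots,a_l\}}$, forcing an $=_{AC}$-equal parallel summand in $n'$; and a product summand $t_1\cdot t_2$ yields $\langle n,s\rangle\xrightarrow{t_1}\langle t_2,s'\rangle$, so $n'$ contains $t_1\cdot t_2'$ with $\langle t_2,s'\rangle$ and $\langle t_2',s'\rangle$ bisimilar, and since $t_2,t_2'$ are strictly smaller normal forms the induction hypothesis gives $t_2=_{AC}t_2'$. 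Doing this in both directions yields $n=_{AC}n'$. Then for arbitrary basic terms $s\sim_p t$, soundness (Theorem~\ref{SAPTCG}) gives $s\sim_p n$ and $t\sim_p n'$ for their normal forms, whence $n\sim_p n'$, so $n=_{AC}n'$ and $s=n=_{AC}n'=t$. Parts (2) and (3) go through with pomset transitions replaced by steps, and by additionally tracking the order-isomorphism in the hp-case.

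The step I expect to be the main obstacle is making the normal form and the summand-matching genuinely robust in the presence of guards interacting with parallelism. Two syntactically different guard expressions (or parallel compositions of guards) can be semantically equivalent or even equal to $0$ via $G1$, $G2$, $G8$, $G9$, $G22$, $G24$; I must argue that the normal form can always be taken with its guard subexpressions reduced by the Boolean fragment, so that the operational criterion ``$\langle n,s\rangle\rightarrow\langle\surd,s\rangle$ for all $s$ with $test(\phi,s)$'' really determines the guard up to provable equality. A second point requiring care is that transitions now change the state through $effect$: when matching a product summand I must keep the post-transition configurations of $n$ and $n'$ in the \emph{same} state $s'$, which is precisely what pomset bisimulation over $\langle\mathcal{C}(PN),S\rangle$ guarantees, but the induction hypothesis has to be applied uniformly over all $s$. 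Once these bookkeeping issues are settled, the purely parallel summands are routine, being read directly off Table~\ref{TRForAPTCG} as in the guard-free completeness proof.
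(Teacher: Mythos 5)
Your proposal is correct and follows essentially the same route as the paper's proof: elimination to basic $APTC_G$ terms (Theorem \ref{ETAPTCG}), normal forms modulo AC of $+$ with summands that are atomic events, atomic guards, or sequential products whose factors are events, guards, or parallel compositions thereof, then the key lemma that $n\sim_p n'$ implies $n=_{AC}n'$ by induction on sizes with summand-by-summand matching over the state-carrying transition rules, and finally the soundness theorem (Theorem \ref{SAPTCG}) to transfer from basic terms to their normal forms. The bookkeeping concerns you flag (guards collapsing under the Boolean axioms, and states being carried uniformly) are real but are exactly the points the paper also treats only implicitly, so there is no substantive divergence.
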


\begin{proof}
(1) Firstly, by the elimination theorem of $APTC_G$ (see Theorem \ref{ETAPTCG}), we know that for each closed $APTC_G$ term $p$, there exists a closed basic $APTC_G$ term $p'$, such that $APTC\vdash p=p'$, so, we only need to consider closed basic $APTC_G$ terms.

The basic terms (see Definition \ref{BTAPTCG}) modulo associativity and commutativity (AC) of conflict $+$ (defined by axioms $A1$ and $A2$ in Table \ref{AxiomsForBATCG}), and these equivalences is denoted by $=_{AC}$. Then, each equivalence class $s$ modulo AC of $+$ has the following normal form

$$s_1+\cdots+ s_k$$

with each $s_i$ either an atomic event, or an atomic guard, or of the form

$$t_1\cdot\cdots\cdot t_m$$

with each $t_j$ either an atomic event, or an atomic guard, or of the form

$$u_1\parallel\cdots\parallel u_l$$

with each $u_l$ an atomic event, or an atomic guard, and each $s_i$ is called the summand of $s$.

Now, we prove that for normal forms $n$ and $n'$, if $n\sim_{p} n'$ then $n=_{AC}n'$. It is sufficient to induct on the sizes of $n$ and $n'$.

\begin{itemize}
  \item Consider a summand $e$ of $n$. Then $\langle n,s\rangle\xrightarrow{e}\langle \underline{e},s'\rangle$, so $n\sim_p n'$ implies $\langle n',s\rangle\xrightarrow{e}\langle \underline{e},s\rangle$, meaning that $n'$ also contains the summand $e$.
  \item Consider a summand $\phi$ of $n$. Then $\langle n,s\rangle\rightarrow\langle \surd,s\rangle$, if $test(\phi,s)$ holds, so $n\sim_p n'$ implies $\langle n',s\rangle\rightarrow\langle \surd,s\rangle$, if $test(\phi,s)$ holds, meaning that $n'$ also contains the summand $\phi$.
  \item Consider a summand $t_1\cdot t_2$ of $n$,
  \begin{itemize}
    \item if $t_1\equiv e'$, then $\langle n,s\rangle\xrightarrow{e'}\langle t_2,s'\rangle$, so $n\sim_p n'$ implies $\langle n',s\rangle\xrightarrow{e'}\langle t_2',s'\rangle$ with $t_2\sim_p t_2'$, meaning that $n'$ contains a summand $e'\cdot t_2'$. Since $t_2$ and $t_2'$ are normal forms and have sizes smaller than $n$ and $n'$, by the induction hypotheses if $t_2\sim_p t_2'$ then $t_2=_{AC} t_2'$;
    \item if $t_1\equiv \phi'$, then $\langle n,s\rangle\rightarrow\langle t_2,s\rangle$, if $test(\phi',s)$ holds, so $n\sim_p n'$ implies $\langle n',s\rangle\rightarrow\langle t_2',s\rangle$ with $t_2\sim_p t_2'$, if $test(\phi',s)$ holds, meaning that $n'$ contains a summand $\phi'\cdot t_2'$. Since $t_2$ and $t_2'$ are normal forms and have sizes smaller than $n$ and $n'$, by the induction hypotheses if $t_2\sim_p t_2'$ then $t_2=_{AC} t_2'$;
    \item if $t_1\equiv e_1\parallel\cdots\parallel e_l$, then $\langle n,s\rangle\xrightarrow{\{e_1,\cdots,e_l\}}\langle t_2,s'\rangle$, so $n\sim_p n'$ implies $\langle n',s\rangle\xrightarrow{\{e_1,\cdots,e_l\}}\langle t_2',s'\rangle$ with $t_2\sim_p t_2'$, meaning that $n'$ contains a summand $(e_1\parallel\cdots\parallel e_l)\cdot t_2'$. Since $t_2$ and $t_2'$ are normal forms and have sizes smaller than $n$ and $n'$, by the induction hypotheses if $t_2\sim_p t_2'$ then $t_2=_{AC} t_2'$;
    \item if $t_1\equiv \phi_1\parallel\cdots\parallel \phi_l$, then $\langle n,s\rangle\rightarrow\langle t_2,s\rangle$, if $test(\phi_1,s),\cdots,test(\phi_l,s)$ hold, so $n\sim_p n'$ implies $\langle n',s\rangle\rightarrow\langle t_2',s\rangle$ with $t_2\sim_p t_2'$, if $test(\phi_1,s),\cdots,test(\phi_l,s)$ hold, meaning that $n'$ contains a summand $(\phi_1\parallel\cdots\parallel \phi_l)\cdot t_2'$. Since $t_2$ and $t_2'$ are normal forms and have sizes smaller than $n$ and $n'$, by the induction hypotheses if $t_2\sim_p t_2'$ then $t_2=_{AC} t_2'$.
  \end{itemize}
\end{itemize}

So, we get $n=_{AC} n'$.

Finally, let $s$ and $t$ be basic $APTC_G$ terms, and $s\sim_p t$, there are normal forms $n$ and $n'$, such that $s=n$ and $t=n'$. The soundness theorem of $APTC_G$ modulo pomset bisimulation equivalence (see Theorem \ref{SAPTCG}) yields $s\sim_p n$ and $t\sim_p n'$, so $n\sim_p s\sim_p t\sim_p n'$. Since if $n\sim_p n'$ then $n=_{AC}n'$, $s=n=_{AC}n'=t$, as desired.

(2) It can be proven similarly as (1).

(3) It can be proven similarly as (1).
\end{proof}

\subsubsection{Recursion}{\label{recg}}

In this subsection, we introduce recursion to capture infinite processes based on $APTC_G$. In the following, $E,F,G$ are recursion specifications, $X,Y,Z$ are recursive variables.

\begin{definition}[Guarded recursive specification]
A recursive specification

$$X_1=t_1(X_1,\cdots,X_n)$$
$$...$$
$$X_n=t_n(X_1,\cdots,X_n)$$

is guarded if the right-hand sides of its recursive equations can be adapted to the form by applications of the axioms in $APTC$ and replacing recursion variables by the right-hand sides of their recursive equations,

$$(a_{11}\parallel\cdots\parallel a_{1i_1})\cdot s_1(X_1,\cdots,X_n)+\cdots+(a_{k1}\parallel\cdots\parallel a_{ki_k})\cdot s_k(X_1,\cdots,X_n)+(b_{11}\parallel\cdots\parallel b_{1j_1})+\cdots+(b_{1j_1}\parallel\cdots\parallel b_{lj_l})$$

where $a_{11},\cdots,a_{1i_1},a_{k1},\cdots,a_{ki_k},b_{11},\cdots,b_{1j_1},b_{1j_1},\cdots,b_{lj_l}\in \mathbb{E}$, and the sum above is allowed to be empty, in which case it represents the deadlock $0$. And there does not exist an infinite sequence of $1$-transitions $\langle X|E\rangle\rightarrow\langle X'|E\rangle\rightarrow\langle X''|E\rangle\rightarrow\cdots$.
\end{definition}

\begin{center}
    \begin{table}
        $$\frac{\langle t_i(\langle X_1|E\rangle,\cdots,\langle X_n|E\rangle),s\rangle\xrightarrow{\{e_1,\cdots,e_k\}}\langle\underline{e_1}\parallel\cdots\parallel\underline{e_k},s'\rangle}{\langle\langle X_i|E\rangle,s\rangle\xrightarrow{\{e_1,\cdots,e_k\}}\langle\underline{e_1}\parallel\cdots\parallel\underline{e_k},s'\rangle}$$
        $$\frac{\langle t_i(\langle X_1|E\rangle,\cdots,\langle X_n|E\rangle),s\rangle\xrightarrow{\{e_1,\cdots,e_k\}} \langle y,s'\rangle}{\langle\langle X_i|E\rangle,s\rangle\xrightarrow{\{e_1,\cdots,e_k\}} \langle y,s'\rangle}$$
        \caption{Transition rules of guarded recursion}
        \label{TRForGRG}
    \end{table}
\end{center}

\begin{theorem}[Conservitivity of $APTC_G$ with guarded recursion]
$APTC_G$ with guarded recursion is a conservative extension of $APTC_G$.
\end{theorem}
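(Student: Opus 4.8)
The plan is to invoke the Conservative extension theorem from Section~\ref{os}, exactly in the style of the analogous results earlier in the paper (conservativity of $APTC$ with guarded recursion, and of $APTC$ with silent step and guarded linear recursion). I take $T_0$ to be the TSS of $APTC_G$ --- the single-event rules of $BATC_G$ in Table~\ref{SETRForBATCG}, their pomset counterparts, together with the rules of $APTC_G$ in Table~\ref{TRForAPTCG} --- and $T_1$ to be the two transition rules for guarded recursion in Table~\ref{TRForGRG}. Then $T_0\oplus T_1$ is the TSS of $APTC_G$ with guarded recursion, and it suffices to verify the hypotheses of the Conservative extension theorem: that $T_0$ is source-dependent, that $T_0$ and $T_0\oplus T_1$ are positive after reduction, and that each rule of $T_1$ either has a fresh source or has a premise over $\Sigma_0$ carrying a fresh label or predicate.

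First I would record source-dependency of $T_0$. This is essentially already in hand: the $BATC_G$ rules are source-dependent (this fact is used in the Generalization theorem of $APTC_G$ with respect to $BATC_G$), and for each $APTC_G$ rule in Table~\ref{TRForAPTCG} the source is one of $x\parallel y$, $x\mid y$, $\partial_H(x)$, with premises of the form $\langle x,s\rangle\xrightarrow{e}\cdots$ and $\langle y,s\rangle\xrightarrow{e}\cdots$, so every variable appearing on the right of a premise or in the conclusion is reached from a source variable through a premise; the data-state components $s,s',s''$ are carried along as part of the configurations and do not disturb this. Second, since there are no negative premises anywhere in Tables~\ref{SETRForBATCG}, \ref{TRForAPTCG}, \ref{TRForGRG}, the combined TSS is positive, hence trivially positive after reduction, so the blanket requirement of the theorem holds.

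The one genuinely new point is the condition on the rules of $T_1$: the process constants $\langle X_i|E\rangle$ do not occur in $\Sigma_0$, so the source $\langle X_i|E\rangle$ of each guarded-recursion rule is fresh, and the first alternative of clause~(2) of the Conservative extension theorem is met regardless of the shape of the premise $\langle t_i(\langle X_1|E\rangle,\dots,\langle X_n|E\rangle),s\rangle\xrightarrow{\{e_1,\dots,e_k\}}\cdots$. Applying the theorem then yields that $T_0$ and $T_0\oplus T_1$ generate exactly the same transitions on closed $APTC_G$ terms, i.e.\ $APTC_G$ with guarded recursion is a conservative extension of $APTC_G$. I do not anticipate a real obstacle here; the only care needed is to state source-dependency uniformly across the single-event rules, the pomset rules and the communication/encapsulation rules, which is routine.
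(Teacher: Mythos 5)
Your proposal is correct and follows essentially the same route as the paper: it invokes the Conservative extension theorem with $T_0$ the TSS of $APTC_G$ (which is source-dependent) and $T_1$ the guarded-recursion rules of Table~\ref{TRForGRG}, whose sources $\langle X_i|E\rangle$ are fresh constants. Your added remarks on positivity after reduction and on treating the data-state components uniformly only make explicit what the paper leaves implicit.
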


\begin{proof}
Since the transition rules of $APTC_G$ are source-dependent, and the transition rules for guarded recursion in Table \ref{TRForGRG} contain only a fresh constant in their source, so the transition rules of $APTC_G$ with guarded recursion are conservative extensions of those of $APTC_G$.
\end{proof}

\begin{theorem}[Congruence theorem of $APTC_G$ with guarded recursion]
Truly concurrent bisimulation equivalences $\sim_{p}$, $\sim_s$ and $\sim_{hp}$ are all congruences with respect to $APTC_G$ with guarded recursion.
\end{theorem}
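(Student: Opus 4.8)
The plan is to reduce the statement to the general congruence meta-theorem, Theorem~\ref{tcbacpn}, exactly as was done for $BATC_G$ and for $APTC_G$ without recursion. Concretely, I would let $T$ be the TSS consisting of the single-event and pomset transition rules of $BATC_G$ (Table~\ref{SETRForBATCG}), the rules for $\parallel$, $\mid$ and $\partial_H$ of $APTC_G$ (Table~\ref{TRForAPTCG}), and the two rules for guarded recursion (Table~\ref{TRForGRG}), working over the signature of $APTC_G$ enriched with, for every guarded recursive specification $E$ and every recursion variable $X_i$ of $E$, a fresh constant $\langle X_i|E\rangle$. The two hypotheses to discharge are then: that $T$ is in panth format, and that $T$ is positive after reduction. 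Once both hold, Theorem~\ref{tcbacpn} immediately yields that $\sim_p$, $\sim_s$ and $\sim_{hp}$ (and in fact also $\sim_{hhp}$) are congruences with respect to $APTC_G$ with guarded recursion.

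For the panth format check, the rules of Tables~\ref{SETRForBATCG} and~\ref{TRForAPTCG} were already observed to be in panth format in the congruence proofs for $BATC_G$ and $APTC_G$ above, so only the two new rules of Table~\ref{TRForGRG} need attention. Each of these has source $\langle\langle X_i|E\rangle,s\rangle$, whose term component $\langle X_i|E\rangle$ is a constant and hence contains no more than one function symbol; its positive premise has the single variable $y$ as right-hand side in the non-terminating variant (and a terminating state in the other); and since the source carries no variables at all, there can be no clash between a variable occurring at the right-hand side of a premise and a variable in the source. Thus all three panth restrictions are met, and $T$ is in panth format.

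For positivity after reduction I would simply observe that none of the rules in Tables~\ref{SETRForBATCG}, \ref{TRForAPTCG} and~\ref{TRForGRG} contains a negative premise, so the constant weight function $\phi\equiv 0$ is a stratification, the clauses on negative premises being vacuous. By the Positivity-after-reduction theorem, $T$ is therefore positive after reduction, and both hypotheses of Theorem~\ref{tcbacpn} are in place; invoking that theorem finishes the argument, mirroring the earlier congruence results verbatim, with the hhp-case coming for free even though it is not listed in the statement.

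The step I expect to need the most care is the bookkeeping around the recursion rules rather than any computation: one has to treat each $\langle X_i|E\rangle$ as a genuine fresh constant of the (enriched) signature rather than as an open term, so that the source of the recursion rules is indeed a constant and the format conditions are satisfied. It is precisely the conservativity result just proved (that $APTC_G$ with guarded recursion is a conservative extension of $APTC_G$, whose rules for guarded recursion contain only a fresh constant in their source) that justifies this enrichment being harmless, so I would lean on it explicitly. Everything else is routine.
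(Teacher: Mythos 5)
Your proposal is correct and follows essentially the same route as the paper: the paper also observes that the bisimulations are equivalences and that the TSS (in particular the guarded-recursion rules of Table~\ref{TRForGRG}) is in panth format, and then invokes Theorem~\ref{tcbacpn}. Your additional care about positivity after reduction (trivial stratification, no negative premises) and about treating each $\langle X_i|E\rangle$ as a fresh constant only makes explicit what the paper's shorter argument leaves implicit.
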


\begin{proof}
It is easy to see that pomset bisimulation $\sim_{p}$, step bisimulation $\sim_{s}$ and hp-bisimulation $\sim_{hp}$ are all equivalent relations on $APTC_G$ with guarded recursion terms. Since the TSS in Table \ref{TRForGRG} is in Panth format, by Theorem \ref{tcbacpn}, pomset bisimulation $\sim_{p}$, step bisimulation $\sim_{s}$ and hp-bisimulation $\sim_{hp}$ are all congruence with respect to $APTC_G$ with guarded recursion.
\end{proof}

\begin{theorem}[Elimination theorem of $APTC_G$ with linear recursion]\label{ETRecursionG}
Each process term in $APTC_G$ with linear recursion is equal to a process term $\langle X_1|E\rangle$ with $E$ a linear recursive specification.
\end{theorem}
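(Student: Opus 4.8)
The plan is to follow the template of Theorem~\ref{ETRecursion} (and Theorems~\ref{ETProjection}, \ref{ETTau}), adapting it to the presence of atomic guards and of the parallel operator $\parallel$ in place of $\leftmerge$. The argument proceeds by structural induction on the size of a process term $t_1$ of $APTC_G$ with linear recursion: I would show that $t_1$, together with the finitely many sub-terms reachable from it, satisfies a finite system of linear equations, exhibit that system as a linear recursive specification $E$, and then invoke $RSP$.

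First I would bring every reachable process term into a ``linear head normal form''. Using the elimination theorem of $APTC_G$ (Theorem~\ref{ETAPTCG}), the axioms of $BATC_G$ and $APTC_G$ (Tables~\ref{AxiomsForBATCG} and \ref{AxiomsForAPTCG}), and $RDP$ to unfold the recursion constants $\langle X|E'\rangle$, each such term can be rewritten to the shape
$$t_i=(a_{i11}\parallel\cdots\parallel a_{i1i_1})t_{i1}+\cdots+(a_{ik_i1}\parallel\cdots\parallel a_{ik_ii_k})t_{ik_i}+(b_{i11}\parallel\cdots\parallel b_{i1i_1})+\cdots+(b_{il_i1}\parallel\cdots\parallel b_{il_ii_l})$$
for $i\in\{1,\cdots,n\}$, where the leading factors are parallel products of atomic events (with atomic guards $\phi\in\mathbb{B}$ prefixed to continuations where necessary), and the $t_{ij}$ range over the same finite pool of reachable sub-terms. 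The facts that make this work are that $\between$, $\mid$, $\partial_H$, $\Theta$, $\triangleleft$ and guard-multiplication can all be pushed inward or eliminated by the axioms, and that linearity of the recursion keeps the set of reachable states finite.

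Then I would let $E$ be the linear recursive specification with recursion variables $X_1,\cdots,X_n$ whose equations are obtained from the displayed forms by replacing each occurrence of $t_{ij}$ with the corresponding $X_{ij}$. By construction, the assignment $X_i\mapsto t_i$ is a solution of $E$, so $RSP$ gives $t_1=\langle X_1|E\rangle$; and, also by construction, $E$ contains no occurrence of $\between$, $\mid$, $\Theta$, $\triangleleft$, $\partial_H$, $\Pi_n$, and no nested $\parallel$ beyond the leading parallel products permitted by the (guarded-$APTC_G$ analogue of the) linear format of Definition~\ref{LRS}.

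The main obstacle will be the bookkeeping around guards: one has to check that atomic guards appearing as summands $\phi$ or as prefixes $\phi\cdot t$ can always be absorbed into the linear format — using $G8$--$G9$ to discard trivially-true or trivially-false guard products and $G12$--$G24$ to normalize guards under $\parallel$ and $\mid$ — so that every equation of $E$ is genuinely linear. A secondary point to treat with care is well-foundedness of the induction: unfolding a recursion constant via $RDP$ may replace $\langle X_i|E'\rangle$ by a syntactically larger right-hand side, so, exactly as in Theorem~\ref{ETRecursion}, one measures progress not by literal term size but at the level of the reachable sub-terms, using guardedness and linearity to bound the number of distinct states.
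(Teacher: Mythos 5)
Your proposal follows essentially the same route as the paper's proof: bring $t_1$ and its finitely many reachable sub-terms into the displayed linear head normal form, read off a linear recursive specification $E$ by replacing the sub-terms with recursion variables, observe that the $t_i$ form a solution, and conclude $t_1=\langle X_1|E\rangle$ by $RSP$. Your extra care about guard normalization and the well-foundedness measure only fills in details the paper leaves implicit, so the argument matches.
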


\begin{proof}
By applying structural induction with respect to term size, each process term $t_1$ in $APTC_G$ with linear recursion generates a process can be expressed in the form of equations

$$t_i=(a_{i11}\parallel\cdots\parallel a_{i1i_1})t_{i1}+\cdots+(a_{ik_i1}\parallel\cdots\parallel a_{ik_ii_k})t_{ik_i}+(b_{i11}\parallel\cdots\parallel b_{i1i_1})+\cdots+(b_{il_i1}\parallel\cdots\parallel b_{il_ii_l})$$

for $i\in\{1,\cdots,n\}$. Let the linear recursive specification $E$ consist of the recursive equations

$$X_i=(a_{i11}\parallel\cdots\parallel a_{i1i_1})X_{i1}+\cdots+(a_{ik_i1}\parallel\cdots\parallel a_{ik_ii_k})X_{ik_i}+(b_{i11}\parallel\cdots\parallel b_{i1i_1})+\cdots+(b_{il_i1}\parallel\cdots\parallel b_{il_ii_l})$$

for $i\in\{1,\cdots,n\}$. Replacing $X_i$ by $t_i$ for $i\in\{1,\cdots,n\}$ is a solution for $E$, $RSP$ yields $t_1=\langle X_1|E\rangle$.
\end{proof}

\begin{theorem}[Soundness of $APTC_G$ with guarded recursion]\label{SAPTC_GRG}
Let $x$ and $y$ be $APTC_G$ with guarded recursion terms. If $APTC_G\textrm{ with guarded recursion}\vdash x=y$, then

(1) $x\sim_{s} y$.

(2) $x\sim_{p} y$.

(3) $x\sim_{hp} y$.
\end{theorem}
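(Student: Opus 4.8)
The plan is to follow the standard ``congruence plus axiom-by-axiom'' recipe for soundness. By the congruence theorem of $APTC_G$ with guarded recursion, each of $\sim_p$, $\sim_s$, $\sim_{hp}$ is a congruence with respect to the whole signature; since each is also an equivalence relation, it suffices, by induction on the length of an equational derivation in $APTC_G$ with guarded recursion, to check that every axiom of the theory relates bisimilar closed instances. The axioms of $BATC_G$ (Table \ref{AxiomsForBATCG}) and of $APTC_G$ (Table \ref{AxiomsForAPTCG}) were already shown sound modulo $\sim_p$, $\sim_s$, $\sim_{hp}$ in Theorems \ref{SBATCG} and \ref{SAPTCG}, so nothing new is needed for them. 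Hence the real content is the two recursion principles $RDP$ and $RSP$, and I would treat them in turn.

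For $RDP$, soundness is immediate from the transition rules in Table \ref{TRForGRG}: those rules say exactly that $\langle\langle X_i|E\rangle,s\rangle$ and $\langle t_i(\langle X_1|E\rangle,\cdots,\langle X_n|E\rangle),s\rangle$ have the same outgoing pomset transitions and the same termination behaviour. So the relation consisting of all pairs $(\langle X_i|E\rangle, t_i(\langle X_1|E\rangle,\cdots,\langle X_n|E\rangle))$ together with the identity (equipped, in the hp-case, with the trivial order-isomorphism) is a bisimulation of each of the three kinds, and $\langle X_i|E\rangle \sim t_i(\langle X_1|E\rangle,\cdots,\langle X_n|E\rangle)$ follows.

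For $RSP$, which is the main obstacle, I must establish that a guarded recursive specification has at most one solution modulo each of the three equivalences: if closed $APTC_G$ terms $p_1,\cdots,p_n$ satisfy $p_i\sim t_i(p_1,\cdots,p_n)$ for $i\in\{1,\cdots,n\}$, then $p_i\sim\langle X_i|E\rangle$. First I would use the axioms of $APTC_G$ to rewrite $E$ into an equivalent guarded specification whose right-hand sides are sums of terms $(a_1\parallel\cdots\parallel a_k)\cdot t_{ij}(X_1,\cdots,X_n)$ and $(b_1\parallel\cdots\parallel b_l)$, so that every occurrence of a recursion variable is guarded by a non-empty parallel product of atomic actions; guardedness (the absence of infinite $1$-transition sequences) guarantees this normalisation terminates. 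Next, for closed terms $q_1,\cdots,q_n$ that also solve $E$, I would consider the relation $R$ consisting of all pairs $(\rho_p(w),\rho_q(w))$ where $w$ ranges over open $APTC_G$ terms with free variables among $X_1,\cdots,X_n$, and $\rho_p$, $\rho_q$ substitute $p_i$, $q_i$ for $X_i$; then show that $\sim\mathbin{\circ} R\mathbin{\circ}\sim$ is a bisimulation of the relevant type. The transfer property is verified by structural induction on $w$, the only interesting case being $w\equiv X_i$, where one replaces $X_i$ by the normalised $t_i$ and uses guardedness to expose a leading parallel product of atomic events whose residuals are again of the form $\rho_p(w'),\rho_q(w')$. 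In the hp-case one additionally tracks the order-isomorphism between executed events; since the leading steps on both sides consist of identical parallel products of atomic events, the isomorphism extends coherently and the posetal version of the argument goes through. Instantiating $q_i:=\langle X_i|E\rangle$ (a solution by $RDP$) then yields $p_i\sim\langle X_i|E\rangle$, i.e. the soundness of $RSP$, completing all three cases.

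The hard part will be precisely this $RSP$ step: justifying the normalisation of guarded specifications to manifestly-guarded form using only the $APTC_G$ axioms, and then verifying that the bisimulation-up-to-$\sim$ genuinely transfers — in particular the event-order bookkeeping needed for the hp-case, where the parallel products contributed by $\leftmerge$/$\parallel$ in the right-hand sides must be matched event-for-event on both sides.
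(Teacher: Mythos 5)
Your proposal follows essentially the same route as the paper: since $\sim_p$, $\sim_s$, $\sim_{hp}$ are equivalences and congruences with respect to $APTC_G$ with guarded recursion, soundness reduces to checking the axioms, and the only new ones beyond $BATC_G$/$APTC_G$ are $RDP$ and $RSP$ in Table \ref{RDPRSP}. The paper leaves those checks to the reader, whereas you additionally sketch them (RDP from the transition rules of Table \ref{TRForGRG}, RSP via uniqueness of solutions of guarded specifications using a bisimulation-up-to-$\sim$ argument), which is a reasonable filling-in of exactly the step the paper omits.
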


\begin{proof}
(1) Since step bisimulation $\sim_s$ is both an equivalent and a congruent relation with respect to $APTC_G$ with guarded recursion, we only need to check if each axiom in Table \ref{RDPRSP} is sound modulo step bisimulation equivalence. We leave them as exercises to the readers.

(2) Since pomset bisimulation $\sim_{p}$ is both an equivalent and a congruent relation with respect to the guarded recursion, we only need to check if each axiom in Table \ref{RDPRSP} is sound modulo pomset bisimulation equivalence. We leave them as exercises to the readers.

(3) Since hp-bisimulation $\sim_{hp}$ is both an equivalent and a congruent relation with respect to guarded recursion, we only need to check if each axiom in Table \ref{RDPRSP} is sound modulo hp-bisimulation equivalence. We leave them as exercises to the readers.
\end{proof}

\begin{theorem}[Completeness of $APTC_G$ with linear recursion]\label{CAPTC_GRG}
Let $p$ and $q$ be closed $APTC_G$ with linear recursion terms, then,

(1) if $p\sim_{s} q$ then $p=q$.

(2) if $p\sim_{p} q$ then $p=q$.

(3) if $p\sim_{hp} q$ then $p=q$.
\end{theorem}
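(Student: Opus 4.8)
The plan is to mimic, \emph{mutatis mutandis}, the proof of Theorem \ref{CAPTCR} (completeness of $APTC$ with linear recursion), adapting it to the guarded setting with data states. First I would invoke the elimination theorem of $APTC_G$ with linear recursion (Theorem \ref{ETRecursionG}): every closed $APTC_G$ term with linear recursion is provably equal to some $\langle X_1|E\rangle$ with $E$ a linear recursive specification. Hence it suffices to show that for linear recursive specifications $E_1$ and $E_2$, if $\langle X_1|E_1\rangle\sim_s\langle Y_1|E_2\rangle$ then $\langle X_1|E_1\rangle=\langle Y_1|E_2\rangle$, and likewise with $\sim_s$ replaced by $\sim_p$ and by $\sim_{hp}$.

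For the core step I would build a product specification. Write $E_1$ as $X=t_X$ for $X\in\mathcal{X}$ and $E_2$ as $Y=t_Y$ for $Y\in\mathcal{Y}$, each right-hand side being a sum of summands of the forms $(a_1\parallel\cdots\parallel a_m)X'$, $b_1\parallel\cdots\parallel b_n$, and (the new feature over Theorem \ref{CAPTCR}) guard-prefixed summands $\phi\cdot X'$ and $\phi$ with $\phi\in\mathbb{B}$. Then define a linear recursive specification $E$ over fresh variables $Z_{XY}$, one for each pair with $\langle X|E_1\rangle\sim_s\langle Y|E_2\rangle$, whose right-hand side $t_{XY}$ contains: a summand $(a_1\parallel\cdots\parallel a_m)Z_{X'Y'}$ iff $t_X$ contains $(a_1\parallel\cdots\parallel a_m)X'$, $t_Y$ contains $(a_1\parallel\cdots\parallel a_m)Y'$ and $\langle X'|E_1\rangle\sim_s\langle Y'|E_2\rangle$; a summand $b_1\parallel\cdots\parallel b_n$ iff both $t_X$ and $t_Y$ contain it; and the analogous matched guard summands $\phi Z_{X'Y'}$ and $\phi$, with the proviso that the defining condition matches the states $s$ at which $test(\phi,s)$ holds. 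I would check $E$ is linear and guarded, then let $\sigma$ map $X\mapsto\langle X|E_1\rangle$ and $\psi$ map $Z_{XY}\mapsto\langle X|E_1\rangle$; since $\sigma((a_1\parallel\cdots\parallel a_m)X')\equiv\psi((a_1\parallel\cdots\parallel a_m)Z_{X'Y'})$ (and similarly for guard summands), $RDP$ gives $\langle X|E_1\rangle=\sigma(t_X)=\psi(t_{XY})$, so by $RSP$ we get $\langle X|E_1\rangle=\langle Z_{XY}|E\rangle$, particularly $\langle X_1|E_1\rangle=\langle Z_{X_1Y_1}|E\rangle$. Symmetrically $\langle Y_1|E_2\rangle=\langle Z_{X_1Y_1}|E\rangle$, whence $\langle X_1|E_1\rangle=\langle Y_1|E_2\rangle$, as desired.

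The main obstacle, and the only genuinely new ingredient relative to Theorem \ref{CAPTCR}, is the bookkeeping for guards: the bisimulations here relate pairs $\langle C,s\rangle$, so I must argue that whenever $\langle X|E_1\rangle$ can pass a guard summand $\phi$ at state $s$ (i.e.\ $test(\phi,s)$ holds) so can $\langle Y|E_2\rangle$, using the transition rules of Table \ref{SETRForBATCG} together with axioms $G8$ and $G9$ (and $G24$ for parallel guards) to normalize guard summands and axiom $A3$ to delete duplicates, so that summand-matching between $t_X$ and $t_Y$ is well-defined and independent of the reachable state; here the fact that guarded linear specifications have a unique solution is what makes the product construction coherent. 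Cases (2) and (3) for $\sim_p$ and $\sim_{hp}$ require no new idea: for $\sim_p$ one replaces single-action transitions by pomset transitions $\langle n,s\rangle\xrightarrow{\{e_1,\ldots,e_l\}}\langle t_2,s'\rangle$ throughout the matching argument, and for $\sim_{hp}$ one additionally carries the order-isomorphism $f$ and uses its extension $f[e_1\mapsto e_2]$, exactly as in the proof of Theorem \ref{CAPTCR}; I would simply state that these go through similarly and omit the repetition.
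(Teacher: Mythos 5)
Your proposal matches the paper's proof essentially step for step: eliminate to $\langle X_1|E\rangle$ via Theorem \ref{ETRecursionG}, build the product linear specification $E$ with variables $Z_{XY}$ whose summands are exactly the matched summands of $t_X$ and $t_Y$, and conclude via $RDP$ and $RSP$ that $\langle X_1|E_1\rangle=\langle Z_{X_1Y_1}|E\rangle=\langle Y_1|E_2\rangle$, with the $\sim_p$ and $\sim_{hp}$ cases handled analogously. The only divergence is that you explicitly carry guard summands and the state-dependence of $test(\phi,s)$ through the summand matching, whereas the paper's linear recursive specifications contain only event summands and it simply defers the empty-event/guard issue (``without loss of generalization, we do not consider empty event $1$''), so your extra bookkeeping is harmless and, if anything, more careful than the published argument.
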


\begin{proof}
Firstly, by the elimination theorem of $APTC_G$ with guarded recursion (see Theorem \ref{ETRecursionG}), we know that each process term in $APTC_G$ with linear recursion is equal to a process term $\langle X_1|E\rangle$ with $E$ a linear recursive specification. And for the simplicity, without loss of generalization, we do not consider empty event $1$, just because recursion with $1$ are similar to that with silent event $\tau$, please refer to the proof of Theorem \ref{CAPTCTAU} for details.

It remains to prove the following cases.

(1) If $\langle X_1|E_1\rangle \sim_s \langle Y_1|E_2\rangle$ for linear recursive specification $E_1$ and $E_2$, then $\langle X_1|E_1\rangle = \langle Y_1|E_2\rangle$.

Let $E_1$ consist of recursive equations $X=t_X$ for $X\in \mathcal{X}$ and $E_2$
consists of recursion equations $Y=t_Y$ for $Y\in\mathcal{Y}$. Let the linear recursive specification $E$ consist of recursion equations $Z_{XY}=t_{XY}$, and $\langle X|E_1\rangle\sim_s\langle Y|E_2\rangle$, and $t_{XY}$ consists of the following summands:

\begin{enumerate}
  \item $t_{XY}$ contains a summand $(a_1\parallel\cdots\parallel a_m)Z_{X'Y'}$ iff $t_X$ contains the summand $(a_1\parallel\cdots\parallel a_m)X'$ and $t_Y$ contains the summand $(a_1\parallel\cdots\parallel a_m)Y'$ such that $\langle X'|E_1\rangle\sim_s\langle Y'|E_2\rangle$;
  \item $t_{XY}$ contains a summand $b_1\parallel\cdots\parallel b_n$ iff $t_X$ contains the summand $b_1\parallel\cdots\parallel b_n$ and $t_Y$ contains the summand $b_1\parallel\cdots\parallel b_n$.
\end{enumerate}

Let $\sigma$ map recursion variable $X$ in $E_1$ to $\langle X|E_1\rangle$, and let $\pi$ map recursion variable $Z_{XY}$ in $E$ to $\langle X|E_1\rangle$. So, $\sigma((a_1\parallel\cdots\parallel a_m)X')\equiv(a_1\parallel\cdots\parallel a_m)\langle X'|E_1\rangle\equiv\pi((a_1\parallel\cdots\parallel a_m)Z_{X'Y'})$, so by $RDP$, we get $\langle X|E_1\rangle=\sigma(t_X)=\pi(t_{XY})$. Then by $RSP$, $\langle X|E_1\rangle=\langle Z_{XY}|E\rangle$, particularly, $\langle X_1|E_1\rangle=\langle Z_{X_1Y_1}|E\rangle$. Similarly, we can obtain $\langle Y_1|E_2\rangle=\langle Z_{X_1Y_1}|E\rangle$. Finally, $\langle X_1|E_1\rangle=\langle Z_{X_1Y_1}|E\rangle=\langle Y_1|E_2\rangle$, as desired.

(2) If $\langle X_1|E_1\rangle \sim_p \langle Y_1|E_2\rangle$ for linear recursive specification $E_1$ and $E_2$, then $\langle X_1|E_1\rangle = \langle Y_1|E_2\rangle$.

It can be proven similarly to (1), we omit it.

(3) If $\langle X_1|E_1\rangle \sim_{hp} \langle Y_1|E_2\rangle$ for linear recursive specification $E_1$ and $E_2$, then $\langle X_1|E_1\rangle = \langle Y_1|E_2\rangle$.

It can be proven similarly to (1), we omit it.
\end{proof}

\subsubsection{Abstraction}{\label{absg}}

To abstract away from the internal implementations of a program, and verify that the program exhibits the desired external behaviors, the silent step $\tau$ and abstraction operator $\tau_I$ are introduced, where $I\subseteq \mathbb{E}\cup \mathbb{B}_{at}$ denotes the internal events or guards. The silent step $\tau$ represents the internal events, and $\tau_{\phi}$ for internal guards, when we consider the external behaviors of a process, $\tau$ steps can be removed, that is, $\tau$ steps must keep silent. The transition rule of $\tau$ is shown in Table \ref{TRForTauG}. In the following, let the atomic event $e$ range over $\mathbb{E}\cup\{1\}\cup\{0\}\cup\{\tau\}$, and $\phi$ range over $\mathbb{B}\cup \{\tau\}$, and let the communication function $\gamma:\mathbb{E}\cup\{\tau\}\times \mathbb{E}\cup\{\tau\}\rightarrow \mathbb{E}\cup\{0\}$, with each communication involved $\tau$ resulting in $0$. We use $\tau(s)$ to denote $effect(\tau,s)$, for the fact that $\tau$ only change the state of internal data environment, that is, for the external data environments, $s=\tau(s)$.

\begin{center}
    \begin{table}
        $$\frac{}{\langle\tau_{\phi},s\rangle\rightarrow\langle\surd,s\rangle}\textrm{ if }test(\tau_{\phi},s)$$
        $$\frac{}{\langle\tau,s\rangle\xrightarrow{\tau}\langle\surd,\tau(s)\rangle}$$
        \caption{Transition rule of the silent step}
        \label{TRForTauG}
    \end{table}
\end{center}

\begin{definition}[Guarded linear recursive specification]\label{GLRSG}
A linear recursive specification $E$ is guarded if there does not exist an infinite sequence of $\tau$-transitions $\langle X|E\rangle\xrightarrow{\tau}\langle X'|E\rangle\xrightarrow{\tau}\langle X''|E\rangle\xrightarrow{\tau}\cdots$, and there does not exist an infinite sequence of $1$-transitions $\langle X|E\rangle\rightarrow\langle X'|E\rangle\rightarrow\langle X''|E\rangle\rightarrow\cdots$.
\end{definition}

\begin{theorem}[Conservitivity of $APTC_G$ with silent step and guarded linear recursion]
$APTC_G$ with silent step and guarded linear recursion is a conservative extension of $APTC_G$ with linear recursion.
\end{theorem}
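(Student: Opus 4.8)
The plan is to apply the Conservative Extension Theorem of Section~\ref{os}, with $T_0$ the TSS of $APTC_G$ with (guarded) linear recursion — i.e.\ the rules of Tables~\ref{SETRForBATCG}, \ref{TRForAPTCG} and \ref{TRForGRG} — and $T_1$ the two rules for the silent step $\tau$ and the internal guard $\tau_\phi$ in Table~\ref{TRForTauG}. It then suffices to verify the hypotheses of that theorem: that $T_0$ and $T_0\oplus T_1$ are positive after reduction, that $T_0$ is source-dependent, and that every rule of $T_1$ either has a fresh source or carries a premise of the prescribed ``old'' form with a fresh label.

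First I would observe that all transition rules of $APTC_G$ with guarded recursion, together with the new rules in Table~\ref{TRForTauG}, are \emph{positive}: no negative premises occur, since the operators of $APTC_G$ in this chapter do not include $\triangleleft$ or $\Theta$. Hence ``positive after reduction'' holds trivially — the empty reduction suffices, or equivalently the constant weight function is a stratification and the Positivity-after-reduction theorem applies. Next, source-dependency of $T_0$ is inherited from the earlier development (it was already invoked in the congruence and generalization results for $BATC_G$ and $APTC_G$ and in the conservativity of $APTC_G$ with guarded recursion); concretely one checks rule by rule that in every rule of Tables~\ref{SETRForBATCG}, \ref{TRForAPTCG} and \ref{TRForGRG} each occurring variable is source-dependent.

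For the new rules the key point is that the two rules of Table~\ref{TRForTauG} have sources $\tau_\phi$ and $\tau$ respectively, and $\tau$ is a fresh constant — it is not in $\mathbb{E}\cup\{0,1\}$, and $\tau$ viewed as a guard is not in $\mathbb{B}_{at}$ — while $\tau_\phi$ is built only from this fresh guard; so these sources are fresh in the sense of the Freshness definition, which is exactly the first alternative of the theorem's condition~(2). (These rules are moreover closed, hence vacuously source-dependent.) One also notes that a closed $\Sigma_0$-term can never perform a $\tau$-labelled step, since firing a $\tau$-transition requires $\tau$ or $\tau_\phi$ in the source, so the fresh label $\tau$ adds no transitions between old terms. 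The Conservative Extension Theorem then gives that $T_0\oplus T_1$ — that is, $APTC_G$ with silent step and guarded linear recursion — is a conservative extension of $T_0$, i.e.\ of $APTC_G$ with linear recursion.

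The argument is essentially bookkeeping, so there is no deep obstacle; the only steps needing a little care are the verification that the recursion rules of Table~\ref{TRForGRG} are source-dependent (the right-hand sides $\langle X_i|E\rangle$ and the premise shapes must be checked against the Source-dependency definition) and the explicit confirmation that $\tau$ and $\tau_\phi$ are genuinely fresh with respect to the signature of $APTC_G$ with linear recursion, so that the clause ``the source of $\rho$ is fresh'' legitimately applies.
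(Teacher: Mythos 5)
Your proposal is correct and follows essentially the same route as the paper: both invoke the Conservative Extension Theorem, using source-dependency of the transition rules of $APTC_G$ with linear recursion and the fact that the new rules of Table \ref{TRForTauG} have only the fresh constants ($\tau$, $\tau_{\phi}$) in their sources. You simply spell out the remaining hypotheses (positivity after reduction, freshness, closedness of the new rules) that the paper leaves implicit.
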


\begin{proof}
Since the transition rules of $APTC_G$ with linear recursion are source-dependent, and the transition rules for silent step in Table \ref{TRForTauG} contain only a fresh constant $\tau$ in their source, so the transition rules of $APTC_G$ with silent step and guarded linear recursion is a conservative extension of those of $APTC_G$ with linear recursion.
\end{proof}

\begin{theorem}[Congruence theorem of $APTC_G$ with silent step and guarded linear recursion]
Rooted branching truly concurrent bisimulation equivalences $\approx_{rbp}$, $\approx_{rbs}$ and $\approx_{rbhp}$ are all congruences with respect to $APTC_G$ with silent step and guarded linear recursion.
\end{theorem}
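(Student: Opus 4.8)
The plan is to follow exactly the template used for all the preceding congruence results in this chapter (and the analogous ones in Chapter~\ref{pe}): reduce the claim to an application of Theorem~\ref{rbtcbacpn}, which states that any TSS that is positive after reduction and in RBB cool format induces rooted branching truly concurrent bisimulation equivalences $\approx_{rbp}$, $\approx_{rbs}$, $\approx_{rbhp}$ (and $\approx_{rbhhp}$) that are congruences. Thus the whole burden is to verify that the TSS obtained by collecting the transition rules of $APTC_G$ (Table~\ref{TRForAPTCG}, together with the $BATC_G$ rules of Table~\ref{SETRForBATCG}, the encapsulation rules inside Table~\ref{TRForAPTCG}), the silent-step rule (Table~\ref{TRForTauG}), and the guarded-recursion rules (Table~\ref{TRForGRG}) satisfies these two structural conditions.

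First I would argue positivity after reduction. By Theorem~[Positivity after reduction] it suffices to exhibit a stratification; since none of the rules in the listed tables contains negative premises (the $\triangleleft$/$\Theta$ rules with negative premises belong to plain $APTC$, not to $APTC_G$, whose $\Theta$, $\triangleleft$ operators are absent here), the constant weight function $\phi\equiv 0$ is trivially a stratification, so the TSS is positive, hence positive after reduction. Next I would check the RBB cool format clause by clause. Clause~(1): every rule is in panth format — each positive premise has a single variable on the right, each source has at most one function symbol, and no variable is repeated across the right-hand sides of premises and the source — and none of the rules contains lookahead, because no variable occurs simultaneously on the left of one premise and the right of another. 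Clause~(2): for each function symbol $f\in\{\parallel,\mid,\partial_H,\tau_I,\tau_\phi\}$ (and the recursion constant) that occurs at the right-hand side of some conclusion, I would inspect the non-patience rules with that source and confirm that each argument $x_i$ appears in at most one premise, of the form $x_iP$ or $x_i\xrightarrow{\{a_1,\dots,a_n\}}y$ with all $a_j\nequiv\tau$, and that whenever such a premise is present the corresponding patience rule (Definition~[Patience rule]) for the $i$-th argument of $f$ is included. For $\parallel$ and $\mid$ the component labels in the premises are genuine events, not $\tau$, so the patience rules $\frac{x\xrightarrow{\tau}x'}{x\parallel y\xrightarrow{\tau}x'\parallel y}$ etc.\ are exactly what is needed; for $\partial_H$, $\tau_I$ and recursion the single-argument premises carry non-$\tau$ labels in the non-patience rules and the one-argument patience rule is the standard congruence-closure rule for $\tau$.

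With both conditions established, Theorem~\ref{rbtcbacpn} applies directly and yields that $\approx_{rbp}$, $\approx_{rbs}$ and $\approx_{rbhp}$ are congruences with respect to $APTC_G$ with silent step and guarded linear recursion, which is the assertion. I expect the only genuinely delicate point to be the patience-rule bookkeeping in clause~(2): one must be careful that in the binary communication-merge and parallel rules the two argument variables each occur in precisely one premise with a non-$\tau$ label, and that the generalized (pomset) versions of these rules — obtained, as the text notes, by replacing a single event by a set $X\subseteq\mathbb{E}$ — still carry only non-$\tau$ labels, so that the RBB cool side condition ``$a_1\nequiv\tau,\dots,a_n\nequiv\tau$'' is met and the matching patience rules suffice; everything else is a routine syntactic inspection that I would summarize rather than carry out in full.
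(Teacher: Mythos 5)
Your proposal takes essentially the same route as the paper: the paper's proof simply observes that the relations are equivalences and that the TSS (it cites the silent-step rules of Table~\ref{TRForTauG}) is positive after reduction and in RBB cool format, then invokes Theorem~\ref{rbtcbacpn} to conclude congruence. Your version is the same argument carried out more explicitly — in particular you correctly insist on checking the whole combined TSS (the $BATC_G$/$APTC_G$ rules, recursion rules and the $\tau$ rules) and on the patience-rule bookkeeping, which the paper asserts without spelling out.
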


\begin{proof}
It is easy to see that rooted branching pomset bisimulation $\approx_{rbp}$, rooted branching step bisimulation $\approx_{rbs}$ and rooted branching hp-bisimulation $\approx_{rbhp}$ are all equivalent relations on $APTC_G$ with silent step and guarded linear recursion terms. Since the TSS in Table \ref{TRForTauG} is in RBB cool format, by Theorem \ref{rbtcbacpn}, rooted branching pomset bisimulation $\approx_{rbp}$, rooted branching step bisimulation $\approx_{rbs}$ and rooted branching hp-bisimulation $\approx_{rbhp}$ are all congruence with respect to $APTC_G$ with silent step and guarded linear recursion.
\end{proof}

We design the axioms for the silent step $\tau$ in Table \ref{AxiomsForTauG}.

\begin{center}
\begin{table}
  \begin{tabular}{@{}ll@{}}
\hline No. &Axiom\\
  $B1$ & $e\cdot\tau=e$\\
  $B2$ & $e\cdot(\tau\cdot(x+y)+x)=e\cdot(x+y)$\\
  $B3$ & $x\parallel\tau=x$\\
  $G25$ & $\tau_{\phi}\cdot x=x$\\
  $G26$ & $x\cdot \tau_{\phi} = x$\\
  $G27$ & $x\parallel\tau_{\phi} = x$\\
\end{tabular}
\caption{Axioms of silent step}
\label{AxiomsForTauG}
\end{table}
\end{center}

\begin{theorem}[Elimination theorem of $APTC_G$ with silent step and guarded linear recursion]\label{ETTauG}
Each process term in $APTC_G$ with silent step and guarded linear recursion is equal to a process term $\langle X_1|E\rangle$ with $E$ a guarded linear recursive specification.
\end{theorem}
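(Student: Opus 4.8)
The plan is to follow the template of the earlier elimination theorems with recursion (Theorems~\ref{ETRecursion}, \ref{ETTau} and, most closely, \ref{ETRecursionG}): argue by structural induction on the size of a process term $t$ of $APTC_G$ with silent step and guarded linear recursion, show that $t$ together with the terms generated by its subterms satisfies a finite system of equations whose right-hand sides are of the guarded-linear shape
$$(a_{11}\parallel\cdots\parallel a_{1i_1})\cdot s_1+\cdots+(a_{k1}\parallel\cdots\parallel a_{ki_k})\cdot s_k+(b_{11}\parallel\cdots\parallel b_{1j_1})+\cdots+(b_{l1}\parallel\cdots\parallel b_{lj_l})$$
now possibly carrying extra $\tau$- and $\tau_{\phi}$-prefixed summands, read off a linear recursive specification $E$ from this system, and invoke $RSP$ to conclude $t=\langle X_1|E\rangle$.

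First I would handle the base cases: the constants $0$ and $1$, the atomic events $e\in\mathbb{E}$ (including $\tau$), the atomic guards $\phi\in\mathbb{B}$ and the guard-silent step $\tau_{\phi}$ each already have the required form (or acquire it after one application of an axiom such as $G25$ or $G26$), so each is the solution of a one-variable specification. For the inductive step I would first apply the elimination theorem of $APTC_G$ itself (Theorem~\ref{ETAPTCG}) to push all occurrences of $\between$, $\mid$ and $\partial_H$ inward and flatten the term modulo the $APTC_G$ axioms, leaving only $\cdot$, $+$, $\parallel$, $\tau$, $\tau_{\phi}$ and the recursion constants $\langle X|E'\rangle$ at the top level; the cases $t=t_1+t_2$, $t=t_1\cdot t_2$, $t=t_1\parallel t_2$ and $t=\langle X|E'\rangle$ then proceed exactly as in Theorem~\ref{ETRecursionG}, with the only change that a right-hand summand may be a $\tau$- or $\tau_{\phi}$-prefix, which is permitted by the extended notion of guarded linear recursive specification (Definition~\ref{GLRSG}) and simplified where possible using $B1$, $B3$ and $G25$--$G27$.

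The main obstacle --- which the parallel proofs in the paper pass over quickly --- is verifying that the specification $E$ built this way is genuinely \emph{guarded}, i.e.\ admits neither an infinite chain of $1$-transitions nor an infinite chain of $\tau$-transitions. Guardedness of the constituent recursion constants $\langle X|E'\rangle$ is given, and the $APTC_G$ axioms only rearrange a bounded amount of structure; the delicate point is that $B1$ ($e\cdot\tau=e$), $B3$ ($x\parallel\tau=x$) and $G25$--$G27$ can \emph{erase} silent prefixes, so absorbing a summand such as $\tau\cdot X'$ must not create a silent cycle through the freshly named variables. I would deal with this exactly as in the completeness proof for $APTC_{\tau}$ (following the normalization step in the proof of Theorem~\ref{CAPTCTAU}): first normalise each sub-specification so that no equation has a body of the form $\tau+\cdots+\tau$ or a body built solely from silent prefixes (deleting such variables and substituting), and only then assemble $E$; every cycle among the variables of $E$ is then forced to traverse a non-silent, non-$1$ prefix inherited from one of the already-guarded components. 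The remaining checks --- that the collection of $t_i$ solves $E$ and that $RSP$ applies --- are routine and mirror Theorem~\ref{ETRecursionG}.
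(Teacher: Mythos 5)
Your proposal follows essentially the same route as the paper's proof: structural induction with respect to term size to express the term (and the terms it generates) as a finite system of equations of the guarded-linear shape, reading off a linear recursive specification $E$ from that system, and invoking $RSP$ to conclude $t=\langle X_1|E\rangle$. The extra care you take with the base cases, with flattening via Theorem~\ref{ETAPTCG}, and especially with verifying that the assembled $E$ is genuinely guarded despite the silent-prefix axioms $B1$, $B3$, $G25$--$G27$, fills in details the paper's terse proof passes over, but it is an elaboration of the same argument rather than a different method.
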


\begin{proof}
By applying structural induction with respect to term size, each process term $t_1$ in $APTC_G$ with silent step and guarded linear recursion generates a process can be expressed in the form of equations

$$t_i=(a_{i11}\parallel\cdots\parallel a_{i1i_1})t_{i1}+\cdots+(a_{ik_i1}\parallel\cdots\parallel a_{ik_ii_k})t_{ik_i}+(b_{i11}\parallel\cdots\parallel b_{i1i_1})+\cdots+(b_{il_i1}\parallel\cdots\parallel b_{il_ii_l})$$

for $i\in\{1,\cdots,n\}$. Let the linear recursive specification $E$ consist of the recursive equations

$$X_i=(a_{i11}\parallel\cdots\parallel a_{i1i_1})X_{i1}+\cdots+(a_{ik_i1}\parallel\cdots\parallel a_{ik_ii_k})X_{ik_i}+(b_{i11}\parallel\cdots\parallel b_{i1i_1})+\cdots+(b_{il_i1}\parallel\cdots\parallel b_{il_ii_l})$$

for $i\in\{1,\cdots,n\}$. Replacing $X_i$ by $t_i$ for $i\in\{1,\cdots,n\}$ is a solution for $E$, $RSP$ yields $t_1=\langle X_1|E\rangle$.
\end{proof}

\begin{theorem}[Soundness of $APTC_G$ with silent step and guarded linear recursion]\label{SAPTC_GTAUG}
Let $x$ and $y$ be $APTC_G$ with silent step and guarded linear recursion terms. If $APTC_G$ with silent step and guarded linear recursion $\vdash x=y$, then

(1) $x\approx_{rbs} y$.

(2) $x\approx_{rbp} y$.

(3) $x\approx_{rbhp} y$.
\end{theorem}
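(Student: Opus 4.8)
The plan is to follow the same template as the earlier soundness results (Theorems~\ref{SAPTCTAU} and~\ref{SAPTC_GRG}), reducing the statement to a finite check on axioms. First I would invoke the congruence theorem just established: the rooted branching truly concurrent bisimulation equivalences $\approx_{rbp}$, $\approx_{rbs}$ and $\approx_{rbhp}$ are congruences with respect to $APTC_G$ with silent step and guarded linear recursion, since the associated TSS (Tables~\ref{TRForAPTCG}, \ref{TRForGRG}, \ref{TRForTauG}) is positive after reduction and in RBB cool format, so Theorem~\ref{rbtcbacpn} applies. Because each of these relations is moreover an equivalence, the derivation $APTC_G\textrm{ with silent step and guarded linear recursion}\vdash x=y$ decomposes into a chain of single-axiom rewrites inside contexts, and it therefore suffices to verify that every closed instance of every axiom in force is sound modulo $\approx_{rbp}$, $\approx_{rbs}$ and $\approx_{rbhp}$.

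Next I would partition the axioms according to how much work they need. The axioms of $BATC_G$ (Table~\ref{AxiomsForBATCG}) and of $APTC_G$ (Table~\ref{AxiomsForAPTCG}) have already been shown sound modulo the strong equivalences $\sim_p,\sim_s,\sim_{hp}$ in Theorems~\ref{SBATCG} and~\ref{SAPTCG}, and since a strong truly concurrent bisimulation is in particular a rooted branching one, these carry over verbatim with no $\tau$-specific argument. The recursion axioms $RDP$ and $RSP$ of Table~\ref{RDPRSP} were handled in Theorem~\ref{SAPTC_GRG}; the only new point is that recursive equations may now carry $\tau$- and $\tau_\phi$-summands, but guardedness in the sense of Definition~\ref{GLRSG} forbids infinite $\tau$- and $1$-transition sequences, so the uniqueness-of-solution argument underlying $RSP$ still goes through. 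That leaves the genuinely new silent-step axioms $B1$--$B3$ and $G25$--$G27$ of Table~\ref{AxiomsForTauG}: for each I would exhibit an explicit rooted branching bisimulation relation between the two sides, keeping track of the data-state component $\langle C,s\rangle$ and using the fact that $s=\tau(s)$ for the external environment, so that a $\tau$-step never changes the externally observable state, and that $\tau_\phi$ only acts when $test(\tau_\phi,s)$ holds.

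The main obstacle I expect is the soundness of $B2$, namely $e\cdot(\tau\cdot(x+y)+x)=e\cdot(x+y)$, modulo rooted branching pomset and, especially, rooted branching hp-bisimulation. Here one must show that after the rootedness-protecting initial step $e$, the process $\tau\cdot(x+y)+x$ and $x+y$ are branching (history-preserving) bisimilar in the sense of Definition~\ref{BHHPBG}: the $\tau$-transition out of $\tau\cdot(x+y)+x$ must be matched by ``staying put'' while the still-available summand $x$ is preserved, and in the hp-case this matching has to be compatible with the order isomorphism $f$ carried along (recording the $\tau$ as $\tau^{e}$ in $f$), and it must satisfy the back-and-forth clauses simultaneously with the data state. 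Checking these conditions carefully is the only part that is more than a routine calculation; $B1$, $B3$, $G25$--$G27$ merely delete a silent prefix, suffix, or parallel component, and the corresponding witnessing relation is essentially the identity modulo that deletion, so they are straightforward.
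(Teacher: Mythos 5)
Your proposal takes essentially the same route as the paper: both use the fact that $\approx_{rbs}$, $\approx_{rbp}$ and $\approx_{rbhp}$ are equivalences and congruences (via Theorem \ref{rbtcbacpn}) to reduce soundness to a per-axiom check, with the axioms of Table \ref{AxiomsForTauG} as the only genuinely new content. The paper simply leaves those axiom checks as an exercise, whereas you additionally sketch how the witnessing rooted branching bisimulations (notably for $B2$) would be constructed, which is a welcome but not divergent elaboration.
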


\begin{proof}
(1) Since rooted branching step bisimulation $\approx_{rbs}$ is both an equivalent and a congruent relation with respect to $APTC_G$ with silent step and guarded linear recursion, we only need to check if each axiom in Table \ref{AxiomsForTauG} is sound modulo rooted branching step bisimulation equivalence. We leave them as exercises to the readers.

(2) Since rooted branching pomset bisimulation $\approx_{rbp}$ is both an equivalent and a congruent relation with respect to $APTC_G$ with silent step and guarded linear recursion, we only need to check if each axiom in Table \ref{AxiomsForTauG} is sound modulo rooted branching pomset bisimulation $\approx_{rbp}$. We leave them as exercises to the readers.

(3) Since rooted branching hp-bisimulation $\approx_{rbhp}$ is both an equivalent and a congruent relation with respect to $APTC_G$ with silent step and guarded linear recursion, we only need to check if each axiom in Table \ref{AxiomsForTauG} is sound modulo rooted branching hp-bisimulation $\approx_{rbhp}$. We leave them as exercises to the readers.
\end{proof}

\begin{theorem}[Completeness of $APTC_G$ with silent step and guarded linear recursion]\label{CAPTC_GTAUG}
Let $p$ and $q$ be closed $APTC_G$ with silent step and guarded linear recursion terms, then,

(1) if $p\approx_{rbs} q$ then $p=q$.

(2) if $p\approx_{rbp} q$ then $p=q$.

(3) if $p\approx_{rbhp} q$ then $p=q$.
\end{theorem}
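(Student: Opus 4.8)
The plan is to follow the same three-layer strategy used for Theorem~\ref{CAPTCTAU} and Theorem~\ref{CAPTC_GRG}, now with guards and data states carried along. First I would invoke the elimination theorem of $APTC_G$ with silent step and guarded linear recursion (Theorem~\ref{ETTauG}): every closed term is provably equal to $\langle X_1|E\rangle$ for some guarded linear recursive specification $E$, so $p=\langle X_1|E_1\rangle$ and $q=\langle Y_1|E_2\rangle$. Combined with the soundness theorem (Theorem~\ref{SAPTC_GTAUG}), the hypothesis $p\approx_{rbs}q$ becomes $\langle X_1|E_1\rangle\approx_{rbs}\langle Y_1|E_2\rangle$, and, as in the proof of Theorem~\ref{CAPTC_GRG}, it is harmless to suppress the empty event $1$ since recursion with $1$ is handled exactly like recursion with $\tau$ and internal guards $\tau_\phi$ ($G25$--$G27$). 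Thus it suffices to prove: if $\langle X_1|E_1\rangle\approx_{rbs}\langle Y_1|E_2\rangle$ for guarded linear recursive specifications $E_1,E_2$, then $\langle X_1|E_1\rangle=\langle Y_1|E_2\rangle$; cases (2) and (3) for $\approx_{rbp}$ and $\approx_{rbhp}$ then go through by the same construction after replacing the step relation by the pomset, resp.\ hp, relation.

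Next I would preprocess $E_1$ and $E_2$ exactly as in the proof of Theorem~\ref{CAPTCTAU}: remove the degenerate equations $W=\tau+\cdots+\tau$ with $W\nequiv X_1,Y_1$ using $RDP$, $A3$ and $B1$, replacing each summand $aW$ by $a$, to obtain $E_1',E_2'$ with $\langle X|E_1\rangle=\langle X|E_1'\rangle$ and $\langle Y|E_2\rangle=\langle Y|E_2'\rangle$. Then I would build the ``product'' guarded linear recursive specification $E$ with recursion variables $Z_{XY}$ whose defining term $t_{XY}$ collects: (i) a summand $(a_1\parallel\cdots\parallel a_m)Z_{X'Y'}$ whenever $t_X$ has $(a_1\parallel\cdots\parallel a_m)X'$ and $t_Y$ has $(a_1\parallel\cdots\parallel a_m)Y'$ with $\langle X'|E_1\rangle\approx_{rbs}\langle Y'|E_2\rangle$; (ii) a summand $b_1\parallel\cdots\parallel b_n$ when both $t_X$ and $t_Y$ contain it; (iii) the two kinds of $\tau$-summands $\tau Z_{X'Y}$ and $\tau Z_{XY'}$ for non-root pairs, matching the shape of rooted branching bisimulation, with guard-prefixed summands $\phi Z_{X'Y'}$ (or $\phi$) treated in the same way as $\tau$-summands via $G25$--$G27$, keeping track of $test(\phi,s)$. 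Since $E_1,E_2$ are guarded, $E$ is guarded. I would then define the auxiliary terms $u_{XY}$ and $s_{XY}$ as in Theorem~\ref{CAPTCTAU} ($s_{XY}\triangleq\tau\langle X|E_1\rangle+u_{XY}$ in the critical case, $s_{XY}\triangleq\langle X|E_1\rangle$ otherwise), verify with $B1$, $B2$ and $B3$ that $(a_1\parallel\cdots\parallel a_m)s_{XY}=(a_1\parallel\cdots\parallel a_m)\langle X|E_1\rangle$, show that $\psi(Z_{XY})\mapsto s_{XY}$ makes $s_{XY}=\psi(t_{XY})$ (both for $XY\equiv X_1Y_1$ and $XY\nequiv X_1Y_1$), so that the $s_{XY}$ solve $E$; $RSP$ then gives $\langle X_1|E_1\rangle=\langle Z_{X_1Y_1}|E\rangle$, and symmetrically $\langle Y_1|E_2\rangle=\langle Z_{X_1Y_1}|E\rangle$, hence $\langle X_1|E_1\rangle=\langle Y_1|E_2\rangle$.

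The main obstacle I expect is making the $\tau$-summand bookkeeping interact correctly with the guards and the data environment $S$. In the pure $APTC$ case a $\tau$-summand only records a branching choice, whereas here a guard summand $\phi$ produces a conditional silent move $\langle C,s\rangle\to\langle\surd,s\rangle$ that fires only when $test(\phi,s)$ holds, and $G25$--$G27$ identify $\tau_\phi$ with a genuine silent step only after the precondition is discharged. So the delicate point is (a) checking that the product specification $E$ is still guarded once guard summands are folded into the clusters (no infinite $\tau$- or $1$-transition sequences, in the sense of Definition~\ref{GLRSG}), and (b) verifying that the $s_{XY}$ solve $E$ uniformly in the data state $s$ rather than merely for a fixed $s$. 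This is where I would spend most of the effort, reusing the clustering ideas from the $CFAR$ development; the purely algebraic manipulations with $A3$, $B1$--$B3$ and $G25$--$G27$ are routine and I would omit the details, as the paper does in the analogous places.
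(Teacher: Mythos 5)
Your proposal follows essentially the same route as the paper's proof: elimination via Theorem~\ref{ETTauG} to guarded linear recursive specifications, removal of the $W=\tau+\cdots+\tau$ equations with $RDP$, $A3$, $B1$, construction of the product specification $E$ with variables $Z_{XY}$ and the auxiliary terms $u_{XY}$ and $s_{XY}$, and an application of $RSP$ to identify both roots with $\langle Z_{X_1Y_1}|E\rangle$. The extra bookkeeping you flag for guard summands and uniformity in the data state is in fact more care than the paper itself takes---its proof transplants the argument of Theorem~\ref{CAPTCTAU} with $\leftmerge$ replaced by $\parallel$ and leaves the guards implicit---so your plan is, if anything, slightly more thorough than the published one.
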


\begin{proof}
Firstly, by the elimination theorem of $APTC_G$ with silent step and guarded linear recursion (see Theorem \ref{ETTauG}), we know that each process term in $APTC_G$ with silent step and guarded linear recursion is equal to a process term $\langle X_1|E\rangle$ with $E$ a guarded linear recursive specification.

It remains to prove the following cases.

(1) If $\langle X_1|E_1\rangle \approx_{rbs} \langle Y_1|E_2\rangle$ for guarded linear recursive specification $E_1$ and $E_2$, then $\langle X_1|E_1\rangle = \langle Y_1|E_2\rangle$.

Firstly, the recursive equation $W=\tau+\cdots+\tau$ with $W\nequiv X_1$ in $E_1$ and $E_2$, can be removed, and the corresponding summands $aW$ are replaced by $a$, to get $E_1'$ and $E_2'$, by use of the axioms $RDP$, $A3$ and $B1$, and $\langle X|E_1\rangle = \langle X|E_1'\rangle$, $\langle Y|E_2\rangle = \langle Y|E_2'\rangle$.

Let $E_1$ consists of recursive equations $X=t_X$ for $X\in \mathcal{X}$ and $E_2$
consists of recursion equations $Y=t_Y$ for $Y\in\mathcal{Y}$, and are not the form $\tau+\cdots+\tau$. Let the guarded linear recursive specification $E$ consists of recursion equations $Z_{XY}=t_{XY}$, and $\langle X|E_1\rangle\approx_{rbs}\langle Y|E_2\rangle$, and $t_{XY}$ consists of the following summands:

\begin{enumerate}
  \item $t_{XY}$ contains a summand $(a_1\parallel\cdots\parallel a_m)Z_{X'Y'}$ iff $t_X$ contains the summand $(a_1\parallel\cdots\parallel a_m)X'$ and $t_Y$ contains the summand $(a_1\parallel\cdots\parallel a_m)Y'$ such that $\langle X'|E_1\rangle\approx_{rbs}\langle Y'|E_2\rangle$;
  \item $t_{XY}$ contains a summand $b_1\parallel\cdots\parallel b_n$ iff $t_X$ contains the summand $b_1\parallel\cdots\parallel b_n$ and $t_Y$ contains the summand $b_1\parallel\cdots\parallel b_n$;
  \item $t_{XY}$ contains a summand $\tau Z_{X'Y}$ iff $XY\nequiv X_1Y_1$, $t_X$ contains the summand $\tau X'$, and $\langle X'|E_1\rangle\approx_{rbs}\langle Y|E_2\rangle$;
  \item $t_{XY}$ contains a summand $\tau Z_{XY'}$ iff $XY\nequiv X_1Y_1$, $t_Y$ contains the summand $\tau Y'$, and $\langle X|E_1\rangle\approx_{rbs}\langle Y'|E_2\rangle$.
\end{enumerate}

Since $E_1$ and $E_2$ are guarded, $E$ is guarded. Constructing the process term $u_{XY}$ consist of the following summands:

\begin{enumerate}
  \item $u_{XY}$ contains a summand $(a_1\parallel\cdots\parallel a_m)\langle X'|E_1\rangle$ iff $t_X$ contains the summand $(a_1\parallel\cdots\parallel a_m)X'$ and $t_Y$ contains the summand $(a_1\parallel\cdots\parallel a_m)Y'$ such that $\langle X'|E_1\rangle\approx_{rbs}\langle Y'|E_2\rangle$;
  \item $u_{XY}$ contains a summand $b_1\parallel\cdots\parallel b_n$ iff $t_X$ contains the summand $b_1\parallel\cdots\parallel b_n$ and $t_Y$ contains the summand $b_1\parallel\cdots\parallel b_n$;
  \item $u_{XY}$ contains a summand $\tau \langle X'|E_1\rangle$ iff $XY\nequiv X_1Y_1$, $t_X$ contains the summand $\tau X'$, and $\langle X'|E_1\rangle\approx_{rbs}\langle Y|E_2\rangle$.
\end{enumerate}

Let the process term $s_{XY}$ be defined as follows:

\begin{enumerate}
  \item $s_{XY}\triangleq\tau\langle X|E_1\rangle + u_{XY}$ iff $XY\nequiv X_1Y_1$, $t_Y$ contains the summand $\tau Y'$, and $\langle X|E_1\rangle\approx_{rbs}\langle Y'|E_2\rangle$;
  \item $s_{XY}\triangleq\langle X|E_1\rangle$, otherwise.
\end{enumerate}

So, $\langle X|E_1\rangle=\langle X|E_1\rangle+u_{XY}$, and $(a_1\parallel\cdots\parallel a_m)(\tau\langle X|E_1\rangle+u_{XY})=(a_1\parallel\cdots\parallel a_m)((\tau\langle X|E_1\rangle+u_{XY})+u_{XY})=(a_1\parallel\cdots\parallel a_m)(\langle X|E_1\rangle+u_{XY})=(a_1\parallel\cdots\parallel a_m)\langle X|E_1\rangle$, hence, $(a_1\parallel\cdots\parallel a_m)s_{XY}=(a_1\parallel\cdots\parallel a_m)\langle X|E_1\rangle$.

Let $\sigma$ map recursion variable $X$ in $E_1$ to $\langle X|E_1\rangle$, and let $\pi$ map recursion variable $Z_{XY}$ in $E$ to $s_{XY}$. It is sufficient to prove $s_{XY}=\pi(t_{XY})$ for recursion variables $Z_{XY}$ in $E$. Either $XY\equiv X_1Y_1$ or $XY\nequiv X_1Y_1$, we all can get $s_{XY}=\pi(t_{XY})$. So, $s_{XY}=\langle Z_{XY}|E\rangle$ for recursive variables $Z_{XY}$ in $E$ is a solution for $E$. Then by $RSP$, particularly, $\langle X_1|E_1\rangle=\langle Z_{X_1Y_1}|E\rangle$. Similarly, we can obtain $\langle Y_1|E_2\rangle=\langle Z_{X_1Y_1}|E\rangle$. Finally, $\langle X_1|E_1\rangle=\langle Z_{X_1Y_1}|E\rangle=\langle Y_1|E_2\rangle$, as desired.

(2) If $\langle X_1|E_1\rangle \approx_{rbp} \langle Y_1|E_2\rangle$ for guarded linear recursive specification $E_1$ and $E_2$, then $\langle X_1|E_1\rangle = \langle Y_1|E_2\rangle$.

It can be proven similarly to (1), we omit it.

(3) If $\langle X_1|E_1\rangle \approx_{rbhb} \langle Y_1|E_2\rangle$ for guarded linear recursive specification $E_1$ and $E_2$, then $\langle X_1|E_1\rangle = \langle Y_1|E_2\rangle$.

It can be proven similarly to (1), we omit it.
\end{proof}

The unary abstraction operator $\tau_I$ ($I\subseteq \mathbb{E}\cup \mathbb{B}_{at}$) renames all atomic events or atomic guards in $I$ into $\tau$. $APTC_G$ with silent step and abstraction operator is called $APTC_{G_{\tau}}$. The transition rules of operator $\tau_I$ are shown in Table \ref{TRForAbstractionG}.

\begin{center}
    \begin{table}
        $$\frac{\langle x,s\rangle\xrightarrow{e}\langle\underline{e},s'\rangle}{\langle\tau_I(x),s\rangle\xrightarrow{e}\langle\underline{e},s'\rangle}\quad e\notin I
        \quad\quad\frac{\langle x,s\rangle\xrightarrow{e}\langle x',s'\rangle}{\langle\tau_I(x),s\rangle\xrightarrow{e}\langle\tau_I(x'),s'\rangle}\quad e\notin I$$

        $$\frac{\langle x,s\rangle\xrightarrow{e}\langle\surd,s'\rangle}{\langle\tau_I(x),s\rangle\xrightarrow{\tau}\langle\surd,\tau(s)\rangle}\quad e\in I
        \quad\quad\frac{\langle x,s\rangle\xrightarrow{e}\langle x',s'\rangle}{\langle\tau_I(x),s\rangle\xrightarrow{\tau}\langle\tau_I(x'),\tau(s)\rangle}\quad e\in I$$
        \caption{Transition rule of the abstraction operator}
        \label{TRForAbstractionG}
    \end{table}
\end{center}

\begin{theorem}[Conservitivity of $APTC_{G_{\tau}}$ with guarded linear recursion]
$APTC_{G_{\tau}}$ with guarded linear recursion is a conservative extension of $APTC_G$ with silent step and guarded linear recursion.
\end{theorem}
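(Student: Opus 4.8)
The plan is to derive this from the Conservative Extension theorem of Section~\ref{os}, exactly as in the analogous conservativity statements already proved for $APTC_{\tau}$ over APTC with silent step and for $APTC_G$ with silent step over $APTC_G$ with linear recursion. Here I set $T_0$ to be the TSS of $APTC_G$ with silent step and guarded linear recursion (over signature $\Sigma_0$) and $T_1$ to be the set of transition rules for the abstraction operator $\tau_I$ given in Table~\ref{TRForAbstractionG} (introducing the fresh operator $\tau_I$, so $\Sigma_1\setminus\Sigma_0=\{\tau_I\}$). The Conservative Extension theorem then requires: (i) $T_0$ and $T_0\oplus T_1$ are positive after reduction; (ii) $T_0$ is source-dependent; and (iii) each $\rho\in T_1$ has a fresh source, or a premise $t\xrightarrow{a}t'$ (or $tP$) with $t\in\mathbb{T}(\Sigma_0)$, all variables of $t$ in both the source of $\rho$ and $t'$, and $a$ or $P$ fresh.

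First I would dispatch condition (i). The $APTC_G$ fragment used for Petri nets omits the operators $\Theta$ and $\triangleleft$ and hence contains no negative premises; the rules of Tables~\ref{SETRForBATCG}, \ref{TRForAPTCG}, \ref{TRForGRG}, \ref{TRForTauG} and \ref{TRForAbstractionG} are all positive. Consequently a trivial stratification (for instance the constant weight function, or the map counting occurrences of $\tau_I$ in the source) exists, so by the Positivity-after-reduction theorem both $T_0$ and $T_0\oplus T_1$ are positive after reduction. Next, condition (ii): source-dependency of $APTC_G$ was recorded when $APTC_G$ was set up, and it is preserved by the extensions with guarded recursion and with the silent step, since each of those extensions only adds rules whose sources are fresh constants ($\langle X_i|E\rangle$, $\tau$, $\tau_\phi$); therefore $T_0$ is source-dependent.

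The only point needing a dedicated observation is condition (iii). Every rule in Table~\ref{TRForAbstractionG} has source $\langle\tau_I(x),s\rangle$, and $\tau_I$ is a function symbol that does not occur in $APTC_G$ with silent step and guarded linear recursion; hence by the Freshness definition the source of every such rule is fresh, and (iii) holds in its first disjunct. Applying the Conservative Extension theorem then gives that the LTS generated by $T_0$ and by $T_0\oplus T_1$ agree on all transitions $t\xrightarrow{a}t'$ and $tP$ with $t\in\mathcal{T}(\Sigma_0)$, i.e.\ $APTC_{G_\tau}$ with guarded linear recursion is a conservative extension of $APTC_G$ with silent step and guarded linear recursion.

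I do not anticipate a real obstacle here: this is a routine instance of the format-based conservativity machinery invoked verbatim several times earlier in the paper. The only mild subtlety worth stating explicitly is the verification that the guarded $APTC_G$ fragment carries no negative premises, so that the ``positive after reduction'' hypothesis is immediate rather than requiring a genuine reduction/stratification argument; the rest is bookkeeping about which symbols are fresh.
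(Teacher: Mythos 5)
Your proposal is correct and follows essentially the same route as the paper: both invoke the Conservative Extension theorem by noting that the transition rules of $APTC_G$ with silent step and guarded linear recursion are source-dependent and that the rules for $\tau_I$ in Table \ref{TRForAbstractionG} have the fresh operator $\tau_I$ in their source. Your additional explicit check of the positivity-after-reduction hypothesis is a harmless (and slightly more careful) elaboration of the same argument the paper leaves implicit.
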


\begin{proof}
Since the transition rules of $APTC_G$ with silent step and guarded linear recursion are source-dependent, and the transition rules for abstraction operator in Table \ref{TRForAbstractionG} contain only a fresh operator $\tau_I$ in their source, so the transition rules of $APTC_{G_{\tau}}$ with guarded linear recursion is a conservative extension of those of $APTC_G$ with silent step and guarded linear recursion.
\end{proof}

\begin{theorem}[Congruence theorem of $APTC_{G_{\tau}}$ with guarded linear recursion]
Rooted branching truly concurrent bisimulation equivalences $\approx_{rbp}$, $\approx_{rbs}$ and $\approx_{rbhp}$ are all congruences with respect to $APTC_{G_{\tau}}$ with guarded linear recursion.
\end{theorem}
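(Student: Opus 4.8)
The plan is to reduce the statement to the meta-result \emph{Rooted branching truly concurrent bisimulations as congruences} (Theorem~\ref{rbtcbacpn}), following exactly the pattern used for the earlier congruence theorems of this section. First I would assemble the full transition system specification of $APTC_{G_{\tau}}$ with guarded linear recursion as the union of: the single-event and pomset transition rules of $BATC_G$ (Table~\ref{SETRForBATCG}); the rules for $\parallel$, $\mid$ and $\partial_H$ (Table~\ref{TRForAPTCG}); the rules for guarded recursion (Table~\ref{TRForGRG}); the rule for the silent step $\tau$ (Table~\ref{TRForTauG}); and the rules for the abstraction operator $\tau_I$ (Table~\ref{TRForAbstractionG}). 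Since the conservativity results established along the way guarantee that no earlier transitions are disturbed, it suffices to check that this combined TSS meets the two hypotheses of Theorem~\ref{rbtcbacpn}: that it is positive after reduction, and that it is in RBB cool format.

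For positivity after reduction I would invoke the theorem \emph{Positivity after reduction} by exhibiting a stratification in the sense of the \emph{Stratification} definition. The weight function maps a transition to (roughly) the syntactic size of its source, with the recursion constants $\langle X_i|E\rangle$ unfolded via their defining rules in the standard way, so that every positive premise of every rule has weight no larger than the conclusion. The side conditions appearing in the rules --- $test(\phi,s)$, $s'\in effect(e,s)$, $e\in I$, $e\notin I$, $a\notin H$, $test(\tau_\phi,s)$ --- are data-level predicates on states rather than premises of the transition rules, so they play no role in the stratification. Hence the TSS is positive after reduction.

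For the RBB cool format I would check, against the \emph{RBB cool format} definition, that (i) every rule is a panth rule without lookahead: right-hand sides of positive premises are single variables, each source contains at most one operator, and no right-hand-side variable of a premise recurs in the source, all visible by inspection; and (ii) for each operator $f$ occurring in the right-hand side of some conclusion (here $\cdot$, $\parallel$, $\mid$, $\partial_H$, $\tau_I$, and the recursion constants) and each argument position $i$, the non-patience rules probe $x_i$ through at most one premise, of the form $x_i P$ or $x_i\xrightarrow{\{a_1,\ldots,a_n\}}y$ with all $a_j\nequiv\tau$, and, whenever such a premise is present, the TSS contains the patience rule $\frac{x_i\xrightarrow{\tau}y}{f(x_1,\ldots,x_{ar(f)})\xrightarrow{\tau}f(x_1,\ldots,y,\ldots,x_{ar(f)})}$ for the $i$-th argument of $f$. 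For $\cdot$ (first argument), for $\partial_H$ (since $\tau\notin H$), and for $\tau_I$ (the $e\notin I$ rule with $e=\tau$ serving as the patience rule) such rules are present; for $\parallel$ and $\mid$ the defining rules fire only on genuine events, so the ``if there is such a premise'' clause is discharged vacuously.

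The step I expect to be the main obstacle is exactly this patience-rule bookkeeping for $\parallel$ and $\mid$: unlike the $APTC$ of Section~\ref{sospespa}, the present $APTC_G$ takes $\parallel$ itself as a basic operator (without the auxiliary $\leftmerge$), and $\parallel$ does occur on the right-hand side of conclusions --- in $x'\between y'$ and in $\underline{e_1}\parallel\cdots\parallel\underline{e_n}$ --- with rules that synchronise on both arguments. I would resolve this by making precise that every rule whose premise on a bare argument variable carries a $\tau$ label is itself a patience rule, so the side condition of the RBB cool format is satisfied, and by appealing to the axioms $B3$ and $G27$ to confirm that a silent step in a parallel component behaves as required. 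Once both hypotheses are verified, Theorem~\ref{rbtcbacpn} immediately yields that the rooted branching truly concurrent bisimulation equivalences $\approx_{rbp}$, $\approx_{rbs}$ and $\approx_{rbhp}$ are congruences with respect to $APTC_{G_{\tau}}$ with guarded linear recursion, as claimed.
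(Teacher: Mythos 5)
Your proposal follows essentially the same route as the paper: the paper's proof simply observes that the relations are equivalences and that the TSS (Table \ref{TRForAbstractionG}, on top of the earlier rules) is positive after reduction and in RBB cool format, then invokes Theorem \ref{rbtcbacpn}. Your version is the same argument carried out in more detail (stratification for positivity, patience-rule bookkeeping), and is correct; the only superfluous element is the appeal to axioms $B3$ and $G27$, which play no role in verifying a rule format.
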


\begin{proof}
It is easy to see that rooted branching pomset bisimulation $\approx_{rbp}$, rooted branching step bisimulation $\approx_{rbs}$ and rooted branching hp-bisimulation $\approx_{rbhp}$ are all equivalent relations on $APTC_{G_{\tau}}$ with guarded linear recursion terms. Since the TSS in Table \ref{TRForAbstractionG} is in RBB cool format, by Theorem \ref{rbtcbacpn}, rooted branching pomset bisimulation $\approx_{rbp}$, rooted branching step bisimulation $\approx_{rbs}$ and rooted branching hp-bisimulation $\approx_{rbhp}$ are all congruence with respect to $APTC_{G_{\tau}}$ with guarded linear recursion.
\end{proof}

We design the axioms for the abstraction operator $\tau_I$ in Table \ref{AxiomsForAbstractionG}.

\begin{center}
\begin{table}
  \begin{tabular}{@{}ll@{}}
\hline No. &Axiom\\
  $TI1$ & $e\notin I\quad \tau_I(e)=e$\\
  $TI2$ & $e\in I\quad \tau_I(e)=\tau$\\
  $TI3$ & $\tau_I(0)=0$\\
  $TI4$ & $\tau_I(x+y)=\tau_I(x)+\tau_I(y)$\\
  $TI5$ & $\tau_I(x\cdot y)=\tau_I(x)\cdot\tau_I(y)$\\
  $TI6$ & $\tau_I(x\parallel y)=\tau_I(x)\parallel\tau_I(y)$\\
  $G28$ & $\phi\notin I\quad \tau_I(\phi)=\phi$\\
  $G29$ & $\phi\in I\quad \tau_I(\phi)=\tau_{\phi}$\\
\end{tabular}
\caption{Axioms of abstraction operator}
\label{AxiomsForAbstractionG}
\end{table}
\end{center}

\begin{theorem}[Soundness of $APTC_{G_{\tau}}$ with guarded linear recursion]\label{SAPTC_GABSG}
Let $x$ and $y$ be $APTC_{G_{\tau}}$ with guarded linear recursion terms. If $APTC_{G_{\tau}}$ with guarded linear recursion $\vdash x=y$, then

(1) $x\approx_{rbs} y$.

(2) $x\approx_{rbp} y$.

(3) $x\approx_{rbhp} y$.
\end{theorem}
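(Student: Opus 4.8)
The plan is to mirror the proofs of Theorems \ref{SAPTCABS} and \ref{SAPTC_GTAUG}. By the congruence theorem for $APTC_{G_{\tau}}$ with guarded linear recursion just proved, each of $\approx_{rbp}$, $\approx_{rbs}$ and $\approx_{rbhp}$ is simultaneously an equivalence and a congruence with respect to all operators of the theory; hence it suffices, for each of the three equivalences in turn, to check that every axiom $s=t$ is sound, i.e.\ that $\sigma(s)$ and $\sigma(t)$ are related for every closed substitution $\sigma$. All axioms other than those of Table \ref{AxiomsForAbstractionG} were already handled in Theorem \ref{SAPTC_GTAUG}, so the real content is the soundness of $TI1$--$TI6$, $G28$ and $G29$ for the abstraction operator $\tau_I$.

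First I would treat the base axioms. For $TI1$ ($e\notin I$) and $G28$ ($\phi\notin I$) the two sides have identical transition behaviour by the first two rules of Table \ref{TRForAbstractionG}, so the identity relation, suitably closed under the relevant contexts, is a rooted branching bisimulation. For $TI2$ ($e\in I$) and $G29$ ($\phi\in I$) the transition $\langle e,s\rangle\xrightarrow{e}\langle\underline{e},s'\rangle$ is matched by $\langle\tau_I(e),s\rangle\xrightarrow{\tau}\langle\underline{e},\tau(s)\rangle$; rootedness holds because this is the head transition, and one uses the convention (stated before Table \ref{TRForTauG}) that $\tau$ leaves the external data environment unchanged, so the residual states are still related. $TI3$ is immediate. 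For the homomorphism axioms $TI4$ ($\tau_I(x+y)=\tau_I(x)+\tau_I(y)$), $TI5$ ($\tau_I(x\cdot y)=\tau_I(x)\cdot\tau_I(y)$) and $TI6$ ($\tau_I(x\parallel y)=\tau_I(x)\parallel\tau_I(y)$), I would exhibit the relation that pairs $\tau_I$ applied to a context with that context applied to $\tau_I$ of its arguments, and verify the transfer conditions directly from the rules of Tables \ref{TRForAPTCG} and \ref{TRForAbstractionG}; for $TI6$ one notes that $\tau_I$ turns a step or pomset label $\{e_1,\dots,e_n\}$ into the label obtained by replacing each $e_i\in I$ by $\tau$, which is exactly compatible with the label matching demanded by Definitions \ref{RBPSBG} and \ref{RBHHPBG}.

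Passing from $\approx_{rbp}$ to $\approx_{rbs}$ needs no new idea (restrict labels to pairwise-concurrent sets), and passing to $\approx_{rbhp}$ only requires carrying the configuration isomorphism $f$ along and observing that $\tau_I$ acts purely as a relabelling of events and never alters their causal or conflict structure, so whenever a single event $e_1$ fires with matching event $e_2$, the extension $f[e_1\mapsto e_2]$ is again an isomorphism. The branching clauses are discharged using that the only $\tau$-moves created by $\tau_I$ arise from abstracted actions or guards and that Definition \ref{GLRSG} forbids $\tau$-loops, so every catch-up string of $\tau$-transitions is finite.

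The step I expect to be the main obstacle is getting the guard/data-state interaction in the $\tau_I$ rules exactly right: an abstracted action or guard produces a $\tau$-transition from $s$ to $\tau(s)$, and one must confirm that this move is genuinely silent for the external data environment (so the branching-bisimulation simulation goes through) while still being recorded correctly on the internal side, and that $G29$ together with axioms $G25$--$G27$ of Table \ref{AxiomsForTauG} does not let an abstracted guard silently vanish in a way that breaks rootedness. Once the convention $s=\tau(s)$ on external states is used consistently, these checks become routine, and in the write-up I would state them as such rather than spell out all axiom-by-equivalence cases.
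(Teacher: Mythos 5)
Your proposal follows essentially the same route as the paper: invoke the congruence theorem for $APTC_{G_{\tau}}$ with guarded linear recursion so that soundness reduces to checking each axiom of Table \ref{AxiomsForAbstractionG} modulo $\approx_{rbs}$, $\approx_{rbp}$ and $\approx_{rbhp}$. The paper simply leaves those axiom-by-axiom checks as an exercise, whereas you additionally sketch the verifications (relabelling by $\tau_I$, rootedness for $TI2$/$G29$, the $s=\tau(s)$ convention), which is consistent with and slightly more detailed than the paper's argument.
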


\begin{proof}
(1) Since rooted branching step bisimulation $\approx_{rbs}$ is both an equivalent and a congruent relation with respect to $APTC_{G_{\tau}}$ with guarded linear recursion, we only need to check if each axiom in Table \ref{AxiomsForAbstractionG} is sound modulo rooted branching step bisimulation equivalence. We leave them as exercises to the readers.

(2) Since rooted branching pomset bisimulation $\approx_{rbp}$ is both an equivalent and a congruent relation with respect to $APTC_{G_{\tau}}$ with guarded linear recursion, we only need to check if each axiom in Table \ref{AxiomsForAbstractionG} is sound modulo rooted branching pomset bisimulation $\approx_{rbp}$. We leave them as exercises to the readers.

(3) Since rooted branching hp-bisimulation $\approx_{rbhp}$ is both an equivalent and a congruent relation with respect to $APTC_{G_{\tau}}$ with guarded linear recursion, we only need to check if each axiom in Table \ref{AxiomsForAbstractionG} is sound modulo rooted branching hp-bisimulation $\approx_{rbhp}$. We leave them as exercises to the readers.
\end{proof}

Though $\tau$-loops are prohibited in guarded linear recursive specifications (see Definition \ref{GLRSG}) in a specifiable way, they can be constructed using the abstraction operator, for example, there exist $\tau$-loops in the process term $\tau_{\{a\}}(\langle X|X=aX\rangle)$. To avoid $\tau$-loops caused by $\tau_I$ and ensure fairness, the concept of cluster and $CFAR$ (Cluster Fair Abstraction Rule) \cite{CFAR} (see chapter \ref{pe}) are still needed.

\begin{theorem}[Completeness of $APTC_{G_{\tau}}$ with guarded linear recursion and $CFAR$]\label{CCFARG}
Let $p$ and $q$ be closed $APTC_{G_{\tau}}$ with guarded linear recursion and $CFAR$ terms, then,

(1) if $p\approx_{rbs} q$ then $p=q$.

(2) if $p\approx_{rbp} q$ then $p=q$.

(3) if $p\approx_{rbhp} q$ then $p=q$.
\end{theorem}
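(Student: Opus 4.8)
The plan is to follow the proof of Theorem~\ref{CCFAR} essentially line for line, replacing the left merge $\leftmerge$ by the parallel operator $\parallel$ and inserting the extra bookkeeping needed for atomic guards. First I would apply the elimination theorem of $APTC_{G}$ with silent step and guarded linear recursion (Theorem~\ref{ETTauG}): any closed $APTC_{G_{\tau}}$ term that does not mention $\tau_I$ is provably equal to some $\langle X_1|E\rangle$ with $E$ a guarded linear recursive specification, and for such terms the three implications are precisely the content of Theorem~\ref{CAPTC_GTAUG}, i.e.\ $\langle X_1|E_1\rangle\approx_{rbs}\langle Y_1|E_2\rangle$ implies $\langle X_1|E_1\rangle=\langle Y_1|E_2\rangle$, and likewise for $\approx_{rbp}$ and $\approx_{rbhp}$. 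Hence the only genuinely new case is $p\equiv\tau_I(q)$, handled by structural induction on $q$; since $\tau_I$ pushes through $+$, $\cdot$, $\parallel$ and $\partial_H$ by $TI1$--$TI6$ and $G28$, $G29$ (and $\between$, $\mid$ have already been eliminated), the substantive subcase is $q=\langle X|E\rangle$ with $E$ guarded linear.

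For that subcase I would partition the recursion variables of $E$ into clusters $C_1,\dots,C_N$ for $I$ in the sense of Definition~\ref{CLUSTER}. Writing the conflict composition of exits for $C_i$ as $(a_{1i1}\parallel\cdots\parallel a_{k_{i1}i1})Y_{i1}+\cdots+(a_{1im_i}\parallel\cdots\parallel a_{k_{im_i}im_i})Y_{im_i}+b_{1i1}\parallel\cdots\parallel b_{l_{i1}i1}+\cdots$, I would define for $Z\in C_i$ the term $s_Z$ obtained from this sum by prefixing each $\langle Y_{ij}|E\rangle$ with $\tau_I$ and replacing every atomic action or guard $a$ by $\hat{a}$, where $\hat{a}=\tau$ if $a\in I$ and $\hat{a}=a$ otherwise (and $\hat{\phi}=\tau_{\phi}$ for guards in $I$, using $G28$, $G29$). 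The key identity is $(a_1\parallel\cdots\parallel a_j)\tau_I(\langle Z|E\rangle)=(a_1\parallel\cdots\parallel a_j)s_Z$ for $Z\in C_i$ and $a_1,\dots,a_j\in\mathbb{E}\cup\{\tau\}$: inside the cluster $\tau_I(\langle Z|E\rangle)$ can only perform strings of $\tau$ (and $\tau_{\phi}$) transitions before hitting an exit, and these are non-initial in the prefixed term, so $CFAR$ (Table~\ref{CFAR}) together with $B1$ and $G25$--$G27$ collapses them. I would then let $F$ be the guarded linear recursive specification on the same variables with $Z=(\hat{a_{1i1}}\parallel\cdots)Y_{i1}+\cdots$; $F$ is guarded because each exit either leaves the cluster or carries a non-silent action, so no $\tau$-loop can appear. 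The family $(s_Z)_Z$ solves $F$, so $RSP$ yields $\tau_I(\langle X|E\rangle)=\langle X|F\rangle$, which reduces $p$ to linear-recursive form and, via Theorem~\ref{CAPTC_GTAUG}, finishes case~(1).

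Cases~(2) and~(3), for $\approx_{rbp}$ and $\approx_{rbhp}$, are argued identically, invoking the soundness of $CFAR$ modulo rooted branching pomset (resp.\ hp-) bisimulation (Theorem~\ref{SCFAR}) in place of the step-bisimulation version. The step I expect to be the main obstacle is checking that $F$ is genuinely guarded and, more delicately, that contracting an entire in-cluster $\tau^{*}$-segment respects the truly concurrent structure — i.e.\ that the configuration isomorphisms witnessing $\approx_{rbp}$ and $\approx_{rbhp}$ survive the $CFAR$-collapse, not merely the interleaved transition sequences. Once the pure-action case is settled, treating $\tau_{\phi}$ as a silent step that acts only on the internal data state (so that $s=\tau(s)$ externally) is routine, and the remaining algebraic manipulations parallel those already carried out in the proof of Theorem~\ref{CCFAR}.
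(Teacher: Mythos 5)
Your proposal matches the paper's own proof essentially step for step: eliminate to a guarded linear recursive specification via Theorem \ref{ETTauG}, reduce to the single new case $p\equiv\tau_I(\langle X|E\rangle)$, partition the recursion variables into clusters for $I$, define the terms $s_Z$ from the hatted exits, establish the key identity $(a_1\parallel\cdots\parallel a_j)\tau_I(\langle Z|E\rangle)=(a_1\parallel\cdots\parallel a_j)s_Z$, build the guarded specification $F$, and conclude by $RSP$ and Theorem \ref{CAPTC_GTAUG}, with cases (2) and (3) handled analogously. Your extra remarks on treating internal guards via $G28$, $G29$ and $\tau_{\phi}$ are a reasonable elaboration of details the paper leaves implicit, so the approach is correct and the same as the paper's.
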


\begin{proof}
(1) For the case of rooted branching step bisimulation, the proof is following.

Firstly, in the proof the Theorem \ref{CAPTC_GTAUG}, we know that each process term $p$ in $APTC_G$ with silent step and guarded linear recursion is equal to a process term $\langle X_1|E\rangle$ with $E$ a guarded linear recursive specification. And we prove if $\langle X_1|E_1\rangle\approx_{rbs}\langle Y_1|E_2\rangle$, then $\langle X_1|E_1\rangle=\langle Y_1|E_2\rangle$

The only new case is $p\equiv\tau_I(q)$. Let $q=\langle X|E\rangle$ with $E$ a guarded linear recursive specification, so $p=\tau_I(\langle X|E\rangle)$. Then the collection of recursive variables in $E$ can be divided into its clusters $C_1,\cdots,C_N$ for $I$. Let

$(a_{1i1}\parallel\cdots\parallel a_{k_{i1}i1}) Y_{i1}+\cdots+(a_{1im_i}\parallel\cdots\parallel a_{k_{im_i}im_i}) Y_{im_i}+b_{1i1}\parallel\cdots\parallel b_{l_{i1}i1}+\cdots+b_{1im_i}\parallel\cdots\parallel b_{l_{im_i}im_i}$

be the conflict composition of exits for the cluster $C_i$, with $i\in\{1,\cdots,N\}$.

For $Z\in C_i$ with $i\in\{1,\cdots,N\}$, we define

$s_Z\triangleq (\hat{a_{1i1}}\parallel\cdots\parallel \hat{a_{k_{i1}i1}}) \tau_I(\langle Y_{i1}|E\rangle)+\cdots+(\hat{a_{1im_i}}\parallel\cdots\parallel \hat{a_{k_{im_i}im_i}}) \tau_I(\langle Y_{im_i}|E\rangle)+\hat{b_{1i1}}\parallel\cdots\parallel \hat{b_{l_{i1}i1}}+\cdots+\hat{b_{1im_i}}\parallel\cdots\parallel \hat{b_{l_{im_i}im_i}}$

For $Z\in C_i$ and $a_1,\cdots,a_j\in \mathbb{E}\cup\{\tau\}$ with $j\in\mathbb{N}$, we have

$(a_1\parallel\cdots\parallel a_j)\tau_I(\langle Z|E\rangle)$

$=(a_1\parallel\cdots\parallel a_j)\tau_I((a_{1i1}\parallel\cdots\parallel a_{k_{i1}i1}) \langle Y_{i1}|E\rangle+\cdots+(a_{1im_i}\parallel\cdots\parallel a_{k_{im_i}im_i}) \langle Y_{im_i}|E\rangle+b_{1i1}\parallel\cdots\parallel b_{l_{i1}i1}+\cdots+b_{1im_i}\parallel\cdots\parallel b_{l_{im_i}im_i})$

$=(a_1\parallel\cdots\parallel a_j)s_Z$

Let the linear recursive specification $F$ contain the same recursive variables as $E$, for $Z\in C_i$, $F$ contains the following recursive equation

$Z=(\hat{a_{1i1}}\parallel\cdots\parallel \hat{a_{k_{i1}i1}}) Y_{i1}+\cdots+(\hat{a_{1im_i}}\parallel\cdots\parallel \hat{a_{k_{im_i}im_i}})  Y_{im_i}+\hat{b_{1i1}}\parallel\cdots\parallel \hat{b_{l_{i1}i1}}+\cdots+\hat{b_{1im_i}}\parallel\cdots\parallel \hat{b_{l_{im_i}im_i}}$

It is easy to see that there is no sequence of one or more $\tau$-transitions from $\langle Z|F\rangle$ to itself, so $F$ is guarded.

For

$s_Z=(\hat{a_{1i1}}\parallel\cdots\parallel \hat{a_{k_{i1}i1}}) Y_{i1}+\cdots+(\hat{a_{1im_i}}\parallel\cdots\parallel \hat{a_{k_{im_i}im_i}}) Y_{im_i}+\hat{b_{1i1}}\parallel\cdots\parallel \hat{b_{l_{i1}i1}}+\cdots+\hat{b_{1im_i}}\parallel\cdots\parallel \hat{b_{l_{im_i}im_i}}$

is a solution for $F$. So, $(a_1\parallel\cdots\parallel a_j)\tau_I(\langle Z|E\rangle)=(a_1\parallel\cdots\parallel a_j)s_Z=(a_1\parallel\cdots\parallel a_j)\langle Z|F\rangle$.

So,

$\langle Z|F\rangle=(\hat{a_{1i1}}\parallel\cdots\parallel \hat{a_{k_{i1}i1}}) \langle Y_{i1}|F\rangle+\cdots+(\hat{a_{1im_i}}\parallel\cdots\parallel \hat{a_{k_{im_i}im_i}}) \langle Y_{im_i}|F\rangle+\hat{b_{1i1}}\parallel\cdots\parallel \hat{b_{l_{i1}i1}}+\cdots+\hat{b_{1im_i}}\parallel\cdots\parallel \hat{b_{l_{im_i}im_i}}$

Hence, $\tau_I(\langle X|E\rangle=\langle Z|F\rangle)$, as desired.

(2) For the case of rooted branching pomset bisimulation, it can be proven similarly to (1), we omit it.

(3) For the case of rooted branching hp-bisimulation, it can be proven similarly to (1), we omit it.
\end{proof} 
\newpage\section{Process Algebra vs. Automata and Kleene Algebra}\label{pak}

In this chapter, we discuss the relationship between process algebra and automata, Kleene algebra. Firstly, we introduce the relationship between process algebra vs. automata and Kleene algebra in section \ref{paka}. Then, discuss the relationship between truly concurrent process algebra vs. truly concurrent automata and concurrent Kleene algebra in section \ref{tcpacka}.

\subsection{Process Algebra and Kleene Algebra}\label{paka}

\subsubsection{Automata}

\begin{definition}[Automaton]
An automaton is a tuple $\mathcal{A}=(Q,A,E,I,F)$ where $Q$ is a finite set of states, $A$ is an alphabet, $I\subseteq Q$ is the set of initial states, $F\subseteq Q$ is the set of final states, and $E$ is the finite set of the transitions of $\mathcal{A}$ and $E\subseteq Q\times A\times Q$.
\end{definition}

It is well-known that automata recognize regular languages.

\begin{definition}[Transition relation]
Let $p,q\in Q$. We define the transition relation $\xrightarrow[\mathcal{A}]{}\subseteq Q\times A\times Q$ on $\mathcal{A}$ as the smallest relation satisfying:

\begin{enumerate}
  \item $p\xrightarrow[\mathcal{A}]{\epsilon}p$ for all $p\in Q$;
  \item $p\xrightarrow[\mathcal{A}]{a}q$ if and only if $(p,a,q)\in E$.
\end{enumerate}
\end{definition}

\begin{definition}[Bisimulation]
Let $\mathcal{A}=(Q,A,E,I,F)$ and $\mathcal{A}'=(Q',A',E',I',F')$ be two automata with the same alphabet. The automata $\mathcal{A}$ and $\mathcal{A}'$ are bisimilar, $\mathcal{A}\underline{\leftrightarrow}\mathcal{A}'$, iff there is a relation $R$ between their reachable states that preserves transitions and termination:

\begin{enumerate}
  \item $R$ relate reachable states, i.e., every reachable state of $\mathcal{A}$ is related to a reachable state of $\mathcal{A}'$ and every reachable state of $\mathcal{A}'$ is related to a reachable state of $\mathcal{A}$;
  \item $I$ are related to $I'$ by $R$; 
  \item whenever $p$ is related to $p'$, $pRp'$ and $p\xrightarrow[\mathcal{A}]{a}q$, then there is state $q'$ in $\mathcal{A}'$ with $p'\xrightarrow[\mathcal{A}']{a}q'$ and $qRq'$;
  \item whenever $p$ is related to $p'$, $pRp'$ and $p'\xrightarrow[\mathcal{A}']{a}q'$, then there is state $q$ in $\mathcal{A}$ with $p\xrightarrow[\mathcal{A}]{a}q$ and $qRq'$;
  \item whenever $pRp'$, then $p\in F$ iff $p'\in F'$.
\end{enumerate}
\end{definition}

\subsubsection{Kleene Algebra}\label{ka} 

Here we give the version of Kleene algebra (KA) of Kozen \cite{KA3} which has a sound and complete finitary axiomatization.

Traditionally, we note Kleene algebra in the context of regular language and expressions. We fix a finite alphabet $\Sigma$ and a word formed over $\Sigma$ is a finite sequence of symbols from $\Sigma$, and the empty word is denoted $\epsilon$. Let $\Sigma^*$ denote the set of all words over $\Sigma$ and a language is a set of words. For words $u,v\in\Sigma^*$, we define $u\cdot v$ as the concatenation of $u$ and $v$, $u\cdot v=uv$. Then for $U,V\subseteq\Sigma^*$, we define $U\cdot V=\{uv|u\in U,v\in V\}$, $U+V=U\cup V$, $U^*=\bigcup_{n\in\mathbb{N}}U^n$ where $U^0=\{\epsilon\}$ and $U^{n+1}=U\cdot U^n$.

\begin{definition}[Syntax of regular expressions]
We define the set of regular expressions $\mathcal{T}_{KA}$ as follows.

$$\mathcal{T}_{KA}\ni x,y::=0|1|a\in\Sigma|x+y|x\cdot y|x^*$$
\end{definition}

\begin{definition}[Semantics of regular expressions]
We define the interpretation of regular expressions $\sembrack{-}:\mathcal{T}_{KA}\rightarrow\mathcal{P}(\Sigma^*)$ inductively as Table \ref{SRE} shows.
\end{definition}

\begin{center}
    \begin{table}
        $$\sembrack{0}_{KA}=\emptyset \quad \sembrack{a}_{KA}=\{a\} \quad \sembrack{x\cdot y}_{KA}=\sembrack{x}_{KA}\cdot \sembrack{y}_{KA}$$
        $$\sembrack{1}_{KA}=\{\epsilon\} \quad \sembrack{x+y}_{KA}=\sembrack{x}_{KA}+\sembrack{y}_{KA} \quad\sembrack{x^*}_{KA}=\sembrack{x}^*_{KA}$$
        \caption{Semantics of regular expressions}
        \label{SRE}
    \end{table}
\end{center}

We define a Kleene algebra as a tuple $(A,+,\cdot,^*,0,1)$, where $A$ is a set, $^*$ is a unary operator, $+$ and $\cdot$ are binary operators, and $0$ and $1$ are constants, which satisfies the axioms in Table \ref{AxiomsForKA} for all $x,y,z\in \mathcal{T}_{KA}$, where $x\preceq y$ means $x+y=y$.

\begin{center}
    \begin{table}
        \begin{tabular}{@{}ll@{}}
            \hline No. &Axiom\\
            $A1$ & $x+y=y+z$\\
            $A2$ & $x+(y+z)=(x+y)+z$\\
            $A3$ & $x+x=x$\\
            $A4$ & $(x+y)\cdot z=x\cdot z+y\cdot z$\\
            $A5$ & $x\cdot(y+z)=x\cdot y+x\cdot z$\\
            $A6$ & $x\cdot(y\cdot z)=(x\cdot y)\cdot z$\\
            $A7$ & $x+0=x$\\
            $A8$ & $0\cdot x=0$\\
            $A9$ & $x\cdot 0=0$\\
            $A10$ & $x\cdot 1=x$\\
            $A11$ & $1\cdot x=x$\\
            $A12$ & $1+x\cdot x^*=x^*$\\
            $A13$ & $1+x^*\cdot x=x^*$\\
            $A14$ & $x+y\cdot z\preceq z\Rightarrow y^*\cdot x\preceq z$\\
            $A15$ & $x+y\cdot z\preceq y\Rightarrow x\cdot z^*\preceq y$\\
        \end{tabular}
        \caption{Axioms of Kleene algebra}
        \label{AxiomsForKA}
    \end{table}
\end{center}

Let $\equiv_{KA}$ denote the smallest congruence on $\mathcal{T}_{KA}$ induced by the axioms of the Kleene algebra. Then we can get the following soundness and completeness theorem, which is proven by Kozen \cite{KA3}.

\begin{theorem}[Soundness and completeness of Kleene algebra]
For all $x,y\in\mathcal{T}_{KA}$, $x\equiv_{KA} y$ iff $\sembrack{x}_{KA}=\sembrack{y}_{KA}$.
\end{theorem}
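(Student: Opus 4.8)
The plan is to prove the two directions separately, following Kozen's matrix--automata argument; the forward (soundness) direction is routine and the backward (completeness) direction carries all the weight.

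\textbf{Soundness.} First I would check that the language interpretation $\sembrack{-}_{KA}$ validates every axiom $A1$--$A15$ of Table~\ref{AxiomsForKA}. For $A1$--$A11$ this is immediate from the set-theoretic definitions of $\cup$ and $\cdot$ on subsets of $\Sigma^*$; for $A12$ and $A13$ it follows from $U^* = \{\epsilon\}\cup U\cdot U^* = \{\epsilon\}\cup U^*\cdot U$; and for the quasi-identities $A14$, $A15$ it follows from the fact that $U^*\cdot V$ (resp.\ $V\cdot U^*$) is the least language $W$ with $V\cup U\cdot W\subseteq W$ (resp.\ $V\cup W\cdot U\subseteq W$). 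Since $\equiv_{KA}$ is by definition the least congruence on $\mathcal{T}_{KA}$ closed under these axioms, a straightforward induction on the derivation of $x\equiv_{KA}y$ yields $\sembrack{x}_{KA}=\sembrack{y}_{KA}$.

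\textbf{Completeness, setup.} For the converse I would internalize finite automata into an arbitrary Kleene algebra. The key preliminary is that the $n\times n$ matrices over any Kleene algebra again form a Kleene algebra, where the star of a matrix is defined by induction on $n$ using the block decomposition $\begin{pmatrix} a & b \\ c & d \end{pmatrix}^* = \begin{pmatrix} (a+bd^*c)^* & (a+bd^*c)^*bd^* \\ d^*c(a+bd^*c)^* & d^* + d^*c(a+bd^*c)^*bd^* \end{pmatrix}$; in particular $M^*$ is provably the least solution of $X\succeq MX+I$ and of $X\succeq XM+I$, which is exactly where $A14$ and $A15$ do the work. Granting this, I would associate to each $x\in\mathcal{T}_{KA}$ (by a Thompson-style or Brzozowski-derivative construction) a finite automaton presented as a triple $(u,M_x,v)$ --- a start row vector, a transition matrix $M_x$ over $\mathcal{T}_{KA}$ whose entries are $0$, $1$, or single letters, and an accept column vector --- with behaviour $uM_x^*v$ satisfying $x\equiv_{KA}uM_x^*v$.

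\textbf{Completeness, determinize and minimize.} The subset construction turns $(u,M_x,v)$ into a deterministic automaton $\mathcal{D}_x$; the relation between the two state sets is encoded by a $0$--$1$ matrix $U$ satisfying the simulation equations (matching start and accept vectors, and $UM_x = M_{\mathcal{D}_x}U$ up to the appropriate transpose), and purely matrix-algebraic manipulation using the least-fixpoint property of $^*$ then gives $uM_x^*v\equiv_{KA}b(\mathcal{D}_x)$. Since isomorphic automata visibly have identical behaviour terms, and the minimal DFA recognizing a given language is unique up to isomorphism, the hypothesis $\sembrack{x}_{KA}=\sembrack{y}_{KA}$ forces $\mathcal{D}_x$ and $\mathcal{D}_y$ to minimize to isomorphic automata. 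Chaining the provable equalities
\[
x \equiv_{KA} uM_x^*v \equiv_{KA} b(\mathcal{D}_x) \equiv_{KA} b(\min\mathcal{D}_x) \equiv_{KA} b(\min\mathcal{D}_y) \equiv_{KA} b(\mathcal{D}_y) \equiv_{KA} y
\]
(the last steps by the same reasoning applied to $y$) yields $x\equiv_{KA}y$.

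\textbf{Main obstacle.} I expect the hard part to be the determinization step: showing that the subset construction produces a \emph{provably} equivalent automaton rather than merely a language-equivalent one. This rests first on establishing that $M^*$ really is the least fixpoint of $X\mapsto MX+I$ in the matrix Kleene algebra (an induction on matrix size that repeatedly invokes $A14$/$A15$ together with the block-star formula), and then on using that least-fixpoint property to transport behaviour terms across the simulation matrix $U$. Everything else --- soundness, the automaton-from-expression construction, and uniqueness of the minimal DFA --- is comparatively routine.
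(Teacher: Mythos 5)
Your proposal is correct, but note that the paper does not actually prove this theorem at all: it simply cites Kozen \cite{KA3} for it, and your sketch is a faithful outline of exactly that cited argument (soundness by checking the axioms under the language interpretation, completeness via the matrix Kleene algebra, provable equivalence of an expression with the behaviour of an associated automaton, and provable transport of behaviour across determinization and minimization using the least-fixpoint axioms $A14$/$A15$). So you are taking essentially the same route as the paper's source, just spelled out where the paper defers to the literature.
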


\subsubsection{Milner's Proof System for Regular Expressions Modulo Bisimilarity}{\label{mf}}

Process algebras CCS \cite{CC} \cite{CCS} and ACP \cite{ACP} have a bisimilarity-based operational semantics, on operational semantics, please refer to subsection \ref{os}.

Milner wanted to give regular expressions a bisimilarity-based semantic foundation and designed a proof system \cite{MF1} denoted Mil. Similarly to Kleene algebra in subsection \ref{ka}, the signature of Mil as a tuple $(A,+,\cdot,^*,0,1)$ includes a set of atomic actions $A$ and $a,b,c,\cdots\in A$, two special constants with inaction or deadlock denoted $0$ and empty action denoted $1$, two binary functions with sequential composition denoted $\cdot$ and alternative composition denoted $+$, and also a unary function iteration denoted $^*$. 

\begin{definition}[Syntax of Mil]
The expressions (terms) set $\mathcal{T}_{Mil}$ is defined inductively by the following grammar.

$$\mathcal{T}_{Mil}\ni x,y::=0|1|a\in A|x+y|x\cdot y|x^*$$
\end{definition}

Note that Kleene algebra KA in section \ref{ka} and Mil have almost the same grammar structures to express regular language and expressions, but different backgrounds for the former usually initialized to axiomatize the regular expressions and the latter came from process algebra to capture computation.

\begin{definition}[Semantics of Mil]
Let the symbol $\downarrow$ denote the successful termination predicate. Then we give the TSS of Mil as Table \ref{SMil} shows, where $a,b,c,\cdots\in A$, $x,x_1,x_2,x',x_1',x_2'\in\mathcal{T}_{Mil}$.
\end{definition}

\begin{center}
    \begin{table}
        $$\frac{}{1\downarrow}\quad\frac{}{a\xrightarrow{a}1}$$
        $$\frac{x_1\downarrow}{(x_1+x_2)\downarrow}\quad\frac{x_2\downarrow}{(x_1+x_2)\downarrow}\quad\frac{x_1\xrightarrow{a}x_1'}{x_1+x_2\xrightarrow{a}x_1'}\quad\frac{x_2\xrightarrow{b}x_2'}{x_1+x_2\xrightarrow{b}x_2'}$$
        $$\frac{x_1\downarrow\quad x_2\downarrow}{(x_1\cdot x_2)\downarrow} \quad\frac{x_1\xrightarrow{a}x_1'}{x_1\cdot x_2\xrightarrow{a}x_1'\cdot x_2} \quad\frac{x_1\downarrow\quad x_2\xrightarrow{b}x_2'}{x_1\cdot x_2\xrightarrow{b}x_2'}$$
        $$\frac{x\downarrow}{(x^*)\downarrow} \quad\frac{x\xrightarrow{a}x'}{x^*\xrightarrow{a}x'\cdot x^*}$$
        \caption{Semantics of Mil}
        \label{SMil}
    \end{table}
\end{center}

Note that there is no any transition rules related to the constant $0$. Then the axiomatic system of Mil is shown in Table \ref{AxiomsForMil}.

\begin{center}
    \begin{table}
        \begin{tabular}{@{}ll@{}}
            \hline No. &Axiom\\
            $A1$ & $x+y=y+z$\\
            $A2$ & $x+(y+z)=(x+y)+z$\\
            $A3$ & $x+x=x$\\
            $A4$ & $(x+y)\cdot z=x\cdot z+y\cdot z$\\
            $A5$ & $x\cdot(y\cdot z)=(x\cdot y)\cdot z$\\
            $A6$ & $x+0=x$\\
            $A7$ & $0\cdot x=0$\\
            $A8$ & $x\cdot 1=x$\\
            $A9$ & $1\cdot x=x$\\
            $A10$ & $1+x\cdot x^*=x^*$\\
            $A11$ & $(1+x)^*=x^*$\\
            $A12$ & $x=y\cdot x+z\Rightarrow x=y^*\cdot z\textrm{ if }\neg y\downarrow$\\
        \end{tabular}
        \caption{Axioms of Mil}
        \label{AxiomsForMil}
    \end{table}
\end{center}

Note that there are two significant differences between the axiomatic systems of Mil and KA, the axioms $x\cdot 0=0$ and $x\cdot(y+z)=x\cdot y+x\cdot z$ of KA do not hold in Mil.

As mentioned in chapter \ref{intro}, Milner proved the soundness of Mil and remained the completeness open. Just very recently, Grabmayer \cite{MF6} claimed to have proven that Mil is complete with respect to a specific kind of process graphs called LLEE-1-charts which is equal to regular expressions, modulo the corresponding kind of bisimilarity called 1-bisimilarity.

\begin{theorem}[Soundness and completeness of Mil]
Mil is sound and complete with respect to process semantics equality of regular expressions.
\end{theorem}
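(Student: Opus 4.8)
The plan is to split the statement into its soundness half and its completeness half and treat them by entirely different means. For soundness, I would show that whenever $\mathrm{Mil}\vdash x=y$ the labelled transition systems generated by $x$ and $y$ from the TSS in Table~\ref{SMil} are bisimilar. Since bisimilarity is an equivalence and, because that TSS sits in a suitable (tyft/tyxt-style) format, a congruence for $+$, $\cdot$ and $^*$, it suffices to check each axiom $A1$--$A12$ of Table~\ref{AxiomsForMil} individually. Axioms $A1$--$A9$ concern only the $\mathrm{BPA}$-like fragment and are discharged by exhibiting the obvious bisimulations (e.g. for $A7$ one uses that $0$ has no outgoing transitions and is not $\downarrow$, so $0\cdot x$ and $0$ have identical behaviour); $A10$ and $A11$ unfold the star once; and $A12$ is the guarded-iteration introduction rule, whose soundness rests on the side condition $\neg y{\downarrow}$ guaranteeing that the unique-solution reasoning for $x=y\cdot x+z$ goes through. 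This half is Milner's original result \cite{MF1} and the verifications are routine, so I would state them but not expand them.

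The completeness half --- if $x$ and $y$ generate bisimilar transition systems then $\mathrm{Mil}\vdash x=y$ --- is the substantial part, exactly the problem left open by Milner and settled by Grabmayer. The route I would follow is Grabmayer's. First one isolates the class of process graphs that arise, up to bisimilarity, as the charts of $\mathrm{Mil}$-expressions: these are the finite \emph{$1$-charts} (process graphs with a start state, $\downarrow$-labelled termination, and only a bounded form of empty-step behaviour) that additionally satisfy the \emph{LLEE} (layered loop existence and elimination) property. The key structural lemma is that every $\mathrm{Mil}$-term $x$ compiles to an LLEE-$1$-chart $C(x)$ with $C(x)\,\underline{\leftrightarrow}\,x$, and conversely that every LLEE-$1$-chart is provably equal in $\mathrm{Mil}$ to some term realising it; this ``expressibility'' direction is proved by induction on the layering, peeling off one loop at a time and invoking $A12$ to re-introduce a star each time a loop is eliminated.

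With these endpoints in place, completeness reduces to: if $C_1$ and $C_2$ are LLEE-$1$-charts with $C_1\,\underline{\leftrightarrow}\,C_2$, then the corresponding terms are $\mathrm{Mil}$-provably equal. Here I would use the bisimulation collapse --- refining a bisimulation between $C_1$ and $C_2$ to functional bisimulations through a common ``crystallized'' LLEE-$1$-chart --- and then chain the provable equalities supplied by the expressibility lemma reading terms off $C_1$, off the collapse, and off $C_2$. The hard part, and the step I expect to be the main obstacle, is showing that a bisimulation quotient of an LLEE-$1$-chart is again an LLEE-$1$-chart: LLEE is not visibly preserved under quotient, so the loop-elimination witness has to be re-chosen carefully, and the bookkeeping must keep the nested $A12$-applications well-founded. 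Everything else is comparatively mechanical. Since a fully self-contained argument is long, I would present the soundness proof in detail and, for completeness, set up the LLEE-$1$-chart framework and the reduction above, citing \cite{MF4}\cite{MF5}\cite{MF6}\cite{MF7} for the collapse-preservation lemma.
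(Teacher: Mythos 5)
Your proposal follows essentially the same route as the paper, which in fact offers no proof of its own: it simply states the theorem after attributing soundness to Milner \cite{MF1} and completeness to Grabmayer's recently claimed LLEE-1-chart result \cite{MF6}. Your sketch (congruence plus axiom-by-axiom verification for soundness; compilation to LLEE-1-charts, expressibility via $A12$, and the collapse/quotient-preservation lemma for completeness) is a faithful and considerably more detailed elaboration of exactly the argument the paper delegates to those references.
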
 

\subsection{Truly Concurrent Process Algebra vs. Concurrent Kleene Algebra}\label{tcpacka}

\subsubsection{Truly Concurrent Automata}\label{tcap}

There exist two types of automata recognizing finite N-free pomsets: the branching automata \cite{BA1} \cite{BA2} \cite{BA3} \cite{BA4} and the pomset automata \cite{PA1} \cite{PA2}, and in general case, they are equivalent in terms of expressive power \cite{PA3}.

From the background of (true) concurrency, the truly concurrent automata (TCA) are just the combinations of the branching automata and the pomset automata, although they bring redundancies. The following is the definition of TCA.

\begin{definition}[Truly concurrent automaton]
A truly concurrent automaton (TCA) is a tuple $\mathcal{A}=(Q,A,E,I,F)$ where $Q$ is a finite set of states, $A$ is an alphabet, $I\subseteq Q$ is the set of initial states, $F\subseteq Q$ is the set of final states, and $E$ is the finite set of the transitions of $\mathcal{A}$, and is partitioned into $E=(E_{seq}, E_{fork}, E_{join}, E_{par})$:

\begin{itemize}
  \item $E_{seq}\subseteq Q\times A\times Q$ contains the sequential transitions, which are the transition of traditional Kleene automata;
  \item $E_{fork}\subseteq Q\times\mathcal{M}^{>1}(Q)$ and $E_{join}\subseteq\mathcal{M}^{>1}(Q)\times Q$ are the set of fork and join transitions respectively, where $\mathcal{M}^{>1}(Q)$ is the set of multi-subsets of $Q$ with at least two elements;
  \item $E_{par}:Q\times Q\times Q\rightarrow Q$ is the parallel transition.
\end{itemize}
\end{definition}

From \cite{CKA7}, with the assumptions in section \ref{pesap}, i.e., the causalities among parallel branches are all communications, we know that a pomset with N-shape can be structured and then is N-free (without N-shapes), so it is the so-called series-parallel\cite{CKA3} \cite{CKA4} \cite{CKA7}. We can define the TCA language and its composition.

\begin{definition}[TCA language]
We write the set of pomset with N-shape as $\textsf{TCA}$. A TCA language $L$ is a set of pomset with $L\subseteq\textsf{TCA}$. 
\end{definition}

\begin{definition}[TCA language composition]
For $L,K\subseteq\textsf{TCA}$, we define $L+K=L\cup K$, $L\cdot K=\{l\cdot k|l\in L,k\in K\}$, $L^*=\bigcup_{n\in\mathbb{N}}L^n$ where $L^0=\{1\}$ and $L^{n+1}=L\cdot L^n$, $L\parallel K=\{l\parallel k|l\in L,k\in K\}$, $L^{\dagger}=||_{n\in\mathbb{N}}L^n$ where $L^0=\{1\}$ and $L^{n+1}=L\parallel L^n$, $L\between K=\{l\between k|l\in L,k\in K\}$, $L\mid K=\{l\mid k|l\in L,k\in K\}$.
\end{definition}

Note that, parallel transition $E_{par}$ is corresponding to the composition of parallel branches, and communications lead to join-fork pairs. 

\begin{definition}[Transition relation]
Let $SP(A)$ be the series-parallel pomset of $A$, $p,q,r,s,t\in Q$ and $P,P_1,P_2,P_i\subseteq SP(A)$. We define the transition relation $\xrightarrow[\mathcal{A}]{}\subseteq Q\times SP(A)\times Q$ on a TCA $\mathcal{A}$ as the smallest relation satisfying:

\begin{enumerate}
  \item $p\xrightarrow[\mathcal{A}]{\epsilon}p$ for all $p\in Q$;
  \item $p\xrightarrow[\mathcal{A}]{a}q$ iff $(p,a,q)\in E_{seq}$;
  \item if $p\xrightarrow[\mathcal{A}]{P_1}q$ and $q\xrightarrow[\mathcal{A}]{P_2}r$, then $p\xrightarrow[\mathcal{A}]{P_1P_2}r$;
  \item for all $n>1$, if $p_i\xrightarrow[\mathcal{A}]{P_i}q_i$ for $i\in\{1,\cdots,n\}$, $(p,\{p_1,\cdots,p_n\})\in E_{fork}$, $(\{q_1,\cdots,q_n\},q)\in E_{join}$, then $p\xrightarrow[\mathcal{A}]{\parallel P_i}q$;
  \item if $p\xrightarrow[\mathcal{A}]{P_1}q\in F$ and $r\xrightarrow[\mathcal{A}]{P_2}s\in F$, then $t\xrightarrow[\mathcal{A}]{P_1\parallel P_2}E_{par}(t,p,r)$;
  \item for all $n>1$, if $p\xrightarrow[\mathcal{A}]{a}q$, $(\{p_1,\cdots,p_n\},p)\in E_{join}$, $(q,\{q_1,\cdots,q_n\})\in E_{fork}$, then $\parallel p_i\xrightarrow[\mathcal{A}]{a}\parallel q_i$ for $i\in\{1,\cdots,n\}$.
\end{enumerate}
\end{definition}

\begin{definition}[Pomset bisimulation]
Let $\mathcal{A}=(Q,A,E,I,F)$ and $\mathcal{A}'=(Q',A',E',I',F')$ be two truly concurrent automata with the same alphabet. The automata $\mathcal{A}$ and $\mathcal{A}'$ are pomset bisimilar, $\mathcal{A}\sim_p\mathcal{A}'$, iff there is a relation $R$ between their reachable states that preserves transitions and termination:

\begin{enumerate}
  \item $R$ relate reachable states, i.e., every reachable state of $\mathcal{A}$ is related to a reachable state of $\mathcal{A}'$ and every reachable state of $\mathcal{A}'$ is related to a reachable state of $\mathcal{A}$;
  \item $I$ are related to $I'$ by $R$; 
  \item whenever $p$ is related to $p'$, $pRp'$ and $p\xrightarrow[\mathcal{A}]{P}q$ with $P\subseteq SP(A)$, then there is state $q'$ in $\mathcal{A}'$ with $p'\xrightarrow[\mathcal{A}']{P}q'$ and $qRq'$;
  \item whenever $p$ is related to $p'$, $pRp'$ and $p'\xrightarrow[\mathcal{A}']{P}q'$ with $P\subseteq SP(A)$, then there is state $q$ in $\mathcal{A}$ with $p\xrightarrow[\mathcal{A}]{P}q$ and $qRq'$;
  \item whenever $pRp'$, then $p\in F$ iff $p'\in F'$.
\end{enumerate}
\end{definition}

\subsubsection{Concurrent Kleene Algebra with Communication}{\label{ckacc}}

As we point out, concurrent Kleene algebra (CKA) is based on true concurrency. In this section, we extend the pomset language based CKA in the mostly recent work \cite{CKA7} \cite{CKA70} to the prime event structure based one, that is, CKA with communication. 

We define the syntax and semantics of the TCA expressions.

\begin{definition}[Syntax of TCA expressions]
We define the set of TCA expressions $\mathcal{T}_{CKA}$ as follows.

$$\mathcal{T}_{CKA}\ni x,y::=0|1|a,b\in\mathbb{E}|\gamma(a,b)|x+y|x\cdot y|x^*|x\parallel y|x^{\dagger}|x\mid y|x\between y$$
\end{definition}

In the definition of TCA expressions, the atomic actions include actions in $a,b\in\mathbb{E}$, the constant $0$ denoted inaction without any behaviour, the constant $1$ denoted empty action which terminates immediately and successfully, and also the communication actions $\gamma(a,b)$. The operator $+$ is the alternative composition, i.e., the program $x+y$ either executes $x$ or $y$ alternatively. The operator $\cdot$ is the sequential composition, i.e., the program $x\cdot y$ firstly executes $x$ followed $y$. The Kleene star $x^*$ can execute $x$ for some number of times sequentially (maybe zero). The operator $\parallel$ is the parallel composition, i.e., the program $x\parallel y$ executes $x$ and $y$ in parallel. The parallel star $x^{\dagger}$ can execute $x$ for some number of times in parallel (maybe zero). The program $x\mid y$ executes with synchronous communications. The program $x\between y$ means $x$ and $y$ execute concurrently, i.e., in parallel but maybe with unstructured communications.

\begin{definition}[Semantics of TCA expressions]
Then we define the interpretation of TCA expressions $\sembrack{-}:\mathcal{T}_{CKA}\rightarrow 2^{\textsf{TCA}}$ inductively as Table \ref{STCAE} shows.
\end{definition}

\begin{center}
    \begin{table}
        $$\sembrack{0}_{CKA}=\emptyset \quad \sembrack{a}_{CKA}=\{a\} \quad \sembrack{x\cdot y}_{CKA}=\sembrack{x}_{CKA}\cdot \sembrack{y}_{CKA}$$
        $$\sembrack{1}_{CKA}=\{1\} \quad \sembrack{x+y}_{CKA}=\sembrack{x}_{CKA}+\sembrack{y}_{CKA} \quad\sembrack{x^*}_{CKA}=\sembrack{x}^*_{CKA}$$
        $$\sembrack{x\parallel y}_{CKA}=\sembrack{x}_{CKA}\parallel\sembrack{y}_{CKA} \quad\sembrack{x^{\dagger}}_{CKA}=\sembrack{x}^{\dagger}_{CKA}$$
        $$\sembrack{x\mid y}_{CKA}=\sembrack{x}_{CKA}\mid\sembrack{y}_{CKA}\quad \sembrack{x\between y}_{CKA}=\sembrack{x}_{CKA}\between\sembrack{y}_{CKA}$$
        \caption{Semantics of TCA expressions}
        \label{STCAE}
    \end{table}
\end{center}

We define a concurrent Kleene algebra with communication (CKA) as a tuple $(\mathbb{E},+,\cdot,^*,\parallel,^{\dagger},\between,\mid,0,1)$, where $\mathbb{E}$ is a set, $^*$ and $^{\dagger}$, $+$, $\cdot$, $\parallel$, $\between$ and $\mid$ are binary operators, and $0$ and $1$ are constants, which satisfies the axioms in Table \ref{AxiomsForCKA} for all $x,y,z,h\in \mathcal{T}_{CKA}$ and $a,b,a_0,a_1,a_2,a_3\in\mathbb{E}$, where $x\preceq y$ means $x+y=y$.

\begin{center}
    \begin{table}
        \begin{tabular}{@{}ll@{}}
            \hline No. &Axiom\\
            $A1$ & $x+y=y+z$\\
            $A2$ & $x+(y+z)=(x+y)+z$\\
            $A3$ & $x+x=x$\\
            $A4$ & $(x+y)\cdot z=x\cdot z+y\cdot z$\\
            $A5$ & $x\cdot(y+z)=x\cdot y+x\cdot z$\\
            $A6$ & $x\cdot(y\cdot z)=(x\cdot y)\cdot z$\\
            $A7$ & $x+0=x$\\
            $A8$ & $0\cdot x=0$\\
            $A9$ & $x\cdot 0=0$\\
            $A10$ & $x\cdot 1=x$\\
            $A11$ & $1\cdot x=x$\\
            $P1$ & $x\between y=x\parallel y+x\mid y$\\
            $P2$ & $x\parallel y=y\parallel x$\\
            $P3$ & $x\parallel(y\parallel z)=(x\parallel y)\parallel z$\\
            $P4$ & $(x+y)\parallel z=x\parallel z+y\parallel z$\\
            $P5$ & $x\parallel(y+z)=x\parallel y+x\parallel z$\\
            $P6$ & $(x\parallel y)\cdot (z\between h)\preceq(x\cdot z)\parallel(y\cdot h)$\\
            $P7$ & $x\parallel 0=0$\\
            $P8$ & $0\parallel x=0$\\
            $P9$ & $x\parallel 1=x$\\
            $P10$ & $1\parallel x=x$\\
            $C1$ & $x\mid y=y\mid x$\\
            $C2$ & $(x+y)\mid z=x\mid z+y\mid z$\\
            $C3$ & $x\mid(y+z)=x\mid y+x\mid z$\\
            $C4$ & $\gamma(a,b)\cdot (x\between y)\preceq(a\cdot x)\mid(b\cdot y)$\\
            $C5$ & $x\mid 0=0$\\
            $C6$ & $0\mid x=0$\\
            $C7$ & $x\mid 1=0$\\
            $C8$ & $1\mid x=0$\\
            $A12$ & $1+x\cdot x^*=x^*$\\
            $A13$ & $1+x^*\cdot x=x^*$\\
            $A14$ & $x+y\cdot z\preceq z\Rightarrow y^*\cdot x\preceq z$\\
            $A15$ & $x+y\cdot z\preceq y\Rightarrow x\cdot z^*\preceq y$\\
            $P11$ & $1+x\parallel x^{\dagger}=x^{\dagger}$\\
            $P12$ & $1+x^{\dagger}\parallel x=x^{\dagger}$\\
            $P13$ & $x+y\parallel z\preceq z\Rightarrow y^{\dagger}\parallel x\preceq z$\\
            $P14$ & $x+y\parallel z\preceq y\Rightarrow x\parallel z^{\dagger}\preceq y$\\
        \end{tabular}
        \caption{Axioms of concurrent Kleene algebra}
        \label{AxiomsForCKA}
    \end{table}
\end{center}

By use of the axiomatic system of CKA, we can get the extended Milner's expansion law.

\begin{proposition}[Extended Milner's expansion law]
For $a,b\in \mathbb{E}$, $a\between b=a\cdot b+b\cdot a+a\parallel b+a\mid b$.
\end{proposition}

Let $\equiv_{CKA}$ denote the smallest congruence on $\mathcal{T}_{CKA}$ induced by the axioms of the concurrent Kleene algebra CKA. Since $\equiv_{CKA}$ is the smallest congruence on $\mathcal{T}_{CKA}$ induced by the axioms of the concurrent Kleene algebra CKA, we can only check the soundness of each axiom according to the definition of semantics of TCA expressions. And also by use of communication merge, the TCA expressions are been transformed into the so-called series-parallel ones \cite{CKA3} \cite{CKA4} \cite{CKA7} free of N-shapes. Then we can get the following soundness and completeness theorem \cite{CKA3} \cite{CKA4} \cite{CKA7}.

\begin{theorem}[Soundness and completeness of CKA]
For all $x,y\in\mathcal{T}_{CKA}$, $x\equiv_{CKA} y$ iff $\sembrack{x}_{CKA}=\sembrack{y}_{CKA}$.
\end{theorem}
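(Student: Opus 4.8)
The plan is to prove the two implications separately: \textbf{soundness} ($x\equiv_{CKA}y\Rightarrow\sembrack{x}_{CKA}=\sembrack{y}_{CKA}$) is a routine axiom-by-axiom check, while \textbf{completeness} ($\sembrack{x}_{CKA}=\sembrack{y}_{CKA}\Rightarrow x\equiv_{CKA}y$) is reduced, via the communication merge, to the already-known completeness of pomset-based concurrent Kleene algebra.

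\textbf{Soundness.} Since $\equiv_{CKA}$ is \emph{defined} as the least congruence on $\mathcal{T}_{CKA}$ generated by the axioms of Table~\ref{AxiomsForCKA}, it suffices to verify that each single axiom instance $x=y$ (or $x\preceq y$) satisfies $\sembrack{x}_{CKA}=\sembrack{y}_{CKA}$ (resp. $\sembrack{x}_{CKA}\subseteq\sembrack{y}_{CKA}$), and that the operators $+,\cdot,{}^*,\parallel,{}^{\dagger},\mid,\between$ act monotonically on $2^{\textsf{TCA}}$; soundness of arbitrary derivations then follows by induction on derivation length. The idempotent-semiring laws $A1$--$A11$ and the star laws $A12$--$A15$ are the classical Kleene-algebra facts about languages of pomsets, using only that $\cdot$ distributes over $+$ and that $\sembrack{x^*}_{CKA}=\bigcup_{n}\sembrack{x}_{CKA}^{\,n}$. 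The parallel block $P1$--$P14$ is handled identically, using commutativity and associativity of $\parallel$ on pomsets and $\sembrack{x^{\dagger}}_{CKA}=||_{n\in\mathbb{N}}\sembrack{x}_{CKA}^{\,n}$; the only non-equational cases are the exchange law $P6$ and its communication analogue $C4$, where one checks pointwise that every pomset on the left embeds as a series-parallel refinement into a pomset on the right. The communication laws $C1$--$C8$ unfold directly from the definition of $\mid$ on TCA languages together with the (partiality of the) synchronisation function $\gamma$, with $C7,C8$ reflecting $1\mid x=0$.

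\textbf{Completeness.} For the converse I would follow the route the text sketches. First, exactly as in the elimination theorems for $\between$ and $\mid$ in $APTC$ — using the extended Milner expansion law stated above together with $P1$--$P6$ and $C1$--$C8$ — rewrite an arbitrary $x\in\mathcal{T}_{CKA}$ into a provably equal $x'$ in which $\between$ does not occur and every $\mid$ has been resolved into explicit $\gamma(a,b)$-prefixes, so that $\sembrack{x'}_{CKA}$ is a language of N-free (series-parallel) pomsets over the extended alphabet $\mathbb{E}\cup\{\gamma(a,b)\}$. Then invoke the completeness of pomset concurrent Kleene algebra from \cite{CKA3} \cite{CKA4} \cite{CKA7}: for expressions built only from $0,1$, atoms, $+,\cdot,{}^*,\parallel,{}^{\dagger}$, provable equality coincides with equality of the associated series-parallel pomset languages. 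Finally lift back: if $\sembrack{x}_{CKA}=\sembrack{y}_{CKA}$ then $\sembrack{x'}_{CKA}=\sembrack{y'}_{CKA}$, hence $x'\equiv_{CKA}y'$ by the imported result, hence $x\equiv_{CKA}x'\equiv_{CKA}y'\equiv_{CKA}y$ by soundness of the rewriting steps.

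\textbf{The main obstacle} is the reduction to the communication-free, N-free normal form and the argument that it is \emph{derivable} rather than merely semantically valid: one must show, as in the $APTC$ elimination proofs, that iterated application of $P1$--$P6$ and $C1$--$C8$ terminates and removes all occurrences of $\between$ while resolving $\mid$, and in the presence of the iterations $^*$ and ${}^{\dagger}$ this requires the fixed-point laws $A12$--$A15$ and $P11$--$P14$ to push communication through iteration. The second heavy ingredient is the completeness of pomset CKA itself; here I would quote it as a black box exactly as the excerpt does, and confine the new work to checking that the communication layer contributes nothing beyond what is captured by treating each $\gamma(a,b)$ as a fresh atomic symbol.
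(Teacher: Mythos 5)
Your proposal follows essentially the same route as the paper: soundness by checking each axiom against the language semantics of Table \ref{STCAE} (using that $\equiv_{CKA}$ is the least congruence generated by the axioms), and completeness by using the communication merge to reduce terms to series-parallel (N-free) form and then importing the known completeness results for pomset-language CKA from \cite{CKA3} \cite{CKA4} \cite{CKA7}. The paper itself only sketches this argument in the prose preceding the theorem, so your additional detail — in particular flagging that the elimination of $\between$ and $\mid$ must be derivable in the presence of $^*$ and $^{\dagger}$ — elaborates rather than departs from its approach.
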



\subsubsection{Process Algebra with Kleene Star and Parallel Star}{\label{paks}}

As we point out, truly concurrent process algebra APTC \cite{APTC} is also based on true concurrency. In this section, we extend APTC to the one with Kleene star and parallel star. 


Similarly to concurrent Kleene algebra in section \ref{ckacc}, the signature of process algebra APTC with Kleene star and parallel star as a tuple $(\mathbb{E},+,\cdot,^*,\parallel,^{\dagger},\between,\mid,0,1)$ includes a set of atomic actions $\mathbb{E}$ and $a,b,c,\cdots\in \mathbb{E}$, two special constants with inaction or deadlock denoted $0$ and empty action denoted $1$, six binary functions with sequential composition denoted $\cdot$, alternative composition denoted $+$, parallel composition denoted $\parallel$, concurrent composition $\between$ and communication merge $\mid$, and also three unary functions with sequential iteration denoted $^*$ and parallel iteration denoted $^{\dagger}$. 

\begin{definition}[Syntax of APTC with Kleene star and parallel star]
The expressions (terms) set $\mathcal{T}_{APTC}$ is defined inductively by the following grammar.

$$\mathcal{T}_{APTC}\ni x,y::=0|1|a,b\in\mathbb{E}|\gamma(a,b)|x+y|x\cdot y|x^*|x\parallel y|x^{\dagger}|x\mid y|x\between y$$
\end{definition}

Note that concurrent Kleene algebra CKA in section \ref{ckacc} and APTC with Kleene star and parallel star have almost the same grammar structures to express TCA language and expressions, but different semantics.

\begin{definition}[Semantics of APTC with Kleene star and parallel star]
Let the symbol $\downarrow$ denote the successful termination predicate. Then we give the TSS of APTC with Kleene star and parallel star as Table \ref{SAPTCStar}, where $a,b,c,\cdots\in \mathbb{E}$, $x,x_1,x_2,x',x_1',x_2'\in\mathcal{T}_{APTC}$.
\end{definition}

\begin{center}
    \begin{table}
        $$\frac{}{1\downarrow}\quad\frac{}{a\xrightarrow{a}1}$$
        $$\frac{x_1\downarrow}{(x_1+x_2)\downarrow}\quad\frac{x_2\downarrow}{(x_1+x_2)\downarrow}\quad\frac{x_1\xrightarrow{a}x_1'}{x_1+x_2\xrightarrow{a}x_1'}\quad\frac{x_2\xrightarrow{b}x_2'}{x_1+x_2\xrightarrow{b}x_2'}$$
        $$\frac{x_1\downarrow\quad x_2\downarrow}{(x_1\cdot x_2)\downarrow} \quad\frac{x_1\xrightarrow{a}x_1'}{x_1\cdot x_2\xrightarrow{a}x_1'\cdot x_2} \quad\frac{x_1\downarrow\quad x_2\xrightarrow{b}x_2'}{x_1\cdot x_2\xrightarrow{b}x_2'}$$
        $$\frac{x_1\downarrow\quad x_2\downarrow}{(x_1\parallel x_2)\downarrow} \quad\frac{x_1\xrightarrow{a}x_1'\quad x_2\xrightarrow{b}x_2'}{x_1\parallel x_2\xrightarrow{\{a,b\}}x_1'\between x_2'}$$
        $$\frac{x_1\downarrow\quad x_2\downarrow}{(x_1\mid x_2)\downarrow} \quad\frac{x_1\xrightarrow{a}x_1'\quad x_2\xrightarrow{b}x_2'}{x_1\mid x_2\xrightarrow{\gamma(a,b)}x_1'\between x_2'}$$
        $$\frac{x\downarrow}{(x^*)\downarrow} \quad\frac{x\xrightarrow{a}x'}{x^*\xrightarrow{a}x'\cdot x^*}$$
        $$\frac{x\downarrow}{(x^{\dagger})\downarrow} \quad\frac{x\xrightarrow{a}x'}{x^{\dagger}\xrightarrow{a}x'\parallel x^*}$$
        \caption{Semantics of APTC with Kleene star and parallel star}
        \label{SAPTCStar}
    \end{table}
\end{center}

Note that there is no any transition rules related to the constant $0$. Then the axiomatic system of APTC with Kleene star and parallel star is shown in Table \ref{AxiomsForAPTCStar}.

\begin{center}
    \begin{table}
        \begin{tabular}{@{}ll@{}}
            \hline No. &Axiom\\
            $A1$ & $x+y=y+z$\\
            $A2$ & $x+(y+z)=(x+y)+z$\\
            $A3$ & $x+x=x$\\
            $A4$ & $(x+y)\cdot z=x\cdot z+y\cdot z$\\
            $A5$ & $x\cdot(y\cdot z)=(x\cdot y)\cdot z$\\
            $A6$ & $x+0=x$\\
            $A7$ & $0\cdot x=0$\\
            $A8$ & $x\cdot 1=x$\\
            $A9$ & $1\cdot x=x$\\
            $P1$ & $x\between y=x\parallel y+x\mid y$\\
            $P2$ & $x\parallel y=y\parallel x$\\
            $P3$ & $x\parallel(y\parallel z)=(x\parallel y)\parallel z$\\
            $P4$ & $(x+y)\parallel z=x\parallel z+y\parallel z$\\
            $P5$ & $x\parallel(y+z)=x\parallel y+x\parallel z$\\
            $P6$ & $(x\parallel y)\cdot (z\between h)=(x\cdot z)\parallel(y\cdot h)$\\
            $P7$ & $x\parallel 0=0$\\
            $P8$ & $0\parallel x=0$\\
            $P9$ & $x\parallel 1=x$\\
            $P10$ & $1\parallel x=x$\\
            $C1$ & $x\mid y=y\mid x$\\
            $C2$ & $(x+y)\mid z=x\mid z+y\mid z$\\
            $C3$ & $x\mid(y+z)=x\mid y+x\mid z$\\
            $C4$ & $\gamma(a,b)\cdot (x\between y)=(a\cdot x)\mid(b\cdot y)$\\
            $C5$ & $x\mid 0=0$\\
            $C6$ & $0\mid x=0$\\
            $C7$ & $x\mid 1=0$\\
            $C8$ & $1\mid x=0$\\
            $A10$ & $1+x\cdot x^*=x^*$\\
            $A11$ & $(1+x)^*=x^*$\\
            $A12$ & $x=y\cdot x+z\Rightarrow x=y^*\cdot z\textrm{ if }\neg y\downarrow$\\
            $P11$ & $1+x\parallel x^{\dagger}=x^{\dagger}$\\
            $P12$ & $(1+x)^{\dagger}=x^{\dagger}$\\
            $P13$ & $x=y\parallel x+z\Rightarrow x=y^{\dagger}\cdot z\textrm{ if }\neg y\downarrow$\\
        \end{tabular}
        \caption{Axioms of APTC with Kleene star and parallel star}
        \label{AxiomsForAPTCStar}
    \end{table}
\end{center}

Note that there are two significant differences between the axiomatic systems of APTC with Kleene star and parallel star and CKA, the axioms $x\cdot 0=0$ and $x\cdot(y+z)=x\cdot y+x\cdot z$ of CKA do not hold in APTC with Kleene star and parallel star.

By use of the axiomatic system of APTC, we can get the extended Milner's expansion law.

\begin{proposition}[Extended Milner's expansion law]
For $a,b\in \mathbb{E}$, $a\between b=a\cdot b+b\cdot a+a\parallel b+a\mid b$.
\end{proposition}

Then there are two questions: (R) the problem of recognizing whether a given process graph is pomset bisimilar to one in the image of the process interpretation of a $\mathcal{T}_{APTC}$ expression, and (A) whether a natural adaptation of Salomaa’s complete proof system for language equivalence of $\mathcal{T}_{APTC}$ expressions is complete for pomset bisimilarity of the process interpretation of $\mathcal{T}_{APTC}$ expressions. While (R) is decidable in principle, it is just a pomset extension to the problem of recognizing whether a given process graph is bisimilar to one in the image of the process interpretation of a star expression \cite{AP8}.

Since $\equiv_{APTC}$ is the smallest congruence on $\mathcal{T}_{APTC}$ induced by the axioms of the APTC with Kleene star and parallel star, we can only check the soundness of each axiom according to the definition of TSS of TCA expressions. As mentioned in the section \ref{mf}, just very recently, Grabmayer \cite{MF6} claimed to have proven that Mil is complete with respect to a specific kind of process graphs called LLEE-1-charts which is equal to regular expressions, modulo the corresponding kind of bisimilarity called 1-bisimilarity. Based on this work, we believe that we can get the following conclusion based on the corresponding pomset bisimilarity and let the proof of the completeness be open.

\begin{theorem}[Soundness and completeness of APTC with Kleene star and parallel star]
APTC with Kleene star and parallel star is sound and complete with respect to process semantics equality of TCA expressions.
\end{theorem}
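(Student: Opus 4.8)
The plan is to prove the statement in two halves --- soundness and completeness --- by quite different means, and to be explicit about which half is within reach and which is the genuine obstacle.

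\textbf{Soundness.} First I would establish that pomset bisimilarity (and indeed all the truly concurrent bisimilarities) is a congruence for the signature $(\mathbb{E},+,\cdot,{}^*,\parallel,{}^{\dagger},\between,\mid,0,1)$. The transition system specification of Table \ref{SAPTCStar} contains no negative premises, so it is positive (hence positive after reduction), and each of its rules has single variables as the right-hand sides of its premises, a source with at most one function symbol, and no variable occurring twice among the premise right-hand sides and the source; thus it is in panth format and Theorem \ref{tcbac} applies, yielding the congruence property. With congruence in hand, soundness reduces to checking each axiom of Table \ref{AxiomsForAPTCStar} in isolation. The laws $A1$--$A9$, $P1$--$P10$, $C1$--$C8$ are the familiar $BATC$/$APTC$-style laws, and their soundness is routine by exhibiting the obvious bisimulations, reusing the arguments behind Theorems \ref{SBATC} and \ref{SAPTC}. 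The genuinely new obligations are the star laws: $A10$, $A11$ (unfolding and absorption for ${}^*$), $A12$ (the conditional uniqueness-of-solution law, with the guardedness side condition $\neg y\downarrow$), and their parallel analogues $P11$, $P12$, $P13$. For $A10$ and $P11$ one reads the two transition rules for $x^*$ (resp. $x^{\dagger}$) and checks that $\langle x^*\rangle$ and $\langle 1+x\cdot x^*\rangle$ have matching initial transitions and termination behaviour; for $A12$ and $P13$ one argues, exactly as for $RSP$, that a $\downarrow$-guarded equation $x=y\cdot x+z$ (resp. $x=y\parallel x+z$) has a unique solution up to pomset bisimilarity.

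\textbf{Completeness.} Here I would follow the route indicated in the excerpt, adapting Grabmayer's completeness proof for Milner's system \cite{MF6} to the truly concurrent setting, in two layers. The first layer is a normal-form/``charts'' layer: isolate a class of process graphs --- the pomset analogue of LLEE-1-charts --- that is closed under the operational semantics of Table \ref{SAPTCStar}, contains the image of the process interpretation of every $\mathcal{T}_{APTC}$ expression, and has the property that every such graph is pomset bisimilar to the interpretation of some $\mathcal{T}_{APTC}$ expression. This step leans on the structurization results of Section \ref{pesap}, so that the N-shaped pomsets coming from communications are first replaced by series--parallel ones via communication merge, exactly as in the CKA development of Section \ref{ckacc}. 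The second layer is algebraic: show that whenever two $\mathcal{T}_{APTC}$ expressions have pomset-bisimilar interpretations, the witnessing bisimulation can be turned into a derivation in the axiom system of Table \ref{AxiomsForAPTCStar}; concretely, extract from the bisimulation a shared $\downarrow$-guarded linear specification $E$, prove both expressions equal to $\langle X_1|E\rangle$ using $A10$--$A12$, $P11$--$P13$ together with $RDP$/$RSP$-style reasoning (mirroring the proof of Theorem \ref{CAPTCTAU}, but with iteration in place of recursion), and conclude.

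\textbf{Main obstacle.} The hard part is the second, algebraic layer, and it is precisely why the excerpt leaves the completeness proof open. Transferring Grabmayer's argument is not mechanical: his LLEE (layered loop existence and elimination) machinery is tailored to sequential iteration, whereas the parallel star ${}^{\dagger}$ introduces genuinely new loop shapes --- parallel loops, and loops that cross the fork/join pairs arising from communications --- for which one must design an analogue of the LLEE well-foundedness condition and re-prove the loop-elimination lemma. One must also verify that the interaction axioms $P6$ and $C4$ (the only place where concurrency and sequencing mix) suffice to normalize every series--parallel pomset graph, and that the guardedness condition $\neg y\downarrow$ is preserved under $\parallel$. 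Until those pieces are in place, what can be asserted unconditionally is soundness together with completeness \emph{modulo} the pomset-bisimilarity analogue of Grabmayer's theorem; accordingly my plan is to record soundness in full and to flag the completeness half as conditional on that adaptation, exactly as the surrounding text anticipates.
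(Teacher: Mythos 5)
Your proposal matches the paper's treatment: the paper supplies no proof of this theorem at all, only the remark that soundness can be checked axiom-by-axiom against the TSS of Table \ref{SAPTCStar} and that completeness is left open pending an adaptation of Grabmayer's LLEE-chart argument to pomset bisimilarity --- which is precisely the split you make. Your write-up is in fact more detailed than the paper (the congruence-via-panth-format step, the case analysis of the star axioms, and the identification of parallel-star loop shapes as the obstacle), and your honest flagging of the completeness half as conditional is exactly the stance the surrounding text takes.
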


\bibliographystyle{elsarticle-num}
\newpage\bibliography{Refs-TNOC}

\end{document}